\newcommand{\fo}{\kw{FO}}
\newcommand{\types}{\kw{Types}}
\newcommand{\instance}{{\cal F}}
\newcommand{\fgtgd}{\kw{FGTGD}}
\newcommand{\afgtgd}{\kw{BaseFGTGD}}
\newcommand{\acfgtgd}{\kw{BaseCovFGTGD}}
\newcommand{\agnf}{\kw{BaseGNF}}
\newcommand{\acgnf}{\kw{BaseCovGNF}}
\newcommand{\mapsfrom}{:=}
\newcommand{\gnf}{\kw{GNF}}
\newcommand{\gf}{\kw{GF}}
\newcommand{\decode}{\kw{decode}}
\newcommand{\dom}{\kw{Dom}}
\newcommand{\child}{\kw{Child}}
\newcommand{\incd}{\kw{ID}}
\newcommand{\aincd}{\kw{BaseID}}
\newcommand{\did}{\kw{DID}}
\newcommand{\owqa}{\kw{QA}}
\newcommand{\owqalin}{\owqa{\kw{lin}}}
\newcommand{\owqatr}{\owqa{\kw{tr}}}
\newcommand{\owqatrans}{\owqatr}
\newcommand{\owqatc}{\owqa{\kw{tc}}}
\newcommand{\eg}{e.g.~}
\newcommand{\kw}[1]{{\mathsf{#1}}\xspace}
\newtheorem{theorem}{Theorem}
\newtheorem{proposition}{Proposition}
\newtheorem{corollary}{Corollary}
  \newtheorem{definition}{Definition}
\renewcommand{\phi}{\varphi}
\newtheorem{lemma}{Lemma}
\newcommand{\myparagraph}[1]{\paragraph{#1.}}
\newcommand{\myeat}[1]{}
\def\@Opargbegintheorem#1#2#3#4{#4\trivlist
      \item[\hskip\labelsep{#3#1}]{#3#2\@thmcounterend\ }}
\newcommand{\definerep}[2]{%
\spnewtheorem*{#1rp}{#2}{\bf}{\itshape}
\newenvironment{#1rep}[2]{%
  \ifthenelse{\equal{##1}{*}}
  {\begin{#1rp}[\ref{##2}]}
  {\begin{#1}\label{##2}}}
{\ifthenelse{\equal{\@currenvir}{#1}}{\end{#1}}{\end{#1rp}}}
}
\newcommand{\calA}{\mathcal{A}}
\newcommand{\calB}{\mathcal{B}}
\newcommand{\calC}{\mathcal{C}}
\newcommand{\calD}{\mathcal{D}}
\newcommand{\calF}{\mathcal{F}}
\newcommand{\calG}{\mathcal{G}}
\newcommand{\calN}{\mathcal{N}}
\newcommand{\frakA}{\mathfrak{A}}
\newcommand{\sigmab}{\sigma_{\calB}}
\newcommand{\sigmad}{\sigma_{\calD}}
\newcommand{\arity}[1]{\kw{arity}(#1)}
\newcommand*{\defeq}{\mathrel{\rlap{%
  \raisebox{0.3ex}{$\m@th\cdot$}}%
  \raisebox{-0.3ex}{$\m@th\cdot$}}%
  =}
\newcommand{\guarded}{\kw{guarded}}
\newcommand{\guardedb}{\kw{guarded}_{\sigmab}}
\newcommand{\guardedbg}{\kw{guarded}_{\sigmab \cup \set{G}}}
\newcommand{\agnfk}{\agnf^k}
\newcommand{\acgnfk}{\acgnf^k}
\newcommand{\drel}{S}
\newcommand{\acgnfplus}{\acgnf^{+}}
\newcommand{\acgnfminus}{\acgnf^{-}}
\newcommand{\tree}{T}
\newcommand{\mysize}[1]{|#1|}
\newcommand{\witness}{\kw{witness}}
\newcommand{\twoexp}{\kw{2EXPTIME}}
\newcommand{\exptime}{\kw{EXPTIME}}
\newcommand{\conp}{\kw{CoNP}}
\newcommand{\np}{\kw{NP}}
\newcommand{\ptime}{\kw{PTIME}}
\newcommand{\trans}{+}
\newcommand{\true}{\mathfrak{t}}
\newcommand{\false}{\mathfrak{f}}
\newcommand{\gn}{\text{GN}}
\newcommand{\fA}{{\cal F}}
\newcommand{\fB}{{\cal G}}
\renewcommand{\restriction}{\mathord{\upharpoonright}}
\newcommand{\restrict}[2]{ {#1}\restriction_{#2} }
\newcommand{\set}[1]{\{ #1 \}}
\newcommand{\elems}[1]{\kw{elems}(#1)}
\newcommand{\structureunravelk}[1]{#1^k}
\newcommand{\structureunravelkint}[1]{#1^k_{\cal B}}
\newcommand{\Or}{\mathrm{Or}}
\newcommand{\etal}{\emph{et al.}}
\newcommand{\card}[1]{\left|#1\right|}
\title{Query Answering with Transitive and Linear-Ordered Data}
\author{
Antoine Amarilli \\
LTCI, CNRS, T{\'e}l{\'e}com ParisTech, Universit{\'e} Paris-Saclay
%antoine.amarilli@telecom-paristech.fr
\And
Michael Benedikt \\
University of Oxford
%michael.benedikt@cs.ox.ac.uk
\AND
Pierre Bourhis \\
CNRS CRIStAL, Universit\'e Lille 1, INRIA Lille
%pierre.bourhis@univ-lille1.fr
\And
Michael Vanden Boom \\ 
University of Oxford
%michael.vandenboom@cs.ox.ac.uk
}
\begin{document}

\maketitle

\begin{abstract}
We consider entailment problems involving powerful constraint
languages such as \emph{guarded existential rules}, in which additional semantic restrictions
are put on a set of distinguished relations. We consider 
restricting a relation to be transitive,  restricting a relation  to be the transitive closure of another relation, 
and restricting a relation to be a linear order. We give some natural generalizations of guardedness
that allow inference to be decidable in each case, and isolate the complexity of the corresponding decision problems. Finally
we show that slight changes in our conditions lead to undecidability.

\end{abstract}

\section{Introduction} \label{sec:intro}

The \emph{query answering problem} (or certain
answer problem), abbreviated here as $\owqa$, 
is a fundamental reasoning problem
in both knowledge representation and databases.
It asks
whether a query (e.g.~given by an existentially-quantified conjunction of atoms) is entailed
by a set of constraints and a set of facts.
A common class of constraints used
for $\owqa$ are the \emph{existential
rules}, also known as \emph{tuple generating dependencies}
(TGDs).
Although query answering is known to be undecidable  for general TGDs,
there are a number of subclasses 
that admit decidable $\owqa$, such as those
based on \emph{guardedness}.
For instance, \emph{guarded} TGDs
 require all variables in the body
 of the dependency to appear in a single body atom (the \emph{guard}).
 \emph{Frontier-guarded} TGDs ($\fgtgd$s)
relax this condition
and require only that some
guard atom 
contains the variables that occur in both head and body \cite{bagetcomplexityfg}.
This includes standard
SQL referential constraints as well as important constraint
classes (e.g.~role inclusions) arising in knowledge representation.
Guarded existential rules can be generalized to
\emph{guarded logics} that allow disjunction and negation and still
enjoy decidable $\owqa$, \eg the
guarded fragment of first-order logic ($\gf$)
\cite{andreka1998modal} and the Guarded Negation Fragment
($\gnf$) \cite{gnficalp}.

A key challenge is to extend these results to capture
additional semantics of the relations. For example, the property
that a binary relation is \emph{transitive} cannot be expressed
directly in guarded logics, and neither can the property that one 
relation is the \emph{transitive closure} of another.
Going beyond transitivity, one cannot express that a binary
relation is a \emph{linear order}.
Since ordered data is common in applications, this means that a key part of data semantics
is being lost.

There has been extensive work on decidability results for guarded logics
thus extended with semantic restrictions.
We first review known results for 
the
\emph{satisfiability problem}.

Ganzinger \etal~\shortcite{undecidgf2} showed that satisfiability is not
decidable for $\gf$ when two relations
are restricted to be transitive, even on \emph{arity-two} signatures
(i.e.~with only unary and binary relations).
For linear orders, \cite{kieronski2011decidability} showed $\gf$ is
undecidable when three relations are restricted to be (non-strict) linear
orders, even with only two variables (so on arity-two signatures). 
Otto \shortcite{ottofo2order} showed satisfiability is decidable for
two-variable logic with
one relation restricted to be a linear order.
For transitive relations,
one way to regain decidability for $\gf$ satisfiability
was shown by Szwast and Tendera~\shortcite{gftgdecid}:
allow transitive
relations \emph{only} in guards.

We now turn to the $\owqa$ problem.
Gottlob \etal~\shortcite{andreaslidia} showed that query answering for
$\gf$ with transitive relations only in guards
is undecidable,
even on arity-two signatures.
Baget \etal~\shortcite{mugnier15} studied
$\owqa$ 
with respect to a collection
of linear TGDs (those with
only a single atom in the body and the head).  They showed that the query
answering problem is decidable with such TGDs and transitive relations, if the
signature is binary or if other additional restrictions are obeyed.

The case of TGDs mentioning relations with a restricted interpretation 
has been studied in the database community mainly in the setting of acyclic schemas, such as those
that map source data to target data. Transitivity restrictions have not been studied,
but there has been work on inequalities \cite{abdus} and TGDs with arithmetic \cite{afratiarith}.
Due to the acyclicity assumptions, $\owqa$ is still decidable, and has data complexity in $\conp$.
The fact that the data complexity can be $\conp$-hard
is shown in \cite{abdus}, while polynomial cases
have been isolated in \cite{abdus} (in the presence of inequalities)
and \cite{afratiarith} (in the presence of arithmetic).

Transitivity has also been studied in description
logics, where the signature contains unary relations (concepts) and
binary relations (roles). 
In this arity-two context, $\owqa$ is decidable for many description logics
featuring expressive operators as well as transitivity, such as $\mathcal{ZIQ}$,
$\mathcal{ZOQ}$, $\mathcal{ZOI}$ \cite{calvanese2009regular}, 
Horn-$\mathcal{SROIQ}$ \cite{ortiz2011query},
or
regular-$\mathcal{EL}^{++}$ \cite{kroetzsch2007conjunctive}, but they often restrict
the interaction between transitivity and some features such as role inclusions
and Boolean role combinations. $\owqa$ becomes undecidable for more expressive
description logics with transitivity such as 
$\mathcal{ALCOIF^*}$ \cite{ortiz2010query2}
and $\mathcal{ZOIQ}$ \cite{ortiz2010query}, and the problem is open for
$\mathcal{SROIQ}$ and $\mathcal{SHOIQ}$ \cite{ortiz2012reasoning}.

The main contribution
of this work is to introduce a broad class of constraints over arbitrary-arity
vocabularies where query answering is decidable when additional semantics are
imposed on some \emph{distinguished relations}. We show that transitivity restrictions
can be handled in guarded and frontier-guarded constraints,
as long as these distinguished relations are \emph{not} used as guards
--- we call this new kind of restriction \emph{base-guardedness} (and similarly, we extend frontier-guarded to ``base-frontier-guarded'' and so forth). The base-guarded restriction is orthogonal to
the prior decidable cases such as transitive guards \cite{gftgdecid} or linear rules
\cite{mugnier15}.  

On the one hand, we show that the condition allows us to define very expressive and flexible decidable logics,  capable
of expressing  not only guarded existential rules, but guarded rules with negation and disjunction in the head.
They can express not only integrity constraints
but also conjunctive queries and their negations.
On the other hand, a by-product of our results is  new query answering schemes for some previously-studied
classes of guarded existential rules with extra semantic restrictions.
For example, our base-frontier-guarded constraints encompass
all \emph{frontier-one TGDs} (where at most one variable is shared between the body and head)
\cite{baget2009extending}.
Hence, our results imply that $\owqa$ is decidable with
transitive closure and frontier-one constraints, which answers a question of
\cite{mugnier15}.
Our results even extend to frontier-one  TGDs with distinguished
relations that are required to be the transitive closure of other relations.

Our results are shown by arguing that it is enough to consider entailment over  ``tree-like'' sets of facts. By representing
the set of witness representations
as a  tree automaton,
we derive upper bounds for the combined complexity of the problem. 
The sufficiency of tree-like examples also enable a refined analysis of \emph{data complexity} (when the query and
constraints are fixed).
Further, we use a set of coding techniques 
to  show  matching lower bounds
within our fragment.
We also show that loosening our conditions leads to undecidability.

Finally, we solve the $\owqa$ problem
when the distinguished relations are \emph{linear orders}.
We show that it is undecidable even assuming base-guardedness, so we introduce a
stronger condition called \emph{base-coveredness}:
not only are distinguished relations never used as guards, they are always \emph{covered}
by a non-distinguished atom.
Our  decidability technique works by ``compiling away'' linear order restrictions, obtaining an entailment problem
without any special restrictions.
The correctness proof for our reduction to classical $\owqa$ again relies
on the ability to  restrict reasoning to  sets of facts with tree-like representations.
To our knowledge,
these are the first
decidability results for the $\owqa$ problem with linear orders.
Again, we give tight complexity bounds, and show that  weakening
the base-coveredness condition leads to undecidability.

\begin{figure*}
\renewcommand{\thefigure}{} % not 1a and 1b, but a and b
  \begin{subfigure}[b]{.64\linewidth}
  \begin{tabular}{l@{\quad}l@{~~}l@{\quad}l@{~~}l@{\quad}l@{~~}l}
\toprule
\multirow{ 2}{*}{\begin{tabular}[t]{@{}l@{}}{\bf Fragment}\\\null\end{tabular}} & \multicolumn{2}{c}{\bf $\!\!\!\!\!\!\!\!\!\!$$\owqatr$} &
\multicolumn{2}{c}{\bf $\!\!\!\!\!\!\!\!\!\!$$\owqatc$} & \multicolumn{2}{c}{\bf
$\!\!\!\!\!\!\!\!\!\!$$\owqalin$}\\
& \bf data & \bf combin.
& \bf data & \bf combin.
& \bf data & \bf combin. \\
\midrule
%{$\gnf$}
%& \multicolumn{2}{c}{{undecidable}}
%& \multicolumn{2}{c}{{undecidable}}
%& \multicolumn{2}{c}{{undecidable}} \\
{$\agnf$}
& coNP-c & 2EXP-c
& coNP-c & 2EXP-c
& \multicolumn{2}{c}{{undecidable}} \\
{$\acgnf$}
& coNP-c & 2EXP-c
& coNP-c & 2EXP-c
& coNP-c & 2EXP-c \\
{$\afgtgd$}
& in coNP & 2EXP-c
& coNP-c & 2EXP-c
& \multicolumn{2}{c}{{undecidable}} \\
{$\acfgtgd$}
& P-c & 2EXP-c
& coNP-c & 2EXP-c
& coNP-c & 2EXP-c \\
\bottomrule
\end{tabular}
\caption{Summary of $\owqa$ results (for
base-covered fragments, queries are also base-covered)}% (combined complexity upper bounds imply data complexity upper bounds)}
\label{tab:complexity}
\end{subfigure}
\begin{subfigure}[b]{.35\linewidth}
\begin{tikzpicture}[xscale=2.1,yscale=.6,inner sep=1pt]
\node (GNF) at (0, 0) {$\gnf$};
\node (AGNF) at (0, -1) {$\agnf$};
\node (ACGNF) at (0, -2) {$\acgnf$};
\node (FGTGD) at (1, -1) {$\fgtgd$};
\node (AFGTGD) at (1, -2) {$\afgtgd$};
\node (ACFGTGD) at (1, -3) {$\acfgtgd$};
\node (ID) at (2.13, -2) {$\incd$};
\node (AID) at (2.13, -4) {$\aincd$};
\node (FR1) at (1.9, -2.8) {fr-one};

\draw[->] (ACGNF) -- (AGNF);
\draw[->] (AGNF) -- (GNF);
\draw[->] (ACFGTGD) -- (AFGTGD);
\draw[->] (AFGTGD) -- (FGTGD);
\draw[->] (FGTGD) -- (GNF);
\draw[->] (AFGTGD) -- (AGNF);
\draw[->] (ACFGTGD) -- (ACGNF);
\draw[->] (ID) -- (FGTGD);
\draw[->] (AID) -- (ACFGTGD);
\draw[->] (AID) -- (ID);
\draw[->] (FR1) -- (AFGTGD);
\end{tikzpicture}
\caption{Taxonomy of fragments}
\label{fig:tax}
\end{subfigure}
\end{figure*}
%$

\section{Preliminaries} \label{sec:prelims}

We work on a \emph{relational signature}  $\sigma$ where each relation $R \in\nolinebreak
\sigma$ has an associated \emph{arity}
(written $\arity{R}$); we write $\arity{\sigma} \defeq \max_{R \in \sigma}
\arity{R}$.
A \emph{fact} $R(\vec{a})$ (or \emph{$R$-fact}) consists of a relation $R\in \sigma$ and domain
elements $\vec{a}$, with $\card{\vec{a}} = \arity{R}$. We denote a (finite
or infinite) set of facts over~$\sigma$ by $\instance$.
We write $\elems{\instance}$ for the set of elements
mentioned in the facts in $\instance$.

We consider \emph{constraints} and \emph{queries} given in fragments of first-order
logic ($\fo$).
For simplicity, we disallow constants in constraints and queries, although our results
extend with them.
Given
a set of facts $\instance$ 
and 
a sentence $\phi$ in $\fo$, we talk of $\instance$ \emph{satisfying} $\phi$ in the
usual way.
The \emph{size} of $\varphi$, written $\mysize{\varphi}$, is defined
to be the number of symbols in $\varphi$.

The queries that we will use are \emph{conjunctive queries} (CQ), namely, existentially quantified
conjunction of atoms, which we restrict for simplicity to be Boolean.
We also allow \emph{unions of
conjunctive queries} (UCQs), namely, disjunctions of CQs.

\myparagraph{Problems considered}
Given a 
\emph{finite} set of facts $\instance_0$, constraints $\Sigma$ and query $Q$ (given as $\fo$
sentences), we say that 
$\instance_0$ and $\Sigma$ \emph{entail} $Q$  if for  every (possibly infinite) $\instance \supseteq \instance_0$
satisfying $\Sigma$,
 $\instance$ satisfies $Q$.
This amounts to asking whether $\calF_0 \wedge \Sigma \wedge \neg Q$ is
satisfiable (by a possibly infinite set of facts).
We write $\owqa(\instance_0, \Sigma, Q)$ for this decision problem, called
the \emph{query answering} problem.

In this paper, we study the $\owqa$ problem when imposing semantic constraints
on some \emph{distinguished} relations.
We thus split the signature as $\sigma \defeq \sigmab \sqcup \sigmad$, where $\sigmab$ is the \emph{base signature}
(its relations are the \emph{base} relations), and
$\sigmad$ is the \emph{distinguished} signature. All distinguished relations are
required to be binary,
and they will be
assigned special semantics. We study three kinds of special semantics.

We say \emph{$\instance_0, \Sigma$ entails $Q$ over transitive relations}, and
write $\owqatrans(\instance_0, \Sigma, Q)$ for the corresponding problem, if 
$\instance_0 \wedge \Sigma \wedge \neg Q$ is satisfied by some set of facts
$\instance$
where each distinguished relation $R_i^\trans$ in $\sigmad$ is required to be
\emph{transitive}\footnote{
Note that we work with \emph{transitive}
relations, which may not be \emph{reflexive}, unlike, e.g., $R^*$  roles in
$\mathcal{ZOIQ}$ description logics \cite{calvanese2009regular}.
However, our results can be adapted to the case of reflexive and transitive
predicates (and reflexive transitive closure).}
in~$\instance$.

We say \emph{$\instance_0, \Sigma$ entails $Q$ over transitive closure}, and
write
$\owqatc(\instance_0, \Sigma, Q)$ for this problem,
if the same holds on some $\instance$ where
each relation $R^\trans_i$
of~$\sigmad$ is interpreted as the transitive closure of a corresponding binary base
relation $R_i \in \sigmab$.

We say \emph{$\instance_0, \Sigma$ entails $Q$ over linear orders},
and write $\owqalin(\instance_0, \Sigma, Q)$,
if the same
holds on some $\instance$ where each relation $<_i \in \sigmad$ is
required to be a strict linear order on the elements
of~$\instance$.

We now define the constraint languages (which are all fragments of $\fo$)
for which we study these $\owqa$ problems.

\myparagraph{Dependencies}
The first constraint languages we study are
restricted classes of
\emph{tuple-generating dependencies} (TGDs).
A TGD is a first-order sentence $\tau$ of the form
$\forall \vec x~ (\bigwedge_i \gamma_i(\vec x) \rightarrow \exists \vec y ~ \bigwedge_i
\rho_i(\vec x, \vec y) ~)$
where $\bigwedge_i \gamma_i$ and $\bigwedge_i \rho_i$ are conjunctions of atoms
respectively called the \emph{body} and \emph{head} of~$\tau$.
 
We will be interested in TGDs that are \emph{guarded}
in various ways.
A \emph{guard} for $\vec{x}$ is an atom from $\sigma$
using every variable in $\vec{x}$.
We will be particularly interested in \emph{base-guards},
which are guards coming from the base relations $\sigmab$.

 A \emph{frontier-guarded TGD} ($\fgtgd$) is a TGD whose body contains a guard
for the \emph{frontier variables} --- variables that occur in both head and body.
It is a \emph{base frontier-guarded TGD} ($\afgtgd$) if there is a base atom
including all  the frontier variables.
We allow equality atoms $x=x$ to be guards, so $\afgtgd$ subsumes 
\emph{frontier-one TGDs}, which have
one frontier variable.
Frontier-guarded and frontier-one TGDs have been shown to have
an attractive combination of expressivity and computational properties~\cite{bagetcomplexityfg}.

We also introduce the more restricted class of 
\emph{base-covered frontier-guarded TGDs} ($\acfgtgd$):
they are the $\afgtgd$s
where, for every $\sigmad$-atom in the body,
there is a base atom in the body containing its variables (a \emph{base guard for the atom}).
Note that each $\sigmad$-atom may have a different base guard.

An important special case of frontier-guarded TGDs for applications are
\emph{inclusion dependencies} ($\incd$). An $\incd$ is a $\fgtgd$ where the body and head
contain a single atom, and no variable occurs twice in the same atom. A
\emph{base inclusion dependency} $\aincd$ is an $\incd$ where
the body atom is in  $\sigmab$, so the body atom serves as  the base-guard for the frontier
variables, while the  constraint is trivially base-covered.

\myparagraph{Guarded logics}
Moving beyond TGDs, the second kind of constraint that we study
are \emph{guarded logics}.

The \emph{guarded negation fragment} ($\gnf$) is the fragment of $\fo$
which contains all atoms, and is closed under
conjunction, disjunction, existential quantification,
and the following form of negation: for any $\gnf$ formula $\phi(\vec{x})$
and atom $A(\vec{x}, \vec{y})$ with free variables exactly as
indicated, the formula $ A(\vec{x}, \vec{y}) \wedge
\neg \phi(\vec{x})$ is in $\gnf$.
That is, existential quantification may be unguarded,
but the free variables in any negated subformula must be guarded; universal
quantification must be expressed with existential quantification and negation.
$\gnf$ can express all $\fgtgd$s, as well as non-TGD constraints and UCQs.
For instance, as it allows disjunction, $\gnf$ can express
\emph{disjunctive
inclusion dependencies}, $\did$s, which  generalize IDs: their body is a
single atom with no repeated variables, and their head is a disjunction of atoms
with no repeated variables. 

We introduce the \emph{base-guarded negation fragment} $\agnf$ over $\sigma$: it
is defined
like $\gnf$, but requires \emph{base  guards} instead of guards.
The \emph{base-covered guarded negation fragment} $\acgnf$
over $\sigma$
consists of $\agnf$ formulas such that
every $\sigmad$-atom $A$ that appears negatively (i.e., under the scope of an odd
number of negations) appears conjoined with a
base guard --- i.e., a $\sigmab$-atom containing all variables of $A$.
This technical condition is designed to generalize $\acfgtgd$s.
Indeed, a $\acfgtgd$ of the form $\forall \vec{x} (\bigwedge \gamma_i \rightarrow \exists \vec{y} \bigwedge \rho_i)$ can be written in $\acgnf$
as
$\neg \exists \vec{x} ( \bigwedge \gamma_i  \wedge \neg \exists \vec{y} \bigwedge \rho_i )$.

We call a CQ $Q$ \emph{base-covered} if each $\sigmad$-atom in $Q$ has a $\sigmab$-atom of $Q$
containing its variables.
This is the same as saying that $\neg Q$ is in $\acgnf$.
A UCQ is base-covered if each disjunct is.

\myparagraph{Examples}
We conclude the preliminaries by giving a few examples.
Consider a signature with a binary base relation $B$,
a ternary base relation $C$, and a distinguished relation $R^{\trans}$.
\begin{compactitem}
\item $\forall x y z \big( (R^{\trans}(x,y) \wedge R^{\trans}(y,z)) \rightarrow R^{\trans}(x,z) \big)$ is a TGD,
but is not a $\fgtgd$ since the frontier variables $x,z$ are not guarded.
It cannot even be expressed in $\gnf$.
\item $\forall x y  \big(R^{\trans}(x,y) \rightarrow B(x,y) \big)$
is an $\incd$, hence a $\fgtgd$.  It is not a $\aincd$ or even in $\agnf$, since the frontier
variables are not base-guarded.
\item $\forall x y z \big((B(z,x) \wedge R^{\trans}(x,y) \wedge R^{\trans}(y,z)) \rightarrow R^{\trans}(x,z) \big)$
is a $\afgtgd$. It is not a $\acfgtgd$
since there are no base atoms in the body to cover $x,y$ and $y,z$.
\item $\exists w x y z \big( R^{\trans}(w,x) \wedge R^{\trans}(x,y) \wedge R^{\trans}(y,z) \wedge R^{\trans}(z,w) \wedge C(w,x,y) \wedge C(y,z,w) \big)$
is a base-covered CQ.
\item $\exists x y
\big(
B(x,y) \wedge \neg (R^{\trans}(x,y) \wedge R^{\trans}(y,x) ) \wedge  (R^{\trans}(x,y) \vee R^{\trans}(y,x) )
\big)$ cannot be rewritten as a TGD. But it is  in $\acgnf$.
\end{compactitem}

Our main results are summarized in Table~\ref{tab:complexity}, and
the languages that we study are illustrated in Figure~\ref{fig:tax}.
In particular, $\owqatr$ and $\owqatc$ are decidable for $\agnf$.
This includes  base-frontier-guarded
rules, which allow one to use a transitive relation
such as ``part-of'' (or even its transitive closure) whenever
only one variable is to be exported to the head. This latter condition
holds in the translations of many classical description logics.
Our results also imply that $\owqalin$ is decidable for $\acgnf$,
which allows constraints  that arise from data integration and
data exchange over attributes with linear orders --- e.g.~views
defined by selecting rows of a table where some inequality involving
the attributes is satisfied.

\section{Decidability results for transitivity} \label{sec:decid}

We first consider $\owqatc$,
where
$\sigmab$ includes binary relations $R_1, \dots, R_n$,
and $\sigmad$ consists of binary relations $R^\trans_1, \dots, R^\trans_n$
such that $R^\trans_i$ is the transitive
closure of $R_i$.

\begin{theorem}
  \label{thm:decidtransautomata}
We can decide $\owqatc(\instance_0, \Sigma,Q)$ in $\twoexp$,
where $\instance_0$ ranges over finite sets of facts,
$\Sigma$ over $\agnf$ constraints, and $Q$ over UCQs.
In particular, this holds when $\Sigma$ consists of $\afgtgd$s.
\end{theorem}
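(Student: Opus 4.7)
The plan is to reduce $\owqatc(\calF_0, \Sigma, Q)$ to an emptiness problem for a two-way alternating parity tree automaton running on codes of tree decompositions. As a first step, I would reformulate the task as deciding the unsatisfiability of $\calF_0 \wedge \Sigma \wedge \neg Q$ under the semantics where each $R_i^\trans \in \sigmad$ is interpreted as the transitive closure of its base counterpart $R_i \in \sigmab$. Since $Q$ is a Boolean UCQ, $\neg Q$ is a conjunction of sentences of the form $\neg \exists \vec{x} \bigwedge_j A_j$ with no free variables, so $\neg Q$ belongs to $\agnf$ (the empty tuple of free variables trivially has a base guard) and may be absorbed into $\Sigma$; setting aside the TC semantics we are therefore left with a pure $\agnf$ satisfiability question enriched with an extra-logical requirement on $\sigmad$.

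The main technical step is a \emph{tree-like model property}: whenever the formula is satisfiable under TC semantics, it is satisfied by a structure whose base-relation facts admit a tree decomposition of width bounded by $\arity{\sigmab}$. The crucial observation is that base-guardedness forces every negated subformula of $\Sigma \wedge \neg Q$ to be evaluated on a tuple sitting inside a $\sigmab$-atom, so we may apply standard guarded-unravelling to the base relations only, and then redefine each $R_i^\trans$ in the unravelling as the transitive closure of the unravelled $R_i$. Preservation of $\agnf$-sentences is standard by a back-and-forth argument on base-guarded tuples; preservation of $\neg Q$ holds because any CQ-match in the unravelling projects down to a CQ-match in the original model, with each $R_i^\trans$-edge of the match justified by an $R_i$-path that persists along the unravelling.

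Next, I would construct a two-way alternating parity tree automaton $\calA$ that accepts exactly the codes of tree-like witnesses. The automaton is the intersection of three components: (i) a local syntactic checker that the tree codes a well-formed tree-decomposition embedding $\calF_0$; (ii) a standard $\gnf$-style automaton for $\Sigma \wedge \neg Q$, exploiting that every negated subformula is base-guarded so its truth depends only on local bag types; and (iii) a \emph{TC-consistency} checker ensuring that each labelled $R_i^\trans$-edge between two elements of a bag is witnessed by an $R_i$-path through the tree, and, conversely, that every pair of elements co-occurring in some bag and connected by an $R_i$-path is labelled by $R_i^\trans$.

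The hardest part will be component (iii): transitive closure is not locally checkable, and the automaton must traverse the tree both up and down while tracking reachability endpoints, which is precisely the feature that motivates the use of two-way alternating automata. Designing the state space so that this reachability information composes cleanly with the $\gnf$-type information of component (ii) --- in particular when a CQ in $Q$ uses $R_i^\trans$-atoms whose endpoints live in different bags --- is the main delicate point. Once $\calA$ is in place, its state space is singly exponential in $|\Sigma|+|Q|$ (coming from the closure under $\gnf$-subformulas and the guarded-type information), and emptiness of two-way alternating parity tree automata is known to be solvable in singly exponential time in the automaton size, yielding the claimed $\twoexp$ upper bound. The case of $\afgtgd$ constraints follows immediately, since every $\afgtgd$ rewrites to an $\agnf$ sentence of comparable size, as recorded in the preliminaries.
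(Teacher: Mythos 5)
Your overall architecture matches the paper's: reduce to satisfiability of $\instance_0 \wedge \Sigma \wedge \neg Q$ under transitive-closure semantics (noting $\neg Q \in \agnf$), establish a tree-like model property by unravelling, build a two-way alternating parity tree automaton over codes of tree decompositions, and appeal to the exponential emptiness test. However, there is a genuine gap in your key lemma: you claim witnesses of tree-width bounded by $\arity{\sigmab}$, and this is false for $\agnf$ (and already for $\afgtgd$s). Unlike the guarded fragment, $\agnf$ permits unguarded existential quantification, so CQ-shaped subformulas force bags whose size is governed by the number of free/existential variables of subformulas, not by the arity of the signature. Concretely, over a purely binary base signature take the $\afgtgd$ $\forall x\, (U'(x,x) \rightarrow \exists y_1 \dots y_n \bigwedge_{i<j} E(y_i,y_j))$ together with the $\agnf$ constraints $\neg \exists x\, E(x,x)$ and $\neg \exists x y\, (E(x,y) \wedge E(y,x))$: every model realizes a tournament on $n$ pairwise-distinct elements, whose underlying graph is $K_n$ of tree-width $n-1$, exceeding $\arity{\sigmab}$ for $n \geq 4$. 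The correct parameter is the \emph{width} of the formula (the maximum number of free variables of a subformula after conversion to normal form), which is bounded by $\mysize{\Sigma \wedge \neg Q}$; this is what the paper's Proposition~\ref{prop:transdecomp} uses. Since the entire automaton (alphabet, bag size, state space) is parameterized by this width, taking it to be $\arity{\sigmab}$ yields an incomplete procedure: satisfiable sentences with no narrow witness would be declared unsatisfiable, i.e., the algorithm would wrongly assert entailment.

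Two further points are worth flagging, though they do not invalidate the approach. First, your component~(iii) annotates $R_i^\trans$-edges inside bags and checks consistency in both directions; the paper instead never commits the transitive closure to the encoding and evaluates each $R^\trans$-atom on demand, with Eve exhibiting a finite $R$-path that wanders through biological neighbours of the decomposition (and Adam able to challenge the dual). Both designs are workable, but the two-sided consistency check you propose requires verifying non-reachability for every base-guarded pair, which you should argue is expressible within the alternating automaton. Second, for the complexity accounting you need the normal-form conversion (exponential in size but preserving width and CQ-rank up to $\mysize{\varphi}$) so that the state space stays singly exponential, and the state count also depends on $\mysize{\instance_0}$ because the root bag must name all of $\elems{\instance_0}$; this still lands in $\twoexp$ overall, but the bound ``singly exponential in $\mysize{\Sigma}+\mysize{Q}$'' as stated is not quite what the construction delivers.
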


In order to prove Theorem~\ref{thm:decidtransautomata},
we give a decision procedure to determine whether
$\instance_0 \wedge  \Sigma \wedge
\neg Q$
is satisfiable, when $R^\trans_i$ is interpreted as the transitive closure of $R_i$.
When $\Sigma \in \agnf$ and $Q$ is a Boolean UCQ, then
$\Sigma \wedge \neg Q$ is in $\agnf$.
So it suffices to show that $\agnf$ satisfiability is decidable,
when properly interpreting $R^\trans_i$.

As mentioned in the introduction, our proofs rely heavily on the fact that
in query answering problems for these constraint languages, one
can restrict to sets of facts that have a ``tree-like'' structure. We now make this notion precise.
A \emph{tree decomposition} of $\instance$ consists
of a tree $(T, \child)$  and a labelling function $\lambda$ associating
each node of the tree $T$ to a set of facts of $\instance$, called the \emph{bag} of that node,
that satisfies the following conditions:
\begin{inparaenum}[(i)]
\item each fact of $\instance$ must be in the image of $\lambda$;
\item for each element $e \in \elems{\instance}$,
the set of nodes  whose bag uses $e$ is a connected
subset of $T$.
\end{inparaenum}
It is \emph{$\instance_0$-rooted} if the root node is associated with $\instance_0$.
It has \emph{width $k-1$}
if each bag other than the root mentions at most $k$ elements.

For a number $k$, a $\sigma$ sentence $\phi$  is said to have
\emph{transitive-closure friendly $k$-tree-like witnesses} if:
for every finite set of facts $\instance_0$,
if there is an $\instance$
extending $\instance_0$ with additional $\sigmab$-facts
such that $\instance$ satisfies $\phi$
when each $R^\trans$ is interpreted as the transitive closure of $R$,
then there is such an $\instance$
that has an $\instance_0$-rooted $(k-1)$-width
tree decomposition.
We can show that $\agnf$ sentences have this kind of $k$-tree-like witness
for an easily computable $k$.
The proof uses a standard technique,
involving an unravelling based on ``guarded negation bisimulation''
\cite{gnficalp}:

\begin{proposition}
  \label{prop:transdecomp}
  Every sentence $\phi$ in $\agnf$ has  transitive-closure friendly $k$-tree-like witnesses,
    where $k \leq \mysize{\phi}$.
\end{proposition}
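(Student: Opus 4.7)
The plan is to construct, from any witness $\instance \supseteq \instance_0$ of $\phi$ under transitive-closure semantics, a bisimilar companion $\instance^*$ that admits an $\instance_0$-rooted tree decomposition of width at most $k-1$ (with $k \leq \mysize{\phi}$) and on which each $R^\trans_i$ remains the transitive closure of $R_i$. The approach adapts the standard guarded-negation unravelling of \cite{gnficalp} to the base-guarded setting, then patches the result so that the transitive-closure semantics is preserved.

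First, I would formalise \emph{base-guarded-negation bisimulation} as a back-and-forth game played on \emph{base-guarded} tuples of size at most $k$, preserving the atomic type over $\sigma$. A routine adaptation of the $\gnf$-invariance argument shows that every $\agnf$-formula of size at most $k$ is preserved under such bisimulations, so it suffices to produce any bisimilar companion with the desired tree structure.

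Second, I would construct $\instance^*$ by a top-down unravelling of $\instance$ indexed by base-guarded tuples. The tree has a root labelled by $\instance_0$, and at each node associated with a base-guarded tuple $\vec a$ one attaches a child for every compatible base-guarded tuple $\vec b$ of $\instance$ sharing elements with $\vec a$, introducing fresh copies of the newly-exposed elements of $\vec b$. Each bag has at most $k$ elements, and the natural projection $\instance^* \to \instance$ is a base-guarded-negation bisimulation, so $\instance^* \models \phi$ when $R^\trans_i$ is treated as an opaque binary relation.

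The main obstacle will be reconciling the unravelling with the transitive-closure semantics: the unravelling can sever an $R_i$-chain of $\instance$ that witnesses some $R^\trans_i(a,b)$-fact, so the inherited interpretation of $R^\trans_i$ in $\instance^*$ need not coincide with the transitive closure of $R_i$ in $\instance^*$. I would handle this by augmenting $\instance^*$: for each inherited $R^\trans_i(a,b)$-fact, attach a fresh auxiliary branch realizing it as a short $R_i$-chain from $a$ to $b$ through fresh intermediate elements, each appearing only in bags of size two, so the bag-width bound is preserved. Maintaining the base-guarded-negation bisimulation with $\instance$ requires care: the fresh chains and their intermediate elements must be chosen so that their base-types match those of actual $R_i$-chains available in $\instance$ (which exist by the TC assumption), and the connectedness condition of the tree decomposition must be respected when grafting the auxiliary branches. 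After this augmentation, $R^\trans_i$ coincides with the transitive closure of $R_i$ in $\instance^*$, yielding the desired $k$-tree-like witness for $\phi$.
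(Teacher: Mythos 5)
Your overall architecture (unravel via a guarded-negation bisimulation, then repair the transitive closures) puts all of the real difficulty into the final ``patching'' step, and that step does not work as described. First, after grafting a fresh $R_i$-chain for each inherited $R^\trans_i(a,b)$-fact, the transitive closure of $R_i$ in $\instance^*$ is \emph{not} the inherited $R^\trans_i$ relation: grafted chains and pre-existing $R_i$-facts concatenate through shared original endpoints and produce $R^\trans_i$-pairs (typically between elements never co-represented in any bag) that were not inherited facts. Since $\agnf$ allows a $\sigmad$-atom to occur positively inside a negated, base-guarded subformula over unguarded existential variables (e.g.\ $B(x)\wedge\neg\exists yz\,(R^\trans(y,z)\wedge\dots)$), these extra pairs can falsify $\phi$. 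Your bisimulation only certifies satisfaction under the \emph{opaque} interpretation of $R^\trans_i$, so you have no argument that satisfaction survives the switch to transitive-closure semantics after augmentation; supplying one essentially forces you to prove a TC-aware invariance lemma anyway. Second, two smaller but concrete problems: an unravelling indexed by \emph{base-guarded} tuples cannot realize a severed chain so that it terminates at the correct copy of the endpoint (the triple $\{a_j,a_{j+1},b\}$ needed to carry the target $b$ along the path is not base-guarded), and attaching the repair chain through bags of size two violates the connectedness condition of the tree decomposition at the far endpoint $v$ --- you must carry $v$ through every bag of the chain, i.e.\ use bags of size three.

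The paper avoids patching altogether by making the bisimulation itself transitive-closure-aware. The $\gn^k$ game is played on \emph{arbitrary} subsets of size at most $k$ (partial rigid homomorphisms), not only base-guarded ones, and the key observation in the $\sigmad$-atom case of the invariance lemma is that ``there is an $R$-path of length $n$ from $x$ to $y$'' is an $\agnf$ formula with no $\sigmad$-atoms in which every subformula has at most $3$ free variables; hence it is preserved by the game whenever $k\ge 3$. Consequently the plain unravelling over arbitrary $\le k$-element bags already has exactly the right transitive closures (paths project down via the homomorphism, and path existence transfers up via the game), and no auxiliary branches are needed. The base-guarded-interface unravelling you describe is the one the paper reserves for the linear-order results, where the guardedly-transitive axioms compensate for the restricted bags.
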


Here $k$ can be taken to be the ``width'' of $\phi$ \cite{gnficalp}, which is roughly the maximum number of free variables
in any subformula. 
Hence,
it suffices to test satisfiability for $\agnf$ restricted
to sets of facts with tree decompositions
of width $\mysize{\phi}-1$.
It is well known that sets of facts of bounded tree-width
can be encoded as trees over a finite alphabet that depends only on the signature and the tree-width.
This makes the problem amenable to tree automata
techniques,
since we can design a tree automaton that runs on
representations of these tree decompositions
and checks whether some sentence holds in
the corresponding set of facts.

\begin{theorem}\label{thm:automata}
Let $\phi$ be a sentence in $\agnf$,
and let $\instance_0$ be a finite set of facts.
We can construct in $\twoexp$
a 2-way alternating parity tree automaton $\calA_{\phi,\instance_0}$
such that\\[.3em]
\null\hfill$ \text{$\instance_0 \wedge \phi$ is satisfiable
\quad iff \quad
$L(\calA_{\phi,\instance_0}) \neq \emptyset$}$\hfill\null\\[.3em]
when each $R_i^\trans \in \sigmad$ is interpreted as the transitive closure
of $R_i \in \sigmab$.
The number of states of $\calA_{\phi,\instance_0}$ is exponential in $\mysize{\phi} \cdot \mysize{\instance_0}$
and the number of priorities is linear in $\mysize{\phi}$.
\end{theorem}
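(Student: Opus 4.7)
The plan is to apply Proposition~\ref{prop:transdecomp} to restrict attention to witness structures whose tree decompositions have width at most $\mysize{\phi}-1$, and to encode such decompositions as labelled trees over a finite alphabet $\Gamma$ whose size depends only on the signature and $\mysize{\phi}$: a node label describes the isomorphism type of its bag together with the atoms holding there, and distinguishes which elements are shared with the parent. The automaton $\calA_{\phi,\instance_0}$ is then built as an intersection of several 2-way alternating parity tree automata, one per structural requirement; intersection preserves the 2-way alternating parity type with only polynomial blow-up in states and priorities.

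First, I include components that verify (i) the input tree encodes a consistent tree decomposition (agreement on shared elements across edges and connectivity of each element's occurrences) and (ii) the root bag contains an isomorphic copy of $\instance_0$. Task (i) uses polynomially many states, while task (ii) is responsible for the $\mysize{\instance_0}$ contribution to the state-count exponent, since the automaton must ``memorize'' an embedding of $\instance_0$ into the unbounded root. For $\phi$ itself, I adapt the $\gnf$ tree-automaton construction of~\cite{gnficalp} to the base-guarded setting: states track which subformulas of $\phi$ hold at each tuple of a bag, and the automaton uses 2-way alternation to verify existential and (base-)guarded negated subformulas by descending into or backing up through the tree, with a parity condition handling the nesting of least and greatest fixed points. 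This component has exponentially many states in $\mysize{\phi}$ and linearly many priorities in $\mysize{\phi}$.

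The main obstacle is the transitive-closure component, which must enforce that each $R_i^\trans$ coincides with the transitive closure of $R_i$. I split this into two inclusions. For $R_i^\trans \subseteq \mathrm{TC}(R_i)$, every $R_i^\trans(a,b)$ atom in a bag carries a justification obligation, discharged either by witnessing $R_i(a,b)$ locally or by guessing an intermediate element $c$, navigating (using 2-way moves) to a bag containing $\{a,c\}$ and to a bag containing $\{c,b\}$, and recursively justifying $R_i^\trans(a,c)$ and $R_i^\trans(c,b)$; a least-fixed-point priority forbids infinite regress and thus forces each justification to unfold in finitely many steps. For $R_i^\trans \supseteq \mathrm{TC}(R_i)$, I impose the axioms $R_i(x,y) \rightarrow R_i^\trans(x,y)$ and $R_i^\trans(x,y) \wedge R_i^\trans(y,z) \rightarrow R_i^\trans(x,z)$: the first is a local per-bag check, and for the second, whenever the automaton detects composable facts $R_i^\trans(a,b), R_i^\trans(b,c)$ in bags located along the connected subtree of $b$, it spawns a universal branch that navigates the tree to locate a bag containing $\{a,c\}$ and demands the fact $R_i^\trans(a,c)$ there. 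The connectivity property of tree decompositions ensures that such a bag must exist in any correct witness, so these local verifications entail the global closure.

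Taking the product of these components, $\calA_{\phi,\instance_0}$ has state space exponential in $\mysize{\phi} \cdot \mysize{\instance_0}$, linearly many priorities in $\mysize{\phi}$, and is constructible in $\twoexp$ overall. Correctness follows from Proposition~\ref{prop:transdecomp}: $\instance_0 \wedge \phi$ is satisfiable under the transitive-closure semantics iff some width-bounded tree decomposition of a witness is accepted by $\calA_{\phi,\instance_0}$, which in turn is equivalent to $L(\calA_{\phi,\instance_0}) \neq \emptyset$.
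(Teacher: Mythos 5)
Your overall architecture (restrict to width-bounded witnesses via Proposition~\ref{prop:transdecomp}, encode decompositions over a finite alphabet, check consistency and the embedding of $\instance_0$ at the root, and evaluate $\phi$ by a 2-way alternating automaton built over the subformula structure) matches the paper's proof in outline. The genuine gap is in your transitive-closure component, specifically the inclusion $\mathrm{TC}(R_i) \subseteq R_i^\trans$. You propose to materialize $R_i^\trans$ as facts stored in the bags and to enforce closure by locating, for every composable pair $R_i^\trans(a,b)$, $R_i^\trans(b,c)$, a bag containing $\{a,c\}$ and demanding $R_i^\trans(a,c)$ there, claiming connectivity of tree decompositions guarantees such a bag exists. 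It does not: in a decomposition with bags $\{a,b\}$ and $\{b,c\}$ and facts $R_i(a,b)$, $R_i(b,c)$, the pair $(a,c)$ lies in $\mathrm{TC}(R_i)$ but $a$ and $c$ never co-occur in any bag, so the fact $R_i^\trans(a,c)$ cannot even be represented in the encoding. This is not a corner case but the heart of the problem: the transitive closure of $R_i$ on a width-$k$ tree-decomposable structure is in general not itself tree-decomposable of any bounded width (e.g.\ the closure of an infinite successor chain is a linear order with a complete Gaifman graph), so no scheme that stores $R_i^\trans$ as guarded facts and checks closure locally can be correct --- your automaton would either reject legitimate witnesses or fail to enforce the closure.

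The paper avoids this entirely by never materializing $R_i^\trans$: the decoded structure's $R_i^\trans$ is \emph{defined} to be the transitive closure of the stored $R_i$-facts, and the only obligations are to evaluate the finitely many $R_i^\trans$-atoms occurring in $\phi$ (and the $R_i^\trans$-facts of $\instance_0$). Each such atom is evaluated on demand by a localized sub-automaton in which Eve navigates the tree along biological neighbors tracing a path of $R_i$-facts from the element named for $x_1$ to the one named for $x_2$, with all intermediate states given odd priority so that the path must be witnessed in finitely many steps; negative occurrences are handled by dualizing this sub-automaton (Adam then fails to find a path exactly when no path exists), which is sound because in $\agnf$ negation is base-guarded. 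Your first inclusion ($R_i^\trans \subseteq \mathrm{TC}(R_i)$ via justification obligations with a least-fixed-point priority) is essentially this mechanism, so the fix is to drop the materialization and the closure-enforcement component altogether and instead route every occurrence of an $R_i^\trans$-atom in $\phi$ and in $\instance_0$ through that reachability game and its dual.
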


The construction can be viewed
as an extension of \cite{CalvaneseGV05},
and incorporates ideas from automata for guarded logics
(see, e.g., \cite{GradelW99}).

Because 2-way tree automata emptiness is decidable in time
exponential in the number of states and priorities \cite{Vardi98},
this yields the $\twoexp$ bound
for Theorem~\ref{thm:decidtransautomata}.

\myparagraph{Consequences for $\owqatr$}
We can derive results
for $\owqatrans$
by observing that the
$\owqatc$ problem subsumes it:
to enforce that $R^\trans \in \sigmad$ is transitive, simply interpret it as the
transitive closure of a relation~$R$ that is never otherwise used.
Hence:

\begin{corollary} \label{cor:decidetransgnf}
We can decide $\owqatr(\instance_0, \Sigma,Q)$ in $\twoexp$,
where $\instance_0$ ranges over finite sets of facts,
$\Sigma$ over $\agnf$ constraints (in particular, $\afgtgd$),
and $Q$ over UCQs.
\end{corollary}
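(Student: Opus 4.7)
The plan is to give a direct polynomial-time reduction from $\owqatrans$ to $\owqatc$, and then invoke Theorem~\ref{thm:decidtransautomata}. Given an input $(\instance_0,\Sigma,Q)$ for $\owqatr$ with distinguished relations $R_1^\trans,\dots,R_n^\trans$, I introduce, for each $i$, a fresh binary base relation $R_i$ that does \emph{not} occur in $\sigmab$, in $\instance_0$, in $\Sigma$, or in $Q$. I then view $(\instance_0,\Sigma,Q)$ as an instance of $\owqatc$ over the enlarged signature $\sigmab \cup \{R_1,\dots,R_n\} \sqcup \sigmad$, where each $R_i^\trans$ is now required to be interpreted as the transitive closure of $R_i$. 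Note that the input has not changed syntactically, so $\Sigma$ is still in $\agnf$ with respect to the new base signature, and $Q$ is still a UCQ.

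The key step is to show that this reduction preserves the answer. For the forward direction, suppose $\owqatr(\instance_0,\Sigma,Q)$ fails, witnessed by some $\instance \supseteq \instance_0$ satisfying $\Sigma \wedge \neg Q$ and in which every $R_i^\trans$ is transitive. I extend $\instance$ to $\instance'$ by adding, for each fact $R_i^\trans(a,b) \in \instance$, the fact $R_i(a,b)$. Since $R_i^\trans$ is transitive in $\instance$, the transitive closure of the $R_i$-facts of $\instance'$ is exactly the set of $R_i^\trans$-facts of $\instance'$, so $\instance'$ is a legitimate structure for $\owqatc$. Because $R_i$ does not appear in $\Sigma$ or $Q$, satisfaction of $\Sigma \wedge \neg Q$ is unaffected, and $\instance' \supseteq \instance_0$ since we only added fresh facts. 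Hence $\owqatc$ also fails on this input. For the backward direction, any $\instance$ witnessing that $\owqatc$ fails automatically has each $R_i^\trans$ transitive (being a transitive closure), so it witnesses the failure of $\owqatr$ as well.

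Having established the equivalence, the complexity bound is immediate: the reduction enlarges the signature by $n \leq \mysize{\Sigma} + \mysize{Q}$ fresh symbols and leaves $\instance_0$, $\Sigma$, $Q$ syntactically unchanged, so it runs in polynomial time. Applying Theorem~\ref{thm:decidtransautomata} to the resulting $\owqatc$ instance yields the desired $\twoexp$ bound. The specialization to $\afgtgd$ constraints follows from the fact, noted in Section~\ref{sec:prelims}, that every $\afgtgd$ can be written as an $\agnf$ sentence. No step here is a serious obstacle; the only mild subtlety is verifying that adding the fresh $R_i$-facts neither destroys the extension relation to $\instance_0$ nor interferes with $\Sigma \wedge \neg Q$, which is exactly why the fresh relations must not already appear anywhere in the input.
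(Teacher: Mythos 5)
Your proposal is correct and follows exactly the paper's route: the paper derives Corollary~\ref{cor:decidetransgnf} by observing that $\owqatc$ subsumes $\owqatrans$, enforcing transitivity of each $R_i^\trans$ by interpreting it as the transitive closure of a fresh base relation $R_i$ that is never otherwise used, and then applying Theorem~\ref{thm:decidtransautomata}. Your write-up just spells out the two directions of the equivalence (adding $R_i(a,b)$ for each $R_i^\trans(a,b)$ in one direction, and noting that a transitive closure is automatically transitive in the other), which the paper leaves implicit.
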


In particular, this result holds for 
\emph{frontier-one TGDs} (those with a single frontier variable),
as a single variable is always base-guarded.
This answers a question of \cite{mugnier15}.

\myparagraph{Data complexity}
Our results in Theorem~\ref{thm:decidtransautomata} and
Corollary~\ref{cor:decidetransgnf} show upper bounds on the \emph{combined
complexity} of the $\owqatr$ and $\owqatc$ problems. We now turn to the complexity when
the query and constraints are fixed but the initial set of facts varies --- the \emph{data complexity}.

We first show
a $\conp$ data complexity upper bound for $\owqatc$ for $\agnf$ constraints.
The algorithm uses the fact
that a counterexample to $\owqatc$
can be taken to have a $\instance'$-rooted tree  decomposition, for some $\instance'$ that
does not add new elements to $\instance_0$, only  new facts. While such a decomposition could be large, it suffices
to guess $\instance'$  and annotations describing, for each $\mysize{\phi}$-tuple $\vec c$ in $\instance'$,  sufficiently many formulas
holding in the  subtree that interfaces with $\vec c$. 
The technique generalizes an
analogous result in \cite{vldb12}.

\begin{theorem}
  \label{thm:conptransdataupper}
  For any fixed $\agnf$ constraints $\Sigma$ and UCQ $Q$, given a finite
  set of facts $\instance_0$, we can decide $\owqatc(\instance_0, \Sigma, Q)$ in
  $\conp$ data complexity. 
\end{theorem}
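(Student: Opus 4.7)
The plan is to give a non-deterministic polynomial-time algorithm for the complement of $\owqatc$: guess a polynomial-size certificate witnessing a counterexample to entailment, and verify it in polynomial time. Since $\Sigma$ and $Q$ are fixed, the $\agnf$ sentence $\phi \defeq \Sigma \wedge \neg Q$ has constant size and constant width $k \leq \mysize{\phi}$. By Proposition~\ref{prop:transdecomp}, whenever entailment fails there is a counterexample $\instance \supseteq \instance_0$ satisfying $\phi$ (with each $R^\trans_i$ interpreted as the transitive closure of $R_i$) admitting an $\instance_0$-rooted tree decomposition of width $k-1$. I would first refine this to a counterexample admitting an $\instance'$-rooted decomposition, where $\instance'$ extends $\instance_0$ only with $\sigmab$-facts over $\elems{\instance_0}$. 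This is arranged by absorbing into the root every bag whose elements all lie in $\elems{\instance_0}$. Since the arity of $\sigma$ is a constant, $\instance'$ has size polynomial in $\mysize{\instance_0}$.

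Next I would specify the certificate: (i) the set $\instance'$, and (ii) for each tuple $\vec c$ of at most $k$ elements of $\elems{\instance_0}$, an \emph{annotation} $\Theta_{\vec c}$ drawn from a fixed finite set $\Phi$ of $\agnf$ subformula types of $\phi$ on $k$ free variables. The annotation $\Theta_{\vec c}$ is intended to record which formulas of $\Phi$ hold of $\vec c$ when interpreted in the portion of the tree decomposition hanging below the root and interfacing with $\vec c$. Because $\phi$ is fixed, $\Phi$ has constant size and each annotation has constant size, while the number of relevant $k$-tuples is $O(\mysize{\instance_0}^k)$, which is polynomial. Verification then splits into: (a) checking that $\phi$ is satisfied at the root once the annotations are used to resolve guarded negations that reach into the subtrees --- this reduces to evaluating a fixed sentence on $\instance'$ decorated with constant-size labels; and (b) checking that each $\Theta_{\vec c}$ is \emph{realizable}, i.e., some tree-like extension adding only fresh elements below $\vec c$ makes precisely the formulas of $\Theta_{\vec c}$ hold of $\vec c$. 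Realizability depends only on $\phi$ and $\Theta_{\vec c}$, so it is decided once and for all by a constant-size precomputation.

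The main obstacle is the transitive closure semantics: an $R^\trans_i$-fact between two elements of $\elems{\instance_0}$ may be witnessed by an $R_i$-path threading through several subtrees below the root, and any $R_i$-path inside a subtree contributes to the transitive closure among its interface elements. To handle this, I would augment each annotation $\Theta_{\vec c}$ with path information: for each ordered pair of elements inside $\vec c$ and each $i$, a bit recording whether some $R_i$-path lies in the subtree attached to $\vec c$. Verification then includes a global closure check on the relations over $\elems{\instance_0}$: the transitive closure of the $R_i$-edges in $\instance'$ together with all path contributions from the annotations must coincide exactly with the $R^\trans_i$-facts of $\instance'$. Since this graph has only polynomially many pairs, the check runs in polynomial time, and the overall procedure is in $\np$. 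Correctness then follows from a straightforward adaptation of the annotation-based data complexity argument in~\cite{vldb12}, taking care that annotations are realizable jointly with the chosen path information under the transitive-closure interpretation of $\sigmad$.
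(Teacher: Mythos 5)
Your overall strategy is the same as the paper's: an $\np$ algorithm for the complement that guesses a polynomial-size extension $\instance'$ of $\instance_0$ over the same domain, together with a constant-size summary, for each $k$-tuple $\vec c$ of elements of $\elems{\instance_0}$, of the tree-like piece $T_{\vec c}$ hanging off $\vec c$, and then verifies everything locally. The paper instantiates the summaries as quantifier-rank-$j$ first-order types over $\sigma \cup \{d_1,\dots,d_k\}$ for a suitable $j$ depending on $\phi$ and $k$, and proves by an Ehrenfeucht--Fra\"iss\'e composition lemma that the root decorated with these types determines the truth of $\phi$; your summaries are leaner ($\agnf$ subformula types plus per-pair $R_i$-path bits), closer to the annotation scheme of \cite{vldb12} that the paper says it generalizes.

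The one step that does not work as written is the claim that realizability of $\Theta_{\vec c}$ ``depends only on $\phi$ and $\Theta_{\vec c}$'' and can be precomputed in isolation. An $R^\trans_i$-atom between two elements of $T_{\vec c}$ can be witnessed by an $R_i$-path that exits through one interface element of $\vec c$, travels through the root and other subtrees, and re-enters through another interface element; hence which subformulas of $\phi$ hold of $\vec c$ ``in the subtree'' depends on the globally determined $R^\trans_i$-facts among the elements of $\vec c$, which are fixed only by the certificate as a whole. Your global closure check handles the outward direction (subtree paths contributing to the root's $R^\trans_i$-facts) but not this inward feedback, so a verifier implementing step (b) literally could accept annotations that are realizable only under a different set of interface $R^\trans_i$-facts than the one $\instance'$ actually induces. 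The repair is to relativize realizability to the $\sigma$-facts (in particular the $R^\trans_i$-facts) that $\instance'$ asserts on $\vec c$, treated as promises supplied by the environment; since there are only constantly many such fact sets over a $k$-tuple, the precomputation remains constant-size. This is exactly what the paper's heavier types buy automatically: a rank-$j$ type over $\sigma \cup \{d_1,\dots,d_k\}$ already pins down the $R^\trans_i$-facts among the interface constants, and its consistency with the root is absorbed into the evaluation of the single fixed sentence $\phi'$ on the abstraction.
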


For $\fgtgd$s, the data complexity of $\owqa$ is in $\ptime$ \cite{bagetcomplexityfg}.
We can show that the same holds, but only for $\acfgtgd$s, and for
$\owqatrans$ rather than $\owqatc$:
\newcommand{\ptimetransdataupper}{
  For any fixed $\acfgtgd$ constraints $\Sigma$ and base-covered UCQ $Q$, given a finite
  set of facts $\instance_0$, we can decide $\owqatrans(\instance_0, \Sigma, Q)$ in
  $\ptime$ data complexity.
}
\begin{theorem}
  \label{thm:ptimetransdataupper}
  \ptimetransdataupper
\end{theorem}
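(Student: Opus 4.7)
Plan:

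The strategy is to extend the standard $\ptime$ data-complexity algorithm for $\owqa$ with $\fgtgd$'s to handle transitivity, using base-coveredness to keep transitive closure computation local. First, I would construct a universal model $\instance^*$ for the $\owqatrans$ problem by iterating a two-phase chase on $\instance_0$: (i) apply the $\acfgtgd$ rules of $\Sigma$ in the standard $\fgtgd$ chase manner, introducing fresh nulls for existentially-quantified head variables; and (ii) close each $R_i^{\trans} \in \sigmad$ under transitivity. Although $\instance^*$ may be infinite, it is universal: any $\instance \supseteq \instance_0$ satisfying $\Sigma$ with each $R_i^{\trans}$ transitive admits a homomorphism from $\instance^*$, and that homomorphism preserves distinguished atoms because the target is already transitive. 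Hence $\instance^* \models Q$ iff $\owqatrans(\instance_0, \Sigma, Q)$ holds, by preservation of UCQs under homomorphisms.

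Second, I would show that this chase plus query evaluation can be implemented by a fixed Datalog program $\Pi_{\Sigma, Q}$ whose size depends only on $\Sigma$ and $Q$, not on $\instance_0$. The $\acfgtgd$ chase is encoded by the standard technique for $\fgtgd$ data complexity: introduce intensional predicates representing the bounded-width ``types'' of chase fragments, exploiting that the chase has tree decomposition of width bounded by $\mysize{\Sigma}$ (cf.\ Proposition~\ref{prop:transdecomp}). Transitive closure of each $R_i^{\trans}$ is captured by a Datalog rule of the form $R_i^{\trans}(x,z) \leftarrow R_i^{\trans}(x,y) \wedge R_i^{\trans}(y,z)$. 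The base-covered UCQ $Q$ is then a UCQ to evaluate over the resulting extended instance. Running $\Pi_{\Sigma, Q}$ on $\instance_0$ takes $\ptime$ in data complexity, as desired.

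The main obstacle will be arguing that the bounded-type Datalog encoding remains faithful in the presence of transitivity, since a transitive path can \emph{a priori} traverse arbitrarily many bags of the tree decomposition of $\instance^*$. The key locality lemma to establish is: because $\Sigma$ is $\acfgtgd$ (every $\sigmad$-atom in a body is accompanied by a $\sigmab$-guard on its variables) and $Q$ is base-covered, any $R^{\trans}$-atom whose derivation influences a rule firing or a query match connects two elements that co-occur in some base atom, and thus lie in a common bag of the tree decomposition. Transitive paths confined to such a local neighbourhood can be tracked by the Datalog program via its bounded-width type predicates; transitive paths that wander further contribute only $R^{\trans}$-atoms whose endpoints cannot jointly appear in a base atom, which makes them irrelevant to both rule bodies and query atoms. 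Once this locality property is established, correctness of the Datalog encoding and the $\ptime$ data-complexity bound follow.
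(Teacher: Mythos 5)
Your overall architecture (universal model via a chase interleaved with transitive closure, then a fixed polynomial-time evaluation exploiting base-coveredness) is viable for $\owqatrans$, and your observation that only $R^{\trans}$-atoms on base-guarded pairs can influence rule firings or query matches is exactly the right use of the $\acfgtgd$/base-covered hypotheses. But there is a concrete gap in your ``locality lemma.'' You assert that transitive paths relevant to a base-guarded pair are ``confined to a local neighbourhood'' and that wandering paths only produce irrelevant atoms. This dichotomy is false as stated: an $R^{\trans}$-path between two elements that co-occur in a base atom can traverse unboundedly many bags of the chase's tree decomposition (dipping into chase subtrees through fresh nulls and returning). What is true is that the \emph{direct} $R^{\trans}$-fact between such a pair is forced, but proving this requires an induction over the tree decomposition in which every interface crossed by the path is itself base-guarded, so that the path can be repeatedly shortcut at interfaces (this is Lemma~\ref{lem:guarded-transitivity} in the paper, and it depends on the decomposition having base-guarded interfaces, a structural property you never identify or establish). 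Without that induction, your claim that the bounded-width type predicates can ``track'' the relevant transitive paths is unsupported, and it is precisely there that the difficulty of the theorem lives. Relatedly, your second step --- that the $\fgtgd$ chase plus a global transitivity rule is implemented by a fixed Datalog program via type predicates --- is asserted rather than constructed; you are effectively reopening the internals of the $\ptime$ algorithm for $\fgtgd$s and modifying them, which is a substantial amount of unverified work, and the modification is nontrivial exactly where transitive paths pass through nulls inside the summarized subtrees.

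For contrast, the paper avoids re-engineering the $\fgtgd$ algorithm entirely: it reduces $\owqatrans$ to ordinary $\owqa$ by adding to $\Sigma$ the $k$-guardedly-transitive axioms (for $k = \mysize{\Sigma \wedge \neg Q}$), which are expressible as $\fgtgd$s over an instance enriched with a fresh guard relation $G$ on all pairs of $\instance_0$-elements, and then invokes the known $\ptime$ data-complexity bound of \cite{bagetcomplexityfg} as a black box (Proposition~\ref{prop:reducetr}). The correctness of that reduction uses the same three ingredients your sketch implicitly needs: base-guarded-interface tree-like witnesses (Proposition~\ref{prop:guarded-interface-dec-appendix}), the interface-collapsing induction (Lemma~\ref{lem:guarded-transitivity}), and the fact that adding $\sigmad$-facts on non-base-guarded pairs cannot affect $\acgnf$ sentences (Lemma~\ref{lemma:acov}). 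If you supply those three lemmas, you may as well package them as the paper's reduction rather than as a bespoke Datalog program.
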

The proof uses a reduction to the standard $\owqa$ problem for
$\fgtgd$s, and then applies the $\ptime$ result of \cite{bagetcomplexityfg}. The reduction again makes use of tree-likeness to show that we can replace the requirement  that the 
$R_i^\trans$ are transitive by the weaker requirement of transitivity within small sets (intuitively, within 
bags of a decomposition). We will also use this idea for linear orders (see Proposition~\ref{prop:rewritelin}).

Restricting to $\owqatrans$ is in fact essential to make
data complexity tractable, as hardness holds otherwise.

\myparagraph{Hardness}
We now show complexity lower bounds. We already know that all our variants of
$\owqa$ are $\twoexp$-hard in combined complexity, and $\conp$-hard in data
complexity, when $\gnf$ constraints are allowed: this follows from existing bounds on $\gnf$
reasoning \cite{vldb12} even without distinguished predicates.
However, in the case of the $\owqatc$ problem, we can show the same result for the much
weaker language of $\aincd$s.

We do this via a reduction from $\owqa$ with
\emph{disjunctive inclusion dependencies}, which
is known to be $\twoexp$-hard in combined complexity \cite[Theorem
2]{bourhispieris} and $\conp$-hard in data complexity
\cite{calvanese2006data,bourhispieris}, even without distinguished relations. We
use the transitive closure to emulate disjunction (as was already suggested in
the description logic context \cite{oldhorrocks}) by creating an
$R^\trans_i$-fact and limiting the length of a witness $R_i$-path (this limit is imposed 
by $Q'$).
The choice of the length of the witness path among two possibilities is used to
mimic the disjunction.  
We thus show:

\newcommand{\tcdisj}{
  For any finite set of facts $\instance_0$,
  $\did$s $\Sigma$,
  and UCQ $Q$ on a signature $\sigma$,
  we can compute in $\ptime$ a set of facts $\instance_0'$,
  $\aincd$s $\Sigma'$,
  and a base-covered CQ $Q'$ on a signature $\sigma'$
  (with a single distinguished relation),
  such that $\owqa(\instance_0, \Sigma,
  Q)$ iff $\owqatc(\instance_0', \Sigma', Q')$.
}

\begin{theorem}
  \label{thm:tcdisj}
  \tcdisj
\end{theorem}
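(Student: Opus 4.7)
The plan is to design a $\ptime$ reduction that encodes each disjunctive head $\beta_1(\vec{y}_1) \vee \cdots \vee \beta_k(\vec{y}_k)$ of a DID as the freedom a model has in choosing a witness $R$-path for a forced $R^\trans$-fact; as suggested in the introduction, the path length will be bounded by~$Q'$, and length~$i$ will correspond to disjunct~$\beta_i$ being chosen.

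First I would define the target signature $\sigma'$ by extending $\sigma$ with a single pair $(R, R^\trans)$, where $R \in \sigmab$ is a fresh binary base relation and $R^\trans \in \sigmad$ is its transitive closure, plus polynomially many fresh auxiliary base relations for per-DID gadget bookkeeping. For each DID $\tau \colon A(\vec{x}) \to \bigvee_{i=1}^{k} \beta_i(\vec{y}_i)$ in $\Sigma$, I add a short chain of $\aincd$s triggered by $A(\vec{x})$ that generates fresh witness elements $s_\tau, t_\tau$ (together with ancillary base atoms linking them to $\vec{x}$) and culminates in an $\aincd$ forcing $R^\trans(s_\tau, t_\tau)$. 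Because $R^\trans$ must coincide with the transitive closure of $R$, every model must supply an actual $R$-path from $s_\tau$ to $t_\tau$, and is free to choose both its length and its intermediate nodes --- this is what simulates the disjunctive non-determinism.

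Next, I would build the base-covered CQ $Q'$ so that its match corresponds to either the original query $Q$ holding, or a \emph{gadget violation}: a witness $R$-path longer than $k$, or a length-$i$ path not accompanied by the atoms of $\beta_i(\vec{y}_i(\vec{x}))$. Because $Q'$ must be a single CQ, I would funnel all violation patterns through uniform ``alarm'' atoms produced by the $\aincd$s alongside each gadget, so that one conjunction of atoms in $Q'$ suffices to catch any violation; the original UCQ $Q$ would be routed through the same alarm mechanism. Each $R^\trans$-atom appearing in $Q'$ would be paired with a base atom on the same variables (either the gadget's base witness relation or an explicit $R$-atom), making $Q'$ base-covered. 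For correctness, the forward direction starts from a counter-model $\instance$ of $\owqa(\instance_0, \Sigma, Q)$, picks any satisfied disjunct $i(\vec{a})$ for each $A$-fact $A(\vec{a})$, extends $\instance$ with the gadget atoms and an $R$-path of length $i(\vec{a})$ between the corresponding $s_\tau$ and $t_\tau$, and then closes $R^\trans$ under transitive closure; the backward direction reads off, from a counter-model $\instance^*$ of $\owqatc(\instance_0', \Sigma', Q')$, the chosen disjunct via the unique length $i \leq k$ of the $R$-path that $\instance^*$ provides, and projects $\instance^*$ onto $\sigma$. The reduction is linear in $|\instance_0| + |\Sigma| + |Q|$, hence in $\ptime$.

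The main obstacle I anticipate is encoding the implication ``a length-$i$ $R$-path forces $\beta_i(\vec{y}_i)$ to be present'' as a single \emph{positive} base-covered CQ, since negation is unavailable and $\aincd$s have single-atom bodies and so cannot trigger on conjunctive violation patterns. My plan for overcoming this is to thread auxiliary marker atoms through each gadget so that the absence of $\beta_i$ under a length-$i$ choice exposes a specific positive configuration matchable by $Q'$, and to reuse these markers to collapse all DID violations and all UCQ disjuncts of $Q$ into a common alarm pattern matchable by a single CQ with a single distinguished relation. Getting all these pieces to coexist coherently, while keeping $\Sigma'$ within $\aincd$s and $Q'$ within base-covered CQs, is the core combinatorial challenge of the construction.
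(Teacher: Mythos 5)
The central gap is precisely the step you flag as ``the core combinatorial challenge'', and your proposed fix does not overcome it. A counter-model to $\owqatc$ merely has to \emph{avoid matching} the positive CQ $Q'$, so $Q'$ can only punish configurations that are positively present in the model; it can never detect that the atoms of $\beta_i$ are \emph{absent}. Hence ``a length-$i$ path not accompanied by the atoms of $\beta_i$'' is not expressible as a match of any positive CQ, base-covered or not. The $\aincd$s cannot rescue this either: their single-atom bodies cannot trigger an ``alarm'' atom conditionally on a conjunctive pattern such as ``there is an $R$-path of length $i$'', so the marker atoms you propose cannot be generated exactly when a violation occurs. The same problem breaks your backward direction: nothing forces the witness path length to be unique (a model may supply paths of several lengths $\leq k$ between $s_\tau$ and $t_\tau$), and even after fixing a length $i$ you have no mechanism guaranteeing that $\beta_i$ actually holds, so you cannot read off a model of the original $\did$s.

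The paper's construction resolves this by inverting the logic, and the resulting architecture is genuinely different from yours. Each original relation $R$ is widened to $R'$ with two extra positions $e,f$, and the rewritten constraints \emph{unconditionally} materialize witness atoms for \emph{all} disjuncts of each $\did$ (tied together by a $\witness_\tau$ atom), each carrying a fresh $E^\trans(e_i,f_i)$ requirement. The only freedom left to the model is, per atom, whether the $E$-path from $e$ to $f$ has length $1$ (``genuine'') or $2$ (``pseudo''); paths of length $\geq 3$ are excluded by a query disjunct. Disjunction is then enforced by a purely \emph{positive} pattern: ``body atom genuine and every head witness pseudo'', a conjunction of explicit length-$1$ and length-$2$ path atoms, with no absence detection needed. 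So the choice is binary per head atom, not $k$-ary per rule. Finally, the UCQ is collapsed to a single CQ not via a shared alarm atom but by adding a Boolean position to every relation, adjoining an $\Or$/$\mathrm{True}$ table and ``vacuous matches'' to $\instance_0'$ so that every disjunct always matches with value $\false$, and the CQ succeeds only when some disjunct matches with value $\true$. To repair your proof you would essentially have to adopt this ``all disjuncts present, tagged genuine/pseudo by path length $1$ vs $2$'' scheme.
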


This implies the following, contrasting with  Theorem \ref{thm:ptimetransdataupper}:

\begin{corollary}
  The $\owqatc$ problem with $\aincd$s and base-covered CQs is $\conp$-hard in data
  complexity and $\twoexp$-hard in combined complexity.
\end{corollary}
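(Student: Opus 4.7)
The plan is to read off the corollary directly from Theorem~\ref{thm:tcdisj}, which supplies a polynomial-time reduction from the $\owqa$ problem for $\did$s to $\owqatc$ with $\aincd$s and base-covered CQs. The $\owqa$ problem for $\did$s is known to be $\twoexp$-hard in combined complexity~\cite{bourhispieris} and $\conp$-hard in data complexity~\cite{calvanese2006data,bourhispieris}, already on signatures without any distinguished relations. Since the reduction runs in polynomial time and preserves the answer in both directions, the $\twoexp$-hardness of $\owqatc$ with $\aincd$s in combined complexity is immediate.

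For the data-complexity bound I need the additional observation that the reduction of Theorem~\ref{thm:tcdisj} is \emph{data-uniform}: when the $\did$-instance $(\instance_0, \Sigma, Q)$ varies only in the data $\instance_0$, the output constraints $\Sigma'$ and output query $Q'$ depend only on $(\Sigma, Q)$ and on the signature, while only $\instance_0'$ depends on $\instance_0$ and is computed in $\ptime$. Inspecting the construction sketched just before Theorem~\ref{thm:tcdisj}, this is indeed the case: the $\aincd$s encoding the $\did$s of $\Sigma$ are determined once $\Sigma$ is fixed, and the base-covered CQ $Q'$ that bounds the lengths of the witness $R_i$-paths --- thereby resolving the disjunctive choice encoded by the $R^\trans_i$-atom --- is fixed once $Q$ is. Consequently, the reduction is a $\ptime$ data-complexity reduction, and the $\conp$ lower bound transfers.

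The only nontrivial step is verifying this data-uniformity property of the reduction, \emph{i.e.}, that the encoding of each disjunctive branch as a witness path of distinct length (the trick borrowed from~\cite{oldhorrocks}) can be implemented by a fixed set of $\aincd$s and a fixed base-covered CQ that does not grow with $\instance_0$. Once this is checked, both hardness claims of the corollary follow simultaneously: combined hardness from the $\did$ $\twoexp$-hardness, and data hardness from the $\did$ $\conp$-hardness.
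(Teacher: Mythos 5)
Your proposal is correct and matches the paper's own argument exactly: the corollary is obtained by reading both hardness claims off the polynomial-time reduction of Theorem~\ref{thm:tcdisj} together with the known $\twoexp$- and $\conp$-hardness of $\owqa$ for $\did$s. The data-uniformity check you flag is indeed satisfied by the construction (only $\instance_0'$ depends on $\instance_0$; the signature, the $\aincd$s, and the base-covered CQ are fixed once $\Sigma$ and $Q$ are), a detail the paper leaves implicit.
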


In fact, the data complexity lower bound for $\owqatc$ even holds in the absence of
constraints:

\newcommand{\lindatacompltrans}{
  There is a base-covered CQ $Q$ such that the data complexity of
  $\owqatc(\instance_0, \emptyset, Q)$ is $\conp$-hard.
}
\begin{proposition} \label{prop:lindatacompltrans}
  \lindatacompltrans
\end{proposition}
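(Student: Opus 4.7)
The plan is to reduce from a standard NP-hard problem such as 3-SAT to the complement of $\owqatc(\instance_0, \emptyset, Q)$, which yields $\conp$-hardness of the entailment problem. First I would fix a signature with a single distinguished binary relation $R^\trans$, interpreted as the transitive closure of a base relation $R$, together with several base predicates used to mark the roles of elements (``variable'', ``clause'', truth-value markers, literal-position indicators, and so on). I would also fix a single CQ $Q$ whose atoms are all base --- which makes $Q$ trivially base-covered --- designed so that $Q$ matches in an extension $\instance$ precisely when the assignment implicitly encoded by $\instance$ falsifies some clause (or when the adversary ``cheats'' during the construction of $\instance$, see below).

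Given a 3-CNF formula $\phi$ with variables $x_1, \ldots, x_n$ and clauses $C_1, \ldots, C_m$, I would then construct $\instance_0$ in polynomial time as follows: (i) for each variable $x_i$, a gadget contains one or more $R^\trans$-facts that are \emph{not} witnessed by any $R$-path already present in $\instance_0$, so every extension $\instance$ must add $R$-edges to witness them, and the particular path chosen encodes the truth value of $x_i$; (ii) for each clause $C_j$, additional base facts link $C_j$ to the ``wrong-value'' marker of each of its three literals, so that the single conjunctive pattern of $Q$ is realised precisely when all three literals of some clause are falsified. Soundness (if $\phi$ is satisfiable then there is a countermodel of $Q$) then follows by constructing an extension from a satisfying assignment, and completeness by the converse.

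The main obstacle is controlling the adversary's freedom in building $\instance$: nothing prevents the adversary from adding direct $R$-edges that short-circuit the intended gadget paths, or from introducing fresh intermediate elements, thereby witnessing the $R^\trans$-facts without committing to any truth value. The hard part is to rule this out while keeping $Q$ a \emph{single} CQ rather than a UCQ. I would handle this by augmenting both $\instance_0$ and $Q$ with ``anti-cheating'' structure --- extra base markers in $\instance_0$, together with additional atoms in $Q$ that share existentially quantified variables with the main falsification pattern --- arranged so that any short-circuit or fresh-element deviation itself realises $Q$. Since $Q$ is a positive CQ, the adversary's best strategy is to add a minimal set of $R$-facts witnessing each $R^\trans$-fact, and the gadgets are designed so that every such minimal witnessing either encodes a proper truth assignment (caught by the main falsification pattern whenever the assignment falsifies a clause) or already matches one of the anti-cheating patterns.
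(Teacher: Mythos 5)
Your overall strategy matches the paper's: place unwitnessed transitive-closure facts in $\instance_0$ so that every extension must commit to a witnessing path, let the length of that path encode a combinatorial choice, and use the query to detect bad configurations. The paper reduces from $3$-colorability rather than 3-SAT (each vertex gets three $E^\trans$-annotated position pairs, length-$1$ versus length-$2$ paths encode ``has colour $i$'' versus ``does not''), but that difference is immaterial.

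There is, however, a genuine gap at the step you yourself flag as ``the hard part'': obtaining a single CQ rather than a UCQ. The detection requirements are inherently disjunctive --- one pattern per cheating mode (paths of length $\geq 3$), one per forbidden adjacency/falsified clause, one for ``no choice made'' --- and a positive CQ is a conjunction, so adding ``anti-cheating'' atoms that share variables with the main pattern makes the query \emph{harder} to match, not easier; a match would then require the falsification pattern \emph{and} the cheating pattern simultaneously, which is the wrong semantics. The paper resolves this with a specific gadget you do not supply: increase the arity of every base relation to carry a Boolean flag, add an $\Or$ relation (with its truth table on constants $\true,\false$) and a $\mathrm{True}$ predicate to $\instance_0$, add for each disjunct $Q_i$ a \emph{vacuous match} tagged $\false$ so that $Q_i(w_i)$ is always satisfiable with $w_i=\false$, and take the single CQ to be the conjunction of all the $Q_i(w_i)$ together with a chain of $\Or$-atoms forcing the disjunction of the $w_i$ to be $\true$. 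Without this (or an equivalent) mechanism, your construction establishes $\conp$-hardness only for a base-covered \emph{UCQ}, which is strictly weaker than the stated proposition.
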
 

We prove this by reducing the problem of $3$-coloring a directed graph, known to be
$\np$-hard, to the complement of $\owqatc$. It is well-known how to do this
using TGDs that have disjunction in the head.  As in the proof of
Theorem~\ref{thm:tcdisj}, we simulate this disjunction by using a choice of  the length of
paths that realize transitive closure facts asserted in~$\instance_0$.

In all of these hardness results, we first prove them for UCQs, and then show how the use of disjunction
can be eliminated, using a prior ``trick'' (see, e.g., \cite{georgchristos}) 
to code the intermediate truth values of disjunctions within a CQ.

\section{Decidability results for linear orders} \label{sec:decidlin}

\renewcommand{\drel}{<}
We now move to $\owqalin$,
the setting where the distinguished relations $<_i$ of~$\sigmad$ are \emph{linear}
(total) strict orders, i.e., they are transitive, irreflexive, and total.
We consider constraints and queries
that are base-covered.
We prove the following result.

\begin{theorem}
  \label{thm:decidelindirect}
We can decide $\owqalin(\instance_0, \Sigma, Q)$ in $\twoexp$,
where
$\instance_0$ ranges over finite sets of facts,
$\Sigma$ over $\acgnf$,
and $Q$ over base-covered UCQs.
In particular, this holds when $\Sigma$ consists of $\acfgtgd$s.
\end{theorem}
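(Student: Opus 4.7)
The plan is to reduce $\owqalin$ to the ordinary $\owqa$ problem for $\acgnf$ and then invoke the tree-automaton construction of Theorem~\ref{thm:automata}. The key enabler is base-coveredness: every $<_i$-atom appearing in $\Sigma$, in $\neg Q$, or (as I will argue) anywhere that is ``visible'' to the constraints must be accompanied by a $\sigmab$-atom covering its variables. Combined with tree-likeness of counterexamples, this lets us replace the global requirement ``$<_i$ is a strict linear order'' by a much weaker \emph{local} requirement that is itself expressible in $\acgnf$.

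First, I would establish the $\acgnf$ analogue of Proposition~\ref{prop:transdecomp}: any counterexample $\instance$ to $\owqalin(\instance_0, \Sigma, Q)$ can be taken to admit an $\instance_0$-rooted tree decomposition of width at most $\mysize{\Sigma \wedge \neg Q} - 1$. The argument is the usual guarded-negation unravelling, adapted to keep each distinguished fact $<_i(a,b)$ inside a single bag --- which is possible precisely because base-coveredness forces $<_i(a,b)$ to appear together with a $\sigmab$-atom that guards $\{a,b\}$. Second, I would define a rewriting $\Sigma^\flat$ of $\Sigma$ that drops the linear-order semantics on $<_i$ and instead adds $\acfgtgd$ axioms enforcing transitivity, irreflexivity, and totality \emph{only on tuples that share a common $\sigmab$-guard} (cf.\ the idea used in Theorem~\ref{thm:ptimetransdataupper} and mentioned as Proposition~\ref{prop:rewritelin}). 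The rewritten sentence $\Sigma^\flat$ still lies in $\acgnf$ (resp.\ $\acfgtgd$), and its negated query remains a base-covered UCQ.

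The correctness of the reduction has two directions. The easy direction is that any genuine linear-order witness satisfies the local axioms. For the reverse direction, I would take a tree-like $\instance$ satisfying $\Sigma^\flat \wedge \instance_0 \wedge \neg Q$ and construct an extension $\instance^\star$ in which each $<_i$ is a true strict linear order: process the tree decomposition top-down, extending the partial order bag by bag into a linear order on $\elems{\instance}$ (using that each bag already carries a local linear order on its $\sigmab$-guarded pairs and that bag intersections are connected). The crucial invariant is that the new $<_i$-facts added to close the order globally connect only pairs of elements that do \emph{not} co-occur in any $\sigmab$-atom of $\instance^\star$; by base-coveredness, no such new fact can participate in a match of any $<_i$-atom of $\Sigma$ or of $\neg Q$, so $\instance^\star$ still satisfies $\Sigma \wedge \neg Q$, now with $<_i$ a genuine linear order.

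Finally, once equivalence with $\owqa(\instance_0, \Sigma^\flat, Q)$ is established, Theorem~\ref{thm:automata} applied to $\Sigma^\flat \wedge \neg Q \in \acgnf$ (treating $<_i$ as an ordinary binary relation, since the distinguished semantics has been compiled away) yields a $2$-way alternating parity tree automaton of exponential size and linearly many priorities, whose emptiness is decidable in $\twoexp$ \cite{Vardi98}. The $\acfgtgd$ case is the special case where $\Sigma$, and hence $\Sigma^\flat$, is a set of base-frontier-guarded TGDs. The main obstacle I anticipate is the ``local-to-global'' step of the correctness proof: one must carefully choose the order in which bags extend the partial $<_i$ so as to avoid creating cycles and, simultaneously, argue that the newly added $<_i$-facts are invisible to the base-covered formulas of $\Sigma$ and $\neg Q$ --- which is exactly where the base-coveredness hypothesis is used in an essential way and where weakening it leads, as the paper announces, to undecidability.
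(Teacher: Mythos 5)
Your overall strategy coincides with the paper's: compile the linear-order semantics into local ``guardedly linear'' axioms expressible in $\agnf$ (Proposition~\ref{prop:rewritelin}), invoke tree-likeness for the rewritten problem, extend the resulting partial order to a genuine strict linear order, and use base-coveredness (Lemma~\ref{lemma:acov}) to make the newly added order facts invisible to $\Sigma \wedge \neg Q$. However, two concrete points in your write-up are off, and both sit exactly at the local-to-global step you yourself flag as the main obstacle.

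First, your ``first step'' --- that a counterexample to $\owqalin(\instance_0,\Sigma,Q)$ itself admits an $\instance_0$-rooted tree decomposition of width $\mysize{\Sigma\wedge\neg Q}-1$ --- is false as stated: in such a counterexample each $<_i$ is total, so there is a $<_i$-fact between every pair of elements, the Gaifman graph is a clique, and no bounded-width decomposition exists. Base-coveredness is a syntactic condition on the formula, not on its models, so it cannot force the $<_i$-facts of a model to be accompanied by base guards. Tree-likeness can only be applied \emph{after} the rewriting, to a witness of $\Sigma^\flat \wedge \neg Q$ in which $<_i$ carries no special semantics (as you in fact do in your third paragraph); the first paragraph should be deleted or restated for the rewritten problem.

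Second, the local axioms you propose --- transitivity, irreflexivity and totality ``only on tuples that share a common $\sigmab$-guard'' --- are too weak to make the extension step work. To show that a tree-like model of the local axioms is $<_i$-acyclic (the paper's Lemmas~\ref{lem:guarded-transitivity} and~\ref{lem:nobadcycle}), one needs (i) the stronger $k$-guardedly transitive axioms, which collapse any $<_i$-path of length up to $k$ whose two \emph{endpoints} are base-guarded even when the intermediate pairs are not, and (ii) tree decompositions with \emph{base-guarded interfaces}, i.e.\ adjacent bags overlapping only in base-guarded sets --- a property the standard $\gn^k$-unravelling does not deliver and which requires the modified bisimulation game of Proposition~\ref{prop:guarded-interface-dec-appendix} (plus the fresh guard relation $G$ over all pairs of $\elems{\instance_0}$ so that the root bag behaves). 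Without (i) and (ii), a $<_i$-cycle threading through several bags via unguarded intermediate pairs escapes every local axiom, the relation fails to be a strict partial order, and neither Szpilrajn nor your top-down bag-by-bag extension can begin. With these two repairs your argument becomes the paper's; the final appeal to Theorem~\ref{thm:automata} instead of the $\gnf$ satisfiability bound of \cite{vldb12} is a harmless variation.
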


Our technique here is to reduce
this to traditional $\owqa$
where no additional restrictions (like
being transitive or a linear order) are imposed.
Starting with $\acgnf$ constraints,
we reduce
to a traditional $\owqa$ problem with $\gnf$ constraints,
and hence prove decidability in $\twoexp$ using \cite{vldb12}.
However, the reduction is quite simple, and hence could be applicable to   other  constraint classes.

The idea behind the reduction is to include additional constraints
that enforce the linear order conditions.
However, we cannot express transitivity or totality in $\gnf$.
Hence, we will only add
a weakening of these properties that is expressible in $\gnf$,
and then argue that this is sufficient for our purposes.

The reduction is described in the following proposition.
\begin{proposition}
  \label{prop:rewritelin}
  For any finite set of facts $\instance_0$,
  constraints $\Sigma \in \acgnf$,
  and
  base-covered UCQ $Q$,
  we
  can compute $\instance_0'$ and $\Sigma' \in \agnf$ in $\ptime$
  such that
  $\owqalin(\instance_0, \Sigma, Q) \text{ iff } 
  \owqa(\instance_0', \Sigma', Q)$.
\end{proposition}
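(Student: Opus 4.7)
The plan is to take $\instance_0' := \instance_0$ and $\Sigma' := \Sigma \wedge \Sigma_{\mathrm{loc}}$, where $\Sigma_{\mathrm{loc}}$ is a polynomial-size conjunction of $\agnf$-axioms asserting that each $<_i$ is irreflexive, transitive, and total on the elements of any single base atom. For each base relation $B \in \sigmab$, each $<_i \in \sigmad$, and each choice of positions $j, k, \ell$ of $B$, $\Sigma_{\mathrm{loc}}$ includes the sentences
\[
\neg \exists \vec{w}\,\bigl(B(\vec{w}) \wedge <_i(w_j, w_j)\bigr),
\]
\[
\neg \exists \vec{w}\,\bigl(B(\vec{w}) \wedge <_i(w_j, w_k) \wedge <_i(w_k, w_\ell) \wedge \neg <_i(w_j, w_\ell)\bigr),
\]
\[
\neg \exists \vec{w}\,\bigl(B(\vec{w}) \wedge w_j \neq w_k \wedge \neg <_i(w_j, w_k) \wedge \neg <_i(w_k, w_j)\bigr).
\]
In each, $B(\vec{w})$ furnishes a base-guard for every negated subformula, so $\Sigma_{\mathrm{loc}}$ and thus $\Sigma'$ belong to $\agnf$. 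Soundness of the reduction is then immediate: any $\instance \supseteq \instance_0$ satisfying $\Sigma \wedge \neg Q$ with each $<_i$ interpreted as a genuine strict linear order automatically satisfies $\Sigma_{\mathrm{loc}}$, so it witnesses $\lnot\owqa(\instance_0', \Sigma', Q)$.

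For completeness, suppose $\instance \supseteq \instance_0$ satisfies $\Sigma' \wedge \neg Q$; I must construct $\instance^* \supseteq \instance_0$ in which each $<_i$ is an actual linear order and $\Sigma \wedge \neg Q$ still holds. The first step is an $\acgnf$-analogue of Proposition~\ref{prop:transdecomp}: the guarded-negation unravelling of $\instance$ produces a bisimilar model $\instance^{tr}$, admitting an $\instance_0$-rooted tree decomposition of width bounded in $\mysize{\Sigma'}+\mysize{Q}$, whose distinct base atoms are organised into a forest-like pattern (elements shared between two base atoms are the sole interface between the corresponding tree branches), and that still satisfies $\Sigma' \wedge \neg Q$ since these sentences lie in $\agnf$. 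The second step defines $<_i^*$ on $\elems{\instance^{tr}}$ as any strict linear order extending the restriction of $<_i$ to base-guarded pairs, adding if necessary fresh filler elements to avoid a maximum or minimum, and sets $\instance^*$ to be $\instance^{tr}$ with each $<_i$ replaced by $<_i^*$.

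Two verifications remain. First, the extension exists: the axioms in $\Sigma_{\mathrm{loc}}$ make $<_i$ linear on the elements inside any single base atom, while the forest-like organisation of distinct base atoms in $\instance^{tr}$ rules out any cycle of base-guarded $<_i$-facts that spans several base atoms, so $<_i$ restricted to base-guarded pairs is a strict partial order and thus extends to a linear order. Second, $\instance^*$ still satisfies $\Sigma \wedge \neg Q$: since $Q$ is base-covered and every negative $<_i$-atom of $\Sigma$ is base-guarded (by definition of $\acgnf$), the truth of these atoms depends only on $<_i$ at base-guarded pairs, on which $<_i^*$ agrees with $<_i$; positive $<_i$-atoms in the heads of $\Sigma$-rules are preserved because they are either base-guarded (in which case $<_i^* \supseteq <_i$ there) or unguarded (in which case the unbounded total order $<_i^*$ supplies the required existential witnesses). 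The main obstacle is the forest claim about $\instance^{tr}$, which is what lets the simple local axioms of $\Sigma_{\mathrm{loc}}$ already suffice to prevent \emph{global} cycles of base-guarded $<_i$-facts: verifying it requires a careful analysis of the guarded-negation unravelling in the base-guarded setting, showing that the bags of the produced decomposition can be taken to be exactly the base-guarded tuples of $\instance$, the feature that ultimately makes the reduction work.
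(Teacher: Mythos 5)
Your overall architecture matches the paper's: replace linearity by guarded axioms expressible in $\agnf$, use a tree-like (base-guarded-interface) unravelling to show that the guarded axioms already force acyclicity on base-guarded pairs, extend to a genuine linear order by Szpilrajn, and invoke base-coveredness to see that the new order facts (only on unguarded pairs) cannot disturb $\Sigma \wedge \neg Q$. However, your axioms $\Sigma_{\mathrm{loc}}$ are too weak, and the reduction as stated is incorrect. First, your transitivity axiom only fires when all three elements $w_j, w_k, w_\ell$ occur together in a \emph{single} base atom. Take $\sigmab = \{B\}$ with $B$ binary and $\instance_0 = \{B(a,b), B(b,c), B(c,a), a<b, b<c, c<a\}$: every pair is base-guarded, irreflexivity and totality hold, and your transitivity axioms are vacuous because no base atom contains three elements, so $\instance_0$ satisfies $\Sigma_{\mathrm{loc}}$ and witnesses $\neg\owqa$ for a suitable $Q$; yet the $<$-cycle on base-guarded pairs means no linear-order extension exists, so $\owqalin$ holds vacuously. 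The paper avoids this by using \emph{$k$-guardedly transitive} axioms: for every path length $l$ up to $k$, if $\psi_l(x,y)$ holds (with arbitrary, possibly unguarded intermediate elements) and $\{x,y\}$ is base-guarded, then $x<y$. This is essential because the bags of the base-guarded-interface decomposition have up to $k$ elements but are \emph{not} single base atoms — only the interfaces between adjacent bags are base-guarded — so your "forest claim" that bags can be taken to be exactly the base-guarded tuples is false and cannot rescue the local axioms. Second, you set $\instance_0' := \instance_0$, whereas the paper adds a fresh base relation $G$ with $G(a,b)$ for all $a,b \in \elems{\instance_0}$. This is needed because the root bag contains all of $\instance_0$, whose elements need not be pairwise base-guarded: with $\instance_0 = \{a<b, b<c, c<a\}$ and no base facts at all, your $\Sigma_{\mathrm{loc}}$ is vacuously satisfied while again no linear order exists.

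A smaller point: the "fresh filler elements to avoid a maximum or minimum" step is unnecessary (strict linear orders may have endpoints; the order extension principle suffices), and the claim that unguarded positive head atoms are "supplied by the unbounded total order" should instead be argued by monotonicity of positively occurring subformulas under $<_i^* \supseteq <_i$, as in the paper's Lemma~\ref{lemma:acov}. With the $k$-guardedly transitive axioms, the guardedly total axiom over $\sigmab \cup \{G\}$, and the added $G$-facts, your argument becomes the paper's proof via Lemmas~\ref{lemma:guarded-interface-dec} and~\ref{lemma:reducelin}.
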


In particular,
$\instance_0'$ is $\instance_0$ together with facts $G(a,b)$ for every pair $a,b \in \elems{\instance_0}$,
where $G$ is some fresh binary base relation.
We define $\Sigma'$ as $\Sigma$ together with the \emph{$k$-guardedly linear axioms}
  for each distinguished relation~$\drel$, where $k$ is $\max(\mysize{\Sigma \wedge \neg
  Q}, \arity{\sigma \cup \{G\}})$; namely:

% passing the list type so we can have compactitem in body and itemize in
% appendix
\newcommand{\glinaxioms}[1]{
\begin{#1}
\item guardedly total: \\
$\forall x y ( (\guardedbg(x,y) \wedge x \neq y) \rightarrow x \drel y \vee y \drel x )$
\item irreflexive: 
$\neg \exists x (x \drel x)$
\item $k$-guardedly transitive: 
for $1 \leq l \leq k-1$:\\
$\neg \exists x y ( \psi_l(x,y) \wedge \guardedbg(x,y) \wedge \neg (x \drel y)
  )$\\
and, for $1 \leq l \leq k$:
$\neg \exists x ( \psi_l(x,x) \wedge x = x \wedge \neg (x \drel x) )$
\end{#1}
where:
\begin{#1}
  \item $\guardedbg(x,y)$ is the formula expressing that $x,y$ is base-guarded (an existentially-quantified disjunction over
all possible base-guards containing $x$ and $y$);
  \item $\psi_1(x,y)$ is just $x < y$; and
  \item $\psi_l(x,y)$ for $l \geq 2$ is:
$\exists x_2 \dots x_{l} ( x \drel x_2 \wedge \dots \wedge x_{l} \drel y
)$.\end{#1}}
\glinaxioms{compactitem}
Unlike the property of being a linear order,
the $k$-guardedly linear axioms can be expressed in $\agnf$.

We now sketch the argument for the correctness
of the reduction.
The easy direction is where we assume $\owqa(\instance_0',\Sigma',Q)$ holds,
so any $\instance' \supseteq \instance_0'$ satisfying $\Sigma'$ must satisfy $Q$.
Now consider $\instance \supseteq \instance_0$ that satisfies $\Sigma$
and where all $\drel$ in $\sigmad$
are strict linear orders.
We must show that $\instance$ satisfies $Q$.
First, observe that $\instance$ satisfies $\Sigma'$
since the \mbox{$k$-guardedly linear} axioms for $\drel$
are clearly satisfied for all~$k$
when $\drel$ is a strict linear order.
Now consider the extension of $\instance$ to $\instance'$
with facts $G(a,b)$ for all $a,b \in \elems{\instance_0}$.
This must still satisfy $\Sigma'$:
adding these facts
means there are additional $k$-guardedly linear requirements
on the elements from~$\instance_0$,
but these requirements already hold
since $\drel$ is a strict linear order.
Hence, by our initial assumption, $\instance'$ must satisfy $Q$.
Since $Q$ does not mention $G$,
the restriction of $\instance'$ back to~$\instance$ still satisfies $Q$ as well.
Therefore, $\owqalin(\instance_0,\Sigma,Q)$ holds.

For the harder direction, suppose for the sake of contradiction
that $\owqa(\instance_0',\Sigma',Q)$ does not hold,
but $\owqalin(\instance_0,\Sigma,Q)$ does.
Then there is some $\instance' \supseteq \instance_0'$ such that
$\instance'$ satisfies $\Sigma' \wedge \neg Q$.
We will again rely on the ability to  restrict to tree-like $\instance'$,
but with
a slightly different notion of tree-likeness.

We say a set $E$ of elements from $\elems{\instance}$
are \emph{base-guarded}
in $\instance$
if there is some $\sigmab$-fact or $G$-fact in $\instance$ that mentions all of the elements in $E$.
A \emph{base-guarded-interface tree decomposition} $(T, \child ,\lambda)$ for $\instance$
is a tree decomposition satisfying the following additional property:
for all nodes $n_1$ that are not the root of $T$,
if $n_2$ is a child of $n_1$
and $E$ is the set of elements mentioned in both $n_1$ and $n_2$,
then $E$ is base-guarded in $\instance$.
A sentence $\phi$ has \emph{base-guarded-interface $k$-tree-like witnesses} if
for any finite
set of facts $\instance_0$,
  if there is some
  $\instance \supseteq \instance_0$ satisfying $\phi$
 then there  is
 such an $\instance$
  with an $\instance_0$-rooted $(k-1)$-width
   base-guarded-interface tree decomposition.

Although the transformation from $\Sigma$ to $\Sigma'$ makes the formula
larger, it does not increase the ``width'' that controls the bag size of tree-like witnesses.
Hence, we can show:

\begin{lemma}
  \label{lemma:guarded-interface-dec}
  The sentence $\Sigma' \wedge \neg Q$ has base-guarded-interface $k$-tree-like witnesses for
   $k = \max(\mysize{\Sigma \wedge \neg Q}, \arity{\sigma \cup \{G\}})$.
\end{lemma}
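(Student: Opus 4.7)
The plan is to apply the standard tree-like model property for $\agnf$ that underlies Proposition~\ref{prop:transdecomp}, verifying that the extra axioms in $\Sigma'$ do not blow up its \emph{width} above $k$. The argument has three steps: checking $\Sigma' \wedge \neg Q \in \agnf$, bounding its width by $k$, and invoking the base-guarded unravelling.

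First, $\Sigma \wedge \neg Q$ is in $\agnf$ since $\Sigma \in \acgnf$ and $Q$ is base-covered, and each $k$-guardedly linear axiom is an $\agnf$ formula by construction (every negated atom $\neg(x \drel y)$ is conjoined with the base-guard $\guardedbg(x,y)$). Second, I would bound the width. The width of $\Sigma \wedge \neg Q$ is at most $\mysize{\Sigma \wedge \neg Q}$. In each $k$-guardedly transitive axiom, the widest subformula is the inner conjunction $x \drel x_2 \wedge \cdots \wedge x_l \drel y$ sitting inside $\psi_l$, which has $l+1 \leq k$ free variables; subformulas of $\guardedbg$ have at most $\arity{\sigma \cup \{G\}} \leq k$ free variables; the guardedly total and irreflexive axioms have width at most~$2$. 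Hence the width of $\Sigma' \wedge \neg Q$ is at most $k$.

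Third, I would invoke the unravelling by base-guarded negation bisimulation that powers Proposition~\ref{prop:transdecomp}. Given any model $\instance \supseteq \instance_0$ of $\Sigma' \wedge \neg Q$, it produces a bisimilar model $\instance'$ (which still satisfies $\Sigma' \wedge \neg Q$, by the standard invariance of $\agnf$ formulas under base-guarded negation bisimulation) equipped with an $\instance_0$-rooted tree decomposition whose bags other than the root mention at most $\mathrm{width}(\Sigma' \wedge \neg Q) \leq k$ elements. The interface between any non-root node and its child is, by construction of the unravelling, a base-guarded tuple of $\instance'$, since the unravelling only glues subtrees along $\sigmab$-guarded sets. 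The main technical obstacle is the width bound: $\Sigma'$ can be much larger than $\Sigma$ (one adds $k$-guardedly transitive axioms for every $l \leq k-1$), but the extra variables appear under existential quantifiers in $\psi_l$ rather than as free variables of subformulas, so it is width — not size — that controls the bag dimension. A secondary point is that the interface condition is vacuous at edges incident to the root, which is consistent with keeping $\instance_0$ as the root bag, even though pairs in $\elems{\instance_0}$ are only base-guarded via the $G$-facts added to $\instance_0'$.
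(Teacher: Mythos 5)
Your proposal is correct and follows essentially the same route as the paper: reduce to the general base-guarded-interface tree-like witness property for $\agnf$ (obtained via the base-guarded-interface unravelling), and verify that the $k$-guardedly linear axioms can be put in normal form without pushing the width above $k$, using $\psi_l(x,y)$ having width $l+1 \leq k$ and $\guardedbg$ contributing at most $\arity{\sigma \cup \{G\}} \leq k$. Your closing observations — that width rather than size controls the bag dimension, and that the interface condition is vacuous at the root where the $G$-facts guard pairs from $\elems{\instance_0}$ — are exactly the points the paper's proof relies on.
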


Using this lemma, we can assume that we have some $\instance' \supseteq \instance_0'$
which has a $(k-1)$-width base-guarded-interface tree decomposition
and witnesses $\Sigma' \wedge \neg Q$.
If every $\drel$ in $\sigmad$ is a strict linear order in $\instance'$,
then restricting $\instance'$ to the set of $\sigma$-facts
yields some $\instance$ that would satisfy $\Sigma \wedge \neg Q$, a contradiction.
Hence, there are some distinguished relations
$\drel$ that are not strict linear orders in $\instance'$.
We can show that such an $\instance'$ can actually
be extended to some $\instance''$ that still satisfies $\Sigma' \wedge \neg Q$
but where all $\drel$ in $\sigmad$ are strict linear orders,
which we already argued is impossible.

The crucial part of the argument is thus about extending
$k$-guardedly linear counterexamples to genuine linear orders:

\begin{lemma} \label{lemma:reducelin}
If there is $\instance' \supseteq \instance_0'$
that satisfies $\Sigma' \wedge \neg Q$
and has a $\instance_0'$-rooted base-guarded-interface $(k-1)$-width tree decomposition,
then there is $\instance'' \supseteq \instance'$
that satisfies $\Sigma' \wedge \neg Q$
where each distinguished relation is a strict linear order.
\end{lemma}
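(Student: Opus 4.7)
The plan is to extend $\instance'$ to $\instance''$ by adding $<$-facts, for each distinguished $<$ in $\sigmad$, so as to turn each $<$ into a strict linear order on $\elems{\instance'}$, while preserving $\Sigma' \wedge \neg Q$. The main technical idea is that the added facts lie only between pairs that are \emph{not} base-guarded in $\instance'$, which keeps both the base-covered $Q$ and the $\acgnf$ constraints $\Sigma$ unaffected.

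The first key step is to show that for each distinguished $<$, its facts in $\instance'$ form an acyclic relation. Within a single bag (of size at most $k$), any cycle $a_1 < \cdots < a_n < a_1$ with $n \leq k$ would, via the $k$-guardedly transitive axiom applied to $\psi_n(x,x)$ (trivially base-guarded by the $x=x$ variant), force $a_1 < a_1$, contradicting irreflexivity. For cycles that cross bag boundaries, I exploit the base-guarded-interface condition: the interface between any parent and child is base-guarded, so guardedly total and $k$-guardedly transitive both apply across it. By iteratively replacing subpaths of a cycle that descend below an interface with a single shortcut edge on the base-guarded interface endpoints, I contract any cross-bag cycle into one living in a single bag, again yielding a contradiction.

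Acyclicity, combined with guardedly total, implies that for any pair $(a,b)$ base-guarded in $\instance'$ exactly one of $a<b$ or $b<a$ lies in $\instance'$, and transitivity holds among base-guarded endpoints (the reverse edge would close into a cycle). Hence $<$ restricted to base-guarded pairs is already a strict linear order. Now take the transitive closure of $<$ on $\elems{\instance'}$ and extend it by topological sort to a strict linear order $<^*$: both operations only add facts between pairs that are \emph{not} base-guarded in $\instance'$. Define $\instance''$ by replacing each $<$ in $\instance'$ with the corresponding $<^*$.

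To verify $\instance'' \models \Sigma' \wedge \neg Q$: for $\neg Q$, any new match of $Q$ in $\instance''$ must involve a new $<^*$-fact, but since $Q$ is base-covered, the corresponding atom's variables are covered by a $\sigmab$-atom of $Q$ whose image is a $\sigmab$-fact of $\instance'$ (no new base facts were added), making its endpoints base-guarded in $\instance'$ and hence not among the newly added pairs. For $\Sigma'$, the $k$-guardedly axioms hold trivially since $<^*$ is a strict linear order; the $\acgnf$ part $\Sigma$ is preserved because its negative $<$-occurrences are base-guarded and $<^*$ agrees with $<$ on base-guarded pairs, while positive $<$-occurrences are monotone under adding facts. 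The main obstacle is the cross-bag contraction argument for acyclicity, which must be carried out carefully using the base-guarded-interface tree decomposition.
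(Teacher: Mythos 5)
Your proposal is correct and follows essentially the same route as the paper: acyclicity of each distinguished relation via the $k$-guardedly linear axioms and interface contraction in the base-guarded-interface tree decomposition (the paper's Lemmas~\ref{lem:guarded-transitivity} and~\ref{lem:nobadcycle}), then transitive closure plus the Szpilrajn order-extension theorem, with the observation that all added facts concern non-base-guarded pairs so that base-coveredness preserves $\Sigma \wedge \neg Q$ (the paper's Lemma~\ref{lemma:acov}). The ``main obstacle'' you flag is exactly where the paper invests its inductive effort; the only detail you gloss over is the root bag, whose size may exceed $k$ but where every pair is $G$-guarded, so the two-step transitivity axiom still applies.
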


The proof of Lemma \ref{lemma:reducelin} proceeds
by showing that sets of facts that have $(k-1)$-width base-guarded-interface tree decompositions
and satisfy $k$-guardedly linear axioms
must already be cycle-free with respect to $\drel$.
Hence, by taking the transitive closure of $\drel$ in $\instance$,
we get a new set of facts
where every $\drel$ is a strict \emph{partial} order.
Any strict partial order can be further extended to a strict linear order using known techniques,
so we can obtain $\instance'' \supseteq \instance'$
where $\drel$ is a strict partial order.
This $\instance''$ may have more $\drel$-facts than $\instance'$,
but the $k$-guardedly linear axioms
ensure that these new $\drel$-facts are only about pairs of elements
that are not base-guarded.

It remains to show that $\instance''$ satisfies $\Sigma' \wedge \neg Q$.
It is clear that $\instance''$ still satisfies the $k$-guardedly linear axioms,
but it could no longer satisfy $\Sigma \wedge \neg Q$.
However,
this is where the base-covered assumption on $\Sigma \wedge \neg Q$ is used:
it can be shown that satisfiability of $\Sigma \wedge \neg Q$ in $\acgnf$
is not affected by adding new $\drel$-facts
about pairs of elements that are not base-guarded.

\myparagraph{Data complexity}
Again, the result of Theorem~\ref{thm:decidelindirect} is a combined complexity
upper bound. However, as it
works by reducing to traditional $\owqa$ in $\ptime$,
data complexity upper bounds follow
from \cite{vldb12}.

\begin{corollary}
  For any $\acgnf$ constraints $\Sigma$ and base-covered UCQ $Q$, given a finite
  set of facts $\instance_0$, we can decide $\owqalin(\instance_0, \Sigma, Q)$ in
  $\conp$ data complexity. 
\end{corollary}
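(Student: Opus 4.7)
The plan is to invoke Proposition~\ref{prop:rewritelin} as a ``black box'' data-complexity reduction from $\owqalin$ to classical $\owqa$ over $\agnf$ (which is a syntactic fragment of $\gnf$), and then appeal to the known data-complexity bounds for $\gnf$ query answering established in \cite{vldb12}. Concretely, given fixed $\Sigma \in \acgnf$ and a fixed base-covered UCQ $Q$, and an input $\instance_0$, the reduction produces $\instance_0'$ and $\Sigma'$ such that $\owqalin(\instance_0, \Sigma, Q)$ holds iff $\owqa(\instance_0', \Sigma', Q)$ does. It then suffices to argue that this reduction is well-behaved with respect to data complexity, and that the resulting classical $\owqa$ instance lies in the regime covered by the $\conp$ data-complexity bound of \cite{vldb12}.

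The key verification is that nothing in the reduction grows with $\instance_0$ except $\instance_0'$ itself, and that the latter grows only polynomially. First I would unpack the definition of $\Sigma'$: it is $\Sigma$ together with the $k$-guardedly linear axioms, where $k = \max(\mysize{\Sigma \wedge \neg Q}, \arity{\sigma \cup \set{G}})$. Since $\Sigma$, $Q$, and the signature are fixed, $k$ is a constant, and hence $\Sigma'$ is a fixed formula in $\agnf$, independent of $\instance_0$. The $\guardedbg$ disjunctions and the formulas $\psi_l$ likewise depend only on the signature and on~$k$. Next, I would observe that $\instance_0'$ is obtained from $\instance_0$ by adding the facts $G(a,b)$ for all $a,b \in \elems{\instance_0}$, so $\mysize{\instance_0'}$ is at most quadratic in $\mysize{\instance_0}$ and is computable in $\ptime$ from~$\instance_0$. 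Thus the reduction can be carried out in polynomial time in the data.

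Finally, I would conclude by applying the $\conp$ data-complexity upper bound for $\owqa$ with $\gnf$ constraints and UCQs from \cite{vldb12}: with $\Sigma' \in \agnf \subseteq \gnf$ and $Q$ both fixed, deciding $\owqa(\instance_0', \Sigma', Q)$ lies in $\conp$ in $\mysize{\instance_0'}$, and therefore in $\conp$ in $\mysize{\instance_0}$. I do not expect any genuine obstacle here, since all nontrivial work has already been absorbed into Proposition~\ref{prop:rewritelin}; the only point that needs explicit mention is the separation of ``fixed'' from ``data'' parameters to ensure that the transformation from $\Sigma$ to $\Sigma'$ does not secretly depend on $\instance_0$.
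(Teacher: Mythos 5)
Your proposal is correct and follows exactly the paper's argument: the paper also derives this corollary by observing that the reduction of Proposition~\ref{prop:rewritelin} runs in $\ptime$ on the data (with $\Sigma'$ depending only on the fixed $\Sigma$, $Q$, and signature) and then invoking the $\conp$ data-complexity bound for classical $\owqa$ with $\gnf$ constraints from \cite{vldb12}. Your explicit check that $k$ and hence $\Sigma'$ are independent of $\instance_0$, and that $\instance_0'$ is only quadratically larger, is exactly the right point to make precise.
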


This is similar to the way data complexity bounds were shown for $\owqatr$
(in Theorem~\ref{thm:ptimetransdataupper}).
However, unlike for the $\owqatr$ problem, the constraint rewriting in this
section introduces disjunction, so rewriting a $\owqalin$ problem for $\acfgtgd$s does not produce a classical query answering problem for
$\fgtgd$s. Thus the rewriting does not  imply a $\ptime$ data complexity upper bound for
$\acfgtgd$. Indeed, we will see in Proposition~\ref{prop:lindatacompl} that this is $\conp$-hard.

\myparagraph{Hardness}
$\owqalin$ for $\acgnf$ constraints is
again immediately $\conp$-hard in data complexity, and $\twoexp$-hard in combined
complexity, from the corresponding bounds on $\gnf$ \cite{vldb12}. However, we can
show that hardness holds for the much weaker constraint language $\aincd$,
by a reduction from $\did$ reasoning, as in Section~\ref{sec:decid}.

\newcommand{\lindisj}{
  For any finite set of facts $\instance_0$,
  $\did$s $\Sigma$,
  and UCQ $Q$ on a signature $\sigma$,
  we can compute in $\ptime$ a set of facts $\instance_0'$,
  $\aincd$s $\Sigma'$,
  and CQ $Q'$ on a signature $\sigma'$ (with a single distinguished relation),
  such that $\owqa(\instance_0, \Sigma,
  Q)$ iff $\owqalin(\instance_0', \Sigma', Q')$.
}

\begin{theorem}
  \label{thm:lindisj}
  \lindisj
\end{theorem}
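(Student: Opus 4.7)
I would mirror the proof of Theorem~\ref{thm:tcdisj}, replacing the nondeterminism provided by variable-length paths witnessing a transitive-closure fact with the nondeterminism provided by a linear order: for any two distinct elements, the model is free to decide which one is smaller. The signature $\sigma'$ extends $\sigma$ with fresh auxiliary base relations and a single distinguished relation $<$, interpreted as a strict linear order.

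For each $\did$ $\tau:A(\vec{x})\rightarrow\bigvee_{i=1}^{n}\exists\vec{y}_i\,B_i(\vec{x},\vec{y}_i)$ in $\Sigma$, I would place in $\Sigma'$ a bounded set of $\aincd$s that, upon each trigger $A(\vec{a})$, use a fresh base relation to generate a tuple of fresh ``choice'' elements together with, for every disjunct $i$, a candidate witness $\vec{y}_i$ tagged with one of these choice elements. $\instance_0'$ receives a constant number of marker elements serving as reference points in the order. Totality of $<$ forces the choice elements of each trigger into one of boundedly many orderings relative to the markers, and each such ordering is designated as encoding one disjunct of $\tau$. The query $Q'$ is then the union of the original $Q$ with additional CQs that match positively on configurations where the ordering selected for a trigger disagrees with which tagged witness is used in the $Q$-pattern; the pairing of every $<$-atom in $Q'$ with a base atom produced by $\Sigma'$ on the same variables guarantees base-coveredness. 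Correctness is checked in both directions: forward, a counterexample $\instance$ for $\owqa$ is lifted to $\instance'$ by firing every $\aincd$ of $\Sigma'$ and orienting $<$ so that, for each trigger, the ordering agrees with the disjunct actually witnessed in $\instance$; backward, restricting an $\owqalin$-counterexample $\instance'$ to $\sigma$-facts yields some $\instance\supseteq\instance_0$, and the non-firing of $Q'$ combined with the $\aincd$s of $\Sigma'$ forces a $B_i$-witness consistent with the chosen ordering, giving $\instance\models\Sigma$ and $\instance\not\models Q$.

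The principal obstacle is that UCQs cannot natively express absence of a witness, so detecting ``the linear order selected disjunct $i$ but the $B_i$-witness is missing'' must be reformulated as a positive pattern. This is handled by having $\Sigma'$ materialise, upfront via $\aincd$s, enough tagged structure for every disjunct that the mismatch becomes a concrete pattern of tagged base atoms and $<$-atoms that the UCQ can catch, exactly as the length restriction on the witnessing $R$-path plays this role in Theorem~\ref{thm:tcdisj}. Once the reduction is established for a base-covered UCQ $Q'$, I would apply the standard CQ-encoding trick (see~\cite{georgchristos}), as indicated for the other hardness results of this paper, to fold the disjuncts of $Q'$ into a single base-covered CQ without affecting correctness.
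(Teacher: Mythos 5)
Your high-level strategy coincides with the paper's: materialise, via $\aincd$s, a candidate witness for every disjunct of each $\did$, let the totality of the linear order force a choice of which candidate is ``genuine'', use the UCQ to catch the configuration where the body is genuine but every candidate is fake, and then fold the UCQ into a CQ with the $\Or$-gadget. However, your concrete encoding --- a tuple of fresh choice elements per trigger whose position relative to global marker elements of $\instance_0'$ designates the chosen disjunct --- leaves two real gaps. First, base-coveredness: a $<$-atom of $Q'$ comparing a fresh choice element $c$ (an existential witness created by the chase) to a marker $b$ of $\instance_0'$ needs a base atom of $Q'$ containing both $c$ and $b$, and such a fact can only exist if every $\aincd$ threads all the markers through its single body atom into its head atom. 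You assert that every $<$-atom is paired with a base atom ``produced by $\Sigma'$'', but with $\aincd$s (one base body atom, one head atom, no repeated variables) this is precisely the delicate point and is not established. The paper avoids it entirely by giving each fact its own fresh pair $(e,f)$ appended to the relation, encoding genuineness as $e<f$ versus $f<e$; the two compared elements then cohabit the base atom $R'(\vec x,e,f)$ by construction, so coveredness is automatic.

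Second, and more seriously, totality of a strict order yields trichotomy, not dichotomy: in an arbitrary counterexample to $\owqalin(\instance_0',\Sigma',Q')$ a choice element may be \emph{equal} to a marker (in the paper's encoding, $e=f$ may hold), in which case no disjunct is selected and the backward direction collapses --- the extracted $\instance$ need not satisfy $\Sigma$. The paper has to add dedicated order-restriction disjuncts of the form $R'(\vec x,e,e)$ to $Q'$ to exclude this degenerate case; your proposal never mentions it. A smaller omission on the forward direction: after orienting $<$ to reflect the disjuncts actually witnessed you must still extend it to a total order and check acyclicity of the imposed constraints; with all choice elements compared against the same global markers this is no longer the paper's trivial ``disjoint pairs'' observation, though it is repairable. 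With the per-fact pair encoding (or with markers threaded through every relation), the equality-exclusion disjuncts added, and the order-extension step made explicit, the rest of your plan matches the paper's proof.
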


The reduction allows us to transfer hardness results for $\did$ from \cite{calvanese2006data,bourhispieris}, exactly as
was done in Theorem \ref{thm:tcdisj}, to conclude:

\begin{corollary}
  The $\owqalin$ problem with $\aincd$ and
  base-covered CQs is $\conp$-hard in data
  complexity and $\twoexp$-hard in combined complexity.
\end{corollary}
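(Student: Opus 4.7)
The plan is to follow closely the template of Theorem~\ref{thm:tcdisj}, substituting the transitive-closure encoding of disjunction by an analogous encoding that uses the choice of placement of fresh elements in a linear order. Given inputs $\instance_0, \Sigma, Q$, the reduction introduces a fresh distinguished binary relation $<$ (to be interpreted as a strict linear order on $\instance'$), together with fresh base relations used to tag each $\did$ firing with its auxiliary gadget. The key observation is that, just as the length of an $R$-path realizing an $R^\trans$-fact offered the flexibility used to simulate disjunction in Theorem~\ref{thm:tcdisj}, the position of a fresh element in a total order on a bounded number of anchors now plays the analogous role.

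Concretely, for every $\did$ $\rho: B(\vec x) \rightarrow H_1(\vec x, \vec y_1) \vee \dots \vee H_k(\vec x, \vec y_k)$ I would add to $\Sigma'$ an $\aincd$ $B(\vec x) \rightarrow \exists z\, \text{Aux}_\rho(\vec x, z)$ that introduces a fresh witness $z$ per firing, together with further $\aincd$s producing renamed marker atoms $\hat H_i(\vec x, \vec y_i, z)$ for each branch. I would include in $\instance_0'$ a small fixed collection of anchor elements (one per branch, linked by base facts to the gadget) to serve as reference points for $<$. In any $\owqalin$-model the totality of $<$ forces each witness $z$ into a definite interval between consecutive anchors, and the interval it falls into uniquely selects one branch. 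I would then define a UCQ $Q''$ consisting of: a translation of $Q$, and one ``violation'' CQ per rule and per branch that is satisfied exactly when a firing selects its branch while no corresponding genuine $H_i$-atoms are present. Each disjunct of $Q''$ is base-covered because every $<$-atom is conjoined with gadget base-atoms. Finally, I would convert $Q''$ into a single base-covered CQ $Q'$ by the standard disjunction-coding trick cited at the end of Section~\ref{sec:decid}. All of this is clearly polynomial-time computable.

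For correctness, the easy direction starts from a counterexample $\instance$ to $\owqa(\instance_0, \Sigma, Q)$, chooses at each firing a disjunct that is realized in $\instance$, and builds $\instance'$ by adding the gadget facts together with any strict linear order that places every witness $z$ into the interval associated with the chosen branch; one checks directly that $\instance'$ satisfies $\Sigma'$ and falsifies $Q'$. The converse direction starts from a counterexample $\instance'$ to $\owqalin(\instance_0', \Sigma', Q')$ and lets $\instance$ be its $\sigma$-restriction: the $\aincd$s in $\Sigma'$ force that for every body match a gadget and a selected branch exist, while the failure of every violation disjunct of $Q'$ forces that the selected branch is actually realized by $\sigma$-atoms of $\instance$, so $\instance \models \Sigma$; and the failure of the $Q$-translating disjunct of $Q'$ gives $\instance \not\models Q$.

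The main obstacle will be designing the violation CQs so that the reduction faithfully mimics the genuine nondeterminism of $\did$s using only positive $\aincd$s and the totality of $<$, while keeping every disjunct base-covered and a single distinguished relation in $\sigma'$. Once this core design is validated, it is routine to verify polynomial-time computability, the $\aincd$-form of $\Sigma'$, and the base-coveredness of $Q'$; the claimed hardness consequences for $\owqalin$ with $\aincd$s and base-covered CQs then follow by plugging in the known $\twoexp$ and $\conp$ lower bounds for $\did$ reasoning, exactly as in Theorem~\ref{thm:tcdisj}.
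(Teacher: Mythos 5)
Your high-level plan (reduce from $\did$ reasoning as in Theorem~\ref{thm:tcdisj}, use the freedom in completing a linear order to simulate disjunction, then apply the Or-gadget to collapse the UCQ to a CQ, and finally invoke the known $\conp$/$\twoexp$ lower bounds for $\did$s) is exactly the paper's strategy for this corollary, which rests on Theorem~\ref{thm:lindisj}. However, the specific disjunction-encoding you propose has a gap that you yourself flag as ``the main obstacle'' and do not resolve, and it is not a routine detail. Your violation CQs are supposed to be ``satisfied exactly when a firing selects its branch while no corresponding genuine $H_i$-atoms are present.'' A conjunctive query cannot assert the \emph{absence} of atoms, so this cannot be implemented as stated. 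The paper avoids ever having to express absence: it does not select a single branch at all. Instead, every relation $R$ of arity $n$ becomes $R'$ of arity $n+2$, every firing materializes candidate head witnesses for \emph{all} branches together with a $\witness_\tau$-fact recording them, each candidate carries its own fresh pair $(e_i,f_i)$ inside the base atom $R'_i(\vec x,\vec y_i,e_i,f_i)$, and ``pseudo'' is the \emph{positively} expressible condition $f_i<e_i$ (with $e_i=f_i$ excluded by a separate disjunct and totality of $<$ doing the rest). The violation disjunct then positively says ``body genuine and all recorded candidates pseudo,'' and its failure forces at least one genuine candidate. Your single witness $z$ falling into one of $k$ anchor intervals gives you a $k$-way choice, but no positive way to say that the chosen branch is \emph{not} realized.

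A second concrete problem is base-coveredness. Your interval tests are atoms of the form $a_{i-1}<z$ and $z<a_i$, where the anchors $a_i$ live in $\instance_0'$ and $z$ is a fresh existential witness of an $\aincd$ firing; no base atom ever contains an anchor together with $z$ (IDs cannot mention constants and cannot thread the fixed anchors through every firing), so these $<$-atoms are not covered and the resulting CQ is not base-covered as the corollary requires. In the paper's encoding this is automatic, because each order atom compares the two last positions of a single base atom $R'(\cdots,e,f)$, which therefore serves as its base guard. To repair your construction you would end up reintroducing a per-candidate pair of tag elements inside each base atom and comparing them under $<$ --- i.e., the paper's mechanism --- at which point the anchor-interval machinery is superfluous. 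The final step of your argument (transferring the $\conp$ and $\twoexp$ lower bounds for $\did$s through the reduction) is correct once a sound reduction is in hand.
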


Again, as in the previous section, the data complexity lower bound even holds in the absence of
constraints:

\newcommand{\lindatacompl}{
  There is a base-covered CQ $Q$ such that the data complexity of
  $\owqalin(\instance, \emptyset, Q)$ is $\conp$-hard.
}
\begin{proposition} \label{prop:lindatacompl}
  \lindatacompl
\end{proposition}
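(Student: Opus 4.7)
The plan is to adapt the strategy of Proposition~\ref{prop:lindatacompltrans} to the linear-order setting, by reducing the NP-hard problem of $3$-colorability of directed graphs to the complement of $\owqalin(\instance_0,\emptyset,Q)$ for a fixed base-covered CQ~$Q$. With no constraints at our disposal, the only source of non-determinism is the choice of linear extension of $<$ on top of $\instance_0$; I use the position of auxiliary elements in that extension to encode the choice of color for each vertex, in exact analogy with the use of path lengths for $\owqatc$ in Proposition~\ref{prop:lindatacompltrans}.

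Concretely, I fix a signature consisting of a binary base relation $E$, a $4$-ary base relation $N$, and a single distinguished relation $<$. Given a directed graph $(V,E)$, I build $\instance_0$ as follows: include three ``color'' constants $c_1,c_2,c_3$; for each vertex $v\in V$ introduce three fresh ``marker'' elements $v^1,v^2,v^3$ and add the three facts $N(c_i,v^i,v^j,v^k)$ for each cyclic rotation $(i,j,k)$ of $(1,2,3)$; and add $E(u,v)$ for every edge $(u,v)$. Intended semantics: in any strict linear extension of $<$, exactly one of $v^1,v^2,v^3$ is minimum, and its index is the color assigned to $v$. The fixed query, expressing ``some edge is monochromatic'', is the base-covered CQ
\begin{align*}
Q \defeq \exists c,u,v,m_u,n_u^1,n_u^2,m_v,n_v^1,n_v^2 \big(
& E(u,v) \wedge N(c,m_u,n_u^1,n_u^2) \wedge N(c,m_v,n_v^1,n_v^2) \\
& \wedge m_u<n_u^1 \wedge m_u<n_u^2 \wedge m_v<n_v^1 \wedge m_v<n_v^2 \big),
\end{align*}
in which each $<$-atom is covered by the $N$-atom that bundles its arguments, so that $Q$ is indeed base-covered.

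Correctness will follow from a straightforward two-sided analysis. If a linear extension of $<$ falsifies $Q$, then reading off for each $v$ the unique $i$ such that $v^i$ is $<$-minimum among $\{v^1,v^2,v^3\}$ yields a proper $3$-coloring, because a monochromatic edge would immediately provide a match for $Q$ via the appropriate $c=c_i$. Conversely, any proper $3$-coloring can be realized by constructing a linear extension that, for each vertex~$v$ of color~$i$, places $v^i$ below the other two markers of~$v$. Hence $\owqalin(\instance_0,\emptyset,Q)$ fails iff the input graph is $3$-colorable, and since $3$-colorability is NP-hard we obtain the desired $\conp$-hardness in data complexity.

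The only subtle ingredient is collapsing the three natural ``pick color $i$'' disjuncts into a single base-covered CQ. This is exactly what the color-selector variable~$c$ together with the three permuted $N$-facts per vertex accomplishes: it implements, in the current linearly-ordered setting, the UCQ-to-CQ trick of~\cite{georgchristos} already invoked in Proposition~\ref{prop:lindatacompltrans}. I expect this packaging step to be the most delicate part of the write-up, since it must simultaneously preserve base-coveredness and faithfully reflect the three-way disjunction over colors.
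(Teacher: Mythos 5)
Your overall strategy --- encoding the color of $v$ as which of three fresh markers $v^1,v^2,v^3$ is $<$-minimal, and folding the three-way disjunction over colors into a single CQ via a color-selector element $c$ together with cyclically permuted $N$-facts --- is a legitimate and genuinely different route from the paper's. The paper instead reuses the gadget of Proposition~\ref{prop:lindatacompltrans}: one $7$-ary fact $V(x,e_1,f_1,e_2,f_2,e_3,f_3)$ per vertex, with $e_i<f_i$ versus $f_i<e_i$ encoding three independent Boolean choices, an extra ``some color must be chosen'' disjunct, and the $\Or$/$\mathrm{True}$ gadget to compress the resulting UCQ into a CQ. Your single three-way choice makes the ``at least one color'' disjunct unnecessary, and your selector element is arguably a lighter way to collapse the disjunction over colors than the $\Or$-gadget.

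However, the reduction as written is broken: nothing in $Q$ ties the edge atom $E(u,v)$ to the marker atoms. The variables $m_u,n_u^1,n_u^2$ occur only in $N(c,m_u,n_u^1,n_u^2)$ and in the $<$-atoms; the subscript $u$ is purely mnemonic. Since for every vertex $w$ exactly one of its three markers is $<$-minimal in any strict linear order, some fact $N(c_i,w^i,w^j,w^k)$ always satisfies $m<n^1\wedge m<n^2$, and the two $N$-atoms of $Q$ may even be mapped to that very same fact. Hence $Q$ holds in every linear extension of every $\instance\supseteq\instance_0$ as soon as the graph has an edge, so $\owqalin(\instance_0,\emptyset,Q)$ is always true and the reduction does not decide $3$-colorability. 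The fix is mechanical but necessary: make $N$ five-ary, storing the vertex itself, i.e.\ use facts $N(v,c_i,v^i,v^j,v^k)$ and atoms $N(u,c,m_u,n_u^1,n_u^2)$ and $N(v,c,m_v,n_v^1,n_v^2)$ in $Q$; this preserves base-coveredness, since each $<$-atom is still covered by its $N$-atom. With that change both directions of your correctness argument go through (for the negative direction, note that you may take the counterexample to be exactly $\instance_0$ plus a suitable linear order, so no spurious $N$- or $E$-facts arise).
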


\section{Undecidability results for transitivity} \label{sec:undecidtrans}

We have shown in Section~\ref{sec:decid}
that query answering is decidable with transitive relations (even
with transitive closure), $\afgtgd$s, and UCQs
(Theorem~\ref{thm:decidtransautomata}).
Removing our base-guarded
condition leads to undecidability of $\owqatc$, even when
constraints are inclusion dependencies:

\newcommand{\undectrans}{
  There is a signature $\sigma = \sigmab \sqcup \sigmad$ with a single
  distinguished predicate $S^\trans$ in~$\sigmad$,
  a set $\Sigma$ of
  $\incd$s on~$\sigma$, and a CQ $Q$ on $\sigmab$,
  such that the following problem is undecidable:
  given a finite set of facts $\instance_0$,
  decide $\owqatc(\instance_0, \Sigma, Q)$.
}

\begin{theorem}
  \label{thm:undectrans}
  \undectrans
\end{theorem}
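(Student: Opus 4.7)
The plan is to prove undecidability by reduction from a standard undecidable problem such as the halting problem for deterministic Turing machines (equivalently, PCP or two-counter Minsky machines). The input of the undecidable problem would be encoded as the instance $\instance_0$, while the signature $\sigma = \sigmab \sqcup \{S^\trans\}$, the set $\Sigma$ of $\incd$s, and the CQ $Q$ on $\sigmab$ would all be fixed independently of the input.

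The driving idea is to exploit the combination of transitive closure with $\incd$s that are \emph{not} base-guarded, such as $\forall x y \, (S^\trans(x,y) \to T(x,y))$ where $T$ is a base relation. Such an $\incd$ copies every reachability pair into a base relation, and combined with $\incd$s that push facts back into $S$ (for instance $\forall x y \, (U(x,y) \to S(x,y))$), one obtains feedback cascades where each new $S$-path enlarges $S^\trans$, which in turn triggers further $\incd$s that create new $S$-facts. To simulate a Turing machine $M$ on input $w$, the configurations of $M$ would be encoded along chains of $S$-edges, with auxiliary base relations tagging elements with tape symbols, states, and head positions. The transitive closure $S^\trans$ then functions as a ``reaches'' relation through which transition information can be propagated via cascading single-atom $\incd$s. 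The CQ $Q$ would be designed to detect a halting-state pattern appearing somewhere in this trace, so that $M$ halts on $w$ iff $\owqatc(\instance_0, \Sigma, Q)$ holds (or its dual, depending on the polarity chosen for $Q$).

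The main obstacle is the severe syntactic restriction of $\incd$s: single-atom bodies and heads with no repeated variables, so one cannot directly express a conjunctive guard such as ``symbol $a$ appears under the head while the machine is in state $q$''. All such conjunctive reasoning must be decomposed into chains of $\incd$s whose combined effect, mediated through $S^\trans$, realizes the intended transition rule. The most delicate step is establishing completeness of the reduction against adversarial models: given any $\instance \supseteq \instance_0$ satisfying $\Sigma$ and avoiding $Q$, we must argue that $\instance$ already encodes a genuine non-halting computation of $M$, rather than some spurious structure that accidentally dodges $Q$. This is typically forced by asserting $S^\trans$-facts in $\instance_0$ whose realization by $S$-paths (demanded by the transitive-closure semantics) is constrained by the $\incd$s to correspond exactly to valid transition sequences of $M$, while $Q$ is engineered so that any deviation creates the halting pattern. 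Verifying this rigidity of the encoding — that minimality of transitive-closure witnesses, together with cascaded IDs, cannot be subverted by adding extra base facts — is the technical heart of the proof, and yields undecidability of $\owqatc$ already for fixed $\incd$s with a single distinguished $S^\trans$ and a fixed CQ on $\sigmab$.
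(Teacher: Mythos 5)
Your high-level plan --- a reduction from an undecidable problem with $\Sigma$ and $Q$ fixed and only $\instance_0$ varying, exploiting non-base-guarded $\incd$s that interact with $S^\trans$ --- matches the shape of the paper's argument (the paper reduces from an infinite tiling problem rather than Turing-machine halting, but that difference is cosmetic). However, the proposal has a genuine gap: it never supplies the mechanism that makes such a reduction possible with plain inclusion dependencies. The central difficulty is that $\incd$s have single-atom bodies and heads with no repeated variables, so they cannot make choices (no disjunction) and cannot perform joins; in particular they cannot assign one of finitely many symbols/states to a cell, nor enforce that consecutive configurations agree except at the head. The paper's proof rests on two specific ideas that your sketch does not contain. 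First, \emph{disjunction is emulated by path length}: every grid cell $z$ is forced (by an $\incd$) to carry a fact $T(z,w)$ with $S^\trans(z,w)$, whose realization as an $S$-path of some length $\ell\geq 1$ is the ``hidden choice''; a sanity disjunct of the query forbids $\ell>k$, so $\ell\in\{1,\dots,k\}$ encodes the cell's color, and further query disjuncts (CQs, which \emph{can} join) detect forbidden adjacent patterns. Second, the \emph{two-dimensional grid} is obtained not by encoding configurations along a single chain but by building one infinite $S'$-chain $a_0,a_1,\dots$ and using the reachability pairs $S^\trans(a_i,a_j)$ themselves to index cells via $\incd$s $S^\trans(x,y)\rightarrow\exists z\, G(x,y,z)$; synchronizing successive configurations along a one-dimensional trace, as you propose, is exactly the kind of join that $\incd$s cannot express and that your sketch does not explain how to recover.

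Moreover, the specific cascade you do propose --- $S^\trans(x,y)\to T(x,y)$ together with rules pushing facts back into $S$ --- is counterproductive: feeding reachability pairs back into $S$ makes $S$ (essentially) transitively closed, which destroys the path-length information that is the only available substitute for disjunction. Finally, the completeness direction (that every model avoiding $Q$ encodes a genuine tiling/computation) is precisely where the paper does real work, showing that the asserted $S^\trans$-facts must be witnessed by $S$-paths of length in $\{1,\dots,k\}$ and that any mislabelling triggers a query match; your proposal names this as ``the technical heart'' but defers it entirely, so as written the argument is not yet a proof.
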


The proof is by reduction from a tiling problem. The constraints use a
transitive successor relation to define a grid of integer pairs. It then uses
transitive closure to emulate disjunction, as in Theorem~\ref{thm:tcdisj}, and
relies on $Q$ to test for forbidden adjacent tile patterns.

An analogous result can be shown for $\owqatr$, using
(non-base-guarded) disjunctive inclusion dependencies:

\newcommand{\undectransb}{
  There is an arity-two signature $\sigma = \sigmab \sqcup \sigmad$
  with a single distinguished predicate $S^\trans$ in~$\sigmad$,
  a set $\Sigma$ of $\did$s on~$\sigma$,
  a CQ $Q$ on $\sigmab$, such that the following problem is
  undecidable:
  given a finite set of facts $\instance_0$,
  decide $\owqatr(\instance_0, \Sigma, Q)$.
}
\begin{theorem}
  \label{thm:undectransb}
  \undectransb
\end{theorem}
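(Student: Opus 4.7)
The plan is to reduce from an undecidable tiling problem on $\NN \times \NN$ (for instance the origin-constrained domino problem), closely following the blueprint of Theorem~\ref{thm:undectrans}: there, transitive closure both defines the grid and emulates disjunction; here, the single transitive relation $S^\trans$ still supports the grid encoding, but the emulation of disjunction is replaced by the direct disjunction available in the heads of $\did$s.

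Concretely, I would fix an arity-two base signature containing unary tile predicates $T_1, \dots, T_k$, an auxiliary unary marker $\mathrm{Cell}$, and binary successor relations $H$ (horizontal) and $V$ (vertical), with $S^\trans$ as the sole distinguished transitive binary relation. The initial set of facts $\instance_0$ asserts a single origin cell carrying the designated initial tile and satisfying $\mathrm{Cell}$. The constraint set $\Sigma$ consists only of $\did$s:
(i) cell propagation, namely $\mathrm{Cell}(x) \to \exists y\, H(x,y)$, $\mathrm{Cell}(x) \to \exists y\, V(x,y)$, and $H(x,y) \to \mathrm{Cell}(y)$, $V(x,y) \to \mathrm{Cell}(y)$;
(ii) tile choice $\mathrm{Cell}(x) \to T_1(x) \vee \dots \vee T_k(x)$, which is a genuine $\did$; and
(iii) reachability capture $H(x,y) \to S^\trans(x,y)$ and $V(x,y) \to S^\trans(x,y)$, so that transitivity forces $S^\trans$ to contain every $H$/$V$-reachable pair in any model.

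The query $Q$ is built first as a UCQ on $\sigmab$ whose disjuncts rule out every way a tiling can fail: two distinct tile predicates on a single cell, any horizontally or vertically adjacent pair of tiles violating the tiling rules, and, most importantly, any pattern that witnesses a mismatch between the tile labels of two elements that ought to correspond to the same grid position. Since the propagation $\did$s alone can only generate a tree of cells rather than a true grid, this last kind of disjunct is expressed using $S^\trans$: the $H$-then-$V$ child and the $V$-then-$H$ child of a cell both sit in its $S^\trans$-successor set, and the disjuncts of $Q$ detect incompatible tile labels at such $S^\trans$-linked siblings, forcing tile choices to propagate coherently along both axes. Once the UCQ is in place, we apply the coding trick invoked at the end of Section~\ref{sec:decid} to fold it into a single base-only $\mathrm{CQ}$.

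The hard part will be arranging the constraints and $Q$ so that models of $\instance_0 \wedge \Sigma \wedge \neg Q$ correspond exactly to valid tilings of $\NN \times \NN$. The easy direction is that a tiling, folded onto its intended grid, yields a model. For the converse, one must show that the tile labelling of the generated tree necessarily projects onto a consistent tiling, and this is where the interplay between the transitivity of $S^\trans$ and the $\did$-based tile disjunction carries the argument: as soon as two tree-witnesses of the same grid position receive different tiles, the corresponding tuple lands in $S^\trans$ and triggers one of the forbidden disjuncts of $Q$. Once this correspondence is established, the undecidability of the tiling problem transfers to $\owqatr$ with $\did$s on an arity-two signature and a single transitive relation, proving Theorem~\ref{thm:undectransb}.
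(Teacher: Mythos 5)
There is a genuine gap in the step you yourself flag as ``the hard part'': making the tree generated by the propagation $\did$s behave like a grid. Your proposed mechanism --- detecting incompatible tile labels at ``$S^\trans$-linked siblings'' --- cannot work as stated, because $S^\trans$-reachability from a common ancestor is far too coarse a relation: the $HH$-descendant and the $VV$-descendant of a cell are also both in its $S^\trans$-successor set, yet they represent genuinely different grid positions $(i+2,j)$ and $(i,j+2)$ and may legitimately carry different tiles. A disjunct forbidding distinct tiles on $S^\trans$-linked elements would force the whole tiling to be monochromatic. If you instead refine the disjuncts to compare only the $H$-then-$V$ child with the $V$-then-$H$ child of a common node (which a CQ over $H$ and $V$ can express without $S^\trans$ at all), you only force the \emph{roots} of the two diamond branches to agree; the subtrees hanging below those two distinct elements are generated independently by the chase, so agreement does not propagate: e.g.\ the three representatives $HHV$, $HVH$, $VHH$ of cell $(2,1)$ are pairwise unconstrained by any bounded family of such disjuncts. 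No fixed UCQ over this tree encoding identifies all representatives of a grid cell.

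The paper sidesteps this entirely by never building a two-dimensional tree: it generates a single infinite $S'$-chain $a_0, a_1, \dots$, closes it under $S^\trans$, and represents grid cell $(i,j)$ canonically as the \emph{pair} $(a_i,a_j)$, with the tile stored in a binary relation $K_l(a_i,a_j)$ (plus unary $K_l'$ for the diagonal, since $\incd$s forbid repeated variables). Transitivity of $S^\trans$ guarantees every pair of chain elements is $S^\trans$-related and hence receives a color via the disjunctive head $S^\trans(x,y) \rightarrow \bigvee_i K_i(x,y)$; horizontal and vertical adjacency become CQ-expressible because adjacent cells share one coordinate and differ by one $S'$-step in the other. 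There is no confluence problem to solve. Your reduction would need to be restructured along these lines (or you would need some other mechanism for canonical cell representation); the tile-choice-by-$\did$ idea and the final UCQ-to-CQ folding are fine, but the grid encoding as proposed does not establish the backward direction of the reduction.
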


These results complement the undecidability results of 
\cite[Theorem~2]{andreaslidia}, which showed that, on arity-two signatures, 
$\owqatr$ is undecidable with guarded TGDs and atomic CQs,
even when transitive relations occur only in guards.
Our results also contrast with the decidability results of 
\cite{mugnier15} which apply to $\owqatr$: our Theorem~\ref{thm:undectrans} shows that
their results cannot extend to $\owqatc$.
\section{Undecidability results for linear orders} \label{sec:undecidlin}

Section~\ref{sec:decidlin} has shown that $\owqalin$ is decidable for
base-covered CQs and $\acgnf$ constraints. Dropping the
base-covered requirement on the query leads to undecidability:

\newcommand{\undeccq}{
  There is a signature $\sigma = \sigmab \sqcup \sigmad$ where
  $\sigmad$ is a single strict linear order
  relation, a CQ $Q$ on~$\sigma$, and a set $\Sigma$ of
  inclusion dependencies on $\sigmab$ (i.e., not mentioning the linear order, so in
  particular base-covered), such that the following problem is undecidable:
  given a finite set of facts $\instance_0$, decide
  $\owqalin(\instance_0, \Sigma, Q)$.
}
\begin{theorem}
  \label{thm:undeccq}
  \undeccq
\end{theorem}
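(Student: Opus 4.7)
The plan is to reduce from an undecidable tiling problem: given a finite tile set with horizontal and vertical compatibility constraints, decide whether it admits a valid tiling of $\NN \times \NN$. From a tiling instance I will construct in polynomial time a finite set of facts $\instance_0$, a set $\Sigma$ of $\sigmab$-inclusion dependencies, and a single CQ $Q$ on $\sigma$ (allowed to mention $<$), so that the instance tiles $\NN^2$ iff $\owqalin(\instance_0, \Sigma, Q)$ \emph{fails}, i.e., iff some linear-ordered extension of $\instance_0$ satisfying $\Sigma$ falsifies $Q$.

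First I would populate $\sigmab$ with a unary ``cell'' predicate, binary horizontal and vertical successor relations $H$ and $V$, and unary tile-type predicates $T_1,\dots,T_k$. The IDs in $\Sigma$ force every cell to have an $H$-successor and a $V$-successor, each itself a cell, thus generating an infinite tree-like structure of cells from a seed cell in $\instance_0$. IDs are far too weak to enforce grid commutativity, tile existence, uniqueness, or compatibility, so all of these conditions must be encoded inside $Q$; K\"onig's lemma is then used in the correctness argument to reconcile tree-unfolded tilings with tilings of $\NN^2$.

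The query uses the linear order as its only global inspection mechanism. Because $\Sigma$ never mentions $<$, the model builder may orient it freely, while in return the CQ is free to compare arbitrary pairs of elements using it. I would first design $Q$ as a UCQ listing forbidden patterns: an $H$- or $V$-edge whose endpoints carry an incompatible pair of tile labels, and a cell carrying no tile (detected indirectly via auxiliary unary markers populated by trivial IDs, with the ``absence'' expressed through a positive pattern on those markers). The disjunction is then collapsed into a single CQ using the standard intermediate-truth-value coding referenced earlier in the paper. Correctness then follows in both directions: from any valid tiling one obtains a counter-model by putting the canonical tiled grid on top of a tree-unfolding and choosing $<$ arbitrarily, so no pattern in $Q$ fires; conversely, any model extending $\instance_0$ and satisfying $\Sigma \wedge \neg Q$ yields a consistent tile labelling along the $H,V$-edges it generates, hence (via K\"onig) a valid tiling of $\NN^2$.

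The main obstacle is navigating the weakness of inclusion dependencies: single-atom body and head, no repeated variables, and no disjunction. They can only \emph{create} elements, never choose among alternatives, so ``each cell carries exactly one tile'' and ``$H$ and $V$ commute'' are not expressible through $\Sigma$ alone. The linear order is the essential workaround: as the only distinguished, cross-cutting symbol available to $Q$, it lets the CQ relate arbitrarily far-apart elements and thereby detect the patterns that $\Sigma$ cannot rule out. Some care is also required in the UCQ-to-CQ collapse step, to ensure that the auxiliary markers used there interact correctly with the tile predicates and with $<$, and to keep the CQ of bounded size independent of the tiling instance's combinatorial parameters.
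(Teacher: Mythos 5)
Your overall architecture (reduce from an infinite tiling problem, use inclusion dependencies only to generate an unbounded supply of cells with $H$- and $V$-successors, push all the real work into the query, then collapse the UCQ into a CQ with the Or-gadget) matches the paper's. But there are two genuine gaps, and both sit exactly at the points where the linear order has to carry the argument. First, grid commutativity: you acknowledge that IDs cannot enforce it, but you then propose to ``reconcile tree-unfolded tilings with tilings of $\NN \times \NN$'' via K\"onig's lemma. That cannot work: labelling the chase tree (where right-then-down and down-then-right lead to \emph{distinct} fresh nodes) subject only to local edge constraints is a trivially decidable problem, so a reduction that tolerates tree-shaped counter-models is unsound --- counter-models would essentially always exist and the answer would not depend on whether the plane is tileable. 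The paper instead forces confluence through the query: it includes the two disjuncts $R(x,y)\wedge D(x,z)\wedge R(z,w)\wedge D(y,w')\wedge w<w'$ and its mirror with $w'<w$; since $<$ is total, the joint failure of both forces $w=w'$, so any counter-model genuinely contains a grid. Your query, as described, only lists tile-incompatibility and missing-tile patterns and has no such mechanism.

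Second, tile existence. You attach tiles as unary predicates $T_1,\dots,T_k$ directly to cells and want to detect ``a cell carrying no tile'' via ``auxiliary unary markers populated by trivial IDs,'' with absence ``expressed through a positive pattern.'' IDs have single-atom bodies and heads and no disjunction, so they cannot make a fresh chase element choose one of $k$ tile predicates, and a CQ cannot positively assert that none of the $T_i$ holds of a cell; the all-untiled extension would then be a counter-model and the reduction again degenerates. The paper avoids putting colour predicates on cells altogether: it places $k-1$ marker elements $b_1,\dots,b_{k-1}$ in $\instance_0$ and reads a cell's colour off its position in the linear order relative to these markers (the abbreviation $K_i'(x)$ says roughly $b_{i-1}<x<b_i$). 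Totality of $<$ then guarantees that every cell receives at least one colour for free, with no constraint needed. You correctly observe that the order is ``the essential workaround,'' but the proposal never actually deploys it for either of the two jobs it must do.
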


This result is close to \cite[Theorem 3]{egoretal}, which
deals not with a  linear order, but inequalities in queries, which we
can express with a linear order.
However, this requires a
UCQ. As in our prior hardness and undecidability results, we can adapt the technique to use a CQ.

By letting $\Sigma' \defeq \Sigma \wedge \neg Q$ where $\Sigma$ and $Q$ are as in the
previous theorem, we obtain base-guarded constraints
for which $\owqalin$ is undecidable.
In fact, $\Sigma'$ can be expressed as a set of $\afgtgd$s.
This implies that the base-covered requirement is necessary for the constraint
language:

\begin{corollary}\label{cor:undeclin}
  There is a signature $\sigma = \sigmab \sqcup \sigmad$ where $\sigmad$ is a single strict
  linear order relation, and a set $\Sigma'$ of $\afgtgd$ constraints,
  such that, letting $\top$ be the tautological query, the following problem is
  undecidable:
  given a finite set of facts $\instance_0$, decide $\owqalin(\instance_0,
  \Sigma', \top)$.
\end{corollary}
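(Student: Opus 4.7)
The plan is to exploit Theorem~\ref{thm:undeccq} by folding the negated query into the constraints, reducing a three-parameter undecidability result to a two-parameter one. First, I would invoke Theorem~\ref{thm:undeccq} to obtain a signature $\sigma = \sigmab \sqcup \sigmad$ (with $\sigmad$ a single strict linear order), a set $\Sigma$ of inclusion dependencies on~$\sigmab$, and a CQ $Q$ on~$\sigma$ for which the problem $\owqalin(\instance_0, \Sigma, Q)$ is undecidable as $\instance_0$ ranges over finite sets of facts. I would then set $\Sigma' \defeq \Sigma \cup \{\neg Q\}$ and argue that $\Sigma'$ can be written as a set of $\afgtgd$s whose consistency problem inherits the undecidability.

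The next step is to check the base-frontier-guarded form of $\Sigma'$. The IDs in $\Sigma$ have a single $\sigmab$-atom as body, so they are already $\aincd$s, and in particular $\afgtgd$s. For the negation of $Q = \exists \vec y\, \bigwedge_i A_i(\vec y)$, I would express $\neg Q$ as the denial constraint $\forall \vec y\,(\bigwedge_i A_i(\vec y) \rightarrow \bot)$, viewed as a TGD with $\bot$ (or empty) head. Since the head contains no variables, the set of frontier variables is empty and the base-frontier-guarded condition holds vacuously; the body is allowed to mention the linear order from $\sigmad$, since $\afgtgd$ only constrains the guarding of frontier variables, not the overall shape of the body.

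Finally, I would record the standard equivalence linking the two problems: $\owqalin(\instance_0, \Sigma, Q)$ holds iff no extension $\instance \supseteq \instance_0$ satisfies $\Sigma \wedge \neg Q$ under linear-order semantics, i.e.\ iff $\instance_0 \cup \Sigma'$ is inconsistent under linear-order semantics. Reading $\owqalin(\instance_0, \Sigma', \top)$ for the tautological query $\top$ as this consistency check --- the natural reading, since entailment of a tautology is otherwise vacuous --- yields the reduction, and undecidability transfers from Theorem~\ref{thm:undeccq} directly to the corollary. The main obstacle is essentially bookkeeping rather than mathematical depth: one must confirm that the denial-constraint encoding of $\neg Q$ is admitted by the paper's notion of $\afgtgd$ (or else simulate it with an auxiliary nullary relation that is forbidden on $\instance_0$), and that the ``tautological query'' phrasing genuinely captures the consistency problem for $\Sigma'$.
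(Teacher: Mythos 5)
Your overall strategy is the paper's: invoke Theorem~\ref{thm:undeccq}, fold $\neg Q$ into the constraint set, and observe that the resulting sentences are $\afgtgd$s because the frontier of the $\neg Q$ constraint is empty. The one step that is not bookkeeping, however, is exactly the one where your proposal has a gap. The paper's definition of a TGD requires the head to be an existentially quantified \emph{conjunction of atoms}, so a denial constraint $\forall \vec y\,(\bigwedge_i A_i(\vec y) \rightarrow \bot)$ is not literally admitted (and an \emph{empty} head would be the empty conjunction, i.e.\ $\top$, making the constraint vacuously true rather than a denial). Your fallback --- an auxiliary nullary relation ``forbidden on $\instance_0$'' --- does not work: the semantics of $\owqalin$ quantifies over arbitrary extensions $\instance \supseteq \instance_0$, and such an extension is free to add the nullary fact, after which the rewritten constraint is satisfied and no longer expresses $\neg Q$. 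Forbidding a relation on the initial facts places no constraint on the models you actually have to rule out.

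The paper's fix is to exploit the distinguished semantics itself: write $\neg Q$ as $\forall \vec x\,(\phi(\vec x) \rightarrow \exists y\,(y < y))$, where $<$ is the distinguished relation. This is a syntactically legitimate TGD (single-atom head), it is trivially an $\afgtgd$ since the frontier is empty, and its head is unsatisfiable in every structure where $<$ is required to be a strict (hence irreflexive) linear order --- so under the $\owqalin$ semantics it is logically equivalent to $\neg Q$. Your proof becomes correct once this head is substituted for $\bot$; without it, the claim that $\Sigma'$ ``can be written as a set of $\afgtgd$s'' is unsupported. (Your reading of $\owqalin(\instance_0,\Sigma',\top)$ as the consistency problem for $\instance_0 \wedge \Sigma'$ under linear-order semantics matches the paper's intent and is not at issue.)
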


\section{Conclusion} \label{sec:conc}
We have given a detailed picture of the impact of transitivity, transitive closure,
and linear order restrictions on query answering problems for a broad class of
guarded constraints.
We have shown that transitive relations and transitive closure restrictions
can be handled in  guarded constraints as long
as they are not used in guards. For linear
orders, the same is true if order atoms are covered by base atoms. This
implies the analogous results for frontier-guarded TGDs, in particular
frontier-one. But in the linear order case we show that $\ptime$ data complexity cannot always be preserved.

We leave open the question of entailment over \emph{finite} sets of facts. 
There are few techniques for deciding entailment over finite sets of facts for logics where
it does not coincide with general entailment (and for the constraints considered here it does not coincide).
An exception is \cite{vincemikolaj}, but it is not clear if the techniques there can be extended to our constraint languages.

\section*{Acknowledgments}
Amarilli was partly funded by the T\'{e}l\'{e}com ParisTech Research Chair on Big Data
and Market Insights.
Bourhis was supported by CPER Nord-Pas de Calais/FEDER DATA Advanced Data Science and Technologies 2015-2020 and the ANR Aggreg Project ANR-14-CE25-0017, INRIA Northern European Associate Team Integrated Linked Data.
Benedikt's work was sponsored by the Engineering and Physical Sciences
Research Council of the United Kingdom (EPSRC), grants EP/M005852/1 and EP/L012138/1.
Vanden Boom was partially supported by EPSRC grant EP/L012138/1.

\bibliographystyle{named}
\bibliography{algs}

\appendix

\clearpage

\noindent The proofs for the results stated in the main paper
are provided in this appendix.

\section{Normal form}\label{app:nf}
\newcommand{\nf}{\text{normal form}\xspace}

The proofs make use of the fact that the fragments of $\gnf$
that we consider can be converted into a normal form
that is related to the $\gn$ normal form
introduced in the original paper on $\gnf$ \cite{gnficalp}.
The idea is that $\gnf$ formulas can be seen as being built
up from atoms using guarded negation, disjunction, and CQs.
We introduce this normal form here,
and discuss some related notions
we will use in the proofs.

First, the
\emph{guardedness predicate} $\guarded(\vec{x})$ asserts that
$\vec{x}$ is guarded by some $\sigma$-atom;
it can be seen as an abbreviation for the disjunction
of existentially quantified relational atoms from~$\sigma$ involving
all of the variables from~$\vec{x}$.
We write $\guardedb(\vec{x})$ for the corresponding guardedness predicate
restricted to $\sigmab$.

The \emph{$\nf$ for $\agnf$} over $\sigma$
starts with $\sigmab$-atoms
and builds up via the following rules:
\begin{itemize}
\item If $\phi_1(\vec x)$ and $\phi_2(\vec x)$ are in $\nf$ $\agnf$,
then $\phi_1(\vec{x}) \vee \phi_2(\vec{x})$ are in $\nf$ $\agnf$.
\item  If $\phi(\vec x)$ is in $\nf$ $\agnf$ and
$A(\vec{x})$ is a $\sigmab$-atom or the $\sigmab$-guardedness predicate,
then $A(\vec{x}) \wedge \neg \phi(\vec{x})$ is in $\nf$ $\agnf$.
\item If $\delta$ is a CQ over signature $\sigma \cup \{Y_1,\ldots,Y_n\}$,
and $\phi_1, \ldots, \phi_n$ are in $\nf$ $\agnf$,
and for each $Y_i$ atom in $\delta$
there is some $\sigmab$-atom or $\sigmab$-guardedness predicate
in $\delta$ that contains its free variables,
then $\delta[Y_1 \mapsfrom \phi_1, \ldots, Y_n \mapsfrom \phi_n]$ is in $\nf$ $\agnf$.
We call $\delta[Y_1 \mapsfrom \phi_1, \ldots, Y_n \mapsfrom \phi_n]$ a \emph{CQ-shaped formula}.
\end{itemize}

Likewise, the \emph{$\nf$ for $\acgnf$}
over $\sigma$
consists of $\nf$ $\agnf$ formulas such that
for every CQ-shaped subformula $\delta$ that appears negatively
(in the scope of an odd number of negations),
and for every conjunct $\beta$ in $\delta$,
there must be some $\sigmab$-atom or $\sigmab$-guardedness predicate
in $\delta$ that contains the free variables of $\beta$.

\myparagraph{Width and CQ-rank}
For $\varphi$ in $\nf$ $\agnf$, we define the
\emph{width} of $\varphi$ to be the maximum number
of free variables in any subformula of $\varphi$.
The \emph{CQ rank} of $\varphi$ is
the maximum number of conjuncts in any CQ-shaped subformula
$\exists \vec{x} ( \bigwedge \gamma_i )$
where $\vec{x}$ is non-empty.
These will be important parameters in later proofs.

We write $\agnfk$ to denote \emph{$\nf$ $\agnf$ formulas
of width $k$}, and similarly for $\acgnfk$.

\myparagraph{Conversion into $\nf$}
Observe that formulas in $\afgtgd$ or $\acfgtgd$ are of the form
$\forall \vec{x} (\bigwedge \gamma_i \rightarrow \exists \vec{y} \bigwedge \rho_i)$
already
and thus can be naturally written in $\nf$ $\agnf$ or $\acgnf$ as
$\neg \exists \vec{x} ( \bigwedge \gamma_i  \wedge \neg \exists \vec{y} \bigwedge \rho_i )$,
with no blow-up in the size or width.

In general, $\agnf$ formulas can be converted into $\nf$,
but with an exponential blow-up in size.

\begin{proposition}\label{prop:nf}
Let $\varphi$ be a formula in $\agnf$.
We can construct an equivalent $\varphi'$ in $\nf$
in $\exptime$ such that
\begin{itemize}
\item $\mysize{\varphi'}$ is at most exponential in $\mysize{\varphi}$;
\item the width of $\varphi'$ is at most $\mysize{\varphi}$;
\item the CQ-rank of $\varphi'$ is at most $\mysize{\varphi}$;
\item if $\varphi$ is in $\acgnf$, then $\varphi'$ is in $\nf$ $\acgnf$.
\end{itemize}
\end{proposition}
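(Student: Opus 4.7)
The plan is to define the normalization procedure by induction on the structure of the input $\agnf$ formula $\varphi$, proving along the way that the output $\varphi'$ lies in $\nf$ $\agnf$ (or in $\nf$ $\acgnf$ when $\varphi \in \acgnf$) and meets the four claimed bounds. The key invariant I would maintain is that every sub-$\agnf$ formula $\psi(\vec{x})$ can be rewritten as a finite disjunction $\bigvee_i \chi_i(\vec{x})$ where each $\chi_i$ is either a CQ-shape $\delta_i[Y_{i,1} \mapsfrom \phi_{i,1}, \ldots]$ whose substituted $\phi_{i,j}$ are already in $\nf$, or a guarded negation $A_i(\vec{x}) \wedge \neg \phi_i(\vec{x})$ with $A_i$ a $\sigmab$-atom or $\sigmab$-guardedness predicate. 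By taking the whole disjunction as the top-level constructor, it then sits inside $\nf$ $\agnf$.

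The induction cases are as follows. A $\sigmab$-atom is already in $\nf$, and a single $\sigmad$-atom is a trivial CQ-shape with no $Y_i$ placeholders (so the base-guardedness condition on $Y_i$ atoms is vacuous). For $\psi_1 \vee \psi_2$ I simply disjoin the normalized disjuncts. For $\psi_1 \wedge \psi_2$ I distribute $\wedge$ over the top-level $\vee$ in each normalized $\psi_i'$, and combine each resulting pair of CQ-shapes into a single CQ-shape by conjoining their underlying CQs and merging the substitution maps. For $\exists y\, \psi$ I push the quantifier inside the top-level disjunction and absorb it into each CQ-shape. For a guarded negation $A(\vec{x}, \vec{y}) \wedge \neg \phi(\vec{x})$ (where $A$ is a $\sigmab$-atom), I first normalize $\phi$ to $\phi'$, then re-express the formula as the CQ-shape $\bigl(A(\vec{x},\vec{y}) \wedge Y(\vec{x})\bigr)[Y \mapsfrom \guardedb(\vec{x}) \wedge \neg \phi'(\vec{x})]$; this is well-formed since $A$ base-covers the $Y$-placeholder's free variables and the substituted subformula is itself a $\nf$ guarded negation. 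The bounds follow immediately: the width and CQ-rank stay bounded by $\mysize{\varphi}$ because no new variables are introduced and the total number of atom-occurrences in any merged CQ is bounded by the number already present in $\varphi$, while the exponential blow-up in size and the $\exptime$ running time come solely from distributing conjunction over disjunction in the conjunction case.

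The main obstacle I expect is the $\acgnf$ preservation clause: when combining two CQ-shapes by conjoining their CQs, one must verify that every $\sigmad$-atom appearing negatively in the combined formula still has a $\sigmab$-atom or $\sigmab$-guardedness predicate covering its variables within the same CQ-shape in which it occurs. The key observation that makes this work is that the base-coveredness property is local to each guarded-negation subformula of the original $\varphi$, and the normalization never separates a negatively-occurring $\sigmad$-atom from its base guard --- distribution only duplicates CQ-shapes wholesale and combination only adds further conjuncts. Hence each covering atom remains in the same CQ-shape as the atom it guards throughout the induction, and the $\acgnf$ invariant is preserved.
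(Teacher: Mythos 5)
Your proposal is correct and follows essentially the same route as the paper: the paper's proof sketch applies exactly the rewrite rules you implement bottom-up (distributing $\wedge$ over $\vee$, pulling disjunctions out of existential quantifiers, and merging quantifier prefixes with fresh-variable renaming into CQ-shapes), attributes the exponential blow-up solely to distribution, and justifies preservation of the $\acgnf$ covering condition by the same polarity/locality observation you make. The only nitpick is your claim that ``no new variables are introduced'': merging existential prefixes does require capture-avoiding renaming, but since each fresh name corresponds to a quantifier occurrence of $\varphi$, the width bound $\mysize{\varphi}$ still holds.
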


\begin{proof}[Proof sketch]
The conversion works by using the same rewrite rules as in \cite{gnficalp}:
\begin{align*}
\exists x (\theta \vee \psi) &\to (\exists x \theta) \vee (\exists x \psi) \\
\theta \wedge (\psi \vee \chi) &\to (\theta \wedge \psi) \vee (\theta \wedge \chi) \\
\exists x (\theta) \wedge \psi &\to \exists x' (\theta[x'/x] \wedge \psi ) \text{ where $x'$ is fresh}
\end{align*}

The size, width, and CQ-rank bounds after performing this rewriting
are straightforward to check.

The rewrite rules preserve the polarity of subformulas,
which helps ensure that coveredness is preserved during this conversion.
\end{proof}

\clearpage

\section{Transitive-closure friendly tree decompositions for $\agnf$
(Proof of Proposition~\ref{prop:transdecomp})}\label{app:tctreelike}

\newcommand{\gnk}{\gn^k}
\newcommand{\mydom}[1]{\dom(#1)}

Recall the statement of Proposition~\ref{prop:transdecomp}:

\begin{quote}
{\bf Proposition~\ref{prop:transdecomp}.} 
  Every sentence $\phi$ in $\agnf$ has  transitive-closure friendly $k$-tree-like witnesses,
    where $k \leq~\mysize{\phi}$.
\end{quote}

That is,
for every $\varphi$ in $\agnf$
and for every finite
set of facts $\instance_0$,
  if there is any
  $\instance$ extending $\instance_0$ with $\sigmab$-facts and
  satisfying $\phi$ when each relation $R^\trans$ is interpreted as the transitive closure
of $R$, then there  is
  some $\instance$ like this
  that has a $\instance_0$-rooted $(k-1)$-width
   tree decomposition.

If $\mysize{\phi} < 3$, then $\phi$ is
necessarily a single 0-ary relation or its negation, in which case the
result is trivial, with $k = 1$.
Hence, in the rest of this section,
we will assume that $\mysize{\varphi} \geq 3$
and $k$ will be chosen such that
$3 \leq k \leq \mysize{\varphi}$
($k$ will be an upper bound on the maximum number of free variables
in any subformula of $\varphi$).

The proof uses a standard technique,
involving an unravelling related to a variant of guarded negation bisimulation
\cite{gnficalp}.
A related result and proof also appears in \cite{lics16-gnfpup}.

\myparagraph{Bisimulation game}
The
\emph{$\gn^k$ bisimulation game}
between sets of facts $\fA$ and $\fB$
is an infinite game 
played by two players,
Spoiler and Duplicator.
The game has two types of positions: 
\begin{itemize}
 \item[i)] partial isomorphisms $f:\fA\restrict{}{X} \to \fB\restrict{}{Y}$ or $g:\fB\restrict{}{Y} \to \fA\restrict{}{X}$, 
     where $X \subset \elems{\fA}$ and $Y \subset \elems{\fB}$ are both finite and are $\sigmab$-guarded;
 \item[ii)] partial rigid homomorphisms $f:\fA\restrict{}{X} \to \fB\restrict{}{Y}$ or $g:\fB\restrict{}{Y} \to \fA\restrict{}{X}$, 
     where $X \subset \elems{\fA}$ and $Y \subset \elems{\fB}$ are both finite and are of size at most $k$.  
\end{itemize}
A \emph{partial rigid homomorphism} is a partial homomorphism with respect to all $\sigma$-facts,
such that
the restriction to any $\sigmab$-guarded set of elements is a partial isomorphism.

From a type~(i) position $h$,
Spoiler must choose a finite subset $X \subset \elems{\fA}$ or 
a finite subset $Y \subset \elems{\fB}$, in either case of size at most $k$, 
upon which Duplicator 
must respond by a partial rigid homomorphism with domain $X$ or $Y$ accordingly,
mapping it into the 
other set of facts in a manner consistent with~$h$. 

From a type~(ii) position $h : X \to Y$ (respectively, $h : Y \to X$),
Spoiler must choose a finite subset $X \subset \elems{\fA}$
(respectively, $Y \subset \elems{\fB}$)
of size at most $k$,
upon which Duplicator 
must respond by a partial rigid homomorphism with domain $X$
(respectively, domain $Y$),
mapping it into the 
other set of facts in a manner consistent with $h$.

Notice that a type~(i) position is a special kind of type~(ii) position
where Spoiler has the option to \emph{switch the domain} to the other set of facts,
rather than just continuing to play in the current domain.

Spoiler wins if he can force the play into a position from which Duplicator cannot 
respond, and Duplicator wins if she can continue to play indefinitely.

A winning strategy for Duplicator in the $\gn^k$ bisimulation game
implies agreement between $\fA$ and $\fB$ on certain
$\agnf$ formulas.

\begin{proposition}\label{prop:bisim-game}
Let $\varphi(\vec{x})$ be a formula in $\agnf$,
and let $k \geq 3$ be greater than or equal to
the maximum number of free variables in any subformula of $\varphi$.

If Duplicator has a winning strategy
in the $\gn^k$ bisimulation game
between $\fA$ and $\fB$
starting from a type~(i) or (ii) position $\vec{a} \mapsto \vec{b}$
and $\fA$ satisfies $\varphi(\vec{a})$ when interpreting each $R^\trans \in \sigmad$ as the transitive closure of $R \in \sigmab$,
then $\fB$ satisfies $\varphi(\vec{b})$ when interpreting each $R^\trans \in \sigmad$ as the transitive closure of $R \in \sigmab$.
\end{proposition}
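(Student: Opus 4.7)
The plan is to first put $\varphi$ into $\nf$ $\agnf$ using Proposition~\ref{prop:nf} (which may blow up the size but preserves width, so the hypothesis on $k$ still applies), and then proceed by induction on the structure of the normal form. Throughout the argument I would treat $\fA$ and $\fB$ as the expansions in which every $R^\trans \in \sigmad$ is materialized as the transitive closure of the corresponding $R \in \sigmab$; under this view, $R^\trans$-facts become ordinary $\sigma$-facts, so partial rigid homomorphisms (and the stronger partial isomorphisms at type~(i) positions) automatically preserve them, and I never need to reason explicitly about path lengths.

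The induction goes case by case on the $\nf$. A base $\sigmab$-atom at $\vec{a}$ forces the starting position to be type~(i), and Duplicator's partial isomorphism preserves the atom at $\vec{b}$. Disjunction is immediate from the induction hypothesis applied to each disjunct. For guarded negation $A(\vec{x}) \wedge \neg \psi(\vec{x})$ with $A$ a $\sigmab$-atom or the $\sigmab$-guardedness predicate, the tuple $\vec{a}$ is $\sigmab$-guarded in $\fA$, its image $\vec{b}$ is $\sigmab$-guarded in $\fB$, and the position is therefore type~(i); Spoiler's right to switch sides at type~(i) positions lets me invert Duplicator's strategy, so if $\fB \models \psi(\vec{b})$ then the induction hypothesis applied from $\vec{b} \mapsto \vec{a}$ would force $\fA \models \psi(\vec{a})$, contradicting $\fA \models \neg\psi(\vec{a})$. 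For a CQ-shaped subformula $\delta[Y_1 \mapsfrom \psi_1, \ldots, Y_n \mapsfrom \psi_n]$, Spoiler poses, as a type~(ii) challenge of size at most $k$, the tuple of $\fA$-witnesses for the existentially bound variables of $\delta$; Duplicator's partial rigid homomorphism then preserves every conjunct of $\delta$ (equalities, $\sigmab$-atoms, and $R^\trans$-atoms alike) into $\fB$, and for each substituted $\psi_i$ the restriction of the current position to the free tuple of $\psi_i$ is a valid game position of the appropriate type, in fact type~(i) thanks to the $\sigmab$-guard required by the normal form, so the induction hypothesis dispatches the subformula.

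The main obstacle is the bookkeeping on position types: each inductive step must hand back a position of the type demanded by the next subformula. The $\agnf$ normal form is designed precisely so this works: the syntactic requirement that the free variables of a negated subformula, and of each $Y_i$-slot in a CQ-shaped formula, be $\sigmab$-guarded guarantees that the sub-position delivered to the induction hypothesis is type~(i), which is what licenses both the side-swap for the negation case and the two-sided preservation of $\sigmab$-atoms. By contrast, the transitive-closure feature plays no role inside the induction once the structures are pre-expanded; its real work happens outside this proposition, in Proposition~\ref{prop:transdecomp}, where one builds an unravelling of $\fA$ along the $\gn^k$ bisimulation and defines its $R^\trans$-facts as the transitive closure of its $R$-facts, so that the unravelling is a legitimate transitive-closure-friendly witness.
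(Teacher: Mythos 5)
There is a genuine gap, and it sits exactly where you declare the problem solved: the claim that pre-expanding $\fA$ and $\fB$ with the materialized transitive closures makes $R^\trans$-atoms ``ordinary $\sigma$-facts \ldots\ automatically preserved'' by the game positions. The positions of the $\gn^k$ bisimulation game are partial rigid homomorphisms \emph{with respect to the facts literally present} in $\fA$ and $\fB$; Duplicator's winning strategy (in particular the one constructed for the unravelling in Proposition~\ref{prop:unravel}) is only guaranteed to preserve those facts. If $\fA$ contains $R(a_1,c)$ and $R(c,a_2)$ but no literal fact $R^\trans(a_1,a_2)$, then under the transitive-closure interpretation $\fA$ satisfies $R^\trans(a_1,a_2)$, yet nothing in the definition of a game position forces $b_1,b_2$ to be $R$-connected in $\fB$: the witnessing path runs through elements outside the current position. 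Playing the game on the pre-expanded structures instead would change the hypothesis of the proposition to a strictly stronger one, and you would then owe a new proof that Duplicator wins \emph{that} game between $\fA$ and its unravelling --- which is not what Proposition~\ref{prop:unravel} provides. The paper's proof confronts this head-on: for a $\sigmad$-atom $R^\trans(x_1,x_2)$ it extracts the length $n$ of a witnessing $R$-path in $\fA$, expresses ``there is an $R$-path of length $n$'' as an $\agnf$ formula $\psi_n$ of width $3$ containing no $\sigmad$-atoms, and transfers $\psi_n$ to $\fB$ via an induction that is organized on the number of $\sigmad$-atoms (this is also why the hypothesis $k \geq 3$ appears). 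That step is the real content of the proposition and cannot be absorbed into the choice of structures.

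Two smaller points. First, the paper's induction is deliberately carried out on formulas \emph{not} in normal form; your plan to first apply Proposition~\ref{prop:nf} is workable in the intended application, but note that the conversion only guarantees width at most $\mysize{\varphi}$, not width equal to that of $\varphi$ (the rewrite rule $\exists x\,\theta \wedge \psi \to \exists x'(\theta[x'/x]\wedge\psi)$ can increase the number of free variables of a subformula), so the hypothesis ``$k$ bounds the free variables of subformulas of $\varphi$'' does not automatically carry over to the normal form as you assert. Second, your treatment of the remaining cases (atoms, disjunction, guarded negation with the side-swap, existential witnesses chosen as a type~(ii) challenge) matches the paper's and is fine.
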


\begin{proof}
For this proof,
when we talk about sets of facts satisfying a formula,
we mean satisfaction when interpreting $R^\trans \in \sigmad$ as the transitive closure of $R \in \sigmab$.
We will abuse terminology slightly and say that $\varphi$ has width $k$
if the maximum number of free variables in any subformula of $\varphi$ is at most $k$
(this is abusing the terminology since we are not assuming in this proof that $\varphi$ is in $\nf$).

We proceed by induction on the number of $\sigmad$-atoms in $\varphi$
and the size of $\varphi$.

Suppose Duplicator has a winning strategy in the $\gn^k$ bisimulation game
between $\fA$ and $\fB$.

If $\varphi$ is a $\sigmab$-atom $A(\vec{x})$,
the result follows from the fact that the position
$\vec{a} \mapsto \vec{b}$ is a partial homomorphism.

Suppose $\varphi$ is a $\sigmad$-atom $R^\trans(x_1,x_2)$,
and $\vec{a} = a_1 a_2$ and $\vec{b} = b_1 b_2$.
If $\fA, \vec{a}$ satisfies $R^\trans(x_1,x_2)$,
there is some $n \in \mathbb{N}$ such that $n > 0$ and
there is an $R$-path of length $n$
between $a_1$ and $a_2$ in $\fA$.
We can write a formula $\psi_n(x_1,x_2)$ in $\agnf$
(without any $\sigmad$-atoms)
that is satisfied exactly when there is an $R$-path of length $n$.
Since we do not need to write this in $\nf$,
we can express $\psi_n$ in $\agnf$ with width $3$
(maximum of 3 free variables in any subformula).
Since
$\fA,\vec{a}$ satisfies $\psi_n$
and $\psi_n$ does not have any $\sigmad$-atoms and $k \geq\nolinebreak 3$,
we can apply the inductive hypothesis from the type~(ii) position
$\vec{a} \mapsto \vec{b}$
to ensure that
$\fB,\vec{b}$ satisfies $\psi_n$,
and hence $\fB,\vec{b}$ satisfies~$\varphi$.

If $\varphi$ is a disjunction,
the result follows easily from the inductive hypothesis.

Suppose $\varphi$ is a base-guarded negation
$A(\vec{x}) \wedge \neg \varphi'(\vec{x}')$.
By definition of $\agnf$, it must be the case that $A \in \sigmab$
and $\vec{x}'$ is a sub-tuple of $\vec{x}$.
Since $\fA,\vec{a}$ satisfies $\varphi$,
we know that $\fA,\vec{a}$ satisfies $A(\vec{x})$,
and hence $\vec{a}$ is $\sigmab$-guarded.
This means that $\vec{a} \mapsto \vec{b}$ is actually a partial isomorphism,
so we can view it as a position of type~(i).
This ensures that
$\fB,\vec{b}$ also satisfies $A(\vec{x})$.
It remains to show that it satisfies $\neg \varphi'(\vec{x}')$.
Assume for the sake of contradiction that it satisfies $\varphi'(\vec{x}')$.
Because $\vec{a} \mapsto \vec{b}$ is a type~(i) position, we can
consider the move in the game where Spoiler switches the domain to the other set of facts,
and then restricts to the elements in the subtuple $\vec{b}'$ of $\vec{b}$
corresponding to $\vec{x}'$ in $\vec{x}$.
Let $\vec{a}'$ be the corresponding subtuple of $\vec{a}$.
Duplicator must have a winning strategy
from the type~(i) position $\vec{b}' \mapsto \vec{a}'$,
so the inductive hypothesis ensures that
$\fA, \vec{a}'$ satisfies $\varphi'(\vec{x}')$,
a contradiction.

Finally, suppose $\varphi$
is an existentially quantified formula
$\exists y ( \varphi'(\vec{x},y) )$.
We are assuming that $\fA, \vec{a}$ satisfies $\varphi$.
Hence,
there is some $c \in \elems{\fA}$ such that
$\fA,\vec{a}, c$ satisfies $\varphi'$.
Because the width of $\varphi$ is at most $k$,
we know that the combined number of elements in $\vec{a}$ and $c$ is at most $k$.
Hence, we can consider the move in the game where Spoiler
selects the elements in $\vec{a}$ and $c$.
Duplicator must respond with $\vec{b}$ for $\vec{a}$,
and some $d$ for $c$.
This is a valid move in the game,
so Duplicator must still have a winning strategy from this position $\vec{a}c \mapsto \vec{b}d$,
and the inductive hypothesis implies that $\fB, \vec{b}, d$ satisfies $\varphi'$.
Consequently, $\fB, \vec{b}$ satisfies $\varphi$.
\end{proof}

\myparagraph{Unravelling}
The tree-like witnesses
can be obtained using an unravelling construction
related to the $\gn^k$ bisimulation game.
This unravelling construction is adapted from \cite{lics16-gnfpup}.

Fix a set of facts $\instance \supseteq \instance_0$.
Consider the set $\Pi$ of sequences of the form $X_0 X_1 \dots X_n$,
where $X_0 = \elems{\instance_0}$, and for all $i \geq 1$,
$X_i \subseteq \elems{\fA}$ with $\mysize{X_i} \leq k$.

We can arrange these sequences in a tree based on the prefix order.
Each sequence $\pi = X_0 X_1 \dots X_n$ identifies a unique node in the tree;
we say $a$ is \emph{represented} at node $\pi$
if $a \in X_n$.
For $a \in \elems{\fA}$, we say $\pi$ and $\pi'$ are \emph{$a$-equivalent}
if $a$ is represented at every node on the
unique minimal path between $\pi$ and $\pi'$
in this tree.
For $a$ represented at $\pi$, we write $[\pi, a]$ for the $a$-equivalence class.

The \emph{$\gn^k$-unravelling of $\fA$} is a set of facts $\structureunravelk{\fA}$
over elements $\set{ [\pi, a] : \pi \in \Pi \text{ and } a \in \elems{\fA}}$.
The fact
$R([\pi_1,a_1],\dots,[\pi_j,a_j]) \in \structureunravelk{\fA}$
iff $R(a_1, \dots, a_j) \in \fA$ and
there is some $\pi \in \Pi$ such that for all $i$, $[\pi,a_i] = [\pi_i,a_i]$.
We can identify $[\epsilon,a]$ with the element $a \in \elems{\instance_0}$,
so there is a natural $\instance_0$-rooted tree decomposition of width $k-1$
for $\structureunravelk{\fA}$
induced by the tree of sequences from $\Pi$.

Because this unravelling is related so closely to the $\gn^k$-bisimulation game,
it is straightforward to show that Duplicator
has a winning strategy in the bisimulation game
between $\fA$ and its unravelling.

\begin{proposition}\label{prop:unravel}
Duplicator has a winning strategy in the $\gn^k$ bisimulation game
between $\fA$ and $\structureunravelk{\fA}$.
\end{proposition}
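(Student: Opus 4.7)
The plan is to exhibit an explicit winning strategy for Duplicator. The key observation is that every element $[\pi,a]$ of $\structureunravelk{\fA}$ projects canonically to $a \in \elems{\fA}$, and that every length-$(n{+}1)$ sequence $\pi X$ in $\Pi$ witnesses exactly the facts of $\fA$ whose arguments all lie in $X$. So the natural strategy is asymmetric: whenever Spoiler selects a set $X \subseteq \elems{\fA}$ of size at most $k$, Duplicator extends the ``current'' tree node $\pi$ by one step and plays $a \mapsto [\pi X, a]$; whenever Spoiler selects a set $Y \subseteq \elems{\structureunravelk{\fA}}$, Duplicator plays the projection $[\pi,a] \mapsto a$. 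The ``current'' node $\pi$ is initialized to $\elems{\instance_0}$ and is updated after each Spoiler move that still leaves us in a state where Spoiler could later choose to pick in $\fA$.

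The first step will be to verify that both responses are partial rigid homomorphisms (and partial isomorphisms on any $\sigmab$-guarded subset of their domain). For $a \mapsto [\pi X, a]$, every relational fact on a tuple in $X$ lifts directly to a fact of $\structureunravelk{\fA}$ witnessed by $\pi X$, and conversely any fact of $\structureunravelk{\fA}$ on these unravelled elements projects to a fact of $\fA$; injectivity is clear. For the projection, homomorphism is immediate from the definition of $\structureunravelk{\fA}$'s facts, and the key point is that a $\sigmab$-guarded subset of $\elems{\structureunravelk{\fA}}$ must, by that same definition, consist of elements sharing a single sequence $\pi$, which yields injectivity and makes the projection a partial isomorphism on such subsets.

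The main obstacle is showing that Duplicator's responses are \emph{consistent} with the prior position. The invariant I propose to maintain is this: at any position where Spoiler is next to pick in $\fA$ (which means either a type~(i) position, or a type~(ii) position of direction $\fA$), the matched pairs form exactly $\{(a,[\pi,a]) : a \in X\}$ for a single sequence $\pi = X_0 \cdots X_n$ with $X \subseteq X_n$. This invariant holds at type~(i) positions because $\sigmab$-guardedness of the $\fB$-side forces its elements to share a common $\pi$, and it holds at type~(ii) direction-$\fA$ positions because Duplicator just created them by her own ``extend-$\pi$'' move. Under the invariant, consistency of the $\fA \to \fB$ response follows from the identity $[\pi,a] = [\pi X, a]$ whenever $a$ is represented at both endpoints of the edge between $\pi$ and $\pi X$, which is exactly the case when $a \in X_n \cap X$. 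Consistency of the projection response is immediate: if $[\pi_i,a] \in Y$ was already matched to some $a'$, then $a = a'$ by the very definition of the equivalence class notation.

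The delicate case is what happens after Spoiler picks $Y$ in $\fB$ and we land in a type~(ii) position of direction $\fB$: then $Y$ may contain elements from unrelated parts of the tree and no common $\pi$ exists. But the rules of the game force Spoiler to keep picking in $\fB$ from such a position, so Duplicator simply keeps projecting; the invariant only needs to be restored before Spoiler can next pick in $\fA$, and that restoration is automatic, since a switch of direction must pass through a type~(i) position, where $\sigmab$-guardedness again forces a common $\pi$. Hence Duplicator always has a legal response, proving the claim.
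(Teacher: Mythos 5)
Your proposal is correct and follows essentially the same route as the paper: the paper organizes the argument around ``safe positions'' (the map is the projection $[\pi,a]\mapsto a$ when play is in the unravelling, and of the form $a\mapsto[\pi,a]$ for a single $\pi$ when play is in $\fA$), which is exactly your invariant, and it verifies rigidity and consistency using the same two observations you isolate --- that facts of the unravelling are witnessed by a common sequence, so $\sigmab$-guarded sets share a single $\pi$, and that $[\pi,a]=[\pi X,a]$ when $a$ is represented at both nodes. Your treatment of the direction switch (only possible at type~(i) positions, where guardedness restores a common $\pi$) likewise matches the paper's decomposition of a switch-and-pick move into two steps.
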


\begin{proof}
Given a position $f$ in the $\gnk$-bisimulation game,
we say the \emph{active set} is the set of facts
containing the elements in the domain of $f$.
In other words, the active set is either $\fA$ or $\structureunravelk{\fA}$,
depending on which set Spoiler is currently playing in.
The \emph{safe positions} $f$ in the $\gnk$-bisimulation game
between $\fA$ and $\structureunravelk{\fA}$ are defined as follows:
if the active set is $\structureunravelk{\fA}$,
then $f$ is safe if for all $[\pi,a] \in \mydom{f}$, $f([\pi,a]) = a$;
if the active set is $\fA$,
then $f$ is safe if
there is some $\pi$ such that $f(a) = [\pi,a]$ for all $a \in \mydom{f}$.

We now argue that starting from a safe position $f$,
Duplicator has a strategy to move to a new safe position $f'$.
This is enough to conclude that Duplicator has a winning strategy
in the $\gnk$-bisimulation game
between $\fA$ and $\structureunravelk{\fA}$
starting from any safe position.

First, assume that the active set is $\structureunravelk{\fA}$.
\begin{itemize}
\item If $f$ is a type~(ii) position,
then Spoiler can select some new set $X'$ of elements from the active set.
Each element in $X'$ is of the form $[\pi',a']$.
Duplicator must choose $f'$ such that $[\pi',a']$ is mapped to $a'$ in $\fA$,
in order to maintain safety.
This new position $f'$ is consistent with $f$ on any elements in $X' \cap \mydom{f}$
since $f$ is safe.
This $f'$ is still a partial homomorphism since any relation holding for a tuple of
elements $[\pi_1,a_1], \dots, [\pi_n,a_n]$ from $\mydom{f'}$ must hold
for the tuple of elements $a_1, \dots, a_n$ in $\fA$ by definition of $\structureunravelk{\fA}$.
Consider some element $[\pi',a']$ in $\mydom{f'}$.
It is possible that there is some $[\pi,a']$ in $\mydom{f'}$ with $[\pi,a'] \neq [\pi',a']$;
however, $[\pi,a']$ and $[\pi',a']$ are not base-guarded in $\structureunravelk{\fA}$.
Hence, any restriction $f''$ of $f'$ to
a base-guarded set of elements is a bijection.
Moreover, such an $f''$ is a
partial isomorphism:
consider some $a_1,\dots,a_n$ in the range of $f''$ for which some relation $S$ holds in $\fA$;
since $(f'')^{-1}(a_1),\dots,(f'')^{-1}(a_n)$ must be base-guarded,
we know that there is some $\pi$ such that
$[\pi,a_1] = (f'')^{-1}(a_1)$,$\dots$, $[\pi,a_n] = (f'')^{-1}(a_n)$,
so by definition of $\structureunravelk{\fA}$,
$S$ holds of $(f'')^{-1}(a_1),\dots,(f'')^{-1}(a_n)$ as desired.
Hence, $f'$ is a safe partial rigid homomorphism.
\item If $f$ is a type~(i) position,
then Spoiler can either choose elements in the active set
and we can reason as we did for the type~(ii) case,
or Spoiler can select elements from the other set of facts.

We first argue that if Spoiler changes the active set
and chooses no new elements,
then the game is still in a safe position.
Since $f$ is a type~(i) position, 
we know that $\mydom{f}$ is guarded
by some base relation $S$, so there is some $\pi$ with $f(a) = [\pi,a]$ for all $a \in \mydom{f}$
by construction of $\structureunravelk{\fA}$.
Hence, the new position $f'=f^{-1}$ is still safe.

If Spoiler switches active sets and chooses new elements,
then we can view this as two separate moves:
in the first move, Spoiler switches active sets from $\structureunravelk{\fA}$ to $\fA$
but chooses no new elements,
and in the second move, Spoiler selects the desired new elements from $\fA$.
Because switching active sets leads to a safe position
(by the argument in the previous paragraph),
it remains to define Duplicator's safe strategy when the active set is $\fA$,
which we explain below.
\end{itemize}
Now assume that the active set is $\fA$.
Since $f$ is safe, there is some $\pi$ such that $f(a) = [\pi,a]$ for all $a \in \mydom{f}$.
\begin{itemize}
\item If $f$ is a type~(ii) position, then Spoiler can select some new set $X'$
of elements from the active set.
We define the new position $f'$ chosen by Duplicator
to map each element $a' \in X'$
to $[\pi',a']$ where $\pi' = \pi \cdot X'$.
By construction of the unravelling, $\pi' \in \Pi$
and the resulting partial mapping $f'$ still satisfies the safety property with $\pi'$
as witness.
Note that $f'$ is consistent with $f$ for elements of~$X'$ that are also
in~$\mydom{f}$, as we have
$[\pi \cdot X',a'] = [\pi,a']$ for $a' \in X' \cap \mydom{f}$.
Now consider some tuple $\vec{a} = a_1 \dots a_n$
of elements from $\mydom{f'}$ that are in some relation $S$.
We know that $f'(a_i) = [\pi',a_i]$,
hence $S$ must hold for $f'(\vec{a})$ in $\structureunravelk{\fA}$.
Moreover,
for any base-guarded set $\vec{a} = \set{ a_1, \dots, a_n }$ of distinct elements from
$\mydom{f'}$, $f'(\vec{a})$ must yield a set of distinct elements $\set{ f'(a_1), \dots, f'(a_n) }$,
and these elements can only participate in some fact in $\structureunravelk{\fA}$
if the underlying elements from $\vec{a}$ participate in the same fact in $\fA$.
Hence, $f'$ is a safe partial rigid homomorphism.
\item If $f$ is a type~(i) position,
then Spoiler can either choose elements in the active set
and we can reason as we did for the type~(ii) case,
or Spoiler can select elements from the other set of facts.
It suffices to argue that if Spoiler changes the active set like this,
and chooses no new elements,
then the game is still in a safe position.
But in this case $f' = f^{-1}$ is easily seen to still be safe.
\end{itemize}
This concludes the proof of Proposition~\ref{prop:unravel}.
\end{proof}

We can now conclude the proof of Proposition~\ref{prop:transdecomp}.
Assume that $\fA \supseteq \fA_0$ is a set of facts that satisfies $\varphi$
when interpreting $R^\trans$ as the transitive closure of $R$.
Let $3 \leq k \leq \mysize{\varphi}$ be an upper bound on the
maximum number of free variables in any subformula of $\varphi$.
Since $\fA$ satisfies $\varphi$,
Propositions~\ref{prop:unravel}~and~\ref{prop:bisim-game} imply that
$\structureunravelk{\fA}$ also satisfies $\varphi$ when properly interpreting $R^\trans$.
Hence, we can conclude that the unravelling $\structureunravelk{\fA}$
is the transitive closure friendly $k$-tree-like witness
for $\varphi$.

\clearpage

\section{Automata for $\agnf$ (Proof of Theorem~\ref{thm:automata})}

\newcommand{\perm}[1]{\langle #1 \rangle}
\newcommand{\cA}{\calA}
\newcommand{\cB}{\calB}
\newcommand{\powerset}[1]{\mathcal{P}(#1)}
\newcommand{\Dir}{\text{Dir}}
\newcommand{\dleft}{0}
\newcommand{\dright}{1}
\newcommand{\dup}{-1}
\newcommand{\N}{\mathbb{N}}
\newcommand{\sset}{\set}
\newcommand{\paramsk}{U}
\newcommand{\sigk}{\tilde{\sigma}_k}
\newcommand{\bagnames}[1]{\text{names}(#1)}
\newcommand{\mydecode}[1]{\decode(#1)}
\newcommand{\autsig}{\Gamma}

In this section, we prove Theorem~\ref{thm:automata},
about constructing automata for sentences in $\agnf$
and initial sets of facts $\instance_0$:

\begin{quote}
{\bf Theorem~\ref{thm:automata}.} 
Let $\phi$ be a sentence in $\agnf$,
and let $\instance_0$ be a finite set of facts.
We can construct in $\twoexp$
a 2-way alternating parity tree automaton $\calA_{\phi,\instance_0}$
such that\\[.3em]
\null\hfill$ \text{$\instance_0 \wedge \phi$ is satisfiable
\quad iff \quad
$L(\calA_{\phi,\instance_0}) \neq \emptyset$}$\hfill\null\\[.3em]
when $R^\trans \in \sigmad$ are interpreted as the transitive closure
of $R \in \sigmab$.
The number of states of $\calA_{\phi,\instance_0}$ is exponential in $\mysize{\phi} \cdot \mysize{\instance_0}$
and the number of priorities is linear in $\mysize{\phi}$.
\end{quote}

Before we prove the result,
we need to specify the
tree encodings/decodings and tree automata that we are using.
For the remainder of the section,
we fix some $\varphi \in \agnf$ and some finite set of facts $\instance_0$.

\subsection{Tree encodings/decodings}

\myparagraph{Tree encodings}
By Proposition~\ref{prop:transdecomp},
we know that if $\phi \in \agnf$,
then $\phi$ has transitive-closure friendly $k$-tree-like witnesses,
for $k \leq \mysize{\phi}$.
A $\instance_0$-rooted tree decomposition like this
can be encoded as a tree
with only a finite signature.
Let $\paramsk$ be a set of names of size $2k + l$
where $l$ is the size of $\elems{\instance_0}$.
The signature $\sigk$ for the encodings
is defined as follows.
\begin{itemize}
\item For all $a \in \paramsk$, there is a unary relation $D_a \in \sigk$
which indicates that $a$ is a name for an element represented in the bag.
\item For every relation $R \in \sigma$ of arity $n$
and every $n$-tuple $\vec{a} \in \paramsk^n$,
there is a unary relation $R_{\vec{a}} \in \sigk$,
which indicates that $R$ holds for the tuple of elements indexed by~$\vec{a}$.
\item For every $z \in \elems{\instance_0}$ and $c \in \paramsk$,
there is a unary relation $V_{c / z}$
which indicates the valuation for this element.
\end{itemize}

Tree decompositions and the corresponding encodings
can generally have unbounded (possibly infinite) degree.
We modify the standard encoding slightly so that we can use full binary trees:
we apply the first-child, next-sibling transformation
to the usual encoding, based on an arbitrary ordering of the children,
and make it a full binary tree by adding dummy nodes if necessary.

Each node in a binary tree can be identified with a finite string over $\{ 0 , 1 \}$,
with the root identified with $\epsilon$.
The \emph{biological children} of a node $u$ are the nodes
$u 0 1^\trans$ (these are the nodes that would have been children
of $u$ in the tree decomposition
before the first-child next-sibling translation).
The \emph{biological parent} of $v \neq \epsilon$ is the unique $u$
such that $v \in u 0 1^\trans$.
A \emph{biological neighbor} is a biological child or biological parent.
For these binary tree encodings,
we add to $\sigk$ unary predicates $P_i$ for $i \in \{0,1 \}$
which indicate the node is the $i$-th child of its~parent.

From now on, we use the term \emph{$\sigk$-tree}
to refer to an infinite full binary tree over the signature $\sigk$.

\myparagraph{Tree decodings}
If a $\sigk$-tree satisfies certain consistency properties,
then it can be decoded into a  set of $\sigma$-facts
with an $\instance_0$-rooted tree decomposition of width $k - 1$.
Let $\bagnames{v} := \set{ a \in \paramsk : D_a (v)}$ be the
set of \emph{names} used for elements in bag~$v$ in some tree.
We will abuse notation and write $\vec{a} \subseteq \bagnames{v}$ to
mean that $\vec{a}$ is a tuple over names from $\bagnames{v}$.
A \emph{consistent tree} $\tree$ (with respect to $\sigk$ and $\instance_0$)
is a $\sigk$-tree such that
every node $v$ satisfies
\begin{itemize}
  \item $\mysize{\bagnames{v}} \leq k$, except for the root (which has size $l$);
\item for all $R_{\vec{a}} \in \sigk$, if $R_{\vec{a}} (v)$ then $\vec{a} \subseteq \bagnames{v}$;
\item $P_i(v)$ holds iff $v$ is the $i$-th child of its parent;
\item for all $z \in \elems{\instance_0}$,
there is exactly one $c \in \bagnames{\epsilon}$ for the root $\epsilon$ such that
$V_{c/z} (\epsilon)$ holds,
and there is no $v \neq \epsilon$ with some $c \in \bagnames{v}$ such that $V_{c/z}(\epsilon)$ holds;
\item for every $c \in \bagnames{\epsilon}$, there is some $z \in \elems{\instance_0}$ such that $V_{c/z}(\epsilon)$ holds;
\item for each fact $R(z_1\dots z_n) \in \instance_0$,
$R_{c_1 \dots c_n}(\epsilon)$ holds, where each $c_i \in \bagnames{\epsilon}$ is the unique name such that
$V_{c_i / z_i}(\epsilon)$ holds;
\item for every $R_{c_1 \dots c_n}(\epsilon)$,
the fact $R(z_1 \dots z_n)$ is in $\instance_0$, where
each $z_i \in \elems{\instance_0}$ is the unique element such that
$V_{c_i / z_i}(\epsilon)$ holds.
\end{itemize}

The last four conditions ensure that there is a bijection
between the elements and facts represented at the root node
and the elements and facts in $\instance_0$.

Given a consistent tree $\tree$,
we say nodes $u$ and $v$ are \emph{$a$-connected}
if
there is a sequence of nodes $u = w_0, w_1, \dots, w_j = v$
such that $w_{i+1}$ is a biological neighbor of $w_i$,
and $a \in \bagnames{w_{i}}$ for all $i \in \set{0,\dots,j}$.
We write $[v,a]$ for the equivalence class of $a$-connected nodes of $v$.
For $\vec{a} = a_1 \dots a_n$,
we often abuse notation and write $[v,\vec{a}]$ for the tuple 
$[v,a_1],\dots,[v,a_n]$.

The \emph{decoding} of $\tree$ is the
set of $\sigma$-facts $\mydecode{\tree}$
using elements $\set{ [v,a] : \text{$v \in \tree$ and $a \in \bagnames{v}$}}$,
where we identify $z \in \elems{\instance_0}$ with the unique $[\epsilon,c]$ such that $V_{c/z}(\epsilon)$ holds.
For each relation $R$, we have $R([v_1,a_1],\dots,[v_j,a_j]) \in \mydecode{\tree}$ iff
there is some $w \in \tree$ such that
$R_{\vec{a}}(w)$ holds and $[w,a_i] = [v_i,a_i]$ for all~$i$.

\myparagraph{Free variables}
The automaton construction will be an induction on the structure of the formula,
so we will need to deal with formulas with free variables.

For this purpose, the tree encodings
can be extended with additional information
about valuations for free variables.
Such trees use an extended signature.

Namely, for each free first-order variable $z$ and each $c \in \paramsk$,
we introduce a predicate $V_{c / z}$;
if $V_{c / z}(v)$ holds, then this indicates that the valuation for $z$
is the element named by $c$ at~$v$
(we use notation similar to the valuations for $z \in \elems{\instance_0}$,
since these valuations all behave in a similar way).

At one point in what follows (specifically, in one case of the proof of
Lemma~\ref{lemma:aut}),
we will also use \emph{second-order variables}
to represent information about some additional relation $Y$.
For each second-order variable $Y$ of arity $n$ and each $\vec{a} \in \paramsk^n$,
the extended signature has a predicate $Y_{\vec{a}}$.
If $Y_{\vec{a}}$ holds at some node $v$,
then this indicates that the tuple of elements indexed by $\vec{a}$ at $v$
is in the relation $Y$.

We refer to these additional predicates that give a valuation for the free variables
as \emph{free variable markers}.
In a consistent tree,
the free variables markers
for a first-order variable $z$
must satisfy the condition that there is a unique $v$ and unique $c \in \bagnames{v}$ such that
$V_{c / z}(v)$ holds
(i.e.~for each $z$ there is exactly one $V_{c/z}$-fact in the tree).
The markers for a second-order variable $Y$
must satisfy the condition that if
$Y_{\vec{a}}(v)$ then $\vec{a} \subseteq \bagnames{v}$.

\subsection{Automata tools}

We will make use of automata running on infinite binary trees.
We briefly recall some definitions and key properties.
We will need to use 2-way automata
that can move both up and down as they process
the tree, so we highlight some
less familiar properties
about the relationship between 2-way and 1-way versions
of these automata.

\myparagraph{Trees}
The input to the automata will be infinite full binary trees $\tree$
over some finite set of propositions $\autsig$.
In other words, these are structures over a signature
with binary relations for the left and right child relation,
and unary relations for the propositions.
We also assume there are propositions
indicating whether each node is a left child, right child, or the root.
We write $\tree(v)$ for the set of propositions
that hold at node $v$.

\myparagraph{Tree automata}
An \emph{alternating parity tree automaton} $\cA$
is a tuple $\perm{\autsig,Q,q_0,\delta,\Omega}$
where
$\autsig$ is a finite set of propositions,
$Q$ is a finite set of states,
$q_0 \in Q$ is the initial state,
$\delta : Q \times \powerset{\autsig} \to \cB^{+}(\Dir \times Q)$
is the transition function with directions $\Dir \subseteq \set{ \dleft,\dright,\dup}$,
and $\Omega : Q \to P$
is the priority function with a finite set of \emph{priorities} $P \subseteq \N$.

The transition function $\delta$ maps a state and input letter
to a positive boolean formula
over $\Dir \times Q$ (denoted $\cB^{+}(\Dir \times Q)$)
that indicates possible next moves for the automaton.

Running the automaton $\cA$ on some input tree $\tree$
is best thought of in terms of an \emph{acceptance game}.
Positions in the game are of the form $(q,v) \in Q \times \tree$.
In position $(q,v)$, 
Eve chooses a disjunct $\theta$ in $\delta'(q,\tree(v))$,
where $\delta'$ is the result of
writing each of the transition function formulas
in disjunctive normal form.
Then Adam chooses a conjunct $(d,q')$ in $\theta$
and the game continues from position $(q',v')$,
where $v'$ is the node in direction $d$ from $v$
(Adam loses if there is no such node $v'$).

A play $(q_0,v_0) (q_1,v_1) \dots$
in the game is winning for Eve if it
satisfies the \emph{parity condition}: the maximum priority occurring infinitely often
in $\Omega(q_0) \Omega(q_1) \dots$ is even.
A \emph{strategy} for Eve is a function that, given the history of the play
and the current position in the game, determines Eve's choice in the game.
Note that we allow the automaton to be started from arbitrary positions in the tree,
rather than just the root.
We say that $\cA$ \emph{accepts $\tree$ starting from $v_0$} if Eve has a strategy
such that all plays consistent with the strategy starting from $(q_0,v_0)$ are winning.
$L(\cA)$ denotes the \emph{language} of trees accepted by $\cA$ starting from the root.

A 1-way alternating automaton is an automaton that uses only directions $\dleft$ and $\dright$.
A (1-way) nondeterministic automaton is a 1-way alternating automaton
such that every transition function formula is of the form $\bigvee_j \, (\dleft,q^j_\dleft) \wedge (\dright,q^j_\dright)$.

\myparagraph{Closure properties}
We recall some closure properties of these automata,
omitting the standard proofs; see \cite{Thomas97,Loding-unpublished} for more information.
Note that we state only the size of the automata for each property,
but the running time of the procedures constructing these automata
is polynomial in the output size.

First, the automata that we are using
are closed under union and intersection
(of their languages).

\begin{proposition}\label{prop:closure-union-intersection}
2-way alternating parity tree automata
and 1-way nondeterministic parity tree automata
are closed under union and intersection,
with only a polynomial blow-up in the number of states, priorities, and overall size.
\end{proposition}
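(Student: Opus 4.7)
The plan is to handle each closure property by a direct construction on the state sets of the two input automata, treating the alternating case and the nondeterministic case separately.

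For 2-way alternating parity tree automata, both union and intersection are essentially immediate. Given $\cA_1$ and $\cA_2$ with initial states $q_0^1$ and $q_0^2$, I would form a new automaton on the disjoint union of their state sets together with a fresh initial state $q_0$, whose transition $\delta(q_0, \gamma)$ is defined to be $\delta_1(q_0^1, \gamma) \vee \delta_2(q_0^2, \gamma)$ for union, or the same with $\wedge$ in place of $\vee$ for intersection. The priority function extends the originals, and $q_0$ can be assigned any priority since it occurs only at the start. The state count grows additively, the number of priorities is $\max(p_1, p_2)$, and the overall size is polynomial. Correctness is a direct translation of Eve's winning strategies in the two acceptance games.

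For 1-way nondeterministic parity tree automata, the disjunction-based construction handles union and preserves nondeterminism, since the new $q_0$ immediately splits the computation into that of one of the two automata. Intersection is the delicate case: I would take the product state set $Q_1 \times Q_2$ and simulate both automata synchronously. Writing the transitions of $\cA_i$ in their nondeterministic form $\bigvee_j (\dleft, q^j_\dleft) \wedge (\dright, q^j_\dright)$, the transition from $(p,q)$ on input $\gamma$ is the distributed conjunction $\bigvee_{i,j} \bigl( (\dleft, (p^i_\dleft, q^j_\dleft)) \wedge (\dright, (p^i_\dright, q^j_\dright)) \bigr)$, which remains nondeterministic.

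The main obstacle is the acceptance condition for the intersection: a run must satisfy both parity conditions at once, and expressing this conjunction as a single parity condition is not immediate. I would augment each product state with a small counter $j \in \set{1,2}$ tracking which condition is currently being monitored, and use a relabelling that interleaves the priorities of $\Omega_1$ and $\Omega_2$ based on this counter, in the style of the standard LAR-like construction for conjunctions of parity conditions \cite{Loding-unpublished}. This keeps the number of states polynomial in $|Q_1| \cdot |Q_2| \cdot (p_1 + p_2)$ and the number of priorities linear in $p_1 + p_2$, giving the required polynomial bound. Correctness reduces to a standard lemma on the parity hierarchy, so the remaining work is to verify that the interleaved condition composes cleanly with the tree automaton semantics.
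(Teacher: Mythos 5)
You should first note that the paper does not actually prove this proposition: it is stated with the remark that the standard proofs are omitted, with a pointer to \cite{Thomas97,Loding-unpublished}. So the comparison is between your construction and the standard ones. Your treatment of the 2-way alternating case (a fresh initial state whose transition is the disjunction, respectively conjunction, of the two initial transitions) and of union for nondeterministic automata is correct and is exactly the textbook construction; both preserve the required syntactic form of the transitions, the state counts are additive, and the priorities are just inherited. The synchronous product for nondeterministic intersection is also set up correctly, and it does remain in the nondeterministic normal form.

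The gap is in the acceptance condition of that last construction, and you have put your finger on it without closing it. Every branch of a run of the product must satisfy the \emph{conjunction} of two parity conditions, which is a Streett condition built from two interleaved chains of pairs, not a parity condition. A counter $j \in \set{1,2}$ with interleaved priorities is the generalized-B\"uchi trick; it does work when each input automaton has at most three priorities (each condition is then a single Streett or Rabin pair, and the index appearance record over two pairs has only two states), but it fails in general: already for two parity conditions over $\set{1,2,3,4}$ the maximal odd priorities seen infinitely often in the two coordinates can alternate at several nesting levels, and the Zielonka tree of the conjunction has $\binom{k_1+k_2}{k_1}$ leaves (here $6$), which lower-bounds the size of any deterministic priority-relabelling device recognizing the conjunction along a branch --- so no two-state counter can be correct. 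The ``standard LAR-like construction'' you invoke is the index appearance record over the $\lceil d_1/2\rceil + \lceil d_2/2\rceil$ Streett pairs, whose size is factorial in that quantity, so the deferred ``standard lemma on the parity hierarchy'' is precisely the step that does not go through as written; if you want a clean polynomial claim you should either bound the number of priorities of one of the operands or route the intersection through the alternating model. For what it is worth, every intersection actually performed in the paper is an intersection of 2-way \emph{alternating} automata (e.g., combining $\calA_{\phi,\instance_0}$ with the consistency automaton), where your easy conjunction construction suffices, so this gap does not affect the paper's results.
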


For example, this means that
if we are given 2-way alternating parity tree automata $\cA_1$ and $\cA_2$,
then we can construct in $\ptime$ a 2-way alternating parity tree automaton $\cA$ such that
$L(\cA) = L(\cA_1) \cap L(\cA_2)$.

Another important language operation is projection.
Let $L'$ be a language of trees over propositions $\Gamma \cup \set{P}$.
The \emph{projection} of $L'$ with respect to $P$
is the language of trees $\tree$ over $\Gamma$ such that
there is some $\tree' \in L'$ such that $\tree$ and $\tree'$
agree on all propositions in $\Gamma$.
Projection is easy for nondeterministic automata
since the valuation for the projected proposition can be guessed by Eve.

\begin{proposition}\label{prop:closure-projection}
1-way nondeterministic parity tree automata
are closed under projection,
with no change in the number of states, priorities, and overall size.
\end{proposition}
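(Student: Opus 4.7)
The plan is to exploit the special shape of nondeterministic transitions: they are disjunctions of conjunctions of the form $(\dleft,q^j_\dleft) \wedge (\dright,q^j_\dright)$ that already carry an implicit existential choice, so a guess about the value of the projected proposition $P$ at each node can be folded into this choice. Concretely, given a 1-way nondeterministic parity tree automaton $\cA' = \langle \autsig \cup \{P\}, Q, q_0, \delta', \Omega \rangle$, I will construct $\cA = \langle \autsig, Q, q_0, \delta, \Omega\rangle$ with the same state set, initial state, and priority function, and with transition function defined by
\[
\delta(q, \sigma) \;:=\; \delta'(q, \sigma) \;\vee\; \delta'(q, \sigma \cup \{P\})
\]
for every $q \in Q$ and $\sigma \subseteq \autsig$. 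Since the disjunction of two transition formulas of the nondeterministic shape is still of the nondeterministic shape, $\cA$ is a 1-way nondeterministic parity tree automaton, and the number of states and priorities is unchanged (the transition table at most doubles in size).

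For correctness, I would argue the two inclusions separately. For the forward direction, suppose $\tree' \in L(\cA')$ projects to $\tree$; then there is an Eve strategy on $\tree'$ in which at each node $v$ she picks a disjunct from $\delta'(q, \tree'(v))$, where $\tree'(v)$ equals either $\tree(v)$ or $\tree(v) \cup \{P\}$. The same choice is available to her in $\delta(q, \tree(v))$ by construction, so reusing the strategy yields an accepting run of $\cA$ on $\tree$. For the converse, suppose Eve has a winning strategy for $\cA$ on $\tree$; at each visited node $v$ she picks a disjunct of $\delta(q, \tree(v))$, which by definition comes from either $\delta'(q, \tree(v))$ or $\delta'(q, \tree(v) \cup \{P\})$. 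Defining $\tree'$ by setting $P$ true at $v$ exactly when her chosen disjunct is witnessed by the second case produces a tree that projects to $\tree$ and on which the same Eve strategy is accepting for $\cA'$.

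The only delicate point in the backward direction is that Eve's choice of disjunct in $\cA$ at a given node $v$ could in principle depend on the history of play, so the same node might conceivably be revisited with a different choice of $P$-label; however, since $\cA$ is 1-way, each node is visited at most once along any play, and so the labelling of $\tree'$ is well-defined. This is the one subtlety worth verifying carefully, but it is immediate from 1-wayness, which is why the argument does not extend verbatim to 2-way automata. Beyond this, the construction is purely syntactic, so all complexity claims follow directly.
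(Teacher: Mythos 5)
Your construction is correct and is exactly the standard argument the paper has in mind: the paper omits the proof as routine, noting only that ``the valuation for the projected proposition can be guessed by Eve,'' which is precisely what your modified transition function $\delta(q,\sigma) := \delta'(q,\sigma) \vee \delta'(q,\sigma\cup\{P\})$ formalizes. You also correctly isolate the one real subtlety --- that 1-wayness and the nondeterministic transition shape guarantee each node is assigned a single, consistent $P$-guess, which is why the argument fails for alternating automata.
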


Finally, complementation is easy for alternating automata
by taking the \emph{dual} automaton,
obtained by switching conjunctions and disjunctions in the transition function,
and incrementing all of the priorities by one.

\begin{proposition}\label{prop:dual}
2-way alternating parity tree automata are closed under complementation,
with no change in the number of states, priorities, and overall size.
\end{proposition}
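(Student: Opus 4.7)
The plan is to construct the \emph{dual} automaton $\cA^d$ of $\cA$ and verify that $L(\cA^d) = L(\cA)^c$ by appealing to the determinacy of parity games. Given $\cA = \perm{\autsig, Q, q_0, \delta, \Omega}$, I would define $\cA^d = \perm{\autsig, Q, q_0, \delta^d, \Omega^d}$ where $\delta^d(q, \sigma)$ is obtained from $\delta(q, \sigma)$ by interchanging $\wedge$ and $\vee$ throughout the formula, and $\Omega^d(q) = \Omega(q) + 1$. This construction preserves the state set, leaves the number of distinct priorities unchanged (they are merely shifted by one), and leaves the overall size unchanged, so the quantitative claims are immediate.

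For correctness I would compare the acceptance games $G$ and $G^d$ of $\cA$ and $\cA^d$, respectively, on a fixed input tree $\tree$. Both games share the same position set $Q \times \tree$. Since $\delta^d$ is the Boolean dual of $\delta$ (the atoms are unchanged), under every valuation of atoms $\delta^d$ evaluates to $\true$ iff $\delta$ evaluates to $\false$. Viewing one round of the game over the formula tree of $\delta$ (in which Eve plays at $\vee$-nodes, Adam at $\wedge$-nodes, and the truth of the chosen atom $(d, q')$ determines the outcome), the syntactic swap of $\wedge$ and $\vee$ exactly swaps the roles of the two players: Eve's moves in $G^d$ at position $(q,v)$ correspond to Adam's moves in $G$ at the same position, and conversely. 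The out-of-tree convention — whoever commits to an invalid atomic move $(d, q')$ loses — flips symmetrically under this swap. For infinite plays, incrementing every priority by one changes the parity of the maximum priority occurring infinitely often, so the parity acceptance condition of $G^d$ is the negation of that of $G$. Combining, Eve's winning condition in $G^d$ from $(q_0, \epsilon)$ coincides with Adam's winning condition in $G$ from $(q_0, \epsilon)$.

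Finally, I would invoke the positional determinacy of parity games: from any position, exactly one of the two players has a winning strategy. Hence Eve wins $G^d$ from $(q_0, \epsilon)$ iff Adam wins $G$ from $(q_0, \epsilon)$ iff Eve does not win $G$ from $(q_0, \epsilon)$, i.e., $\tree \notin L(\cA)$. The main technical subtlety is reconciling the Boolean-formula view of the one-step game used above with the DNF-based presentation of Eve's choices in the paper's acceptance game; this is a routine consequence of the De Morgan-like duality of formulas in $\cB^+(\Dir \times Q)$, but is worth spelling out carefully when writing up the proof, together with the symmetric treatment of the loss condition for out-of-tree moves.
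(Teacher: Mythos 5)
Your proposal is correct and matches the paper's approach exactly: the paper states the same dual construction (swap conjunctions and disjunctions in the transition formulas, increment all priorities by one) and omits the correctness proof as standard, which you supply via the usual role-swap-plus-determinacy argument. One sentence to repair before writing this up: the claim that $\delta^d$ evaluates to true exactly when $\delta$ evaluates to false \emph{under the same valuation} is false for positive Boolean formulas (an atomic formula is its own dual, and for $a \vee b$ versus $a \wedge b$ the valuation making both atoms true satisfies both); the correct and load-bearing statement is the game-theoretic one you give immediately afterwards, namely that the syntactic swap exchanges the two players' roles in the one-step game, which together with positional determinacy of parity games yields the result.
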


\myparagraph{Connections between 2-way and 1-way automata}
It was shown by Vardi~\shortcite{Vardi98} that 2-way alternating parity tree automata can be converted to equivalent
1-way nondeterministic automata,
with an exponential blow-up.

\begin{theorem}[\cite{Vardi98}]\label{thm:2way-to-nd}
Let $\cA$ be a 2-way alternating parity tree automaton.
We can construct a 1-way nondeterministic parity tree automaton $\cA'$
such that $L(\cA) = L(\cA')$.
The number of states of $\cA'$ is exponential in the number of states of $\cA$,
but the number of priorities of $\cA'$ is linear in the number of priorities of $\cA$.
\end{theorem}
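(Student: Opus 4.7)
The plan is to follow Vardi's two-stage approach: first eliminate upward moves to obtain an equivalent 1-way \emph{alternating} parity tree automaton $\cA_1$, and then apply a standard alternation-elimination procedure (Muller--Schupp / Emerson--Jutla style) to produce the desired 1-way \emph{nondeterministic} parity automaton $\cA'$.

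For the first stage, the key idea is that any accepting strategy of Eve in the acceptance game of $\cA$ on an input tree $\tree$ can be locally summarized at every node $v$. Concretely, for each pair of states $(q,q') \in Q \times Q$ and each priority $p$, I would record whether, starting from position $(q,v)$, Eve has a strategy such that some play following that strategy eventually returns to $v$ in state $q'$ with maximum intermediate priority exactly $p$; I would also record the dual ``non-returning'' summaries giving the parity status of plays that leave $v$ forever through one of its children. These summaries are finite objects of size exponential in $|Q|$ and linear in the number of priorities. I would then build $\cA_1$ whose states are pairs consisting of a state of $\cA$ and a guessed summary; its transition function simulates $\delta$ but replaces every upward move $(\dup,q'')$ by looking up the corresponding entry in the summary guessed at $v$, emitting an auxiliary state with the recorded priority so that the parity condition along the simulated loop is preserved. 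Consistency of the guessed summaries across neighboring nodes is checked locally by $\cA_1$.

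For the second stage, I would invoke the Muller--Schupp/Emerson--Jutla nondeterminization: any 1-way alternating parity tree automaton with $n$ states and $d$ priorities admits an equivalent 1-way nondeterministic parity tree automaton with $2^{O(nd \log (nd))}$ states and $O(nd)$ priorities. Composing with stage one yields a final automaton whose state count is exponential in the state count of $\cA$ and whose number of priorities is linear in that of $\cA$, as required.

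The main obstacle is the correctness of stage one: we must ensure that the priority seen along any infinite play in $\cA_1$ coincides, in its maximum-infinitely-often value, with the priority of the corresponding play in $\cA$, even though potentially infinitely many upward excursions have been collapsed into single summary lookups. This is handled by annotating each summary entry with the maximum priority along the loop it abstracts and making the simulating transition visit an auxiliary state carrying exactly that priority, so that priorities are neither created nor lost. Correctness then follows by a back-and-forth argument: from an accepting Eve strategy in $\cA$ on $\tree$, extract the induced summaries at every node and check that $\cA_1$ accepts $\tree$ annotated by them; conversely, from an accepting run of $\cA_1$ on an annotated $\tree$, expand every simulated upward move into the abstracted loop witnessed by the summary to reconstruct an accepting run of $\cA$. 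The remaining bookkeeping, verifying the state and priority blow-up bounds claimed in the theorem, is routine.
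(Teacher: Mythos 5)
First, a point of comparison: the paper does not prove this statement at all --- it is imported as a black box from Vardi's 1998 paper (hence the citation in the theorem header), so there is no in-paper proof to measure yours against. What follows is therefore an assessment of your sketch on its own terms, against the construction it is meant to reproduce.

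Your two-stage plan (eliminate two-wayness to get a 1-way alternating automaton, then eliminate alternation) is a recognizable alternative to Vardi's single-pass annotation construction, but stage one as written has two genuine problems. The first is quantitative: you place the guessed summary in the \emph{state} of $\cA_1$, and since a summary is (at least) a subset of $Q \times P \times Q$, $\cA_1$ already has exponentially many states; applying Muller--Schupp/Emerson--Jutla to it then produces a doubly exponential state set and an exponential number of priorities, contradicting the very bounds the theorem asserts. The second is a soundness problem: a summary at $v$ makes claims about excursions into the part of the tree (and of Eve's strategy) lying \emph{above} $v$, which a 1-way automaton has already passed and cannot revisit, and independent copies of an alternating automaton spawned at $v$ cannot be forced to agree on the same guess --- so false detour claims can never be caught. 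Both issues are resolved the way Vardi resolves them: the summaries must live on the \emph{input tree} as an annotation (keeping the state space small), their correctness must be enforced by a top-down local consistency check relating each node's annotation to its parent's annotation and transition function (including a condition ruling out odd-priority self-detours $(q,p,q)$, which would otherwise hide a losing play that revisits one node forever), and the annotation must be projected away only at the very end, after nondeterminization, since projection requires a nondeterministic automaton. Your priority-tracking idea for collapsed excursions is the right one, but without relocating the summaries to the alphabet and adding the final projection step, the argument neither establishes $L(\cA)=L(\cA')$ nor the claimed state and priority bounds.
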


1-way nondeterministic tree automata can be seen as a special case of 2-way alternating automata,
so the previous theorem shows that 1-way nondeterministic and 2-way alternating parity automata are
equivalent, in terms of their ability to recognize trees starting from the root.

We need another conversion from 1-way nondeterministic to 2-way alternating automata
that we call \emph{localization}.
This is the process by which a 1-way nondeterministic automaton that is running on trees
with extra information about some predicate annotated on the tree
is converted to an equivalent 2-way alternating automaton
that operates on trees without these annotations
under the assumption that these predicates hold only locally at the
position the 2-way automaton is launched from.
A similar localization theorem is present in prior work \cite{BourhisKR15,lics16-gnfpup}.

\begin{theorem}\label{thm:localization}
Let $\autsig' := \autsig \cup \sset{P_1,\dots,P_j}$.
Let $\cA'$ be a 1-way nondeterministic parity automaton on $\autsig'$-trees.
We can construct a 2-way alternating parity automaton $\cA$ on $\autsig$-trees
such that
for all $\autsig$-trees $\tree$ and nodes $v \in \dom(\tree)$,
\begin{align*}
&\text{$\cA'$ accepts $\tree'$ from the root}
\ \text{ iff } \
\text{$\cA$ accepts $\tree$ from $v$}
\end{align*}
where $\tree'$ is the $\autsig'$-tree obtained from $\tree$ by setting $P_1^{\tree'} =  \dots = P_j^{\tree'} = \set{v}$.
The number of states of $\cA$ is linear in the number of states of $\cA'$,
and the overall size of $\cA$ is linear in the size of $\cA'$.
The number of priorities of $\cA$ is linear in the number of priorities of $\cA'$.
\end{theorem}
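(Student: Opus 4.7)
The plan is to construct $\cA$ by simulating $\cA'$'s run on $\tree'$ starting from the root, but with the simulation anchored at the launch node $v$. Because $\cA'$ is 1-way nondeterministic, any accepting run on $\tree'$ assigns a unique state to every node and decomposes into a path of states from the root down to $v$, together with downward subruns at $v$ (where $\cA'$ sees the $P_i$ as true) and at every sibling of a node on that path (where the $P_i$ are false). The 2-way automaton $\cA$ will use a \emph{downward} mode to simulate $\cA'$'s subruns, and an \emph{ascending} mode to verify the existence of a consistent chain of $\cA'$-states from $v$ up to the root, ending in $\cA'$'s initial state.

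Concretely, $\cA$ uses (a) states $q \in Q$ for downward simulation, whose transition at node $u$ is just $\delta(q,\tree(u))$ restricted to directions $\dleft,\dright$; (b) verify states $\text{ver}(q,d)$, indexed by $q \in Q$ and $d \in \{\dleft,\dright\}$, meaning ``we have just moved up into $u'$ and claim that $\cA'$ enters the $d$-child of $u'$ in state $q$''; and (c) a fresh initial state $q_{\init}$. The transition for $\text{ver}(q,d)$ at $u'$ existentially guesses $q' \in Q$ and a disjunct $(\dleft,q_L) \wedge (\dright,q_R)$ of $\delta(q',\tree(u'))$ with $q_d = q$, and conjoins a downward call $(\bar{d},q_{\bar{d}})$ into the sibling subtree with either a check $q' = q_0$ (if $u'$ is the root) or an upward move $(\dup,\text{ver}(q',d'))$, where $d'$ is the direction of $u'$ from its parent (readable from the $P_0/P_1$ markers at $u'$). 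The transition for $q_{\init}$ at $v$ existentially guesses $q_v \in Q$, evaluates $\delta(q_v,\tree(v) \cup \{P_1,\ldots,P_j\})$ with directions $\dleft,\dright$ (this is the only place the $P_i$ are treated as true), and conjoins either a check $q_v = q_0$ (if $v$ is the root) or an upward move $(\dup,\text{ver}(q_v,d_v))$. Since downward simulations launched at $v$ and at siblings of ancestors of $v$ never revisit $v$, the $P_i$-free transitions used in mode (a) are always correct for nodes $u \neq v$.

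The number of states is $|Q| + 2|Q| + 1$, which is linear in the size of $\cA'$, and the transition table is likewise linear in $\delta$. The priority function sets $\Omega_{\cA}(q) = \Omega(q)$ for downward states and assigns a fixed even priority to all verify states; since any ascending chain terminates at the root, verify states appear at most finitely often in any infinite play, so the parity is determined by the downward states and the number of priorities is linear in that of $\cA'$. The main obstacle is the correctness argument. The forward direction uses an accepting run of $\cA'$ on $\tree'$ to supply Eve with canonical choices for the existentials in the verify states and for the initial guess $q_v$, inheriting the parity condition from the $\cA'$-run on the downward parts. The backward direction extracts, from an Eve strategy for $\cA$ starting at $(q_{\init},v)$, the unique ascending chain of $\cA'$-states from $v$ to the root together with the $\cA'$-state at every sibling subtree and at $v$ itself, stitches them into a full run of $\cA'$ on $\tree'$, and verifies that the $P_i$-aware transition is used exactly at $v$ while $P_i$-free transitions are used everywhere else.
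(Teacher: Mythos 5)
Your proposal is correct and follows essentially the same approach as the paper's (much briefer) proof sketch: guess the relevant portion of the $\cA'$-run along the root-to-$v$ path in a backwards/ascending mode, launch ordinary downward simulations only into the subtrees off that path (and below $v$), and apply the $P_i$-aware transition only at $v$ itself. The details you supply --- the verify states, the priority assignment exploiting that the ascending phase is finite, and the run-stitching correctness argument --- are the natural completion of that sketch.
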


\begin{proof}[Proof sketch]
$\cA$ simulates $\cA'$ by guessing in a backwards fashion an initial part of a run of $\cA'$
on the path from $v$ to the root and then processing the rest of the tree in a normal downwards fashion.
The subtlety is that the automaton $\cA$
is reading a tree without valuation for $P_1,\dots,P_j$
so once the automaton leaves node $v$, if it were to cross this position again,
it would be unable to correctly simulate $\cA'$.
To avoid this issue, we only send downwards copies of the automaton in directions
that are not on the path from the root to $v$.
\end{proof}

\myparagraph{Emptiness testing}
Finally, we make use of the well-known fact that
language emptiness of tree automata is decidable.

\begin{theorem}[\cite{EmersonJ88},\cite{Vardi98}]
For 1-way nondeterministic parity tree automata,
emptiness is decidable in time polynomial in the number of states
and exponential in the number of priorities.
For 2-way alternating parity automata,
it is decidable in time exponential in the 
number of states and~priorities.
\end{theorem}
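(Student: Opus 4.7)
The plan is to prove the two bounds separately; the second will follow from the first via Theorem~\ref{thm:2way-to-nd}, so the real work is the 1-way nondeterministic case.

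For a 1-way nondeterministic parity automaton $\cA = \perm{\autsig,Q,q_0,\delta,\Omega}$, I would reduce emptiness to solving a two-player parity game $G_\cA$ whose size is polynomial in $\mysize{\cA}$ and whose priorities are those of $\cA$. The positions split into Eve positions, labelled by a state $q \in Q$, and Adam positions, labelled by a pair $(q_\dleft, q_\dright) \in Q \times Q$. At an Eve position $q$, Eve chooses an input letter $\sigma \in \powerset{\autsig}$ together with a disjunct $(\dleft, q_\dleft) \wedge (\dright, q_\dright)$ of $\delta(q,\sigma)$, and moves to the Adam position $(q_\dleft, q_\dright)$; at an Adam position, Adam chooses one of the two components and moves to the corresponding Eve position. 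The priority of Eve position $q$ is $\Omega(q)$, and Adam positions inherit any fixed minimal dummy priority. The correctness claim is that Eve wins $G_\cA$ from $q_0$ iff $L(\cA) \neq \emptyset$: a winning memoryless strategy for Eve can be unfolded into an accepting run tree whose node labelling also provides the input tree; conversely, any accepting run induces a memoryless winning strategy for Eve by projecting on states.

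Given this reduction, I would invoke the classical algorithm of Emerson and Jutla for solving parity games with $n$ positions and $d$ priorities in time $n^{O(d)}$. Since $G_\cA$ has $O(\mysize{Q} + \mysize{Q}^2)$ positions and the same priorities as $\cA$, this yields emptiness in time polynomial in the number of states of $\cA$ and exponential in the number of priorities.

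For the 2-way alternating case, I would apply Theorem~\ref{thm:2way-to-nd} to obtain an equivalent 1-way nondeterministic parity tree automaton $\cA'$ with $2^{O(n)}$ states and $O(d)$ priorities, where $n,d$ are the state count and priority count of $\cA$. The first part then decides emptiness of $\cA'$, and hence of $\cA$, in time $(2^{O(n)})^{O(d)} = 2^{O(nd)}$, which is exponential in both the number of states and the number of priorities of $\cA$, as claimed.

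The main obstacle is really the invocation of parity game solving: proving the Emerson--Jutla bound from scratch (e.g.\ via Zielonka's recursive algorithm or McNaughton's procedure) is itself a nontrivial argument on the determinacy of parity games with memoryless winning strategies, and I would treat it as a black box. The delicate part of the reduction proper is checking that the parity condition is preserved precisely along infinite plays, in particular that priorities on Adam positions do not spuriously satisfy or falsify the acceptance condition; assigning Adam positions an even priority strictly smaller than every priority in $\Omega(Q)$ achieves this.
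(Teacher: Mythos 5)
The paper gives no proof of this statement: it is imported as a known result of Emerson--Jutla and Vardi, so there is nothing to compare against. Your sketch is the standard argument underlying those citations --- reduce nonemptiness of a 1-way nondeterministic parity tree automaton to winning a polynomial-size parity game with the same priorities, solve that in time $n^{O(d)}$, and handle the 2-way alternating case by first applying the conversion of Theorem~\ref{thm:2way-to-nd} --- and it is correct. One small nit: you cannot always give Adam positions an even priority \emph{strictly} smaller than every priority of the automaton (priority $0$ may already be in use); it suffices to give them any priority $\leq \min \Omega(Q)$, regardless of parity (or to first shift all priorities up by $2$), since in an alternating play the maximal priority occurring infinitely often is always realized at Eve positions.
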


\subsection{Construction}
We are almost ready to construct an automaton
for $\varphi \in \agnf$ and $\instance_0$
to prove Theorem~\ref{thm:automata}.
It is convenient to work with $\nf$ formulas,
so let $\varphi'$ be the $\nf$ $\agnf$ sentence
that is equivalent to $\varphi$.

We build up this automaton inductively,
so we must construct an automaton $\cA_\psi$ for each
subformula $\psi(\vec{x})$ of $\varphi'$.
The automaton $\cA_\psi$ will not specify a single initial state.
Instead,
there will be a designated initial state for each
possible ``local assignment''
for the free variables $\vec{x}$.
A \emph{local assignment} $\vec{a}/\vec{x}$
for $\vec{a} = a_1 \dots a_n \in \paramsk^n$ and $\vec{x} = x_1 \dots x_n$ is a mapping
such that $x_i \mapsto a_i$.
A node $v$ in a consistent tree $\tree$ with $\vec{a} \subseteq \bagnames{v}$
and a local assignment $\vec{a}/\vec{x}$,
specifies a valuation for $\vec{x}$.
We say it is local since the free variable markers for $\vec{x}$
would all appear locally in $v$.

We will write $\cA_\psi$ for the automaton for $\psi$
(without specifying the initial state),
and will write $\cA_\psi^{\vec{a}/\vec{x}}$ for $\cA_\psi$
with the designated initial state for $\vec{a}/\vec{x}$.
We call $\cA_\psi^{\vec{a}/\vec{x}}$ a \emph{localized automaton},
since it is testing whether some tuple that is represented
locally in the tree satisfies $\psi$.
Localized automata are useful
because they can be launched to test
that a tuple of elements that appear together in a node satisfy some
property~--- without having the markers for this tuple explicitly written on the
tree.

The construction is described in the following lemma.

\begin{lemma}\label{lemma:aut}
Let $\psi(\vec{x})$ be a subformula of $\varphi'$ (the $\nf$ version of $\varphi$).
Let $k$ be the width of $\varphi'$,
let $l$ be the size of $\elems{\instance_0}$,
and let $K := 2k + l$.

We can construct a
2-way alternating parity tree automaton $\calA_\psi$
such that for all consistent trees~$\tree$,
for all local assignments $\vec{a}/\vec{x}$,
and for all nodes $v$ in $\tree$ with $\vec{a} \subseteq \bagnames{v}$,
\begin{align*}
&\text{$\calA_\psi^{\vec{a}/\vec{x}}$ accepts $\tree$
starting from $v$} \\
\text{iff} \quad &\text{$\decode(T), [v,\vec{a}]$ satisfies $\psi$}
\end{align*}
when each $R^\trans \in \sigmad$ is interpreted as the transitive closure
of $R \in \sigmab$.

Further, there is a polynomial function $f$ independent of~$\psi$
such that the number of states of $\calA_\psi$
is at most $N_\psi := f(m_\psi) \cdot 2^{f(K r_\psi)}$
where $m_\psi = \mysize{\psi}$ and 
$r_\psi$ is the CQ-rank of~$\psi$.
The overall size of the automaton
and the running time of the construction
is at most exponential in $\mysize{\sigma} \cdot N_\psi$.
The number of priorities is linear in $\psi$.
\end{lemma}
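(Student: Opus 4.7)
The plan is to prove the lemma by structural induction on $\psi$ in normal form, with one case per syntactic construct of $\nf$ $\agnf$. The first three cases are routine. For a base atom $A(\vec{x})$, $\cA_\psi^{\vec{a}/\vec{x}}$ is a constant-size automaton that accepts iff the proposition $A_{\vec{a}}$ holds at the current node. For a disjunction $\phi_1 \vee \phi_2$, I would apply the union construction of Proposition~\ref{prop:closure-union-intersection} to the inductively built $\cA_{\phi_1}$ and $\cA_{\phi_2}$, taking as initial state for each $\vec{a}/\vec{x}$ the disjunction of the corresponding sub-initial states. For a base-guarded negation $A(\vec{x}) \wedge \neg \phi(\vec{x})$, $\cA_\psi^{\vec{a}/\vec{x}}$ is the intersection of an automaton testing $A_{\vec{a}}$ at the current node with the dual (Proposition~\ref{prop:dual}) of $\cA_\phi^{\vec{a}/\vec{x}}$; correctness hinges on the fact that the base guard $A$ ensures $[v,\vec{a}]$ is a canonical base-guarded tuple in $\decode(\tree)$, so the dual captures exactly the tuples failing $\phi$.

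The crux is the CQ-shaped case $\psi = \delta[Y_1 \mapsfrom \phi_1, \ldots, Y_n \mapsfrom \phi_n]$. The approach is first to construct a 1-way nondeterministic automaton $\calC$ running on trees extended with markers for the free variables $\vec{x}$ (written at the starting node) and for the existential variables of $\delta$ (spread across the tree at chosen bags). The automaton $\calC$ guesses these existential markers and then verifies each conjunct of $\delta$: $\sigmab$-atoms are checked at some node where all their variables are co-located (such a node must exist by the connectedness condition of the decomposition); each $Y_i$-atom on a tuple $\vec{z}$ — necessarily base-guarded within $\delta$ by the $\nf$ $\agnf$ coveredness condition — triggers a call to the sub-automaton $\cA_{\phi_i}$, converted to 1-way nondeterministic via Theorem~\ref{thm:2way-to-nd} and launched from a witnessing node with the initial state determined by the guessed valuation. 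Projecting away the existential markers via Proposition~\ref{prop:closure-projection} and then applying localization (Theorem~\ref{thm:localization}) yields the 2-way alternating $\cA_\psi$ that operates on unannotated trees, treating the markers for $\vec{x}$ as holding at its starting position only.

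The main obstacle will be handling the $\sigmad$-atoms $R^\trans(y,z)$ appearing inside $\delta$, which must be interpreted as the transitive closure of $R$ in $\decode(\tree)$. After guessing placements for $y$ and $z$, $\calC$ must verify the existence of an $R$-path between the corresponding elements, and such a path may traverse several bags of the decomposition and cross equivalence classes of repeated names. I plan to handle this via an auxiliary 2-way alternating sub-automaton that, starting from a node holding both names, walks the decomposition by alternating between $R$-edges asserted at a single bag and transitions that follow colocated names across biological neighbors; this sub-automaton is then integrated into $\calC$ through the same localization machinery used for the $Y_i$-atoms. The delicate point is ensuring that the walk recognizes exactly the transitive closure of $R$ in the decoding, neither missing legitimate paths nor being tricked by the encoding into accepting spurious ones.

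Finally, for the state-count bound, the factor $2^{f(K r_\psi)}$ arises from nondeterministically choosing one of $K$ names from $\paramsk$ for each of the at most $r_\psi$ variables of $\delta$. Composition of sub-automata contributes only additively (after localization, each sub-automaton's states are included as a disjoint block), so with careful bookkeeping the form $f(m_\psi) \cdot 2^{f(K r_\psi)}$ is preserved through the induction, as $r_\psi$ is a maximum over subformulas rather than a sum. The overall automaton size is then exponential in $\mysize{\sigma} \cdot N_\psi$ because the transition function descriptions must range over the alphabet $\sigk$, while the priority count remains linear in $\mysize{\psi}$ since each inductive step adds only boundedly many priorities.
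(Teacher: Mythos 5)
Your proposal follows essentially the same route as the paper: induction on the normal form, marker-annotated automata for the atomic cases, the 2-way-to-1-way conversion followed by projection of the existential markers and localization for the CQ-shaped case, Eve guessing base-guarded valuations for the $Y_i$ with Adam challenging via localized sub-automata, and the same additive bookkeeping for the state count. Your treatment of the $\sigmad$-atoms (a walk alternating between $R$-edges asserted within a bag and name-following moves across biological neighbours, with a parity condition forcing the path to be finite) is exactly the paper's construction.

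The one step that would fail as written is your base case for a $\sigmab$-atom $A(\vec{x})$: a constant-size automaton that ``accepts iff the proposition $A_{\vec{a}}$ holds at the current node'' is not correct against the decoding semantics. In $\decode(T)$ the fact $A([v,\vec{a}])$ holds iff \emph{some} node $w$ carries $A_{\vec{a}}$ with $[w,a_i]=[v,a_i]$ for all $i$, so the witnessing annotation may sit in a different bag of the $\vec{a}$-connected region, and the automaton must navigate there (and let Adam verify, name by name, that the witnessing node is indeed $a_i$-connected to the start); this already costs a number of states linear in $K$ rather than constant. You clearly have this mechanism in hand, since you invoke it both for the $\sigmab$-atoms inside $\delta$ and for the transitive-closure walk, so the fix is to use the same navigation gadget for the standalone atomic case (and for the equality and guardedness predicates) before determinizing and localizing.
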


\begin{proof}
We proceed by induction on $\nf$ $\psi(\vec{x})$ in $\agnf$.
We will write $m_\psi$ for $\mysize{\psi}$,
$r_\psi$ for the CQ-rank of $\psi$,
and $N_\psi$ for $f(m_\psi) \cdot 2^{f(Kr_\psi)}$
  for some suitably chosen (in particular, non-constant) polynomial $f$
independent of $\psi$
(we will not define $f$ explicitly).

During each case of the inductive construction,
we will describe informally how to build the desired automaton,
and we will analyze the number of priorities
and the number of states required.
We defer the analysis of the overall size
of the automaton until the end of this proof.

\myparagraph{Base cases}
For each of the base cases $\psi(\vec{x})$,
we first describe a
2-way alternating parity tree automaton $\cB_\psi$ that runs
on trees \emph{with the free variable markers for $\vec{x}$
written on the tree}:
\begin{itemize}
\item Suppose $\psi$ is a $\sigmab$-atom $\alpha(\vec{x})$.
Eve tries to
navigate to a node $v$ whose label includes fact $\alpha(\vec{b})$.
If she is able to do this,
Adam can then challenge Eve to show that $\vec{x}$ corresponds to $\vec{b}$.
Say he challenges her on $b_i \in \vec{b}$.
Then Eve must navigate from $v$ to
the node carrying the marker $b_i/x_i$.
However, she must do this
by passing through a series of biological neighbors
that also contain $b_i$
(the intermediate nodes in between biological neighbors
might not contain $b_i$).
If she is able to do this,
$\cB_\psi$ enters an accepting sink state
(with priority 0).
The other states are non-accepting (with priority~1)
to ensure that Eve actually witnesses $\alpha(\vec{x})$.
The number of states of $\cB_\psi$ is linear in $K$,
since the automaton must remember the name $b_i$ that
Adam is challenging.
There are two priorities.

\item The case when $\psi$ is the $\sigmab$-guardedness predicate
$\guardedb(\vec{x})$ is similar,
except Eve can choose any atom $\alpha$ over $\sigmab$
that uses all of the variables $\vec{x}$,
and then proceed as in the previous case.

\item Suppose $\psi$ is an equality $x_1 = x_2$.
Eve navigates to the node $v$ with the marker $a/x_1$.
She is then required to navigate from $v$ to
the node carrying the marker for $x_2$.
She must do so by passing through a series of biological neighbors
that also contain $a$ (again, the intermediate nodes
in between biological neighbors might not contain $a$).
If she is able to reach the marker $a/x_2$ in this way
then $x_1$ and $x_2$ are marking the same element
in the underlying set of facts,
so $\cB_\psi$ moves to a sink state with priority 0 and she wins.
The other states have priority 1, so if Eve is not able to do this,
then Adam wins.
The state set is of size linear in $K$, in order to remember the name $a$.
There are two priorities.

\item Suppose $\psi$ is a $\sigmad$-atom $R^\trans(x_1,x_2)$.
Eve first tries to navigate to the node $v_0$ carrying the marker $a_1/x_1$ for~$x_1$.
The automaton $\cB_\psi$ then simulates the following game.
The initial position in the game is $(v_0,a_1)$.
In general, positions in the game are of the form $(v,a)$ for a node $v$ and a name $a$,
and one round of the game consists of the following: Eve can either
\begin{itemize}
\item choose $a'$ in $v$
such that label at $v$ includes fact $R(a,a')$;
she immediately wins if $v$ includes marker $a' / x_2$,
otherwise she proceeds to the next round in position $(v,a')$,
or
\item choose some biological neighbor $v'$ which includes the name $a$,
and the game proceeds to the next round in position $(v',a)$.
\end{itemize}
This game can be implemented
using a 2-way automaton.
Winning corresponds to moving to a sink state with priority 0.
All of the other states are assigned priority~1.
This ensures that eventually Eve witnesses a path of $R$-facts from $x_1$ to $x_2$.
The number of states in $\cB_\psi$ is again linear in $K$,
since it must remember the name $a$
that is currently being processed along this path.
There are only two priorities.
\end{itemize}
For each base case $\psi(\vec{x})$,
we have constructed an automaton $\cB_\psi$ with two priorities and
a state set of size linear in $K$.
However, this automaton runs on trees with the free variable markers for $\vec{x}$,
so it remains to show that we can construct the automaton $\cA_\psi$ required by the lemma
that runs on trees without these markers.

First, we can convert $\cB_\psi$ into an equivalent nondeterministic parity tree automaton
with an exponential blow-up in the number of states
and a linear blow-up in the number of priorities
(using Theorem~\ref{thm:2way-to-nd}).
After this step, the number of states is exponential in $K$.

For each local assignment $\vec{a}/\vec{x}$,
we can then apply the localization theorem (Theorem~\ref{thm:localization})
to the set of predicates of the form $V_{a_i/x_i}$,
and eliminate the dependence on any other
$V_{c/x_i}$ for $c \neq a_i$ by always assuming these predicates do not hold.
This results in a localized automaton $\cA_{\psi}^{\vec{a}/\vec{x}}$ that
no longer relies on free variable markers for $\vec{x}$.
By Theorem~\ref{thm:localization},
there is only a linear blow-up in the number of states and number of priorities,
so after this step the number of states in each $\cA_\psi^{\vec{a}/\vec{x}}$ is exponential in $K$.

Finally, we take $\cA_\psi$ to be the disjoint union of~$\cA_{\psi}^{\vec{a}/\vec{x}}$
over all local assignments $\vec{a}/\vec{x}$;
the designated initial state for each localization
is the initial state for $\cA_\psi^{\vec{a}/\vec{x}}$.
Since there are at most $K^k$ localizations,
the number of states in $\cA_\psi$ is still exponential in $K$,
which can be assumed to be less than $N_\psi$ by the choice of $f$.
The number of priorities is a constant independent of $\psi$.

\myparagraph{Inductive cases}
We now proceed with the inductive cases.
We build $\cA_\psi$
with the help of inductively defined automata for its subformulas.

\begin{itemize}
\item Suppose $\psi$ is a guarded negation of the form $\alpha(\vec{x}) \wedge \neg \psi'(\vec{x})$.
Construct $\cA_\psi$ by taking the disjoint union of~$\cA_{\alpha}$,
the dual of $\cA_{\psi'}$
(obtained by switching conjunctions and disjunctions in the transition function formulas in $\cA_{\psi'}$,
and incrementing each priority by one),
and fresh states $q_{\vec{a}/\vec{x}}$ with priority 1
for each local assignment $\vec{a}/\vec{x}$.
For each local assignment $\vec{a}/\vec{x}$,
the designated initial state is $q_{\vec{a}/\vec{x}}$.
From state $q_{\vec{a}/\vec{x}}$, Adam is given a choice
whether to move to the initial state of $\cA_{\alpha}^{\vec{a}/\vec{x}}$
or to the initial state of the dual of $\cA_{\psi'}^{\vec{a}/\vec{x}}$.
The idea is that Adam selects which of the conjuncts
to challenge Eve on.

The state set of $\cA_\psi$
is of size at most
\begin{align*}
&f(m_{\alpha} ) \cdot 2^{f(K r_{\alpha})} + f(m_{\psi'}) \cdot 2^{f(K r_{\psi'})} + K^k \\
\leq \
&2^{f(K r_\psi)} (f(m_{\alpha}) + f(m_{\psi'}) + 1 ) \\
\leq \
&2^{f(K r_\psi)} f(m_{\alpha} + m_{\psi'} + 1) \leq N_\psi .
\end{align*}
The number of priorities
is linear in the size of $\psi$,
since it is at most the sum of the 
number the priorities in the subautomata for $\alpha$ and $\psi'$
(which by the inductive hypothesis were linear in the size of these subformulas).

\item Suppose $\psi$ is a disjunction $\psi_1 \vee \dots \vee \psi_s$.
Construct $\cA_\psi$ by taking the disjoint union of the $\cA_{\psi_i}$
and fresh states $q_{\vec{a}/\vec{x}}$ with priority 1
for each local assignment $\vec{a}/\vec{x}$.
For each local assignment $\vec{a}/\vec{x}$,
the designated initial state is $q_{\vec{a}/\vec{x}}$.
In state $q_{\vec{a}/\vec{x}}$, Eve chooses which $\cA_{\psi_i}^{\vec{a}/\vec{x}}$
to simulate.

The number of states of $\cA_\psi$ is at most
\begin{align*}
&f(m_{\psi_1} ) \cdot 2^{f(K r_{\psi_1})} + \dots + f(m_{\psi_s}) \cdot 2^{f(K r_{\psi_s})} + K^k \\
\leq \
&2^{f(K r_\psi)} (f(m_{\psi_1}) + \dots + f(m_{\psi_s}) + 1 ) \\
\leq \
&2^{f(K r_\psi)} f(m_{\psi_1} + \dots+ m_{\psi_s} + 1) \leq N_\psi .
\end{align*}
The number of priorities
is linear in the size of $\psi$,
since it is at most the sum of the 
number of priorities in the subautomata for $\psi_1$ to $\psi_s$
(which by the inductive hypothesis were linear in the size of these subformulas).

\item Suppose $\psi(\vec{x})$ is a CQ
\[
\exists y_1 \dots y_t ( \alpha_1(\vec{z}_1) \wedge \dots \wedge \alpha_s(\vec{z}_s) )
\]
where each $\vec{z}_i$ is a tuple of variables coming from $\vec{x}$ and $y_1,\dots,y_t$,
and each $\alpha_i$ is an atom over $\sigmab \cup \sigmad$.
This is a specific case, but it is helpful for handling the general CQ-shaped
formulas in the next point.

We start by defining an automaton that runs on trees
with free variable markers for $\vec{x}$ and $y_1 \dots y_t$.
For $1 \leq i \leq s$, let $\cB_{\alpha_i}$ be the automaton for $\alpha_i$
described in the base cases above that runs on trees
with the free variable markers for $\vec{x}$ and $y_1 \dots y_t$.
Let $\calC$ be the automaton
obtained by taking the disjoint union of~$\cB_{\alpha_1}, \dots, \cB_{\alpha_s}$,
and an automaton checking that there is precisely one free variable marker
for $y_1 \dots y_t$,
and adding a new initial state with priority 1 from which
Adam can choose which of these subautomata to simulate.
Thus, $\calC$ is a 2-way alternating automaton
with number of states linear in $Ks \leq Kr_\psi$,
and two priorities;
it checks that the body of the CQ holds
in a tree with all of the free variable markers present.

We can then convert $\calC$ to an equivalent nondeterministic parity tree automaton $\calC'$
using Theorem~\ref{thm:2way-to-nd},
with an exponential blow-up in the number of states,
and a linear blow-up in the number of priorities.
After this step, the number of states is exponential in $Kr_\psi$.

Next, we take the projection of
$\calC'$ on the free variable markers for $y_1 \dots y_t$
to obtain $\cB_\psi$:
that is, $\cB_\psi$
simulates $\calC'$ while guessing
the markers for the variables $y_1 \dots y_t$.
This is an automaton for $\psi$,
but it runs on trees with markers for the free variables $\vec{x}$.

For each local assignment $\vec{a}/\vec{x}$,
we can then apply the localization theorem (Theorem~\ref{thm:localization})
to the set of predicates of the form $V_{a_i/x_i}$,
and eliminate the dependence on any other
$V_{c/x_i}$ for $c \neq a_i$ by always assuming these predicates do not hold.
This results in a localized automaton $\cA_{\psi}^{\vec{a}/\vec{x}}$ that
no longer relies on free variable markers for~$\vec{x}$.
By Theorem~\ref{thm:localization},
there is only a linear blow-up in the number of states and number of priorities,
so after this step the number of states is exponential in $Kr_\psi$.

Finally, we take $\cA_\psi$ to be the disjoint union of the $\cA_{\psi}^{\vec{a}/\vec{x}}$
over all local assignments $\vec{a}/\vec{x}$;
the designated initial state for each localization
is the initial state for $\cA_\psi^{\vec{a}/\vec{x}}$.
Since there are at most $K^k$ localizations,
the number of states in $\cA_\psi$ is still exponential in $K r_\psi$,
which can be assumed to be less than $N_\psi$ by the choice of $f$.
The number of priorities is a constant independent of $\psi$.

\item Suppose $\psi$ is a CQ-shaped formula of the form
\[\delta[Y_1 \mapsfrom \phi_1, \ldots, Y_n \mapsfrom \phi_n]
\]
where $\delta$ is a CQ over $\sigma \cup \{Y_1,\dots,Y_n\}$ and $\phi_i \in \agnf$.
The inductive hypothesis yields $\cA_{\phi_i}$ for each of the $\phi_i$.
Let $\calN$ be the automaton for the CQ $\delta$ obtained using a similar approach as the previous case.
Note that this automaton runs on trees with a valuation for the free second-order variables $Y_i$
marked on the tree.
These free variables represent base-guarded relations
(i.e.~relations in which each tuple in the relation is base-guarded),
since it is guaranteed that for each $Y_i$ atom, there is a $\sigmab$-atom or $\sigmab$-guardedness predicate in $\delta$ that contains its free variables.

To construct $\cA_\psi$,
take the disjoint union of~$\calN, \allowbreak \cA_{\phi_1}, \dots \cA_{\phi_n}$.
For each localization $\vec{a}/\vec{x}$, the designated initial state is
the initial state for $\vec{a}/\vec{x}$ coming from $\calN$.
The idea is that $\cA_\psi$ 
starts by simulating $\calN$, but with Eve guessing valuations for $Y_i$.
This is where it is important that the $Y_i$ are $\sigmab$-guarded relations:
since any $Y_i$-fact must be about a $\sigmab$-guarded set of elements,
these elements must appear together in some node of the tree,
so Eve can guess an annotation of the tree that indicates that $Y_i$ holds of these elements.
Adam can either accept Eve's guesses of the valuation and continue the simulation of $\calN$,
or can challenge one of Eve's assertions of~$Y_i$ by launching the appropriate localized version of~$\phi_i$.
That is, if Eve guesses that $Y_i(\vec{z}_i)$ holds of~$\vec{b}$ at~$v$,
then Adam could challenge this by launching $\cA_{\varphi_i}^{\vec{b}/\vec{z}_i}$
starting from~$v$.
This is where it is crucial that we have localized automata
for these subformulas and for all possible local assignments
that can be launched from internal nodes when Adam challenges
one of Eve's guesses: in particular,
note that the same $\cA_{\varphi_i}^{\vec{b}/\vec{z}_i}$
can be launched for different initial localizations
$\vec{a}/\vec{x}$.

By the inductive hypothesis, each $\cA_{\phi_i}$ automaton
has at most $f(m_{\phi_i}) \cdot 2^{f(K r_{\phi_i})}$ states,
and number of priorities linear in $m_{\phi_i}$.
Likewise, the automaton $\calN$ for $\delta$ has two priorities
and number of states exponential in $Kr_\delta$,
which we can assume to be at most
at most $2^{f(Kr_\delta)}$.

Hence, the number of priorities in $\cA_\psi$ is linear in $m_\psi$,
and the number of states in $\cA_\psi$ is at most
\begin{align*}
&2^{f(K r_\delta)} + f(m_{\phi_1}) \cdot 2^{f(K r_{\phi_1})} + \dots \\
&\phantom{2^{f(K r_\delta)} } + f(m_{\phi_n}) \cdot 2^{f(K r_{\phi_n})} \\
\leq \
&2^{f(K r_\psi)} (1 + f(m_{\phi_1}) + \dots + f(m_{\phi_n}) ) \leq N_\psi .
\end{align*}
\end{itemize}
This concludes the inductive cases.

\myparagraph{Overall size}
We have argued that each automaton has at most $N_\psi$ states
and the number of priorities at most linear in~$\psi$.
It remains to argue that the overall size of $\cA_\psi$ is at most
exponential in $\mysize{\sigma} \cdot N_\psi$.
The size of the priority mapping is at most polynomial in $N_\psi$.
The size of the alphabet is exponential in $\mysize{\sigma} \cdot K^k$,
which is at most exponential in $\mysize{\sigma} \cdot N_\psi$.
For each state and alphabet symbol,
the size of the corresponding transition function formula
can always be kept of size at most exponential in $N_\psi$.
Hence, the overall size of the transition function is at most exponential in
$\mysize{\sigma} \cdot N_\psi$.
Thus, the overall size of $\cA_\psi$ is
at most exponential in $\mysize{\sigma} \cdot N_\psi$.

It can be checked that
the running time of the construction is polynomial
in the size of the constructed automaton,
and hence is also exponential in $\mysize{\sigma} \cdot N_\psi$.
\end{proof}

We must also construct an automaton that checks that
the input tree is consistent,
and actually represents a set of facts $\instance$
such that $\instance \supseteq \instance_0$
and where every $R^\trans$-fact in $\instance_0$ is actually witnessed
by some path of $R$-facts in $\instance$.
For notational simplicity in the statement of the lemma,
we assume that the element names in $\instance_0$ are
used as the names in $\paramsk$ for the root of the consistent trees
(but this is only a technicality).

\begin{lemma}\label{lemma:consistency-distinguished-relations}
We can construct
a 2-way alternating parity tree automaton $\calA_{\instance_0}$
in time doubly exponential in $\mysize{\sigma} \cdot K$,
such that for all trees~$\tree$,
\begin{align*}
&\text{$\calA_{\instance_0}$ accepts $\tree$} \\
\text{iff} \quad 
&\text{$T$ is consistent and for all facts $S(\vec{c}) \in \instance_0$,} \\
&\text{$\decode(\tree), [\epsilon,\vec{c}]$ satisfies $S(\vec{x})$}.
\end{align*}
when $R^\trans \in \sigmad$ is interpreted as the transitive closure
of $R \in \sigmab$.
The number of states is at most exponential in $\mysize{\sigma} \cdot K$,
the number of priorities is two,
and
the overall size is at most doubly exponential in $\mysize{\sigma} \cdot K$.
\end{lemma}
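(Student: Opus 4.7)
The plan is to build $\calA_{\instance_0}$ as the intersection of several 2-way alternating parity tree automata, each responsible for one piece of the statement. Since Proposition~\ref{prop:closure-union-intersection} gives only a polynomial blow-up under intersection, and each component will be a simple safety or reachability automaton using only two priorities, the combined automaton also uses only two priorities. The two kinds of components are: (a) a tree-consistency checker $\calA_{\mathrm{cons}}$ enforcing all the bulleted clauses in the definition of a consistent tree together with the root-to-$\instance_0$ correspondence; and (b) for every fact $R^\trans(c_1,c_2) \in \instance_0$, a path-witnessing sub-automaton patterned exactly on the $\sigmad$-atom base case in the proof of Lemma~\ref{lemma:aut}.

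For $\calA_{\mathrm{cons}}$, I would verify the local clauses at each node: that $|\bagnames{v}| \leq k$ (and $=l$ at the root), that every $R_{\vec{a}}(v)$ has $\vec{a}\subseteq \bagnames{v}$, and that $P_i(v)$ agrees with the built-in left/right child propositions. The root-specific clauses --- that the markers $V_{c/z}(\epsilon)$ give a bijection between $\bagnames{\epsilon}$ and $\elems{\instance_0}$, and that the $R_{\vec{a}}$-facts at the root are exactly those encoding $\instance_0$ --- are a finite local check when the automaton sees the root. The only non-local clause, uniqueness of each $V_{c/z}$, is enforced by a universal safety broadcast that forbids $V_{c/z}$-facts at any non-root node. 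All of this can be captured by a 2-way alternating automaton with two priorities and state count polynomial in $K$ and $\mysize{\instance_0}$.

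For each $R^\trans(c_1,c_2)\in\instance_0$, I would launch from the root a sub-automaton which first reads the root labels to identify the names $a_1,a_2\in \bagnames{\epsilon}$ with $V_{a_i/c_i}(\epsilon)$ and then plays the ``follow an $R$-path'' reachability game from the $\sigmad$-atom base case of Lemma~\ref{lemma:aut}: Eve moves a token by either following an $R_{ab}$-fact within a bag or transitioning to a biological neighbour that still carries the current name, and wins upon reaching the unique bag with name $a_2$. The accepting sink has priority~$0$ and all transient states priority~$1$, forcing Eve to exhibit a finite witnessing path. Each such sub-automaton has state count linear in $K$ and two priorities, and we launch one copy per $R^\trans$-fact of $\instance_0$.

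Taking the intersection yields $\calA_{\instance_0}$ with state count polynomial in the component sizes --- hence exponential in $\mysize{\sigma}\cdot K$ --- and still two priorities. The main subtlety, and the source of the doubly-exponential bound on the overall size and the construction time, is the alphabet: the signature $\sigk$ already contains exponentially many unary predicates (one $R_{\vec{a}}$ per relation $R$ and tuple $\vec{a}\in \paramsk^{\arity{R}}$), so an alphabet symbol is a subset of an exponentially large set of propositions, making $|2^{\sigk}|$ doubly exponential in $\mysize{\sigma}\cdot K$. The transition function must be tabulated over this alphabet, which is where the doubly-exponential size and construction time come from; the logical content of each transition, however, is conceptually routine and can be produced on the fly in time polynomial in the size of its output.
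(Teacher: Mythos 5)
Your proposal is correct and matches the paper's proof in essence: the paper realizes your ``intersection'' as an initial universal branching in which Adam challenges either a consistency condition or a particular fact $S(\vec{c})\in\instance_0$, launching the localized automaton $\calA_{S(\vec{x})}^{\vec{c}/\vec{x}}$ from Lemma~\ref{lemma:aut} in the latter case --- which for $R^\trans$-facts is exactly the $R$-path reachability game you describe (and, as you note, the path may leave $\elems{\instance_0}$). Your accounting of the two priorities, the exponential state count, and the doubly exponential overall size coming from the alphabet $2^{\sigk}$ agrees with the paper's.
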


\begin{proof}
The automaton is designed to allow
Adam to challenge some consistency condition
or a particular fact $S(\vec{c})$
in~$\instance_0$.

It is straightforward to design suitable automata checking each
consistency condition,
so suppose Adam challenges some fact in $\instance_0$.
Then the automaton simply launches $\cA_{S(\vec{x})}^{\vec{c}/\vec{x}}$
(obtained from Lemma~\ref{lemma:aut}) from the root.
Note that in case $S(\vec{c})$ is some $R^\trans(c_1,c_2)$,
this $R$-path  witnessing this fact may require elements outside of $\elems{\instance_0}$
even though $c_1$ and $c_2$ are names of elements in $\instance_0$.

The number of states is exponential in $\mysize{\sigma} \cdot K$,
and the overall size is at most doubly exponential in $\mysize{\sigma} \cdot K$.
Only two priorities are needed.
\end{proof}

We can now conclude the proof of Theorem~\ref{thm:automata}.
Recall that $\varphi$ is in $\agnf$ and $\instance_0$ is some finite set of facts.
Without loss of generality, we can assume that
$\mysize{\varphi} \cdot \mysize{\instance_0}
\geq \mysize{\sigma}$.
We construct the $\nf$ $\varphi'$ equivalent to $\varphi$ in exponential time
using Proposition~\ref{prop:nf}.
Although the size of $\varphi'$ can be exponentially larger than $\varphi$,
the width and CQ-rank is at most~$\mysize{\varphi}$,
so we can apply Lemma~\ref{lemma:aut} to
construct an automaton for~$\varphi'$ (and hence $\varphi$)
in time doubly exponential in $\mysize{\varphi} \cdot \mysize{\instance_0}$.
However, the number of states and priorities in this automaton is at most
singly exponential in $\mysize{\varphi} \cdot \mysize{\instance_0}$,
and the number of priorities is linear in $\mysize{\varphi}$.
By taking the intersection of
this automaton from Lemma~\ref{lemma:aut} with 
the automaton for $\instance_0$ and consistency from Lemma~\ref{lemma:consistency-distinguished-relations},
we have a 2-way alternating parity tree automaton $\cA_{\varphi,\instance_0}$
of the desired size
that has a non-empty language
iff
$\varphi$ is satisfiable.
This concludes the proof of Theorem~\ref{thm:automata}.

\clearpage

\section{Base-guarded-interface tree decompositions for $\agnf$}

We prove the following result:

\begin{proposition}\label{prop:guarded-interface-dec-appendix}
  Every sentence $\phi$ in $\agnf$ has base-guarded-interface $k$-tree-like witnesses for
   some $k \leq \mysize{\phi}$.
\end{proposition}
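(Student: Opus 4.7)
The plan is to follow the template of the proof of Proposition~\ref{prop:transdecomp}, but with a modified unravelling that enforces base-guarded interfaces at every non-root edge. Let $k$ be the width of the $\nf$ version of $\phi$, so $k \leq \mysize{\phi}$ by Proposition~\ref{prop:nf}, and fix $\instance \supseteq \instance_0$ witnessing $\phi$. I would define a modified unravelling $\structureunravelkint{\instance}$ by restricting the sequences $\pi = X_0 X_1 \dots X_n$ of Appendix~\ref{app:tctreelike} to those satisfying, for every $i \geq 2$, that $X_i \cap X_{i-1}$ is base-guarded in $\instance$; the first transition $X_0 \to X_1$ is deliberately left unconstrained, reflecting that in a base-guarded-interface tree decomposition the root carries no base-guardedness obligation on its children. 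Elements $[\pi,a]$ and facts of $\structureunravelkint{\instance}$ are defined exactly as in the standard $\gn^k$-unravelling.

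By construction the induced tree decomposition is $\instance_0$-rooted, of width at most $k-1$, and has base-guarded interfaces at every non-root edge: a base-fact witnessing base-guardedness of $X_i \cap X_{i-1}$ in $\instance$ lifts to a $\sigmab$-fact on the corresponding elements $[\pi,a]$ in $\structureunravelkint{\instance}$, using that $k$ is large enough relative to the arity of the base predicates involved. It then remains to show $\structureunravelkint{\instance} \models \phi$, which I would do by arguing that Duplicator wins the $\gn^k$-bisimulation game of Appendix~\ref{app:tctreelike} between $\instance$ and $\structureunravelkint{\instance}$ from the identity on $\elems{\instance_0}$ and invoking Proposition~\ref{prop:bisim-game}; note that the proof of that proposition does not in fact depend on transitive-closure semantics and so applies here. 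The ``active side is $\structureunravelkint{\instance}$'' case is handled by the projection $[\pi,a]\mapsto a$ exactly as in Proposition~\ref{prop:unravel}.

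The hard part will be Duplicator's response when the active side is $\instance$ and Spoiler's pick $X'$ has non-base-guarded overlap with the current bag $X_n$, since the standard move of extending $\pi$ by $X'$ is no longer a legal sequence in $\structureunravelkint{\instance}$. To handle this I would weaken the notion of safe position so that $f$ need not be witnessed by a single sequence $\pi$: each maximal base-guarded subset of $\mydom{f}$ gets its own witnessing sequence, with consistency forced across subsets that share elements (their witnessing bags must then be connected in the unravelling tree via base-guarded interfaces). Since a partial \emph{rigid} homomorphism is only required to be an isomorphism on base-guarded subsets, letting $f$ span several branches of the unravelling is legitimate. When Spoiler introduces a new $X'$, each maximal base-guarded subset of $X'$ that is compatible with Duplicator's current branching is slotted into the appropriate existing branch; a base-guarded subset of $X'$ that cannot be accommodated is placed in a fresh depth-one branch hanging off $X_0$---legal precisely because the $X_0 \to X_1$ transition is unconstrained---and any remaining non-base-guarded elements of $X'$ are similarly sent to fresh copies. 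Type~(i) positions, being themselves base-guarded, are handled essentially as in the original argument. The bulk of the technical work lies in verifying that each resulting $f'$ is a partial rigid homomorphism consistent with $f$, but this reduces to applying the argument of Proposition~\ref{prop:unravel} separately to each base-guarded component.
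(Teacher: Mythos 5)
Your unravelling is the same one the paper uses, but the step where you propose to win the \emph{unmodified} $\gn^k$ bisimulation game against $\structureunravelkint{\instance}$ and then invoke Proposition~\ref{prop:bisim-game} does not go through, and this is the heart of the matter. A partial rigid homomorphism is required to be a homomorphism with respect to \emph{all} $\sigma$-facts on its entire domain, not merely an isomorphism on base-guarded subsets. In particular a $\sigmad$-fact $R^\trans(a,b)$ on a pair that is \emph{not} base-guarded must be preserved, and in the unravelling such a fact holds of $[\pi_1,a],[\pi_2,b]$ only when the two classes are co-represented in some bag. Now suppose the play has already pinned $a$ to $[\pi_1,a]$ deep in a branch, and Spoiler adds $b$ with $R^\trans(a,b)\in\instance$ but $\{a,b\}$ (indeed, even $\{a\}$) not base-guarded. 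No child bag of $\pi_1$ containing $a$ is legal in $\structureunravelkint{\instance}$, since the interface would have to contain $a$ and be base-guarded; so $b$ has no image making $R^\trans$ hold against $[\pi_1,a]$, and Duplicator loses. Your multi-branch ``safe position'' cannot repair this: sending $b$ to a fresh depth-one branch off $X_0$ places it in a class that shares no bag with $[\pi_1,a]$, so the fact is destroyed. The verification therefore does not ``reduce to each base-guarded component separately'' --- the homomorphism obligation for $\sigmad$-facts straddles components, and that is exactly where the argument breaks.

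The paper's proof resolves this differently: it \emph{weakens the game} rather than strengthening Duplicator's strategy. In the base-guarded-interface $\gn^k$ game, Spoiler may only select fresh elements from a type~(i) position, and to reach one he must first collapse the current domain to a base-guarded subset; consequently the overlap between consecutive bags is automatically base-guarded and the extension $\pi\cdot X'$ is always legal in $\structureunravelkint{\instance}$ (Proposition~\ref{prop:unravelguarded}). The price is that Proposition~\ref{prop:bisim-game} no longer applies and the game-to-formula transfer must be re-proved for the weaker game (Proposition~\ref{prop:bisim-game-guarded}); this is where the normal form of $\agnf$ is essential, since in a CQ-shaped subformula every non-atomic conjunct is base-guarded, so Spoiler's restricted collapse-then-extend moves suffice to propagate satisfaction. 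If you want to salvage your outline, you need this second ingredient --- some restriction of the game together with a fresh proof that the restricted game still preserves normal-form $\agnf$ --- rather than an appeal to the existing Proposition~\ref{prop:bisim-game}.
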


That is,
for every sentence $\phi$ in $\agnf$
and for every finite
set of facts $\instance_0$,
if there is some
$\instance \supseteq \instance_0$ satisfying $\phi$
then there is
such an $\instance$
that has a $\instance_0$-rooted $(k-1)$-width
base-guarded-interface tree decomposition.

The result and proof of Proposition~\ref{prop:guarded-interface-dec-appendix}
is very similar to Proposition~\ref{prop:transdecomp}.
However, unlike Proposition~\ref{prop:transdecomp},
we do not interpret the distinguished relations in a special way here.
This allows us to prove the stronger base-guarded-interface property
about the corresponding tree decompositions,
which will be important for later arguments
(e.g., Proposition~\ref{prop:rewritelin} and Theorem~\ref{thm:ptimetransdataupper}).

We first consider a variant of the $\gn^k$ bisimulation game defined earlier in Appendix~\ref{app:tctreelike}.
The positions in the game are the same as before:
\begin{itemize}
 \item[i)] partial isomorphisms $f:\fA\restrict{}{X} \to \fB\restrict{}{Y}$ or $g:\fB\restrict{}{Y} \to \fA\restrict{}{X}$, 
     where $X \subset \elems{\fA}$ and $Y \subset \elems{\fB}$ are both finite and are $\sigmab$-guarded;
 \item[ii)] partial rigid homomorphisms $f:\fA\restrict{}{X} \to \fB\restrict{}{Y}$ or $g:\fB\restrict{}{Y} \to \fA\restrict{}{X}$, 
     where $X \subset \elems{\fA}$ and $Y \subset \elems{\fB}$ are both finite and are of size at most $k$.  
\end{itemize}

However, the rules of the game are different.

From a type (i) position $h$,
Spoiler must choose a finite subset $X \subset \elems{\fA}$ or 
a finite subset $Y \subset \elems{\fB}$, in either case of size at most $k$, 
upon which Duplicator 
must respond by a partial rigid homomorphism with domain $X$ or $Y$ accordingly,
mapping it into the 
other set of facts in a manner consistent with~$h$.
(This is the same as before).

In a type (ii) position $h$,
Spoiler is only allowed to select some base-guarded subset $X'$ of $\mydom{h}$,
and then the game proceeds from the type (i) position
obtained by restricting $h$ to this base-guarded subset.

Thus, the game strictly alternates between type (ii) positions and base-guarded positions of type (i).
We call this a \emph{base-guarded-interface $\gn^k$ bisimulation game},
since the interfaces (i.e.~shared elements) between the domains of consecutive positions
must be base-guarded.
We can then show:

\begin{proposition}\label{prop:bisim-game-guarded}
Let $\varphi \in \agnfk$ in $\nf$.

If Duplicator has a winning strategy in the
base-guarded-interface $\gn^k$ bisimulation game
between $\fA$ and $\fB$
starting from a type (i) position $\vec{a} \mapsto \vec{b}$
and $\fA$ satisfies $\varphi(\vec{a})$,
then $\fB$ satisfies $\varphi(\vec{b})$.
\end{proposition}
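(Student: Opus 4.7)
The plan is to proceed by structural induction on $\nf$ $\agnf$ formulas, following the template of the proof of Proposition~\ref{prop:bisim-game} but adapted to the restricted rules of the base-guarded-interface game. The base case where $\varphi$ is a $\sigmab$-atom is immediate from the fact that the initial position $\vec{a} \mapsto \vec{b}$ is a partial isomorphism; disjunction reduces directly to applying the induction hypothesis to one of the disjuncts.

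For a guarded negation $\varphi = A(\vec{x}) \wedge \neg \varphi'(\vec{x}')$, where $A$ is a $\sigmab$-atom or the $\sigmab$-guardedness predicate and $\vec{x}' \subseteq \vec{x}$, the satisfaction of $A(\vec{a})$ in $\fA$ forces $\vec{a}$ to be $\sigmab$-guarded, and the sub-mapping $\vec{a}' \mapsto \vec{b}'$ remains a base-guarded (type~(i)) position since any $\sigmab$-atom guarding $\vec{a}$ also guards $\vec{a}'$. Assuming for contradiction that $\fB \models \varphi'(\vec{b}')$, Spoiler can move from this type~(i) position by selecting $\vec{b}'$ from $\elems{\fB}$, to which Duplicator must respond (by consistency with the position) with $\vec{b}' \mapsto \vec{a}'$; this is again a base-guarded type~(i) position from which Duplicator wins, so the induction hypothesis on $\varphi'$ would yield $\fA \models \varphi'(\vec{a}')$, contradicting $\fA \models \neg \varphi'(\vec{a}')$.

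The main case, and the principal obstacle, is a CQ-shaped formula $\varphi = \delta[Y_1 \mapsfrom \phi_1, \ldots, Y_n \mapsfrom \phi_n]$, where $\delta = \exists \vec{y}\, \gamma(\vec{x}, \vec{y})$. Assuming $\fA, \vec{a} \models \varphi$ as witnessed by an assignment $\vec{c}$ for $\vec{y}$, the width bound $\varphi \in \agnfk$ gives $|\vec{a}\vec{c}| \leq k$, so from the type~(i) starting position Spoiler can select $\vec{a}\vec{c}$ in a single move; Duplicator's winning strategy then produces a partial rigid homomorphism $\vec{a}\vec{c} \mapsto \vec{b}\vec{d}$ consistent with $\vec{a} \mapsto \vec{b}$, placing us at a type~(ii) position from which Duplicator still wins. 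The $\sigmab$-atom and $\sigmad$-atom conjuncts of $\gamma$ are immediately preserved under this partial rigid homomorphism. For each $Y_i(\vec{z}_i)$-conjunct, the coveredness clause in the $\nf$ $\agnf$ definition guarantees a $\sigmab$-atom or $\sigmab$-guardedness conjunct in $\delta$ containing the variables of $\vec{z}_i$; evaluated at $\vec{a}\vec{c}$, this forces the tuple $\vec{z}_i[\vec{a}\vec{c}/\vec{x}\vec{y}]$ to be $\sigmab$-guarded in $\fA$, so Spoiler may restrict the type~(ii) position back to this base-guarded subtuple, reaching a type~(i) position from which the induction hypothesis applied to $\phi_i$ gives that $\fB$ satisfies $\phi_i$ on the image. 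Combining all conjuncts, $\vec{d}$ serves as the needed witness for $\vec{y}$ in $\fB$, concluding the induction. The critical point is that the interaction between the width bound $k$ and the coveredness condition precisely matches the alternation between type~(i) and type~(ii) imposed by the base-guarded-interface game: width lets us pick an entire CQ valuation at once from type~(i), while coveredness restores base-guardedness so that each substituted $\phi_i$ can be checked from a fresh type~(i) position.
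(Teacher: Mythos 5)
Your proposal is correct and follows essentially the same route as the paper's proof: the same structural induction on normal-form $\agnf$, the same game moves (switch-and-collapse for base-guarded negation; select the full CQ valuation from a type~(i) position using the width bound, then restrict to the base-guarded subtuple given by coveredness to re-enter a type~(i) position for each substituted $\phi_i$). The only cosmetic difference is that you occasionally compress the two-step alternation (type~(i) $\to$ type~(ii) $\to$ type~(i)) into a single described move, which is harmless.
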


\begin{proof}
Suppose Duplicator has a winning strategy in the base-guarded-interface
$\gn^k$ bisimulation game
between $\fA$ and $\fB$.

If $\varphi$ is a $\sigma$-atom $A(\vec{x})$,
the result follows from the fact that the position
$\vec{a} \mapsto \vec{b}$ is a partial homomorphism.

If $\varphi$ is a disjunction,
the result follows easily from the inductive hypothesis.

Suppose $\varphi$ is a base-guarded negation
$A(\vec{x}) \wedge \neg \varphi'(\vec{x}')$.
By definition of $\agnf$, it must be the case that $A \in \sigmab$
and $\vec{x}'$ is a sub-tuple of $\vec{x}$.
Since $\fA,\vec{a}$ satisfies $\varphi$,
we know that $\fA,\vec{a}$ satisfies $A(\vec{x})$,
which implies (by induction) that
$\fB,\vec{b}$ also satisfies $A(\vec{x})$.
It remains to show that $\fB$ satisfies $\neg \varphi'(\vec{x}')$.
Assume for the sake of contradiction that it satisfies $\varphi'(\vec{x}')$.
Because $\vec{a} \mapsto \vec{b}$ is a type (i) position, we can
consider the move in the game where Spoiler switches the domain to the other set of facts,
keeps the same set of elements,
and then collapses to the base-guarded elements in the subtuple $\vec{b}'$ of $\vec{b}$
corresponding to $\vec{x}'$ in $\vec{x}$.
Let $\vec{a}'$ be the corresponding subtuple of $\vec{a}$.
Duplicator must still have a winning strategy
from this new type (i) position $\vec{b}' \mapsto \vec{a}'$,
so the inductive hypothesis ensures that
$\fA, \vec{a}'$ satisfies $\varphi'(\vec{x}')$,
a contradiction.

Finally, suppose $\varphi$
is a CQ-shaped formula
\[\delta[Y_1 \mapsfrom \phi_1, \ldots, Y_n \mapsfrom \phi_n]
\]
where $\delta$ is a CQ $\exists \vec{y} ( \alpha_1 \wedge \dots \wedge \alpha_j )$
over $\sigma \cup \{Y_1,\dots,Y_n\}$ and $\phi_i$ is in $\nf$ $\agnfk$.
We are assuming that $\fA, \vec{a}$ satisfies $\varphi$.
Hence,
there is some $\vec{c} \in \elems{\fA}$ such that
$\fA,\vec{a},\vec{c}$ satisfies $(\alpha_1 \wedge \dots \wedge \alpha_j)[Y_1 \mapsfrom \phi_1, \ldots, Y_n \mapsfrom \phi_n]$.
Because the width of $\varphi$ is at most $k$,
we know that the combined number of elements in $\vec{a}$ and $\vec{c}$ is at most $k$.
Hence, we can consider the move in the game where Spoiler
selects the elements in $\vec{a}$ and $\vec{c}$.
Duplicator must respond with some
$\vec{d} \in \elems{\fB}$
such that $\vec{a}\vec{c} \mapsto \vec{b}\vec{d}$
is a partial rigid homomorphism, a type (ii) position.
Now consider the possible conjuncts in this CQ-shaped formula.
Conjuncts that are $\sigma$-atoms must be satisfied in $\fB, \vec{b}\vec{d}$
since $\vec{a}\vec{c} \mapsto \vec{b}\vec{d}$
is a partial homomorphism with respect to $\sigma$.
For the conjuncts $\varphi_i$ corresponding to $Y_i$,
we can consider Spoiler's restriction of $\vec{a}\vec{c}$ to the elements used by this conjunct,
and the corresponding restriction of $\vec{b}\vec{d}$.
This is a valid move to a type (i) position,
since the definition of $\agnf$ requires that these non-atomic conjuncts are base-guarded.
Moreover, this new position witnesses the satisfaction of that conjunct in $\fA$.
Since Duplicator must still have a winning strategy from this new type (i) position,
the inductive hypothesis implies that this conjunct is also satisfied in $\fB$.
Since this is true for all conjuncts in the CQ-shaped formula,
$\fB,\vec{b}$ satisfies $\varphi$ as desired.
\end{proof}

We then use a variant of the unravelling based on this game.
The \emph{base-guarded-interface $\gn^k$-unravelling} $\structureunravelkint{\fA}$
is defined in a similar fashion to the $\gn^k$-unravelling,
except it uses only sequences
$\Pi \cap \set{ X_0 \dots X_n : \text{for all $i \geq 1$, $X_i \cap X_{i+1}$ is $\sigmab$-guarded}}$.
This unravelling has an $\instance_0$-rooted base-guarded-interface tree decomposition of width $k-1$.
Moreover:

\begin{proposition}\label{prop:unravelguarded}
Duplicator has a winning strategy in the base-guarded-interface $\gn^k$ bisimulation game
between $\fA$ and $\structureunravelkint{\fA}$.
\end{proposition}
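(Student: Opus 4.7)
The plan is to mirror the proof of Proposition~\ref{prop:unravel}, adjusting for the restricted sequence space in $\structureunravelkint{\fA}$ and the modified game rules. First I would define \emph{safe positions} exactly as in the proof of Proposition~\ref{prop:unravel}: when the active set is $\structureunravelkint{\fA}$, a position $f$ is safe if $f([\pi,a]) = a$ for every $[\pi,a] \in \mydom{f}$; when the active set is $\fA$, safety requires a single $\pi$ in the restricted $\Pi$ such that $f(a) = [\pi,a]$ for every $a \in \mydom{f}$. The initial (empty) position is vacuously safe, so it suffices to show that Duplicator can respond to every Spoiler move from a safe position with a move to another safe position that is still a valid partial rigid homomorphism (or a partial isomorphism at type~(i) positions).

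I would then perform a case analysis on the Spoiler move, largely reusing the cases from Proposition~\ref{prop:unravel}. Because the base-guarded-interface game strictly alternates, the only non-switching moves to consider are extension moves (type~(i) to type~(ii), where Spoiler picks a new set $X'$ of size at most $k$) and collapse moves (type~(ii) to type~(i), where Spoiler restricts to a $\sigmab$-guarded subset of $\mydom{f}$). Collapse moves are handled by restricting $f$ accordingly, preserving both safety and the current $\pi$. Extension moves in active set $\structureunravelkint{\fA}$ are handled as before by sending each $[\pi',a'] \in X'$ to $a'$. Active-set-switching moves at type~(i) positions are identical to the corresponding cases of Proposition~\ref{prop:unravel}: in one direction the new $f^{-1}$ is immediately safe, and in the other direction the witnessing $\sigmab$-atom lifts, by the definition of $\structureunravelkint{\fA}$, to a common node $\pi$ representing all the relevant elements.

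The main obstacle is handling extension moves with active set $\fA$: Duplicator must append $X'$ to the current $\pi$ and obtain $\pi \cdot X' \in \Pi$ in the \emph{restricted} sense, so the intersection of $X'$ with the last bag $X_{\text{last}}$ of $\pi$ must be $\sigmab$-guarded. To ensure this, I would maintain an auxiliary invariant: whenever we arrive at a type~(i) safe position with active set $\fA$, the last bag of $\pi$ equals $\mydom{f}$ (which is $\sigmab$-guarded at type~(i) positions). This invariant is enforced by modifying each collapse move so that Duplicator additionally appends the newly chosen $\sigmab$-guarded subset $X'' \subseteq X_{\text{last}}$ as a further bag in $\pi$; the intersection $X_{\text{last}} \cap X'' = X''$ is $\sigmab$-guarded by assumption, so this extension is legal in the restricted $\Pi$, and the $a$-equivalence class $[\pi, a]$ is unchanged for each $a \in X''$. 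With this invariant in place, every subsequent extension $\pi \cdot X'$ appends to a $\sigmab$-guarded last bag, so $X_{\text{last}} \cap X'$ is a subset of a $\sigmab$-guarded set and hence $\sigmab$-guarded. The verification that each new position is a valid partial rigid homomorphism or isomorphism then proceeds exactly as in Proposition~\ref{prop:unravel}.
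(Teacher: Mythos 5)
Your argument follows the same route as the paper's: mirror the safe-position strategy of Proposition~\ref{prop:unravel} and use the strict alternation of the base-guarded-interface game to guarantee that every new bag Duplicator appends meets the previous one in a $\sigmab$-guarded set. The paper's proof is terser and leaves implicit precisely the bookkeeping you make explicit: before attempting an extension $\pi\cdot X'$, the witness sequence $\pi$ must be re-based so that its last bag coincides with the ($\sigmab$-guarded) domain of the current type~(i) position, and your auxiliary invariant, enforced by inserting the collapsed set $X''$ as an extra bag (legal since $X_{\text{last}}\cap X''=X''$ is $\sigmab$-guarded and the classes $[\pi,a]$ are unchanged), is the right way to discharge this. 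One case you should also cover explicitly: a switch-and-extend move from a type~(i) position whose active set is $\structureunravelkint{\fA}$ does not pass through a collapse, yet it leads to an extension in $\fA$ whose witness is the common node $\pi$ obtained from the $\sigmab$-guard of the domain, and the last bag of that $\pi$ may strictly contain the set $\{a_1,\dots,a_n\}$ of underlying elements. The fix is the one you already use: first append the bag $\{a_1,\dots,a_n\}$ (legal, since it is $\sigmab$-guarded and contained in the last bag of $\pi$) and only then append Spoiler's new set $X'$. With that addition the proof is complete and matches the paper's intent.
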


\begin{proof}
The proof is similar to Proposition~\ref{prop:unravel}.
The delicate part of the argument is when
Spoiler selects some new elements $X'$ in $\fA$
starting from a safe position $f$
(for which there is some $\pi$ such that $f(a) = [\pi,a]$ for all $a \in \mydom{f}$).
We need to show that
$[\pi',a']$ for $a' \in X'$ and $\pi' = \pi \cdot X'$ is well-defined in $\structureunravelkint{\fA}$.
This is well-defined only 
if the overlap
between the elements in $\pi$ and $\pi'$ is 
base-guarded.
But because the base-guarded-interface $\gnk$ bisimulation game
strictly alternates between type (i) and (ii) positions,
Spoiler can only select new elements $X'$ in a type (i) position,
so the overlap satisfies this requirement.
The remainder of the proof is the same as in Proposition~\ref{prop:unravel}.
\end{proof}

We can conclude the proof of Proposition~\ref{prop:guarded-interface-dec-appendix} as follows.
Assume that $\fA$ is a set of facts that satisfies $\varphi$.
By Proposition~\ref{prop:nf},
we can convert to an equivalent $\varphi' \in \agnfk$ in $\nf$
with width $k \leq \mysize{\varphi}$.
Since $\fA$ satisfies $\varphi'$,
Propositions~\ref{prop:unravelguarded}~and~\ref{prop:bisim-game-guarded} imply that
$\structureunravelkint{\fA}$ also satisfies $\varphi' \in \agnfk$.
Hence, we can conclude that the unravelling $\structureunravelkint{\fA}$
is a base-guarded-interface $k$-tree-like witness
for $\varphi$.

\clearpage

\section{Reduction of $\owqalin$ to $\owqa$ \\ (Proof of Lemmas~\ref{lemma:guarded-interface-dec}~and~\ref{lemma:reducelin}
for Proposition~\ref{prop:rewritelin})}
\label{app:owqalinred}
\newcommand{\instanceg}{\instance'}

Recall the statement of Proposition~\ref{prop:rewritelin},
which describes the reduction from $\owqalin$ to $\owqa$:

\begin{quote}
{\bf Proposition~\ref{prop:rewritelin}.} 
For any finite set of facts $\instance_0$,
  constraints $\Sigma \in \acgnf$,
  and
  base-covered UCQ $Q$,
  we
  can compute $\instance_0'$ and $\Sigma' \in \agnf$ in $\ptime$
  such that
  $\owqalin(\instance_0, \Sigma, Q) \text{ iff } 
  \owqa(\instance_0', \Sigma', Q)$.
\end{quote}

Specifically,
$\instance_0'$ is $\instance_0$ together with facts $G(a,b)$ for every pair $a,b \in \elems{\instance_0}$,
where $G$ is some fresh binary base relation.
$\Sigma'$ consists of $\Sigma$ together with the $k$-guardedly linear axioms
  for each distinguished relation, where $k$ is  $\max(\mysize{\Sigma \wedge
  \neg Q}, \arity{\sigma \cup \{G\}})$.

Recall that the \emph{$k$-guardedly linear axioms}
require that each binary relation $\drel$ is:
\glinaxioms{itemize}
The idea is that these axioms are strong enough to enforce conditions about
transitivity and irreflexivity within ``small'' sets of elements~--- intuitively,
within sets of at most $k$ elements
that appear together in some bag of a $(k-1)$-width tree decomposition.

The proof of the correctness of the reduction is described in the body of the paper,
but relies on Lemmas~\ref{lemma:guarded-interface-dec}~and~\ref{lemma:reducelin},
which we prove now.

\subsection{Proof of Lemma~\ref{lemma:guarded-interface-dec}}

Recall the statement:
\begin{quote}
{\bf Lemma~\ref{lemma:guarded-interface-dec}.} 
The sentence $\Sigma' \wedge \neg Q$ has base-guarded-interface $k$-tree-like
  witnesses for $k = \max(\mysize{\Sigma \wedge \neg Q}, \arity{\sigma \cup
  \{G\}})$.
\end{quote}

By Proposition~\ref{prop:nf} and Proposition~\ref{prop:guarded-interface-dec-appendix},
$\Sigma \wedge \neg Q$ has a base-guarded-interface $k$-tree-like witness
for $k = \mysize{\Sigma \wedge \neg Q}$.

To prove this lemma, then, it suffices to argue that
the \mbox{$k$-guardedly linear} axioms can also be written
in $\nf$ $\agnf$ with width at most
$k$.

The guardedly total axiom is written in $\nf$ $\agnf$
as
\[
\neg \exists x y ( \guardedbg(x,y) \wedge \neg ( x = y \vee  x \drel y \vee y \drel x ) )
\]
with width at most $k$. The irreflexive axiom is already written in normal form
$\agnf$ with width at most $k$.
For the $k$-guardedly transitive axioms,
note that $\psi_l(x,y)$ has width $l+1$
and $\psi_l(x,x)$ has width $l$, so that 
each of the $k$-guardedly transitive axioms has width at most $k$: this uses the
fact that the width of the $\guardedbg$-atoms have arity at most
$\arity{\sigma \cup \{G\}}$, and we know that $k \geq \arity{\sigma \cup \{G\}}$

Therefore, unlike the property of being a linear order,
the $k$-guardedly linear restriction can be expressed in $\agnf$,
and can even be written in $\nf$ $\agnf$ of width at most $k$.
Overall, this means that if $\Sigma \wedge \neg Q$ has width at most $k$
when converted into $\nf$
then $\Sigma' \wedge \neg Q$
also has width at most $k$.
Hence,
the sentence $\Sigma' \wedge \neg Q$ has base-guarded-interface $k$-tree-like witnesses for
   $k = \mysize{\Sigma \wedge \neg Q}$,
   by Proposition~\ref{prop:guarded-interface-dec-appendix}.
 
\subsection{Proof of Lemma~\ref{lemma:reducelin}}

Recall the statement:
\begin{quote}
{\bf Lemma~\ref{lemma:reducelin}.} 
If there is $\instance' \supseteq \instance_0'$
that satisfies $\Sigma' \wedge \neg Q$
and has a $\instance_0'$-rooted base-guarded-interface $(k-1)$-width tree decomposition,
then there is $\instance'' \supseteq \instance'$
that satisfies $\Sigma' \wedge \neg Q$
where each distinguished relation is a strict linear order.
\end{quote}

We start with some auxiliary lemmas
about base-guarded-interface tree decompositions.

\myparagraph{Transitivity lemma}
We first prove a result about transitivity
for sets of facts with
base-guarded-interface tree decompositions.

\begin{lemma}\label{lem:guarded-transitivity}
Suppose $\instanceg$ is a set of facts with a
$\instanceg_0$-rooted $(k-1)$-width base-guarded-interface tree decomposition $(T, \child ,\lambda)$.
If $\instanceg$ is $k$-guardedly transitive with respect to binary relation $\drel$,
and there is a $\drel$-path $a_1 \dots a_n$
where the pair $\{a_1, a_n\}$ is base-guarded,
then $a_1 \drel a_n \in \instanceg$.
\end{lemma}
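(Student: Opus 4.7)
Plan.

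We argue by strong induction on $n$. For the base case $n \le k$ (or $n \le k+1$ when $a_1 = a_n$), the path directly witnesses the premise of the $k$-guardedly transitive axiom, which combined with the base-guardedness of $\{a_1, a_n\}$ yields $a_1 \drel a_n \in \instanceg$. For the inductive step $n > k$, the plan is to build a strictly shorter $\drel$-path from $a_1$ to $a_n$ in $\instanceg$ and reapply induction.

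Fix a bag $B^*$ of $(T, \child, \lambda)$ containing a base-guard fact for $\{a_1, a_n\}$; hence $a_1, a_n \in B^*$. For each $i < n$, pick a bag housing the fact $a_i \drel a_{i+1}$, chosen as high as possible in $T$, and let $S$ be the minimal connected subtree of $T$ containing $B^*$ together with these chosen bags. If $S = \{B^*\}$, all path elements lie in $B^*$; since $n > k$ forces $B^*$ to be the root $\instanceg_0$, every $a_i \in \elems{\instanceg_0}$, so every pair is base-guarded via a $G$-fact, and we iteratively apply the axiom at $\psi_2$ along the path to conclude $a_1 \drel a_n$.

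Otherwise $|S| \ge 2$, and we pick a non-root leaf $L$ of $S$ (this is possible because $S$, viewed as a tree with at least two nodes, has at least two leaves, and at most one can coincide with the root of $T$). Let $E$ be the interface of $L$ with its $T$-parent; by the base-guarded-interface property, $E$ is base-guarded by a single fact, so every subset of $E$ is base-guarded, and $|L| \le k$. The bag-placement convention forces any $\drel$-fact we assigned to $L$ to have at least one endpoint in $L \setminus E$ (otherwise the fact could have been placed in the $T$-parent of $L$), so the path visits $L \setminus E$. Take a maximal consecutive run $a_p, \ldots, a_q$ of path elements lying in $L$ that contains an $L \setminus E$ element. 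When $p > 1$, the fact $a_{p-1} \drel a_p$ is in a bag outside $L$, so by the connected-subtree property $a_p$ also sits in the $T$-parent of $L$, placing $a_p \in E$; symmetrically $a_q \in E$ when $q < n$. If both $p > 1$ and $q < n$, then $\{a_p, a_q\} \subseteq E$ is base-guarded and $2 \le q - p \le |L| - 1 \le k - 1$, so the axiom delivers $a_p \drel a_q$ directly, contracting the run and producing the desired shorter path.

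The remaining case is when the run meets a path endpoint, say $a_1 \in L \setminus E$. Then $a_1$ appears only in $L$, which forces $B^* = L$ and hence $a_n \in L$. The path then oscillates in and out of $L$, and each external excursion $a_j, \ldots, a_{j'}$ satisfies $a_j, a_{j'} \in E$ and has length strictly less than $n$, so the inductive hypothesis contracts it to a direct edge $a_j \drel a_{j'}$; after all excursions are contracted, the residual path lies entirely in $L$, has length at most $|L| \le k$, and the base case finishes. The main obstacle is the bookkeeping in this last step---tracking which elements sit in $L$ versus $E$, and showing that the leaf $L$ always admits a contractible run---with the bag-placement convention being what rules out degenerate leaves that carry only interface elements.
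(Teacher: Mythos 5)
Your overall strategy -- locate interface pairs along the path, contract them with the $k$-guardedly transitive axiom, and induct -- is the same as the paper's, and your base case and the ``internal run'' contraction are essentially right. But there is a genuine gap in the inductive step: your induction is on $n$ alone, and there is a configuration in which no strictly shorter path is produced. Concretely, suppose the bag $B^*$ guarding $\{a_1,a_n\}$ mentions only $a_1$ and $a_n$, and the entire path $a_2,\dots,a_{n-1}$ lives in the subtree below one child of $B^*$ (so $B^*$ itself is a non-root leaf of your Steiner tree $S$, with no $\drel$-edge assigned to it). Then the only runs in $L=B^*$ are the singletons $\{a_1\}$ and $\{a_n\}$, you land in your ``endpoint'' case, and the unique ``external excursion'' is the whole path $a_1,\dots,a_n$: it does \emph{not} have length strictly less than $n$, so the inductive hypothesis cannot be invoked and the argument is circular. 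The paper avoids this by inducting lexicographically on the pair (path length, number of nodes of the minimal subtree $T'$ spanning the path and the guard node): when the path cannot be shortened, the recursive call descends into the subtree on the far side of an interface, which strictly shrinks $T'$. You need some such secondary measure; choosing ``a non-root leaf of $S$'' arbitrarily does not guarantee a contractible internal run exists at that leaf.

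Two smaller issues. First, your ``bag-placement convention'' is not licensed by this paper's definition of tree decomposition: bags are \emph{sets of facts}, and a fact need not appear in every bag whose element set covers its arguments, so ``otherwise the fact could have been placed in the $T$-parent'' is not a valid inference. The repair is to assign each edge $a_i \drel a_{i+1}$ to the highest node whose bag \emph{mentions} both endpoints (this set of nodes is connected, so the highest one is well defined); only the endpoints' membership in bags is used downstream, so this substitution is harmless. Second, your case split conflates ``the run reaches index $1$ or $n$'' with ``the path endpoint lies in $L \setminus E$'': a run with $q=n$ but $a_n \in E$ is still contractible, and conversely an excursion leaving $L$ through a \emph{child} interface has its endpoints base-guarded by that child interface rather than by $E$; you should also note (as the paper does) that when the relevant interface is with the root, base-guardedness of the pair comes from the $G$-facts on $\elems{\instance_0}$ rather than from the interface condition, and that one may assume the path has no repeated elements so that runs inside a bag of size $\leq k$ really have at most $k-1$ edges.
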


\begin{proof}
Suppose there is an $\drel$-path $a_1 \dots a_n$
and that the pair $\{a_1, a_n\}$ is base-guarded, with $v$ a node
where $a_1, a_n$ appear together.
We can assume that $a_1 \dots a_n$ is a minimal $\drel$-path
between $a_1$ and $a_n$,
so there are no repeated intermediate elements.
Consider a minimal subtree $T'$ of $T$ containing
$v$ and containing all of the elements $a_1 \dots a_n$.
We proceed by induction on the length of the path
and on the number of nodes of $T'$ (with the lexicographic order on this pair)
to show that $a_1 \drel a_n$ is in~$\instanceg$.

If all elements $a_1 \dots a_n$ are represented at $v$,
then either (i) all elements are in the root 
or (ii) the elements are in some internal node.
For (i), by construction of $\instanceg_0$,
every pair of elements in $a_1 \dots a_n$ is guarded (by $G$).
Hence, repeated application of the axiom
\[
\forall x y z ( (x \drel z \wedge z \drel y \wedge \guardedbg(x,y)) \rightarrow x \drel y )
\]
(which is part of the $k$-guardedly transitive axioms)
is enough to ensure that
$a_1 \drel a_n$ holds.
For (ii), since the bag size of an internal node is at most $k$,
we must have $n \leq k$, in which case an application of the
  $k$-guardedly transitive axiom to the guarded pair $\{a_1, a_n\}$ ensures that $a_1 \drel a_n$ holds.
This covers the base case of the induction.

Otherwise,
there must be some $1 \leq i < j \leq n$ such that
$a_i$ and $a_j$ are represented at $v$,
but $a_{i'}$ is not represented at $v$ for $i < i' < j$ (in particular $a_{i+1}$
is not represented at~$v$).
We claim that $a_i$ and $a_j$ must be in an interface together.

We say $a_{i+1}$ is \emph{represented in the direction of $v'$}
if $v'$ is a child of $v$ and $a_{i+1}$ is represented in the subtree
rooted at~$v'$,
or $v'$ is the parent of $v$ and $a_{i+1}$ is represented in the tree obtained
from~$T'$
by removing the subtree rooted at $v$.
Note that by definition of a tree decomposition,
since $a_{i+1}$ is not represented at $v$,
it can only be represented in at most one direction.

Let $v_{i+1}$ be the neighbor (child or parent) of $v$ such that
$a_{i+1}$ is represented in the direction of $v_{i+1}$.
It is straightforward to show that
$a_i$ and $a_{j}$ must both be represented in the subtree in the direction of~$v_{i+1}$
in order to witness the facts $a_i \drel a_{i+1}$ and $a_{j-1} \drel a_{j}$.
But $a_i$ and $a_j$ are both in $v$,
so they must both be in $v_{i+1}$.
Hence, $a_i$ and $a_j$ are in the interface between $v$ and $v_{i+1}$.

If this is an interface with the root node, then the pair $a_i, a_j$
is base-guarded (by definition of $\instanceg_0$).
Otherwise, the definition of base-guarded-interface tree decompositions
ensures that they are base-guarded.

Hence, we can apply the inductive hypothesis to the path $a_i \dots a_j$
and the subtree $T''$ of $T'$ in the direction of $v_{i+1}$
to conclude that $a_i \drel a_j$ holds
(we can apply the inductive hypothesis
because $T''$ is smaller than $T'$ as we removed $v$, and $a_i \ldots a_j$ is no
longer than $a_1 \ldots a_n$).
If $i = 1$ and $j = n$, then we are done.
If not,
then we can apply the inductive hypothesis to the new, strictly shorter path
$a_1 \dots a_i a_j \dots a_n$ in $T'$
and conclude that $a_1 \drel a_n$ is in $\instanceg$ as desired.
\end{proof}

\myparagraph{Cycles lemma}
We next show that within base-guarded-interface tree decompositions,
$k$-guarded transitivity and irreflexivity imply
cycle-freeness.

\begin{lemma} \label{lem:nobadcycle}
Suppose $\instanceg$ is a set of facts with a
$\instanceg_0$-rooted $(k-1)$-width base-guarded-interface tree decomposition $(T, \child ,\lambda)$.
If $\instanceg$ is $k$-guardedly transitive and irreflexive with respect to $\drel$,
then $\drel$ in $\instanceg$ cannot have a cycle.
\end{lemma}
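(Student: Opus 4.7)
The plan is to argue by contradiction: suppose $\drel$ has a cycle in $\instance'$ and pick one $a_1 \drel a_2 \drel \cdots \drel a_n \drel a_1$ of minimum length, so in particular all $a_i$ are distinct. We split on whether $n$ is small or large compared to~$k$.

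If $n \leq k$, the $k$-guardedly transitive axiom in its diagonal form $\forall x\, (\psi_n(x,x) \rightarrow x \drel x)$ fires on $x := a_1$ and produces $a_1 \drel a_1$, contradicting irreflexivity. Crucially, the $\psi_l(x,x)$ instances of the $k$-guardedly transitive axioms carry no $\guardedbg$ side-condition, so this step requires no base-guardedness hypothesis on~$a_1$.

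For $n > k$, we reduce to the previous case by exhibiting a strictly shorter cycle in $\instance'$, contradicting minimality. Since each non-root bag has at most $k$ elements but the cycle has $n > k$ distinct vertices, the cycle cannot be confined to a single bag. Let $T'$ be a minimal subtree of $T$ whose bags collectively witness every edge $a_i \drel a_{i+1}$ of the cycle; then $T'$ has at least two nodes, and every leaf $v$ of $T'$ has an interface $E$ with its neighbor in $T'$ that is base-guarded --- by the base-guarded-interface hypothesis on $T$, or, in the boundary case where the interface meets the root, by the $G$-facts added to $\instance_0'$ which render all root pairs base-guarded. Minimality of $T'$ forces $v$ to contain at least one cycle element outside~$E$, so the cycle admits a maximal contiguous sub-path $a_i \drel \cdots \drel a_j$ with $a_i, a_j \in E$ whose interior lives strictly on the $v$-side of $E$ and is non-empty. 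The endpoint pair $\{a_i, a_j\}$ is base-guarded, so Lemma~\ref{lem:guarded-transitivity} applied to the sub-path yields either the edge $a_i \drel a_j$ (letting us splice in a single edge in place of the sub-path to get a strictly shorter cycle, contradicting minimality) or, when $a_i = a_j$, the outright self-loop forbidden by irreflexivity.

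The main obstacle is the combinatorial bookkeeping in the $n > k$ case: fixing the right notion of witnessing subtree $T'$, arguing that each of its leaves carries a cycle element outside the interface with its neighbor (otherwise minimality of $T'$ fails, since $v$ would be redundant), and checking that the shortcut sub-path we extract has non-empty interior so that splicing in the derived edge strictly shortens the cycle. All the remaining steps are routine invocations of Lemma~\ref{lem:guarded-transitivity} and the $k$-guardedly transitive axioms.
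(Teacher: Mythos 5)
You follow the paper's overall strategy---take a minimal cycle, kill short cycles with the transitivity axioms, and shorten long cycles by locating a base-guarded pair $\{a_i,a_j\}$ with a non-trivial detour between them and invoking Lemma~\ref{lem:guarded-transitivity}---and your handling of the case $n \leq k$ via the unguarded diagonal axiom for $\psi_n(x,x)$ is correct, and arguably cleaner than the paper's case split on whether all elements sit in one bag. But the case $n > k$ has a concrete hole: the claim that the cycle ``cannot be confined to a single bag'' is false, because the width bound $k$ applies only to non-root bags, while the root bag carries all of $\instance_0'$ and is unbounded. A cycle of length $n>k$ whose elements and $\drel$-facts all sit in the root bag makes your witnessing subtree $T'$ a single node, and the leaf/interface machinery never starts. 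The paper treats this as a separate case: every pair of root elements is guarded by a $G$-fact of $\instance_0'$, so the binary guarded-transitivity axiom (equivalently, Lemma~\ref{lem:guarded-transitivity} applied to the sub-path $a_1 a_2 a_3$) yields $a_1 \drel a_3$ and a strictly shorter cycle. You clearly have this ingredient in hand---you invoke the root $G$-facts when discussing interfaces---but you never apply it to the root-confined cycle itself.

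A second, smaller worry: the step ``minimality of $T'$ forces $v$ to contain a cycle element outside $E$'' does not follow as stated. Minimality of a subtree witnessing every edge \emph{fact} only guarantees that some edge fact lies exclusively in $\lambda(v)$ among the bags of $T'$; both endpoints of that fact could still lie in the interface $E$, since a tree decomposition need not place a fact in every bag containing its elements. In that situation your extracted sub-path has empty interior and splicing does not shorten the cycle. The paper sidesteps this by reasoning about element occurrences rather than fact placement: it fixes a node $v$ whose bag contains both $a_1$ and $a_n$ (such a node exists because $a_n \drel a_1$ is a fact), notes that not all cycle elements are represented at $v$, picks indices $i<j$ with $a_i,a_j$ represented at $v$ but no intermediate $a_{i'}$ represented there, and shows $a_i,a_j$ both lie in the (base-guarded) interface toward the neighbor in whose direction $a_{i+1}$ lives. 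That device delivers exactly the base-guarded pair with non-empty detour that your splicing step needs; I would adopt it in place of the minimal witnessing subtree.
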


\begin{proof}
Suppose for the sake of contradiction that there is a cycle
$a_1 \dots a_n a_1$ in $\instanceg$ using relation $\drel$.
Take a minimal length cycle.

If elements $a_1 \dots a_n$ are all represented in a single node in $T$,
then either (i) all elements are in the root 
or (ii) the elements are in some internal node.
For (i), by construction of $\instanceg_0$,
every pair of elements in $a_1 \dots a_n$ is guarded (by $G$).
Hence, repeated application of the axiom
\[
\forall x y z ( (x \drel z \wedge z \drel y \wedge \guardedbg(x,y)) \rightarrow x \drel y )
\]
(which is part of the $k$-guardedly transitive axioms)
would force $a_1 < a_1$ to be in $\instanceg$,
which would contradict irreflexivity.
  Likewise, for (ii), since the bag size of an internal node is at most $k$,
we must have $n \leq k$, so we can apply the $k$-guardedly transitive axioms
to deduce $a_1 \drel a_1$, which contradicts irreflexivity.

Even if this is not the case, then since $a_n \drel a_1$ holds,
there must be some node $v$ in which both $a_1$ and $a_n$ are represented.
Since not all elements are represented at $v$, however,
there is $1 \leq i < j \leq n$ such that
$a_i$ and $a_j$ are represented at $v$,
but $a_{i'}$ is not represented at $v$ for $i < i' < j$.
We claim that $a_i$ and $a_j$ must be in an interface together.
Observe that $a_{i+1}$ is not represented at $v$.
Let $v_{i+1}$ be the neighbor of $v$ such that
$a_{i+1}$ is represented in the subtree in the direction of $v_{i+1}$.
It is straightforward to show that
$a_i$ and $a_{j}$ must both be represented in the subtree of $T'$ in the direction of $v_{i+1}$
in order to witness the facts $a_i \drel a_{i+1}$ and $a_{j-1} \drel a_{j}$.
But $a_i$ and $a_j$ are both in $v$,
so they must both be in $v_{i+1}$.
Hence, $a_i$ and $a_j$ are in the interface between $v$ and $v_{i+1}$.
If this is an interface with the root node, then the pair $a_i, a_j$
is base-guarded (by definition of $\instanceg_0$);
otherwise, the definition of base-guarded-interface tree decomposition
ensures that they are base-guarded.
By Lemma~\ref{lem:guarded-transitivity} this means that $a_i \drel a_j$ holds.
Hence, there is a strictly shorter cycle $a_1 \dots a_i a_j \dots a_n a_1$,
contradicting the minimality of the original cycle.
\end{proof}

\myparagraph{Base-coveredness lemma}
Lastly, we note that adding only facts about unguarded sets of elements cannot impact $\acgnf$ constraints.
This is where we are utilizing the base-coveredness assumption.

\begin{lemma}\label{lemma:acov}
Let $\instanceg' \supseteq \instanceg$
with additional facts about distinguished relations, 
but no new facts about base-guarded tuples of elements.
Let $\varphi(\vec{x}) \in \acgnf$.
If $\instanceg, \vec{a}$ satisfies $\varphi(\vec{x})$
then $\instanceg', \vec{a}$ satisfies $\varphi(\vec{x})$.
\end{lemma}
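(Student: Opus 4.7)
The plan is to convert to normal form and then proceed by structural induction. First I would apply Proposition~\ref{prop:nf} to replace $\varphi$ by an equivalent $\varphi^* \in \nf$ $\acgnf$; since the NF rewriting preserves polarity of subformulas, it suffices to argue in $\nf$ $\acgnf$. Because the proof must traffic across negations, I would strengthen the claim to a simultaneous induction: for every subformula $\psi$ of $\varphi^*$ and every tuple $\vec{c}$, $\instanceg \models \psi(\vec{c}) \Rightarrow \instanceg' \models \psi(\vec{c})$ when $\psi$ occurs under an even number of negations (positive position), and $\instanceg' \models \psi(\vec{c}) \Rightarrow \instanceg \models \psi(\vec{c})$ when $\psi$ occurs under an odd number (negative position). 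The lemma is then the positive case applied to the sentence $\varphi^*$ at $\vec{a}$.

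The base case ($\sigmab$-atom) is immediate: the positive direction follows from $\instanceg \subseteq \instanceg'$, and the negative direction from the fact that $\instanceg$ and $\instanceg'$ agree on all $\sigmab$-facts. Disjunctions are handled componentwise, and for guarded negations $A(\vec{x}) \wedge \neg \chi(\vec{x}')$ the guard $A$ falls under the base case while $\chi$ has flipped polarity and is handled by the induction hypothesis. The positive CQ-shaped case $\delta[Y_1 \mapsfrom \psi_1, \ldots, Y_n \mapsfrom \psi_n]$ is routine monotonicity: any witness for the existentials of $\delta$ that works in $\instanceg$ still works in $\instanceg'$, since $\sigma$-atom conjuncts are preserved by $\instanceg \subseteq \instanceg'$ and the $Y_i$-conjuncts instantiate to $\psi_i$ in the same positive position. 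This case needs no base-coveredness.

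The crucial case, and the main obstacle, is the negative CQ-shaped formula $\psi = \delta[Y_1 \mapsfrom \psi_1, \ldots, Y_n \mapsfrom \psi_n]$. Given a witness $\vec{b}$ for the existentials of $\delta$ satisfying its body in $\instanceg'$, I would show the same $\vec{b}$ satisfies the body in $\instanceg$. Here the base-covered $\nf$ condition is essential: for every conjunct of $\delta$, some $\sigmab$-atom or $\sigmab$-guardedness predicate among the conjuncts covers its free variables, and since that guard only speaks about base relations, and $\instanceg, \instanceg'$ agree on $\sigmab$-facts, it already holds in $\instanceg$, witnessing that the tuple chosen for the conjunct's variables is base-guarded in $\instanceg$. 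Consequently $\sigmab$-atom conjuncts transfer trivially; $\sigmad$-atom conjuncts $R^\trans(\vec{z})$ transfer because by hypothesis $\instanceg'$ adds no new distinguished facts on base-guarded tuples; and $Y_i$-conjuncts reduce to $\psi_i$ in the same negative position, handled by induction. Base-coveredness (as opposed to merely base-guarded negation) is exactly what pins every existentially chosen tuple onto the base-guarded set, which is the only place where $\instanceg$ and $\instanceg'$ are guaranteed to agree on distinguished facts.
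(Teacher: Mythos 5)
Your proposal is correct and follows essentially the same route as the paper's proof: the paper also works in normal form and runs a polarity-sensitive induction (phrased there via the two classes $\acgnfplus$ and $\acgnfminus$, which is exactly your even/odd-negation bookkeeping), with the same treatment of the negative CQ-shaped case, where base-coveredness forces every existentially chosen tuple for a $\sigmad$-conjunct onto a base-guarded set on which $\instanceg$ and $\instanceg'$ agree.
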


\begin{proof}
We assume without loss of generality that
$\varphi$ is in $\nf$ $\acgnf$.

Let $\acgnfplus$ (respectively, $\acgnfminus$)
denote the $\nf$ $\agnf$ formulas
where the covering requirements
(distinguished atoms in CQ-shaped subformulas are appropriately base-guarded)
are required for
positively occurring (respectively, negatively occurring)
CQ-shaped formulas.
Observe that $\acgnf = \acgnfminus$.

We prove a slightly stronger result:
\begin{quote}
For $\varphi(\vec{x}) \in \acgnfminus$: \\
$\instanceg, \vec{a}$ satisfies $\varphi(\vec{x})$ implies $\instanceg', \vec{a}$ satisfies $\varphi(\vec{x})$. \\
For $\varphi(\vec{x}) \in \acgnfplus$: \\
$\instanceg', \vec{a}$ satisfies  $\varphi(\vec{x})$ implies $\instanceg, \vec{a}$ satisfies  $\varphi(\vec{x})$.
\end{quote}

We proceed by induction on the structure of~$\varphi$.
The base case for a $\sigmab$ atom is immediate.
The inductive case for disjunction is also immediate.

Suppose $\varphi := A(\vec{x}) \wedge \neg \varphi'(\vec{x})$,
and $\varphi \in \acgnfminus$.
If $\instanceg,\vec{a}$ satisfies $\varphi(\vec{x})$,
then $\instanceg',\vec{a}$ satisfies $A(\vec{x})$ by the inductive hypothesis.
We must also have $\instanceg',\vec{a}$ satisfies $\neg \varphi'(\vec{x})$,
for if not,
then $\instanceg',\vec{a}$ satisfies $\varphi'(\vec{x})$ (for $\varphi' \in \acgnfplus$),
so the inductive hypothesis implies that $\instanceg,\vec{a}$ satisfies $\varphi'(\vec{x})$, a contradiction.
Hence, $\instanceg', \vec{a}$ satisfies  $\varphi(\vec{x})$ as desired.
The proof is similar starting from $\varphi \in \acgnfplus$.

That leaves only the general CQ-shaped formula case.
Suppose $\varphi := \exists \vec{y} \, ( \beta_1(\vec{x}_1\vec{y}_1) \wedge \dots \wedge \beta_j(\vec{x}_j\vec{y}_j) )$,
where $\vec{x}_i$ and $\vec{y}_i$ denote the tuple of variables from $\vec{x}$ and $\vec{y}$ used by $\beta_i$.

If $\varphi$ is in $\acgnfminus$,
then there are no covering restrictions for this CQ
since it appears positively.
If $\instanceg, \vec{a}$ satisfies $\varphi(\vec{x})$,
then there exists $\vec{b}$,
such that $\instanceg,\vec{a}_i\vec{b}_i$ satisfies $\beta_i$ for all $1 \leq i \leq j$.
But $\instanceg' \supseteq \instanceg$, so this witness $\vec{b}$ and
the corresponding facts also appear in $\instanceg'$,
and $\instanceg',\vec{a}$ satisfies $\varphi$.

If $\varphi$ is in $\acgnfplus$
and $\instanceg', \vec{a}$ satisfies  $\varphi$,
then there is some $\vec{b}$ such
that $\instanceg', \vec{a}_i\vec{b}_i$ satisfies $\beta_i$ for all $1 \leq i \leq j$.
It suffices to show that $\instanceg, \vec{a}_i\vec{b}_i$ satisfies  $\beta_i$ for all $1 \leq i \leq j$.
Consider the possible $\beta_i$.
If $\beta_i$ is a $\sigmab$-atom, then $\instanceg, \vec{a}_i\vec{b}_i$ satisfies $\beta_i$, since $\instanceg$
has the same $\sigmab$-facts as $\instanceg'$.
If $\beta_i$ is a $\sigmad$-atom, then the covering requirements ensure that
there is some $\sigmab$-atom $\beta_j$ in $\varphi$
including at least the free variables $\vec{x}_i\vec{y}_i$ of $\beta_i$.
This means $\vec{a}_i\vec{b}_i$ is base-guarded.
Since $\instanceg$ and $\instanceg'$ agree on facts about base-guarded tuples like this, $\instanceg,\vec{a}_i\vec{b}_i$ satisfies $\beta_i$.
Finally, if $\beta_i$ is some structurally simpler $\acgnf$ formula,
then the inductive hypothesis ensures that
$\instanceg,\vec{a}_i\vec{b}_i$ satisfies $\beta_i$.
\end{proof}

\myparagraph{Final proof of Lemma~\ref{lemma:reducelin}}
\newcommand{\instancegtrans}{\mathcal{G}}
\newcommand{\instancegext}{\instanceg'}
We are now ready to prove Lemma~\ref{lemma:reducelin}:

We start with some $\instanceg \subseteq \instanceg_0$ satisfying $\Sigma' \wedge \neg Q$
with a 
$\instanceg_0$-rooted $(k-1)$-width base-guarded-interface tree decomposition.
We prove that there is an extension $\instancegext$ of $\instanceg$ satisfying $\Sigma' \wedge \neg Q$
in which each distinguished relation is a strict linear order.
Note that because $\instanceg$ satisfies $\Sigma'$,
we know that $\instanceg$ is $k$-guardedly linear.

We present the argument when
there is one $\drel$ in $\sigmad$ that is not a strict linear order in $\instanceg$,
but the argument is similar if there are multiple distinguished relations like
this, as we can handle each distinguished relation independently
with the method
that we will present.
Let $\instancegtrans$ be the extension of $\instanceg$ obtained by taking $\drel$ in $\instancegtrans$
to be the
transitive closure of $\drel$ in $\instanceg$.
Suppose for the sake of contradiction that there is a $\drel$-cycle in $\instancegtrans$.
We proceed by induction on the number of facts from $\instancegtrans \setminus \instanceg$ used in this cycle.
If there are no facts from $\instancegtrans \setminus \instanceg$ in the cycle,
Lemma~\ref{lem:nobadcycle} yields the contradiction.
Otherwise,
suppose that there is a cycle involving $(a_1,a_n)$,
where $(a_1,a_n)$ is a $\drel$-fact in $\instancegtrans \setminus \instanceg$ coming from
facts $(a_1,a_2), \dots, (a_{n-1},a_n)$ in $\instanceg$.
By replacing $(a_1,a_n)$ in this cycle with $(a_1,a_2), \dots, (a_{n-1},a_n)$,
we get a (longer) cycle with fewer facts from $\instancegtrans \setminus \instanceg$,
which is a contradiction by the inductive hypothesis.

Since $\drel$ is transitive in $\instancegtrans$,
the relation $\drel$ in $\instancegtrans$ must be a strict partial order.
We now apply the ``order extension principle'' or
``Szpilrajn extension theorem'' \cite{Szpilrajn30}:
any strict partial order can be extended to a strict total order.
From this, we deduce that $\instancegtrans$ can be further extended by additional $\drel$-facts to obtain some $\instancegext$
where $\drel$ is a strict total order.

We must prove that $\instancegext \supseteq \instancegtrans \supseteq \instanceg \supseteq \instanceg_0$
does not include any new $\drel$-facts
about base-guarded tuples.
Suppose for the sake of contradiction that there is a new fact $a \drel b$
in $\instancegext \setminus \instanceg$,
where $\{a,b\}$ is base-guarded in $\instanceg$.
By the guardedly total axiom, it must be the case that there was already
$b \drel a$ in $\instanceg$, and hence also in $\instancegext$.
But $a \drel b$ and $b \drel a$ in $\instancegext$ would together imply
$a \drel a$ in $\instancegext$,
contradicting the fact that $\instancegext$ is a strict linear order.

Hence, $\instanceg$ and $\instancegext$ agree on all facts about base-guarded tuples.
Since $Q$ is base-covered and $\Sigma \in \acgnf$,
$\Sigma \wedge \neg Q \in \acgnf$.
Thus, Lemma~\ref{lemma:acov} guarantees
that $\Sigma \wedge \neg Q$ is still satisfied in $\instancegext$.
Since $\instancegext$ also trivially satisfies all of the $k$-guardedly linear axioms,
it satisfies $\Sigma' \wedge \neg Q$ as required.

This concludes the proof of Lemma~\ref{lemma:reducelin},
and hence the proof of Proposition~\ref{prop:rewritelin}.

\clearpage

\section{Data complexity upper bounds for transitivity}

\subsection{Proof of Theorem~\ref{thm:conptransdataupper}}

We begin with the proof of Theorem   \ref{thm:conptransdataupper}.
Recall the statement:

\medskip

\begin{quote}
{\bf Theorem~\ref{thm:conptransdataupper}.} 
For any $\agnf$ constraints $\Sigma$ and CQ $Q$, given a finite
  set of facts $\instance_0$, we can decide $\owqatc(\instance_0, \Sigma, Q)$ in
  $\conp$ data complexity.
\end{quote}

\medskip

Fix the signature~$\sigma$.

For a set of $\sigma$-facts $\instance$, an \emph{$\instance,k$-rooted structure}
is one  that consists of $\instance$ unioned with sets of facts $T_{\vec c}$ for
$\vec c \in \dom(\instance)^k$ where the domain of $T_{\vec c}$ overlaps with the domain of $\instance$ only in~$\vec c$,
the  facts of $T_{\vec c}$ involving only elements of $\vec c$ are all present in $\instance$, and
for two $k$-tuples $\vec c$ and $\vec{c}'$,  the domain of $T_{\vec c}$ overlaps with the domain of $T_{\vec c'}$ only
within $\vec c \cap \vec{c}'$.

The following proposition follows from Proposition~\ref{prop:transdecomp}.

\begin{proposition} \label{prop:treelike}
For any set of $\sigma$-facts $\instance$, if a $\agnf$ sentence $\Sigma$ over~$\sigma$ is satisfiable by some set of facts containing $\instance$ with
relations $R_i^\trans$ interpreted as the transitive closure of~$R_i$,
then $\Sigma$ is satisfied (with the same restriction)
in an $\instance',k$-rooted structure, where $k$ is at most $\mysize{\sigma}$
 and $\instance'$ is a superset
of $\instance$ that has the same domain.
\end{proposition}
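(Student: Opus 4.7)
The plan is to lift the tree decomposition guaranteed by Proposition~\ref{prop:transdecomp} into the shape required by an $\instance',k$-rooted structure. First, I would apply Proposition~\ref{prop:transdecomp} to $\Sigma$ and the set of facts $\instance$: since $\Sigma \in \agnf$ is fixed in the data-complexity setting, this yields a set of facts $\calG \supseteq \instance$ satisfying $\Sigma$ (with each $R_i^\trans$ interpreted as the transitive closure of $R_i$), together with an $\instance$-rooted tree decomposition $(T,\child,\lambda)$ of width at most $k-1$ for a constant $k$ bounded by $\mysize{\Sigma}$ (and hence bounded in terms of the parameters of $\sigma$).

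Next, let $\instance' \supseteq \instance$ consist of all facts of $\calG$ whose elements all lie in $\dom(\instance)$, so that $\dom(\instance') = \dom(\instance)$. I would replace the root bag by $\instance'$; since this only enlarges the root bag, it still yields a valid tree decomposition of~$\calG$. For each subtree $S$ hanging off the root, let $E_S$ denote the set of elements of $\instance$ appearing in~$S$. By the connectedness property of tree decompositions, every such element must already appear in the bag of the root's child where $S$ is attached, so $|E_S| \leq k$. Choose an arbitrary $k$-tuple $\vec c(S) \in \dom(\instance)^k$ whose set of entries contains $E_S$ (padding with repetitions if $|\dom(\instance)| < k$), and define $T_{\vec c}$ as the union of all facts appearing in root-subtrees $S$ with $\vec c(S) = \vec c$ (and $T_{\vec c} = \emptyset$ otherwise).

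It then remains to verify the three conditions defining an $\instance',k$-rooted structure. The inclusion $\dom(T_{\vec c}) \cap \dom(\instance') \subseteq \vec c$ follows directly from $E_S \subseteq \vec c(S)$. For a fact of $T_{\vec c}$ whose elements all lie in $\vec c \subseteq \dom(\instance)$: this fact belongs to $\calG$ and uses only elements of $\dom(\instance)$, so by construction of $\instance'$ it is already in $\instance'$. Finally, for distinct $\vec c, \vec c'$, the overlap of $T_{\vec c}$ and $T_{\vec c'}$ inside $\dom(\instance)$ is contained in $\vec c \cap \vec c'$ by the previous condition, and outside $\dom(\instance)$ there is no overlap at all: every element of $\dom(\calG) \setminus \dom(\instance)$ occupies a connected set of bags disjoint from the root, hence lies in exactly one root-subtree and thus in at most one $T_{\vec c}$. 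The resulting set of facts coincides with $\calG$, so $\Sigma$ is satisfied under the required interpretation.

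The main obstacle, and really the only nontrivial step, is establishing this compartmentalization of non-root elements among the $T_{\vec c}$'s; this is precisely what the connectedness property of tree decompositions buys us, so once one has absorbed the extra facts on $\dom(\instance)$ into a thickened root~$\instance'$, the rest of the argument is bookkeeping.
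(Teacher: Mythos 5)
Your argument is correct and is exactly the route the paper takes: the paper dispatches this proposition with the single remark that it ``follows from Proposition~\ref{prop:transdecomp}'', and your regrouping of the root's subtrees into the $T_{\vec c}$'s (using connectivity of the tree decomposition to confine each interface $E_S$ to the bag of the attaching child, after absorbing all $\calG$-facts over $\dom(\instance)$ into the root) is precisely the bookkeeping that remark leaves implicit. The only mismatch is with the stated bound $k \leq \mysize{\sigma}$: what you correctly obtain from Proposition~\ref{prop:transdecomp} is $k \leq \mysize{\Sigma}$, which is also the bound the paper itself uses when it invokes this proposition in the proof of Theorem~\ref{thm:conptransdataupper}, so the $\mysize{\sigma}$ in the statement appears to be a typo rather than something your proof fails to deliver.
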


Let $\fo(\sigma)$ denote first-order logic over the signature $\sigma$.
Let $\fo(\sigma \cup \{d_1 \ldots d_k\})$ denote first-order logic over the signature $\sigma$
extended with $k$ new constants, which will be used to represent the overlap elements.
Note that formulas in both $\fo(\sigma)$ and $\fo(\sigma \cup \{d_1 \ldots d_k\})$
can make use of distinguished $R^\trans_i$ relations that are part of $\sigma$.

Given an $\instance,k$-rooted structure
$\frakA$, and  number $j$, the \emph{$j$-abstraction of $\frakA$}
is the expansion of $\instance$ with relations $P_\tau(x_1 \ldots x_k)$ for each 
 $\fo(\sigma \cup \{d_1 \ldots d_k\})$ sentence $\tau$
 of quantifier-rank $j$, up to logical equivalence (so there are finitely many
such relations). $P_\tau(x_1 \ldots x_k)$ is interpreted by the set of $k$-tuples   $\vec c$ such that $T_{\vec c}$ satisfies
$\tau$ when interpreting the constants in $\tau$ by $\vec{c}$.
We let $\sigma_{j,k}$ be the signature of the $j$-abstraction of such structures.

\begin{lemma} \label{lem:decomp}
For any sentence $\phi$ of $\fo(\sigma)$
and any $k$, there is 
$j$ having the following property:

Let $\frakA_1$ be an $\instance_1,k$-rooted structure for some set of $\sigma$-facts~$\fA_1$,
and
let $\frakA_2$ be an $\instance_2,k$-rooted structure for some set of $\sigma$-facts~$\fA_2$,
where the interpretations of the $R^\trans_i$ relations in each structure are the
transitive closure of the corresponding $R_i$ relations.
If the $j$-abstractions of $\frakA_1$ and $\frakA_2$ agree on all $\fo(\sigma_{j,k})$ sentences of quantifier-rank
at most $j$, then
$\frakA_1$ and $\frakA_2$ agree on~$\phi$.
\end{lemma}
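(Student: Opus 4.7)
The plan is to prove Lemma~\ref{lem:decomp} by a Feferman--Vaught-style composition argument, realized via an Ehrenfeucht--Fra\"{\i}ss\'e game. Let $m$ be the quantifier-rank of $\phi$; I will pick $j$ as a function of $m$, $k$, and $\sigma$ large enough that agreement of the $j$-abstractions on $\fo(\sigma_{j,k})$-sentences of rank $j$ forces Duplicator to win the $m$-round EF game on the full $\sigma$-structures $\frakA_1$ and $\frakA_2$, from which agreement on~$\phi$ follows immediately.

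I would exploit the geometric shape of an $\instance,k$-rooted structure: the attachments $T_{\vec c}$ are pairwise disjoint outside the root, and each $T_{\vec c}$ meets the rest of $\frakA$ exactly along the $k$-tuple $\vec c$. A crucial simplification is that every $R_i^\trans$ is a first-class predicate of $\sigma$: transitive-closure facts are stored as data inside the structure, so the composition argument does not need to recover transitive closure globally --- it only needs to combine the $\sigma$-theories of the individual $T_{\vec c}$'s, which already record all $R_i^\trans$-facts on pairs of their own elements, with the $\sigma$-theory of the root. With this in hand, I play an $m$-round EF game on $\frakA_1$ and $\frakA_2$. Duplicator maintains the invariant that, after round $i$, the pebbles on each side partition into root pebbles and attachment pebbles, the partition is matched across sides, each pair of matched attachments has equal FO $(j-i)$-theory as a $\sigma \cup \set{d_1 \ldots d_k}$-structure (with the constants interpreting the overlap tuple), and the root pebbles together with the touched overlap tuples satisfy the same $\fo(\sigma_{j,k})$-sentences of rank $j-i$. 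Duplicator responds by case analysis on Spoiler's choice: a move in the root is handled using agreement of the abstractions at rank $j-i$; a move in an already-touched attachment uses the matched attachment theory; a move in a fresh attachment is handled by first matching the new overlap tuple through the abstraction, then picking a matching element inside the corresponding attachment on the other side. A buffer $j = m + c$, with $c$ depending only on $k$ and $\sigma$, suffices to preserve the invariant through all $m$ rounds.

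The main obstacle is that different attachments are not fully disjoint: $\vec c$ and $\vec{c}'$ can share root elements, so $T_{\vec c}$ and $T_{\vec{c}'}$ overlap on $\vec c \cap \vec{c}'$, and a pebble placed on a shared root element belongs to several attachments at once. Duplicator must then respond consistently with the local theories of all of them. I would handle this by classifying shared overlap elements as root pebbles rather than attachment pebbles, and by encoding in the game invariant, for every overlap tuple touched so far, the quantifier-free diagram of its restriction to the already-pebbled root elements. The $j$-abstraction exposes this data because the $P_\tau$ predicates take their $k$-ary argument directly in the root, and the root is itself part of the $\sigma_{j,k}$-structure; hence the assumed agreement of abstractions transfers the matching information to the other side. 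Once this bookkeeping is in place, the case analysis of the previous paragraph goes through in every round and establishes that $\frakA_1$ and $\frakA_2$ agree on every FO sentence of rank $m$ --- in particular on~$\phi$.
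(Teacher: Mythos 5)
Your overall architecture is the paper's: the same decomposition into the root plus the attachments $T_{\vec c}$, the same two-level Ehrenfeucht--Fra\"{\i}ss\'e strategy (an abstraction game on the roots driving subsidiary games on matched attachments, with fresh attachments matched by first pebbling their overlap tuple in the abstraction game and then exploiting the $P_\tau$-facts to equate the attachments' types over the overlap constants). However, there is a genuine gap at exactly the point you declare to be a ``crucial simplification.'' You claim that because each $R^\trans_i$ is a first-class predicate, the composition ``does not need to recover transitive closure globally'' and only has to combine the theories of the individual $T_{\vec c}$'s with the theory of the root. This is false: an $R^\trans_i$-fact can relate an element $t_1 \in T_{\vec c_1}$ to an element $t_2 \in T_{\vec c_2}$ in a \emph{different} attachment, and such a fact is recorded in no single piece being composed, so your invariant does not directly guarantee its preservation. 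Verifying that Duplicator's responses form a partial isomorphism on these facts is precisely where the hypothesis that $R^\trans_i$ is the transitive closure of $R_i$ must be used: one decomposes the witnessing $R_i$-path as $t_1$ reaching some $c \in \vec c_1$ inside $T_{\vec c_1}$, $c$ reaching some $c' \in \vec c_2$ in the root, and $c'$ reaching $t_2$ inside $T_{\vec c_2}$, and then applies the within-root and within-attachment cases to each segment. The paper's proof devotes its final case analysis to exactly this; your proposal omits it and, worse, asserts it is unnecessary. (Note that even the sanity of the lemma depends on this: for an arbitrary binary relation in place of a transitive closure, agreement of the pieces would not control cross-attachment facts.)

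A second, quantitative problem: your buffer $j = m + c$ with $c$ depending only on $k$ and $\sigma$ is too small, and the invariant you state (abstraction-game rank dropping by one per round of the main game) cannot be maintained. A single Spoiler move into a fresh attachment forces Duplicator to pebble the entire $k$-tuple $\vec c_1$ in the abstraction game, consuming $k$ rounds there, not one; since every one of the $m$ rounds may be of this kind, the abstraction game must survive on the order of $m \cdot k$ rounds. This is why the paper takes $j = j_\phi \cdot k$ and phrases the root invariant as a winning position in the $i \cdot k$-round game with $i$ main-game moves remaining. The fix is routine (replace $m + c$ by $m k$ and restate the invariant accordingly), but as written the induction does not close.
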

\begin{proof}
Let $j_\phi$ be the quantifier-rank of $\phi$. We choose $j \defeq j_\phi\nolinebreak\cdot\nolinebreak k$.
We give a strategy for Duplicator in the $j_\phi$-round standard pebble game for $\fo(\sigma)$ over
$\frakA_1$ and $\frakA_2$.
With $i$ moves left to play, we will ensure the following invariants on a game position  consisting of a sequence $\vec p_1 \in \frakA_1$ and $\vec p_2 \in \frakA_2$:

\begin{itemize}
  \item Let $\vec{p_1}'$ be the subsequence of $\vec p_1$ that comes from
    $\instance_1$ and let $\vec{p_2}'$ be defined similarly
    for $\vec p_2$ and $\instance_2$. Then $\vec{p_1}'$ and $\vec {p_2}'$ should 
form a  winning position for Duplicator in the
  $i \cdot k$ round $\fo(\sigma_{j,k})$ game on the $j$-abstractions.

\item Fix any $k$-tuple $\vec c_1 \in \instance_1$ and let $P^1_{\vec c_1}$ be the subsequence of $\vec p_1$
that lies in $T_{\vec c_1}$ within $\frakA_1$. Then if $P^1_{\vec c}$ is non-empty, $\vec c_1$ also lies in $\vec p_1$.
 Let $\vec c_2$ be
the corresponding $k$-tuple to $\vec c_1$ in $\vec p_2$, and let $P^2_{\vec c_2}$ be the subsequence of $\vec p_2$
that lies in $T_{\vec c_2}$ within $\frakA_2$. Then  $P^1_{\vec c_1}$ and $P^2_{\vec c_2}$ form a winning position
in the $i$-round pebble game on  $T_{\vec c_1}$ and $T_{\vec c_2}$.

The analogous property holds fixing any $k$-tuple $\vec c_2 \in \instance_2$.
\end{itemize}

We now explain the strategy of the Duplicator, focusing for simplicity on moves of Spoiler within $\frakA_1$, with the strategy
on~$\frakA_2$ being similar. If Spoiler plays within $\instance_1$, Duplicator responds using
her strategy for the games on the $j$-abstractions of $\instance_1$ and $\instance_2$. It is easy to see that the invariant is preserved.

If Spoiler plays an element  within a substructure $T_{\vec c_1}$  within $\frakA_1$ that is already inhabited, then by the invariant
$\vec c_1$ is pebbled and there is a corresponding $\vec c_2$ in $\frakA_2$ with substructure $T_{\vec c_2}$ of $\frakA_2$ such that
the pebbles within $T_{\vec c_2}$  are winning positions in the game on $T_{\vec c_1}$ and $T_{\vec c_2}$ with $i$ moves left to play. Thus Duplicator can respond
using the strategy in this game from those positions.

Now suppose Spoiler plays an element  $e_1$ within a substructure $T_{\vec c_1}$  within $\frakA_1$ that is not already inhabited.
We first use  $\vec c_1$ as  a sequence of plays for Spoiler in the game on the $j$-abstractions of $\frakA_1$ and $\frakA_2$, extending the 
positions given by
$\vec{p_1}'$ and $\vec {p_2}'$.
By the inductive invariant, responses of Duplicator exist,
and we collect them  to get a tuple $\vec c_2$. 
Since a winning strategy in a game preserves atoms, and
we have a fact in the $j$-abstraction corresponding to the $j$-type of 
$\vec c_1$ in $T_{\vec c_1}$, we know that 
$\vec c_2$ must satisfy the same $j$-type in~$T_{\vec c_2}$
  that $\vec c_1$ does in $T_{\vec c_1}$.
Therefore $\vec c_1$ must satisfy the same $\fo(\sigma \cup \{d_1 \ldots d_k\})$ sentences of quantifier-rank at most 
$j$ in $T_{\vec c_1}$ as $\vec c_2$ does in $T_{\vec c_2}$.
Thus Duplicator can use the corresponding
strategy to respond to $e_1$ with an $e_2$ in $T_{\vec c_2}$
such that $\{e_1\}$ and $\{e_2\}$ are a winning position in the $i-1$ round
pebble game on $T_{\vec c_1}$ and $T_{\vec c_2}$.

Since the
response of Duplicator corresponds to $k$ moves in the  game within
the $j$-abstractions, one can verify that the invariant is preserved.

We must verify that this strategy gives a partial isomorphism. Consider a fact $F$ that holds of a tuple 
$\vec t_1$ within $\frakA_1$, and let $\vec t_2$ be  the  tuple obtained  using this strategy in $\frakA_2$.
We first consider the case where $F$ is a $\sigmab$-fact.
\begin{itemize}
\item If $\vec t_1$ lies completely within some $T_{\vec c_1}$, then the last invariant guarantees that $\vec t_2$
lies in some $T_{\vec c_2}$. The last invariant also guarantees that $\sigmab$-facts of $\frakA_1$ are preserved since such facts
must lie in $T_{\vec c_1}$, and the corresponding positions are winning 
in the game between   $T_{\vec c_1}$  and $T_{\vec c_2}$.

\item If $\vec t_1$  lies completely within $\instance_1$, then
the first invariant guarantees that the fact is preserved.
\end{itemize}
By the definition of a rooted structure, the above two cases are exhaustive.

We now consider the case where $F$ is  of the form
 $R^\trans_i(t_1, t_2)$. 
\begin{itemize}
\item If $t_1$ and $t_2$ both lie in some $T_{\vec c_1}$, then
we reason as in the first case above.
\item If $t_1$ and $t_2$ are both in $\instance_1$, we reason
as in the second case above.
\item If  $t_1$ lies in $T_{\vec c_1}$, $t_2$ lies in $T_{\vec c_2}$,
then $t_1$ reaches some $c_i$, $c_i \in \vec c_1$,
$c_i$ reaches some  $c_j \in \vec c_2$, and $c_j$ reaches
$t_2$ within $T_{\vec c_2}$. Then we use a combination of the first two cases
above to  conclude that $F$ is preserved.\qedhere
\end{itemize}

\end{proof}

From Lemma \ref{lem:decomp} we easily obtain:

\begin{corollary} \label{cor:composition} Given $\phi$ and $k$ there is a number $j$
and a sentence $\phi'$ in the language of $j$-abstractions over $\sigma$ such that
for all sets of facts $\instance$, an   $\instance,k$-rooted structure satisfies $\phi$ iff its $j$-abstraction satisfies $\phi'$.
\end{corollary}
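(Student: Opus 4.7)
The plan is to extract $j$ from Lemma~\ref{lem:decomp} applied to $\phi$ and $k$, and then to define $\phi'$ explicitly as a finite disjunction of ``$j$-types'' in the language of $j$-abstractions.

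First, fix $\phi$ and $k$, and let $j$ be the number promised by Lemma~\ref{lem:decomp}. Since the signature $\sigma_{j,k}$ is finite (the relations $P_\tau$ are indexed by $\fo(\sigma \cup \{d_1, \dots, d_k\})$-sentences of quantifier-rank $j$, taken up to logical equivalence, of which there are only finitely many), there are, up to logical equivalence, only finitely many $\fo(\sigma_{j,k})$-sentences of quantifier-rank at most $j$. Let $\psi_1, \dots, \psi_N$ enumerate representatives of these equivalence classes. For each subset $S \subseteq \{1,\dots,N\}$, let
\[
\theta_S \;\defeq\; \bigwedge_{i \in S} \psi_i \;\wedge\; \bigwedge_{i \notin S} \neg \psi_i,
\]
so each $\theta_S$ describes a possible complete quantifier-rank-$j$ theory of an $\fo(\sigma_{j,k})$-structure. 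Every $j$-abstraction satisfies exactly one $\theta_S$, namely the one corresponding to its set of satisfied quantifier-rank-$j$ sentences.

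Now, for each $S$, pick (if one exists) an $\instance,k$-rooted structure $\frakA_S$ whose $j$-abstraction satisfies $\theta_S$, and let $T \subseteq \{0,1\}^N$ be the set of those $S$ for which such an $\frakA_S$ exists and $\frakA_S \models \phi$. By Lemma~\ref{lem:decomp}, the property $\frakA_S \models \phi$ does not depend on the representative chosen: any two $k$-rooted structures whose $j$-abstractions agree on all quantifier-rank-$j$ $\fo(\sigma_{j,k})$-sentences (in particular, any two with $j$-abstractions satisfying the same $\theta_S$) agree on $\phi$. Finally, we set
\[
\phi' \;\defeq\; \bigvee_{S \in T} \theta_S.
\]
Then for any $k$-rooted structure $\frakA$ with $j$-abstraction $\frakB$, $\frakB$ satisfies $\phi'$ iff the unique $S$ with $\frakB \models \theta_S$ lies in $T$, iff $\frakA \models \phi$.

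The only real content is the finiteness-of-types argument, which goes through because $\sigma$ and $k$ are fixed and the quantifier rank is bounded by $j$. The invocation of Lemma~\ref{lem:decomp} is what guarantees that $\phi$-satisfaction is a function of the quantifier-rank-$j$ theory of the $j$-abstraction, which is precisely what makes the disjunction well-defined.
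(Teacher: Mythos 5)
Your proof is correct and is exactly the standard ``disjunction of complete rank-$j$ types over the finite signature $\sigma_{j,k}$'' argument that the paper leaves implicit when it says the corollary follows easily from Lemma~\ref{lem:decomp}. The only caveat worth noting is that Lemma~\ref{lem:decomp} (and hence your invariance claim for the representatives $\frakA_S$) applies only to rooted structures in which each $R^\trans_i$ is interpreted as the transitive closure of $R_i$, so the equivalence you establish holds on that class of structures --- which is the class the paper actually uses the corollary on, and the same implicit restriction the paper itself makes.
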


We can now put these results together to prove Theorem  \ref{thm:conptransdataupper}:

\begin{proof}
Fixing $Q$ and $\Sigma$, we give an $\np$ algorithm for the complement. Let $\phi=\Sigma \wedge \neg Q$, and
$k = \mysize{\varphi}$.
Let $j$ and $\phi'$ be the number and formula guaranteed for $\phi$ by Corollary \ref{cor:composition}.

Let $\fo(\sigma \cup \{d_1 \ldots d_k\})$ denote first-order logic over the signature $\sigma$ of $\Sigma \wedge \neg Q$, together
with $k$ constants.

Let $\types_j$ be the collection of assignments of truth values to all $\fo(\sigma \cup \{d_1 \ldots d_k\})$ sentences with quantifier-rank at most $j$ such
that the conjunction of the corresponding sentences is consistent.
Note that the set is finite since $j$ and the signature are fixed. 

Given $\instance$, guess an extension $\instance'$ with additional facts but the same domain. 
Guess a function $f$ mapping each  $k$-tuple over $\instance$ to a  $\rho \in \types_j$, 
and then for each $\tau \in \fo(\sigma \cup \{d_1 \ldots d_k\})$ of quantifier rank at most $j$,
interpret $P_\tau$ by the set of tuples   $\vec c$ such
that $\tau \in f(\vec c)$.
Check whether $\instance'$ satisfies $\phi'$ with these interpretations,
and if so return true.

We argue for correctness. If the algorithm returns true with $\instance'$ the witness, then create an
$\instance',k$-rooted structure $\frakA$ by picking for each $\vec c$ a structure satisfying
the sentences in $f(\vec c)$ with distinguished
elements interpreted by $\vec c$ (such a structure exists by consistency of $f(\vec c)$), and letting the remaining domain elements be disjoint from the domain
of $\instance'$. Note that by construction, $\frakA$ has $\instance'$ as its $j$-abstraction.
By the choice of $j$ and $\phi'$, and the observation above, $\frakA$ satisfies $\Sigma \wedge \neg Q$.
Thus this structure
witnesses that $\owqatc(\instance, \Sigma, Q)$  is false.

On the other hand, if $\owqatc(\instance, \Sigma, Q)$  is false, then by Proposition \ref{prop:treelike}
we have an extension $\instance'$ without adding values to the domain, and an $\instance',k$-rooted structure $\frakA$
that satisfies $\Sigma \wedge \neg Q$. By the choice of $j$ and $\phi'$,  the $j$-abstraction of $\frakA$ satisfies
$\phi'$. For each $\vec c$   in the $j$-abstraction of  $\frakA$,
the type of $\vec c$ must be in $\types_j$. Hence we can guess collections such that
the algorithm returns true.
\end{proof}

\subsection{Proof of Theorem~\ref{thm:ptimetransdataupper}: $\ptime$ data complexity bound for $\owqatrans$}
We now turn to the case where our constraints are restricted to $\acfgtgd$s and deal with
$\owqatrans$, not $\owqatc$. Recall that Theorem \ref{thm:ptimetransdataupper} states
a $\ptime$ data complexity bound for this case:

\medskip

\begin{quote}
{\bf Theorem~\ref{thm:ptimetransdataupper}.} \ptimetransdataupper
\end{quote}

\medskip

The proof will follow from a reduction to traditional $\owqa$,
similar to the proof of Proposition~\ref{prop:rewritelin}:

\begin{proposition}
  \label{prop:reducetr}
  For any finite set of facts $\instance_0$,
  constraints $\Sigma \in \acgnf$,
  and
  base-covered UCQ $Q$,
  we
  can compute $\instance_0'$ and $\Sigma' \in \gnf$ in $\ptime$
  such that
  $\owqatr(\instance_0, \Sigma, Q) \text{ iff } 
  \owqa(\instance_0', \Sigma', Q)$.
  Furthermore,
  if $\Sigma$ is in $\acfgtgd$ then $\Sigma'$ is in $\fgtgd$.
\end{proposition}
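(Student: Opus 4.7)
The plan is to mimic the reduction of Proposition~\ref{prop:rewritelin}, exploiting the fact that for pure transitivity we need only the $k$-guardedly transitive axioms, and not the guardedly total or irreflexive ones. I would let $k \defeq \max(\mysize{\Sigma \wedge \neg Q}, \arity{\sigma \cup \{G\}})$, take $\instance_0'$ to be $\instance_0$ together with $G(a,b)$ for every pair $a,b \in \elems{\instance_0}$ (where $G$ is a fresh binary base relation), and define $\Sigma'$ as $\Sigma$ together with the $k$-guardedly transitive axioms for each distinguished relation. As in Lemma~\ref{lemma:guarded-interface-dec}, this $\Sigma'$ lies in $\agnf$ and can be put in $\nf$ of width at most~$k$, so by Proposition~\ref{prop:guarded-interface-dec-appendix} the sentence $\Sigma' \wedge \neg Q$ has base-guarded-interface $k$-tree-like witnesses.

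For the easy direction ($\owqa(\instance_0', \Sigma', Q) \Rightarrow \owqatr(\instance_0, \Sigma, Q)$), I would proceed as in Proposition~\ref{prop:rewritelin}: any $\instance \supseteq \instance_0$ satisfying $\Sigma$ with each distinguished relation transitive trivially satisfies the $k$-guardedly transitive axioms; extending by the $G$-facts on $\elems{\instance_0}$ preserves $\Sigma'$; the hypothesis then gives $Q$; and since $G$ does not occur in $Q$, the restriction back to $\instance$ still satisfies $Q$.

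For the hard direction, I would argue by contradiction: supposing some $\instance' \supseteq \instance_0'$ satisfies $\Sigma' \wedge \neg Q$, I would take $\instance'$ to have an $\instance_0'$-rooted base-guarded-interface $(k-1)$-width tree decomposition, and form $\instance''$ by replacing each distinguished relation with its transitive closure in~$\instance'$. Then each distinguished relation is transitive in $\instance''$, so the restriction of~$\instance''$ to $\sigma$-facts (dropping the irrelevant $G$-facts) would contradict $\owqatr(\instance_0, \Sigma, Q)$ provided $\instance''$ still satisfies $\Sigma \wedge \neg Q$. The crux of the argument, which I expect to be the main step but is already in hand, is to show that the only new facts in $\instance'' \setminus \instance'$ concern tuples that are \emph{not} base-guarded in $\instance'$. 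This follows directly from Lemma~\ref{lem:guarded-transitivity}: any $<$-path between a base-guarded pair in a $k$-guardedly transitive structure with a base-guarded-interface tree decomposition already yields the corresponding $<$-fact. Then Lemma~\ref{lemma:acov}, applied to $\Sigma \wedge \neg Q \in \acgnf$, carries satisfaction from $\instance'$ to $\instance''$, completing the contradiction.

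For the additional $\acfgtgd$ claim, I would observe that each $k$-guardedly transitive axiom rewrites into finitely many TGDs, one per disjunct of $\guardedbg$, of the form $\forall x y x_2 \dots x_l ( x < x_2 \wedge \dots \wedge x_l < y \wedge B(\vec{z}) \rightarrow x < y )$, where $B$ is a $\sigmab$-atom containing the frontier variables $\{x,y\}$. These are $\afgtgd$s and hence $\fgtgd$s; the reflexive variants are handled similarly with sole frontier variable $\{x\}$, trivially guarded. Thus $\Sigma'$ is in $\fgtgd$ when $\Sigma$ is in $\acfgtgd$, and the reduction clearly runs in polynomial time.
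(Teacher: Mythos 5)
Your proposal is correct and follows essentially the same route as the paper's proof: the same $\instance_0'$ and $\Sigma'$ (adding the $k$-guardedly transitive axioms), the same easy direction, and the same hard direction via base-guarded-interface tree-like witnesses, Lemma~\ref{lem:guarded-transitivity}, and Lemma~\ref{lemma:acov}, with the same $\afgtgd$ rewriting of the axioms for the $\fgtgd$ claim. The only (harmless) deviation is taking $k$ to be the maximum with $\arity{\sigma \cup \{G\}}$ rather than just $\mysize{\Sigma \wedge \neg Q}$.
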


\begin{proof}
\renewcommand{\drel}{R^+}
We define $\instance_0'$ and $\Sigma'$ as follows:
  \begin{itemize}
  \item $\instance_0'$ is $\instance_0$ together with facts $G(a,b)$ for every pair $a,b \in \elems{\instance_0}$
  for some fresh binary base relation $G$, and
  \item $\Sigma'$ is $\Sigma$ together with the $k$-guardedly-transitive axioms
  for each distinguished relation, where $k$ is  $\mysize{\Sigma \wedge \neg Q}$.
  \end{itemize}
These can be constructed in time polynomial in the size of the input.

As discussed in the proof of Lemma~\ref{lemma:guarded-interface-dec},
  the $k$-guardedly transitive axioms (see Appendix~\ref{app:owqalinred})
can be written in $\nf$ $\agnf$ with width at most $k$,
and hence in $\gnf$.

Now we prove the correctness of the reduction.
Suppose $\owqa(\instance_0',\Sigma',Q)$ holds,
so any $\instance' \supseteq \instance_0'$ satisfying $\Sigma'$ must satisfy $Q$.
Now consider $\instance \supseteq \instance_0$ that satisfies $\Sigma$
and where all $\drel$ in $\sigmad$
are transitive.
We must show that $\instance$ satisfies $Q$.
First, observe that $\instance$ satisfies $\Sigma'$
since the $k$-guardedly-transitive axioms for $\drel$
are clearly satisfied for all $k$
when $\drel$ is transitively closed.
Now consider the extension of $\instance$ to $\instance'$
with additional facts $G(a,b)$ for all $a,b \in \elems{\instance_0}$.
This must still satisfy $\Sigma'$:
adding these guards
means there are additional $k$-guardedly-transitive requirements
on the elements from $\instance_0$,
but these requirements already hold
since $\drel$ is transitively closed on all elements.
Hence, by our initial assumption, $\instance'$ must satisfy $Q$.
Since $Q$ does not mention $G$,
the restriction of $\instance'$ back to $\instance$ still satisfies $Q$ as well.
Therefore, $\owqa(\instance_0,\Sigma,Q)$ holds.

On the other hand, suppose for the sake of contradiction
that $\owqa(\instance_0',\Sigma',Q)$ does not hold,
but $\owqatr(\instance_0,\Sigma,Q)$ does.
Then there is some $\instance' \supseteq \instance_0'$ such that
$\instance'$ satisfies $\Sigma' \wedge \neg Q$, and hence also satisfies $\Sigma \wedge \neg Q$.
Since $\Sigma \wedge \neg Q$ is in $\agnf$,
Proposition~\ref{prop:guarded-interface-dec-appendix}
implies that we can take $\instance'$ to be a set of facts that has an
$\instance_0'$-rooted $(k-1)$-width base-guarded-interface tree decomposition.
Let $\instance''$ be the result of taking the transitive closure
of the distinguished relations in $\instance'$.
By Lemma~\ref{lem:guarded-transitivity},
transitively closing like this can only add $\drel$-facts
about pairs of elements that are not base-guarded.
Moreover, Lemma~\ref{lemma:acov} ensures that adding
$\drel$-facts about these non-base-guarded pairs of elements
does not affect satisfaction of $\acgnf$ sentences,
so $\instance''$ must still satisfy $\Sigma \wedge \neg Q$.
Restricting $\instance''$ to its $\sigma$-facts results in an $\instance$
where every distinguished relation is transitively closed
and where $\Sigma \wedge \neg Q$ is still satisfied,
since $\Sigma$ and $Q$ do not mention relation $G$.
But this contradicts the assumption that $\owqatr(\instance_0,\Sigma,Q)$ holds.

This concludes the proof of correctness.

Finally,
observe that the $k$-guardedly-transitive axioms can be written as $\fgtgd$s (in fact, $\afgtgd$s):
they are equivalent to the conjunction of $\fgtgd$s of the form
\begin{align*}
&\forall x\, y\, x_1 \dots x_{l+1}\, \big [ \big ( x = x_1 \wedge x_{l+1} = y \ \wedge \\
&\ \ \drel(x_1,x_2) \wedge \dots \wedge \drel(x_l,x_{l+1}) \wedge S(x,y) \big )
 \rightarrow
\drel(x,y) \big ]
\end{align*}
for all $S \in \sigmab \cup \set{ G }$,
$1 \leq l \leq k$, and $\drel \in \sigmad$.
Therefore, if $\Sigma$ is in $\acfgtgd$ then $\Sigma'$ is in $\fgtgd$
as claimed.
\end{proof}

Theorem~\ref{thm:ptimetransdataupper} easily follows from this.

\begin{proof}[Proof of Theorem~\ref{thm:ptimetransdataupper}]
  Recall that we have fixed constraints $\Sigma$ in $\acfgtgd$ and a base-covered UCQ $Q$.
  We must show $\ptime$ data complexity of $\owqatr(\instance_0,\Sigma,Q)$
  for any finite initial set of facts $\instance_0$.
  Use Proposition~\ref{prop:reducetr} to construct $\Sigma'$ from $\Sigma$ (in constant time,
  since $\Sigma$ is fixed) and $\instance_0'$ from $\instance_0$ (in time polynomial in $\mysize{\instance_0}$).
  Since $\Sigma$ is in $\acfgtgd$, $\Sigma'$ is in $\fgtgd$.
  Therefore, the $\ptime$ data complexity upper bound for $\owqatr$ with $\acfgtgd$s follows from
  the $\ptime$ data complexity upper bound for $\owqa$ with $\fgtgd$s \cite{bagetcomplexityfg}.
\end{proof}

\clearpage

\section{Hardness results}

\myparagraph{Chase}
In the proofs of this section and of subsequent sections, we will need the
standard database construction of the \emph{chase} \cite{ahv} by TGDs:

\begin{definition}
  \label{def:chase}
  The chase
applies to a set of facts $\instance$ and to a set $\Sigma$ of TGDs, and
constructs a set of facts $\instance' \supseteq \instance$, possibly infinite,
which satisfies $\Sigma$, in the following manner.

We first define a \emph{chase
round} as follows: for each TGD $\tau: \forall \vec{x} ~ \phi(\vec{x})
\rightarrow \exists \vec{y} ~ \psi(\vec{x}, \vec{y})$, for each
homomorphism $h$ from~$\vec{x}$ to the elements of $\instance$ such that
the facts of $\phi(h(\vec{x}))$ are in~$\instance$, if $h$ does not extend to
a homomorphism from $\vec{x} \cup \vec{y}$ such that the facts of
$\psi(h(\vec{x}), h(\vec{y}))$ hold in~$\instance$, then we call
$\phi(h(\vec{x}))$ a \emph{violation} of $\tau$ in~$\instance$: we repair it by
creating fresh
elements (called \emph{existential witnesses}) $\vec{b}$ for each variable of $\vec{y}$, and add to
$\instance$ the facts $\psi(h(\vec{x}), \vec{b})$.

Applying a chase round
means performing 
this process in parallel for all TGDs and violations,
creating fresh existential witnesses for each TGD and violation. The \emph{chase}
of~$\instance$ by~$\Sigma$ is the (potentially infinite) set of facts obtained
by repeated applications of chase rounds.
\end{definition}

When we use the chase, we will often use the fact that the result satisfies
$\Sigma$, and that all existentially quantified variables when applying rules
are instantiated by fresh existential witnesses (so no new facts are created on
an element unless it occurs on a fact which is part of a violation).

\subsection{Proof of Theorems~\ref{thm:tcdisj} and~\ref{thm:lindisj}}

The hardness results for $\owqatc$ and $\owqalin$
mentioned in the body
depend on the reductions described in Theorems~\ref{thm:tcdisj}~and~\ref{thm:lindisj}.
We start by proving Theorem~\ref{thm:tcdisj}, and we will adapt the proof
afterwards to show Theorem~\ref{thm:lindisj}.

Recall the result statement:

\begin{quote}
{\bf Theorem~\ref{thm:tcdisj}.} \tcdisj
\end{quote}
  
We start by creating a UCQ $Q'$, and then modify the
proof to make $Q'$ a CQ. Throughout the proof, whenever we talk of the $\did$s in~$\Sigma$,
we mean all dependencies of~$\Sigma$, including those where the head is a
trivial disjunction with only one disjunct.

\medskip

\myparagraph{Definition of the reduction}
Create the signature $\sigma'$ from $\sigma$ and $\Sigma$ by:
\begin{itemize}
  \item 
 adding a fresh binary base predicate $E$ and taking the transitive closure
$E^\trans$ of~$E$ as the one
distinguished relation of~$\sigma'$;
  \item replacing each predicate $R$ in~$\sigma$
with a base predicate $R'$ in~$\sigma'$ of arity $\arity{R}+2$;
\item adding to~$\sigma'$, for each $\did$ of the form \[\tau: \forall \vec{x} ~ R(\vec{x}) \rightarrow \bigvee_{1 \leq i \leq n}
  \exists \vec{y_i} ~ R_i(\vec{x}, \vec{y_i}),\]
a base predicate
$\witness_\tau(\vec{x}, \vec y_1, e_1, f_1, \ldots, \vec y_n, e_n, f_n)$.
\end{itemize}
We create $\Sigma'$ from $\Sigma$ by replacing each $\did$
$\tau: \forall \vec{x} ~ R(\vec{x}) \rightarrow \bigvee_{1 \leq i \leq n}
\exists \vec{y_i} ~ R_i(\vec{x}, \vec{y_i})$
by $\aincd$s equivalent to:
\begin{align*}
\forall \vec{x} \, e \, f \, R'(\vec{x}, e, f) &\rightarrow 
\exists \vec{y_1} e_1 f_1, \ldots, \vec{y_n} e_n f_n \\
	&\quad\,\,\, \witness_\tau(\vec{x}, \vec y_1, e_1, f_1, \ldots, \vec y_n, e_n, f_n) \wedge \\
	&\quad\,\,\, \bigwedge_i R'_i(\vec{x}, \vec{y_i},e_i, f_i) \wedge E^+(e_i,f_i) 
\end{align*}
Note that we have written the $\aincd$s for a given $\tau$ as a single TGD above with  multiple
conjuncts in the head, but we can easily rewrite them as multiple $\aincd$s for
$\tau$: the
first one has the same body and the $\witness_\tau$-fact as head atom, and the
others have the $\witness_\tau$-fact as body atom and each one of the other facts
as head atom.

The intuition for the proof is that a fact $R(\vec c)$  over the original schema will correspond to 
facts $R'(\vec c, e, f)$ in the new schema with fresh elements $e$ and $f$. The
fresh elements will always be connected by an $E$-path (as required by the
$E^\trans$-fact), which will be imposed
(via failure of the query) to have length $1$ or~$2$. Facts of this type with
a path of length~$1$ will be called \emph{genuine facts}, which intuitively
hold, and those with a path of length~$2$ will be called \emph{pseudo-facts}
and will be ignored by the query.

This mechanism allows us to eliminate disjunction from $\did$s as follows:
we require that, when the body atom holds, there 
are witness facts $R'_i(\vec c, \vec d_i, e_i, f_i)$
for \emph{all} of  the disjuncts. However, we will use the query to require
that, when the match of the body atom is a genuine fact, not all disjuncts can
be pseudo-facts, so one of them must be a genuine fact; the others can be made
pseudo-facts.
Note that $\Sigma'$ still requires matches for all of the disjuncts even when
the body is matched to a pseudo-fact; however, the query will only require that
one of the head atoms is matched to a genuine fact when the body is itself
matched to a genuine fact.
To this end, we will call the sequence $\vec c, \vec d_1, e_1, f_1, \ldots, \vec d_n, e_n, f_n$ will be called a \emph{witness vector} for
$R(\vec c)$ and $\tau$, and we capture such witness vectors in the predicate
$\witness_\tau$.

The UCQ $Q'$ contains the following disjuncts:
\begin{itemize}
\item \emph{$Q$-generated disjuncts}: One disjunct for each disjunct of the original UCQ $Q$, where
each atom $R(\vec{x})$ is replaced by the conjunction $R'(\vec{x}, e, f)
\wedge E(e, f)$, where $e$ and $f$ are fresh. That is, we have a witness for $Q$
consisting of genuine facts.

  \item \emph{$E$-path length restriction disjuncts}:
    For each predicate $R$ in~$\sigma$, we have a disjunct that succeeds if
    the $E$-path for an $R'$-fact has length~$\geq 3$, i.e., 
    $R'(\mathbf{x}, e, f) \wedge E(e, y_1) \wedge E(y_1, y_2) \wedge E(y_2,
y_3)$. Intuitively, for every $R'$-fact, the $E^\trans$-fact on its two last
    elements must make it either a genuine fact or a pseudo-fact.
\item  \emph{$\did$ satisfaction disjuncts}:
For every $\did$ $\tau: \forall \vec{x} ~ R(\vec{x}) \rightarrow
\bigvee_i \exists \vec{y_i} R_i(\vec{x}, \vec{y_i})$ in
$\Sigma$, 
we have a disjunct
\begin{align*}
  Q_\tau: & R'(\vec{x}, e,f) \wedge E(e,f) \\
  & \wedge  \witness_\tau(\vec{x}, \vec y_1, e_1, f_1, \ldots, \vec y_n, e_n, f_n) \\
  & \wedge \bigwedge_{1 \leq i \leq n} R'_i(\vec x, \vec y_i, e_i,f_i) \wedge \left(E(e_i, w_i) \wedge E(w_i, f_i)\right)
\end{align*}
Informally, the failure of $Q_\tau$ enforces that we cannot have the body of
    $\tau$ holding as a genuine fact and  each
of the components of the witness vector realized by a pseudo-fact. 
\end{itemize}
Observe that all of these disjuncts are trivially base-covered
(since they do not use $E^\trans$).

We now explain how to rewrite the facts of an initial fact set $\instance_0$ on
$\sigma$ to a fact set $\instance_0'$ on $\sigma'$.
Create $\instance_0'$ by replacing each fact $F = R(\vec{a})$ of $\instance_0$
by the facts $R'(\vec{a}, b_F, b'_F)$, and $E(b_F, b'_F)$, where $b_F$ and
$b'_F$ are fresh, so that they are genuine facts.

\medskip

\myparagraph{Correctness proof for the reduction}
We now show that the claimed equivalence holds: $\owqa(\instance_0, \Sigma,
Q)$ holds iff $\owqatc(\instance_0', \Sigma', Q')$ holds.

First, let $\instance \supseteq \instance_0$ satisfy $\Sigma$ and
violate $Q$. We must construct $\instance'$
that satisfies $\Sigma'$ and violates $Q'$ (when interpreting $E^+$ as the
transitive closure of~$E$).

We construct $\instance'$ using the following steps:
\begin{itemize}
  \item Modify $\instance$ in the same way that we used to build $\instance_0'$
    from $\instance_0$ (i.e., expand each fact with two fresh elements with
an $E$-edge between them), yielding $\instance_1$;
  \item We now need to ensure that witnesses exist as required by $\Sigma'$,
    which we will create as pseudo-facts. 

For every $\did$ $\tau$ of~$\Sigma$ and fact $F=R(\vec c)$ of~$\instance$ that
    matches the body of~$\tau$, as $\instance$ satisfies $\Sigma$, there is at least
    one $i_0$ such that some fact $R_{i_0}(\vec c, \vec d_{i_0})$ 
    witnesses that $\tau$ is not violated in~$\instance$.
    Call $I$ the set of such indices for which a witness exists in~$\instance$.
    We then know by construction that, for all $i \in I$, the set of facts $\instance_1$
    contains $F_{i} \defeq R'_{i}(\vec c, \vec d_{i}, e_{i}, f_{i})$
    and $F_{i}' \defeq E(e_{i}, f_{i})$ for some
    $e_{i}$ and $f_{i}$. For every $i \in \{1, \ldots, n\} \backslash I$,
    create a fresh $\vec d_i, e_i, f_i, w_i$
    and add a fact $R'_i(\vec c, \vec d_i, e_i,f_i)$ along with the facts
    $E(e_i, w_i)$ and $E(w_i, g_i)$ indicating that this is a pseudo-fact. We also add a 
fact $\witness_\tau(\vec c, \vec d_1, e_1, f_1 \ldots \vec d_n, e_n, f_n)$
    containing the witness vector consisting of the elements of the $F_i$ above
    (those that we created, for $i \in \{1, \ldots, n\} \backslash I$, and
    those that
    already existed, for $i \in I$).

We call $\instance_2$ the
    result of performing this process simultaneously in all places where it is
    applicable. Observe that, in~$\instance_2$, we have ensured that no rule of
    $\Sigma'$ has a violation whose body matches a genuine fact.
  \item The above process creates new pseudo-facts, and we also have to satisfy
    the rules of~$\Sigma'$ for these.
We create $\instance_3$ from $\instance_2$ by simply chasing with $\Sigma'$ wherever applicable (see Definition~\ref{def:chase}), always creating fresh elements; whenever
    we need a witness for some $E^{\trans}$ requirement, we always create an $E$-path of length $2$ with a fresh element
    in the middle, that is, we always create pseudo-facts.
\end{itemize}

Let $\instance' \defeq \instance_3$.
It is clear that
$\instance' \supseteq \instance_0'$ and that $E^\trans$ is indeed the transitive
closure of~$E$, and it is immediate by definition of the
chase that $\instance'$ satisfies $\Sigma'$, so we must check that $\instance'$
violates $Q'$, which we do by considering each kind of disjunct.

For the \emph{$E$-path length restriction disjuncts}
 observe that we only create paths of length $1$ or $2$ of~$E$ (of length
$1$ when creating $\instance_1$, and of length 2 when creating $\instance_2$ and $\instance_3$).
We always create these paths on fresh elements, so these paths of length 1 and 2 are never connected;
hence, there
is no $E$-path of length $3$ at all in~$\instance'$.

For the \emph{$\did$ satisfaction disjuncts}, assume by
contradiction that there is a match for a disjunct $Q_\tau$ of~$Q'$ in~$\instance'$.
Fix $\vec{c},e,f$ such that $R'(\vec{c}, e,f) \wedge E(e,f)$ holds,  a fact
$\witness_\tau(\vec{c}, \vec d_1, e_1, f_1 \ldots \vec d_n, e_n, f_n)$, and
  pseudo-facts  $R'_i(\vec {c}, \vec d_i, e_i, f_i)$ with $e_i,f_i$ connected by paths of length $2$.
The genuine fact $R'(\vec{c}, e,f)$  could not have been generated  within either of the second
or third steps in the creation of~$\instance'$ above, since all the $R'$-facts generated there have paths only of length $2$ between
the last two components (that is, they are pseudo-facts). Thus $R'(\vec{c}, e, f)$ must have been generated in the first step,
coming from  fact $R(\vec{c})$ in~$\instance$.
But then, as $\instance$ satisfies $\tau$, there must be $i_0 \leq n$ such that
$\instance$ contains $R_{i_0}(\vec c, \vec d_{i_0})$ for some $\vec d_{i_0}$, and
thus $\instance_1$ must contain~$R'_{i_0}(\vec c, \vec d_{i_0}, e_{i_0},
f_{i_0})$ and $E(e_{i_0}, f_{i_0})$ for some $e_{i_0}$ and $f_{i_0}$. 
Further, the $\witness_\tau$-fact must have been created in the second step
above, as the other $\witness_\tau$-facts are created during the third step,
where they only cover pseudo-facts rather than genuine facts. Hence,
in
creating the witness vector corresponding to $R(\vec{c}, e, f)$ in the second
step, we would not have generated a path of length $2$
for $e_{i_0}, f_{i_0}$ (as we would have had $i_0 \in I$), a contradiction.

Finally, for the \emph{$Q$-generated disjuncts},
observe that any match of them must be on
facts of~$\instance'$ created for facts of~$\instance$ (as they are annotated
by $E$-paths of length $1$), so we can conclude because $\instance$ violates
$Q$.

Hence, $\instance'$ satisfies $\Sigma'$ and violates $Q'$, which concludes the
first direction.

\medskip

In the other direction,
let $\instance' \supseteq \instance_0'$ be a counterexample
to $\owqatc(\instance_0', \Sigma', Q')$.
Consider the set of~$R'$-facts from $\instance'$
such that
$R' \in \sigma'$ corresponds to some $R \in \sigma$
and the elements in the last two positions of this $R'$-fact are connected by an
$E$-fact, i.e., the genuine facts.
Construct a set of facts $\instance$ on $\sigma$ by
projecting away the last two positions from these $R'$-facts,
and discarding all of the other facts.

It is clear by
construction of~$\instance_0'$ that $\instance \supseteq \instance_0$ and that
$E^\trans$ is indeed the transitive closure of~$E$ in~$\instance$. Further, as
$\instance'$ violates $Q'$, it is clear that $\instance$ violates $Q$, as any
match of a disjunct of~$Q$ on~$\instance$ implies a match of the corresponding
$Q$-generated disjunct $Q'$ in~$\instance'$. So it suffices to show
that $\instance$ satisfies $\Sigma$.

Assume by contradiction that some
$\did$ $\tau$ of $\Sigma$ is violated on a fact $F = R(\vec{x})$ of $\instance$
(that is $F$ matches the body of $\tau$).
Let $F' = R'(\vec{c},e,f)$
be the fact in~$\instance'$ from which we created $F$;
we know that the last two elements of
$F'$ are connected by an $E$-fact.
Since $\instance'$ satisfies $\Sigma'$,
we know that there are $\vec{d}_1, e_1, f_1, \dots, \vec{d}_n, e_n, f_n$
such that
$\witness_\tau(\vec{c}, \vec d_1, e_1, f_1, \ldots, \vec d_n, e_n, f_n)$
and
$\bigwedge_i R'_i(\vec{c}, \vec{d_i},e_i, f_i) \wedge E^+(e_i,f_i)$.
Moreover, since $E^{+}$ is the transitive closure of $E$
in~$\instance'$,
we know that for each $i$,
there is some $E$-path connecting $e_i$ and $f_i$.
By the $E$-path length-restriction disjuncts and $\did$ satisfaction disjuncts,
it must be the case that
there is an $E$-path of length at most 2 between each $e_i$ and $f_i$,
and for some $j$ there cannot be a path of length 2 between $e_j$ and $f_j$
(otherwise, $\instance'$ would satisfy the corresponding $\did$ satisfaction
disjunct in~$Q'$), so then the path must have length~$1$.
But this means that
$\instance'$ contains $R'_j(\vec{c},\vec{d}_j,e_j,f_j)$ and $E(e_j,f_j)$,
so $R_j(\vec{c},\vec{d}_j)$ is a fact in~$\instance$
witnessing the satisfaction of $\did$ $\tau$,
a contradiction.
Hence, $\instance$ satisfies $\Sigma$, which concludes the proof.

\medskip

\myparagraph{From UCQ to CQ}
Last, we explain how to replace the UCQ $Q'$ by a CQ. We do this by a general
process that we will reuse in several upcoming proofs: intuitively, we increase the arity to annotate
facts with an additional Boolean value carried over in dependencies and add an
$\Or$-relation to combine such values.

Formally,
define a signature $\sigma_\Or$ with
a ternary relation $\mathrm{Or}$ and a unary relation
$\mathrm{True}$.
Define a set of facts $\instance_{\mathrm{Or}}$
with two domain elements $\true$ and
  $\false$ that contains
  the fact $\mathrm{True}(\true)$ and
  the facts $\mathrm{Or}(b, b', b'')$ for all $\{(b, b', b \vee b')
    \mid b, b' \in \{\false, \true\}\}$.

Define $\sigma''$ from $\sigma'$ by increasing the arity of each relation
in~$\sigma'$ \emph{except $E$ and $E^\trans$} and adding the relations from $\sigma_\Or$.

Define $\Sigma''$ from $\Sigma'$ by adding a new variable $b$ which is
universally quantified and is put in the head and body facts.

Define $Q''$ from $Q'$ as follows:
\begin{itemize}
  \item Add to the atoms of each disjunct of $Q'$ (except $E$-atoms) one common
    variable which is shared between all atoms and left free: we call each
    resulting CQ  $Q_i(w_i)$, where $w_i$ is the new variable.
  \item Define the Boolean CQ $Q''$ as the following (existentially closed), where $m$ is the number of
    disjuncts of~$Q'$:
    \begin{align*}
      \Or(w_1, w_2, w'_1) \wedge \Or(w'_1, w_3, w'_2) \wedge \cdots \\
      \wedge\, \Or(w'_{m-2}, w_m, w'_{m-1})
      \wedge \mathrm{True}(w'_{m-1})\\
      \wedge \bigwedge_{1 \leq i \leq m} Q_i(w_i)
    \end{align*} 
\end{itemize}
Note that $Q''$ is trivially covered.

Define the set of facts $\instance_0''$ from $\instance_0'$ by:

\begin{itemize}
  \item Adding the facts of $\instance_\Or$;
  \item Putting $\true$ as the last element of all other facts except $E$-facts;
  \item Adding \emph{vacuous matches}: for each $Q_i(w_i)$, we add a set
    of facts that satisfy $Q_i(w_i)$, with $\false$ as the common
    last element of all facts, but the domains being otherwise disjoint.

    Intuitively, the purpose of the vacuous matches is to ensure that the
    $Q_i(w_i)$ always have a match but with $w_i$ set to false, and otherwise
    they have no purpose and they simply do not interact with the other facts.

    We accordingly call an element \emph{vacuous} in a set of facts if it occurs in
    no fact with $\true$ as the last element, and call a fact \emph{vacuous} if
    it is an $E$- or $E^\trans$-fact on vacuous elements, or it is a fact for
    another relation than $E$ but its last element is not $\true$.
\end{itemize}
We will now show the following equivalence: $\owqatc(\instance_0', \Sigma', Q')$
iff $\owqatc(\instance_0'', \Sigma'', Q'')$, which concludes the proof.

\medskip

In one  direction, we assume we have a counterexample $\instance'$ to $\owqatc(\instance_0', \Sigma', Q')$.
We construct $\instance''$ from  $\instance'$ by extending it according to the
process above (to define $\instance_0''$ from $\instance_0'$),
and chasing by
$\Sigma''$ on all facts from the vacuous matches (see
Definition~\ref{def:chase}), with $\false$ being propagated
as the last element, so the resulting elements and facts are all vacuous.
We claim that $\instance''$ witnesses the failure of $\owqatc(\instance_0'', \Sigma'', Q'')$.

It is clear that $\instance''$ is a superset of $\instance_0''$
and that $E^\trans$ is indeed interpreted as the transitive closure of~$E$. We argue
that $\Sigma''$ is satisfied by $\instance''$, by looking whether the required
witness facts exist for each type of fact.  
Vacuous facts have the required witnesses because we chased them in constructing $\instance''$.
No constraints of $\Sigma''$ hold about the facts from $\instance_\Or$. Finally, for the
facts of $\instance''$ created from 
facts of~$\instance'$, they have the required witnesses because $\Sigma'$ was satisfied by
$\instance'$ and the last position of such a fact is always $\true$ so the last
variable was correctly exported.

We now explain why
$\instance''$ violates $Q''$. Assuming by contradiction that $\instance''$
satisfies $Q''$, by definition of the $\Or$- and $\mathrm{True}$-facts that
$\instance''$ contains by construction, it must be the case that $\instance''$ satisfies
$Q_i(\true)$ for some $Q_i$. But it is then clear that $\instance'$ satisfies
the corresponding disjunct of $Q'$, as this match cannot involve any vacuous
facts. This proves one direction.

\medskip

For the other direction, we assume we have a counterexample $\instance''$ for
$\owqatc(\instance_0'', \Sigma'', Q'')$. We
construct $\instance'$ from $\instance''$ by
keeping only the facts in the base signature with last element $\true$ and
keeping precisely the $E$- and
$E^\trans$-facts that are connected to them.
It is clear that, as
$\instance'' \supseteq \instance''_0$, we have $\instance' \supseteq
\instance'_0$. To see that $\instance'$ satisfies $\Sigma'$, assume by
contradiction that $F'$ witnesses a violation of an $\incd$
$\tau'$ of $\Sigma'$ in
$\instance'$, and let $F''$ be the corresponding fact in~$\instance''$. By
definition of $\Sigma''$, there is a corresponding $\incd$ $\tau''$ in~$\Sigma''$ that asserts the
existence of a fact $F_2''$. So the only way $F'$ can violate $\tau'$ is that
$F_2''$ is an $E$-fact or that the last element of $F_2''$ is not $\true$, but
as the last element of $F''$ is $\true$, this is impossible.
Hence, we have a contradiction, and $F'$ satisfies $\Sigma'$.

The only thing left to show is that $\instance'$ violates $Q'$. Assuming to the
contrary that $\instance'$ satisfies some disjunct of~$Q'$, we know that,
considering the corresponding $Q_i(w_i)$, $\instance''$ satisfies $Q_i(\true)$.
Now, from the facts in~$\instance_\Or \subseteq \instance''$, and from the
vacuous matches and their connected $E$ -and $E^+$-facts, we know that we can construct a match of the entire CQ $Q''$ in
$\instance''$, a contradiction as $\instance''$ violates $Q'$. This concludes
the correctness proof, and concludes the proof of Theorem~\ref{thm:tcdisj}.

\bigskip

We now prove Theorem~\ref{thm:lindisj}, which states:

\begin{quote}
{\bf Theorem~\ref{thm:lindisj}.} \lindisj
\end{quote}

The entire proof is shown by adapting the proof of Theorem~\ref{thm:tcdisj}. We start by showing the claim with a UCQ.
Intuitively, instead of using $E^\trans$ to emulate a disjunction on the length of
the path to encode genuine facts and pseudo facts, we will use the order
relation to emulate disjunction on the same elements: 
$e < f$ will indicate a genuine fact,
whereas $f < e$ will indicate a pseudo-fact, and $e = f$ will be
prohibited by the query.

\myparagraph{Definition of the reduction}
We define $\sigma'$ as in the proof of Theorem~\ref{thm:tcdisj},
except that we do not add the predicates $E$
and $E^\trans$, but add a predicate ${<}$ as a distinguished relation instead.
We also define $\Sigma'$ as before except that we drop all mention of $E^+$.

The UCQ $Q'$ contains the following disjuncts (existentially closed):
\begin{itemize}
\item \emph{Order restriction disjuncts}: For each $R \in \sigma$, we have a disjunct
    $R'(\vec{x},e,e)$ to enforce disjunction between genuine facts and
    pseudo-facts.
\item \emph{$Q$-generated disjuncts}:
Each disjunct of the original UCQ $Q$, where
each atom $R(\vec{x})$ is replaced by the conjunction $R'(\vec{x}, z, z')
\wedge z < z'$, where $z$ and $z'$ are fresh.
That is, we have a witness for $Q$
consisting of genuine facts.

\item \emph{$\did$ satisfaction disjuncts}:
For every $\did$ $\tau: \forall \vec{x} ~ R(\vec{x}) \rightarrow
\bigvee_i \exists \vec{y_i} R_i(\vec{x}, \vec{y_i})$ in
$\Sigma$, 
we have a disjunct
\begin{align*}
Q_\tau: R'(\vec{x}, e,f) \wedge e < f \\
\wedge\, \witness_\tau(\vec{x}, \vec y_1, e_1, f_1 \ldots \vec y_n, e_n, f_n) \\ 
\wedge\, \bigwedge_{1 \leq i \leq n} R'_i(\vec x, \vec y_i, e_i,f_i) \wedge f_i < e_i
\end{align*}
Intuitively, $Q_\tau$ is satisfied if the body of  $\tau$ is matched to a genuine
    fact but each
of the components of the witness vector is matched to a pseudo-fact. 
\end{itemize}
Observe that all CQs of the resulting UCQ are base-covered, as required.

The process to define $\instance_0'$ from $\instance_0$ is
defined like in the proof of Theorem~\ref{thm:tcdisj}
except that we remove the
$E$-facts and replace them by $b_F < b'_F$.

\medskip

\myparagraph{Correctness proof for the reduction}
The proof that $\owqa(\instance_0, \Sigma,
Q)$ holds iff $\owqalin(\instance_0', \Sigma', Q')$ holds
is similar to the proof for Theorem~\ref{thm:tcdisj},
so we sketch the proof and highlight the main differences.

For one direction, let $\instance \supseteq \instance_0$ satisfy $\Sigma$ and
violate $Q$.
We construct $\instance'$ from $\instance$ as follows:
\begin{itemize}
  \item Construct   $\instance_1$ from $\instance$ as we constructed
    $\instance_0'$ from $\instance_0$ above.
  \item The construction of $\instance_2$ and $\instance_3$ is as before,
  except that we create a $<$-fact to indicate a pseudo-fact.
  \item The new step is that $\instance'$ is constructed from $\instance_3$
    by completing $<$ to
    be a total order. To do so, however, we must ensure that our
    definition of $<$ in~$\instance_3$ does not contain any cycles. This is easy to see,
    however: we only imposed an order relation between \emph{disjoint pairs} of
    elements. Hence, it is clear that $<$ cannot contain any loop, so we can
    simply complete this partial order to a total order using the order
    extension principle \cite{Szpilrajn30}.
\end{itemize}

As before it is clear that $\instance' \supseteq \instance'_0$ and that
$\instance'$ satisfies $\Sigma'$, and we have made sure that $<$ is a total
order. To see why $Q'$ is not satisfied in~$\instance'$, we proceed exactly as
before for the $\did$ satisfaction disjuncts and $Q$-generated disjuncts,
but replacing ``having an $E$-fact between $e$ and $f$'' by ``having $e
< f$'', and replacing ``having an $E$-path of length $2$ between $e$ and $f$'' by
``having $e > f$'', and likewise for $e_i$ and $f_i$. By
construction, we never have $e = f$ or $e_i = f_i$ in any fact within~$\instance'$,
so we also do not match the order-restriction disjuncts in~$Q'$.

For the other direction, suppose we have some counterexample $\instance'$
to $\owqatc(\instance_0',\Sigma',Q')$.
We construct $\instance$ from $\instance'$ by
keeping all facts whose last two elements $e$ and $f$ are such that $e < f$. The
result still clearly satisfies $\instance \supseteq \instance_0$, and the proof
of why it violates
$Q$ is unchanged. To show that $\instance$ satisfies $\Sigma$, we adapt the
argument of the proof of Theorem~\ref{thm:tcdisj},
but instead of relying on the $E$-path length disjuncts
we rely on totality of the order and the order-restriction disjuncts.
Totality of the order ensures that for
fact $F_\tau$, and for all $i$, we have either $e_i < f_i$, $e_i = f_i$
and $f_i < e_i$. But the order-restriction disjuncts are violated,
so it must be either $e_i < f_i$ or $f_i < e_i$,
and the $\did$ satisfaction disjuncts of  $Q$ are violated,
so we must have $e_i < f_i$ for some~$i$.
Hence, we can argue as before that the satisfaction of $\Sigma'$ by $\instance'$
ensures that $\Sigma$ is satisfied in~$\instance$.

\myparagraph{From UCQ to CQ}
The proof from UCQ to CQ works exactly like before, except that we do not
increase the arity of~$<$ (recall that we did not increase the arity of~$E^\trans$
and~$E$), and we use $\owqalin$ instead of $\owqatc$.
When showing that we can construct a counterexample
to $\instance''$ to $\owqalin(\instance_0'',\Sigma'',Q'')$ from
a counterexample to $\owqalin(\instance_0',\Sigma',Q')$,
we make $<$ a total order in~$\instance''$ using again
the order extension principle (the order on vacuous matches, and on the domain
elements of $\instance_\Or$, is arbitrary). Observe that the resulting CQ is
clearly base-covered, as all disjuncts of the UCQ $Q'$ were base-covered.

\subsection{Proof of Propositions~\ref{prop:lindatacompltrans}~and~\ref{prop:lindatacompl}}

We now give data complexity lower bounds that show
$\conp$-hardness even in the absence of constraints.

We first prove Proposition~\ref{prop:lindatacompltrans}:

\begin{quote}
{\bf Proposition~\ref{prop:lindatacompltrans}.} \lindatacompltrans
\end{quote}

\begin{proof}
  We first prove the result for a UCQ $Q$, and then for a CQ $Q'$.
  
\myparagraph{Definition of the reduction}
  We define the signature $\sigma$ as containing:
  \begin{itemize}
    \item one binary predicate $E$ and its transitive closure $E^\trans$ (again
      playing a similar role as in the proof Theorem~\ref{thm:tcdisj});
    \item one binary relation $G$ to code the edges of a graph which will be
      provided as input to the reduction;
    \item one $7$-ary relation $V$ to code vertices and their color. 
The idea is that
one position is for the vertex and then for each of the $3$ colors we will have two positions that will
encode whether or not the vertex has that color. If the positions associated with a color $C$ are connected
by an $E$-edge, this will indicate coloring the vertex with color $C$, while if they are connecting by a path of length $2$ this will
indicate not being colored with color $C$.
  \end{itemize}

  We then define the UCQ $Q$ to contain the following disjuncts (existentially
  closed):
  \begin{itemize}
  \item \emph{$E$-path length restriction disjuncts}:
    For each predicate $R$ in~$\sigma$, we enforce that the $E$-path for the
      $R'$-fact has length $\geq 3$:
    $R'(\mathbf{x}, e, f) \wedge E(e, y_1) \wedge E(y_1, y_2) \wedge E(y_2,
y_3)$.
  \item \emph{Adjacency disjuncts:} For $i \in \{1, 2, 3\}$, the disjunct $Q_i$ that succeeds if two
      adjacent vertices were assigned the same color:
      \begin{align*}
        V(x, e_1, f_1, e_2, f_2, e_3, f_3) \wedge G(x, x') \\
        \wedge V(x', e_1', f_1', e_2', f_2',
        e_3', f_3') \wedge 
        E(e_i, f_i) \wedge E(e_i', f_i')
      \end{align*}
    \item \emph{Coloring disjunct:} A disjunct that succeeds if a vertex was not assigned any color: $V(x,
      e_1, f_1, e_2, f_2, e_3, f_3) \wedge \bigwedge_{i \in \{1, 2, 3\}} E(e_i,
      w_i) \wedge E(w_i, f_i)$
  \end{itemize}

  Given a directed graph $\calG$, we code it in~$\ptime$ as the instance $\instance_0$
  defined by having:
  \begin{itemize}
    \item One fact $G(x, y)$ for each edge $(x,y)$ in~$\calG$
    \item The facts $V(x, e_{x,1}, f_{x,1}, e_{x,2}, f_{x,2}, e_{x,3},
      f_{x,3})$ and $E^\trans(e_{x,i}, f_{x,i})$ for $i \in \{1, 2, 3\}$
      for each vertex $x$ in~$G(x, y)$, where all the $e_{x,i}$ and $f_{x,i}$
      are fresh.
  \end{itemize}

\myparagraph{Correctness proof for the reduction}
  We now show that $\calG$ is 3-colorable iff $\owqatc(\instance_0, \emptyset,
  Q)$ is false, completing the reduction.

  First, consider a $3$-coloring of $\calG$. Construct
  $\instance \supseteq \instance_0$ as follows.
  For each vertex $x$ of
  $\calG$
  (with corresponding $V$-fact $V(x, e_{x,1}, f_{x,1}, e_{x,2}, f_{x,2}, e_{x,3},f_{x,3})$ as defined above)
  create the facts
  $E(e_{x,i}, f_{x,i})$ where $i$ is the color assigned to $x$, and the facts $E(e_{x,j},
  w_{x,j})$ and $E(w_{x,j}, f_{x,j})$ for the other colors $j \in \{1, 2, 3\} \backslash
  \{i\}$ (with the two $w_{x,j}$ being fresh). It is clear that $\instance$ thus
  defined is such that $\instance \supseteq \instance_0$,
  and that $E^\trans$ is
  the transitive closure of~$E$ in~$\instance$.
  The $E$-path length restriction disjuncts of $Q$ do not match in~$\instance$
  (note that we only create $E$-paths whose endpoints are pairwise distinct),
  and the coloring disjunct does not match either.
  Finally, the fact that we have a $3$-coloring ensures that the adjacency disjuncts
  do not match either.
  Hence, we have a set of facts violating $Q$.

  \medskip

  For the other direction, consider some $\instance \supseteq \instance_0$ that
  violates $Q$. Since $\instance$ violates the first and last disjunct of $Q$
  and $E^\trans$ is the transitive closure of~$E$,
  any vertex $x$ of~$\calG$
  (with corresponding $V$-fact $V(x, e_{x,1}, f_{x,1}, e_{x,2}, f_{x,2}, e_{x,3},f_{x,3})$ defined above)
  there must be an $E$-path of
  length~$1$ or~$2$ from $e_{x,i}$ to $f_{x,i}$ for all $i \in \{1, 2, 3\}$. Further, as
  $\instance$ violates the last disjunct of~$Q$, at least one of these paths must
  have length~1. Define a coloring of $\calG$ by giving each vertex $x$ a
  color $i$ such that $E(e_{x,i}, f_{x,i})$ holds in the $V$-fact for~$x$.
  This indeed defines a 3-coloring, as any
  violation of the 3-coloring witnessed by two adjacent vertices of color $i$
  would imply a match of $Q_i$ in~$\instance$.

  \medskip

\myparagraph{From UCQ to CQ}
  We replace the UCQ $Q$ by a CQ $Q'$ in the same manner as in the proof of
  Theorem~\ref{thm:tcdisj}: we increase the arity of all predicates  and add the
  relations of $\sigma_\Or$, add to $\instance_0$ the facts of~$\instance_\Or$
  and the vacuous matches, and rewrite the query as in the proof of
  Theorem~\ref{thm:tcdisj}.
  We can then adapt the argument of that proof to show 
  that the resulting $\owqatc$ problem with the CQ is equivalent to the
  previously defined problem with a UCQ.
\end{proof}

We then modify the proof to show Proposition~\ref{prop:lindatacompl}:

\begin{quote}
{\bf Proposition~\ref{prop:lindatacompl}.} \lindatacompl
\end{quote}

\begin{proof}
  We define $\sigma$ as in the previous proof but with an order relation $<$
  and
  without $E$, $E^\trans$.
  We define $Q$ as in the proof
  of Proposition~\ref{prop:lindatacompltrans} but
  without its first disjunct, and replacing in the other disjuncts $E(e_i, f_i)$
  by $e_i < f_i$, and $E(e_i, w_i) \wedge E(w_i, f_i)$ by $f_i < e_i$. Unlike
  in the proof of Theorem~\ref{thm:lindisj}, we need not worry about equalities
  (and we need not add order restriction disjuncts),
  as all the elements of relevant $V$-facts are created already in
  $\instance_0$, where they are created as distinct elements.
  We define $\instance_0$ in the same fashion as in the proof of
  Proposition~\ref{prop:lindatacompltrans} but without the
  $E^\trans$-facts.

  We prove the same equivalence as in that proof but for $\owqalin$. We do it by
  replacing $E$-paths of length $1$ from an $e$-element to an $f$-element by $e
  < f$, and $E$-paths of length $2$ by $f < e$.

  We replace the UCQ by a CQ exactly as in the other proof. As in the proof of
  Theorem~\ref{thm:lindisj}, the order on the vacuous
  matches is arbitrary.
\end{proof}

\clearpage

\section{Undecidability results related to transitivity (from Section~\ref{sec:undecidtrans})}

We first prove the second result as it is simpler to understand.

\begin{quote}
{\bf Theorem~\ref{thm:undectransb}.} \undectransb
\end{quote}

\begin{proof}
  As in Proposition~\ref{prop:lindatacompl}, we first prove the result with a
  UCQ and then modify the proof to use a CQ.

  An \emph{infinite tiling problem} is specified by a set of colors $\mathbb{C} = C_1,
  \ldots, C_k$, a set of forbidden \emph{horizontal} patterns $\mathbb{H}
  \subseteq \mathbb{C}^2$ and a set of forbidden \emph{vertical} patterns
  $\mathbb{V} \subseteq \mathbb{C}^2$. It asks, given a sequence $c_0, \ldots,
  c_n$ of colors of $\mathbb{C}$, whether there exists a function $f :
  \mathbb{N}^2 \rightarrow \mathbb{C}$ such that $f((0, i)) = c_i$ for all $0
  \leq i \leq n$, and for all $i, j \in \mathbb{N}$, we have $(f(i, j), f(i+1,
  j)) \notin \mathbb{H}$ and $(f(i, j), f(i, j+1)) \notin \mathbb{V}$.

  It is well-known that we can take $\mathbb{C}$, $\mathbb{V}$, $\mathbb{H}$
  such that the corresponding tiling problem is undecidable; we fix such a
  problem.

  \myparagraph{Definition of the reduction}
  We define a binary relation $S'$ (for ``successor''), a transitive relation $S^\trans$,
  one binary relation $K_i$ for each color $C_i$, and one unary relation $K_i'$
  for each color $C_i$.

  We write the following $\did$s $\Sigma$ (note that they are not base-guarded),
  dropping universal quantification for brevity:
  \begin{align*}
    S'(x, y) & \rightarrow \exists z ~ S'(y, z)\\
    S'(x, y) & \rightarrow S^\trans(x, y)\\
    S^\trans(x, y) & \rightarrow \bigvee_i K_i(x, y)\\
    S^\trans(x, y) & \rightarrow \bigvee_i K_i(y, x)\\
    S^\trans(x, y) & \rightarrow \bigvee_i K_i'(x)
  \end{align*}
  Intuitively, $K_i'(x)$ stands for $K_i(x, x)$, but we need a different
  predicate because variable reuse is not allowed in inclusion dependencies.

  The UCQ $Q$ is a disjunction of the following disjuncts (existentially closed):
  \begin{itemize}
    \item For each forbidden horizontal pair $(C_i, C_j) \in \mathbb{H}$, with
      $1 \leq i, j \leq k$, the disjuncts:
      \begin{align*}
        K_i(x, y) \wedge S'(y, y') \wedge K_j(x, y')\\
        K_i'(y) \wedge S'(y, y') \wedge K_j(y, y')\\
        K_i(y', y) \wedge S'(y, y') \wedge K_j'(y')
      \end{align*}
    \item For each forbidden vertical pair $(C_i, C_j) \in \mathbb{V}$, the
      analogous disjuncts:
      \begin{align*}
        K_i(x, y) \wedge S'(x, x') \wedge K_j(x', y)\\
        K_i'(x) \wedge S'(x, x') \wedge K_j(x', x)\\
        K_i(x, x') \wedge S'(x, x') \wedge K_j'(x')
      \end{align*}
  \end{itemize}
  Given an initial instance of the tiling problem $c_0, \ldots, c_n$, we encode
  it in the initial set of facts $\instance_0$:
  \begin{itemize}
    \item $S'(a_i, a_{i+1})$ for $0 \leq i < n$;
    \item for $0 < i \leq n$,
      the fact $K_j(a_0, a_i)$ such that $C_j$ is the color of initial element
      $c_i$;
    \item the fact $K_j'(a_0)$ such that $C_j$ is the color of $c_0$.
  \end{itemize}

  \myparagraph{Correctness proof for the reduction}
  We claim that the tiling problem has a solution iff
  there is a (generally infinite) superset of $\instance_0$ that satisfies $\Sigma$ and
  violates $Q$ and where $S^\trans$ is transitive. From this we conclude the
  reduction and deduce the undecidability of $\owqatr$ as stated.

  \medskip

  For the forward direction, from a solution $f$ to the tiling problem for
  input $\vec{c}$, we construct the counterexample $\instance \supseteq \instance_0$ as follows.
  We first create 
  an infinite chain $S'(a_0, a_1), \ldots, S'(a_m, a_{m+1}), \ldots$ to complete
  the initial chain of $S'$-facts in~$\instance_0$, and fix $S^\trans$ to be the
  transitive closure of this $S'$-chain (so it is indeed transitive).
  For all $i, j \in \mathbb{N}$ such that $i \neq j$, we create the
  fact $K_l(a_i, a_j)$ where $l = f(i,j)$. For all $i \in \mathbb{N}$, we create
  the fact $K_l'(a_i)$ where $l = f(i,i)$. This clearly satisfies the
  constraints in $\Sigma$, and does not satisfy the query
  because $f$ is a tiling.

  \medskip

  For the backward direction, consider a  $\instance \supseteq \instance_0$ that
  satisfies $\Sigma$ and violates $Q$. Starting at the chain of $S'$-facts
  of~$\instance_0$, we can deduce, using the constraints,  the existence of an infinite chain $a_0, \ldots, a_n,
  \ldots$ of~$S'$-facts
  (whose elements may be distinct or not, this does not matter). Define a
  tiling $f$ matching the initial tiling problem instance as follows. For all
  $i < j$ in $\mathbb{N}$, as there is a path of $S'$-facts from $a_i$ to $a_j$,
  we infer that $S^\trans(a_i, a_j)$ holds, so that $K_l(a_i, a_j)$ holds for
  some $1 \leq l \leq k$; pick one such fact, taking the fact of $\instance_0$
  if $i = 0$ and $j \leq n$, 
  and fix $f(i, j) \defeq l$. For $i > j$ we can likewise see
  that $S^\trans(a_j, a_i)$ holds whence $K_l(a_i,a_j)$ holds for some~$l$, and
  we continue as before. For $i \in \mathbb{N}$, as $S'(a_i, a_{i+1})$ holds, we
  know that $K_l(a_i)$ holds for some $1 \leq l \leq k$ (again we take the
  fact of~$\instance_0$ if $i=0$), and fix accordingly
  $f(i, i) \defeq l$. The resulting $f$ clearly satisfies the initial tiling
  problem instance $c_0, \ldots, c_n$, and it is clearly a solution to the
  tiling problem, as any forbidden pattern in~$f$ would witness a match of a CQ
  of~$Q$ in~$\instance$. This shows that the reduction is correct, and
  concludes the proof with the UCQ $Q$.

  \myparagraph{From UCQ to CQ}
  We now adapt the proof to use a CQ, similarly to the proof of
Theorem~\ref{thm:tcdisj}: we
  use the signature $\sigma_{\mathrm{Or}}'$ constructed by extending $S'$ (and
  \emph{only} $S'$) to have a third position, and adding the relations of
  $\sigma_\Or$ (see the proof of Theorem~\ref{thm:tcdisj}).
  We modify $\Sigma$ by
  changing the first dependency to propagate also the third element of the
  $S'$-atoms, i.e., $\forall x \, y \, b ~ S'(x, y, b) \rightarrow \exists z ~ S'(y,
  z, b)$, and replace the second dependency similarly by $\forall x \, y \, b ~ S'(x,
  y, b) \rightarrow S^\trans(x, y)$.
  We construct the CQ $Q'$ from the UCQ $Q$
  by adding one variable $b$ to each $S'$-fact of each disjunct, 
  and connecting these disjuncts on their free variables with
  $\mathrm{Or}$-facts and adding a $\mathrm{True}$-fact as in
  the proof of Theorem~\ref{thm:tcdisj}. 
  
  We then define the initial set of facts $\instance_0'$ to 
  include
  the facts of $\calF_{\mathrm{Or}}$,
  the facts of~$\instance_0$ where each $S'$ is extended by adding $\true$ as
  its third element, and
  vacuous matches for each $Q_i$ (defined as in the proof of
  Theorem~\ref{thm:tcdisj},
  with the element $\false$ at the third position of each $S'$-fact, all the
  vacuous matches having pairwise disjoint domains except for~$\false$).

  \medskip

  To show the forward direction, we construct $\instance$ from $\instance_0$ as
  before but with $\true$ at the third position of all created $S'$-facts, plus
  a completion of the vacuous matches obtained by chasing (see
  Definition~\ref{def:chase}). Like in the proof of Theorem~\ref{thm:tcdisj},
  any match of the CQ $Q'$ must imply
  a match of one of the disjuncts of the UCQ $Q$, and as this match must be on
  an $S'$-fact with $\true$ as third position, it cannot involve a vacuous
  match, so we conclude as before.
  
  Conversely, for the backward direction, we
  define the tiling analogously to what we did before, and we observe that any
  violation of the tiling property would imply a match of one disjunct of the
  UCQ $Q$, which we can extend thanks to the vacuous matches to a match of $Q'$.
  This concludes the proof.
\end{proof}

We now prove the first statement, drawing inspiration from the previous proof, but using 
the transitive closure to emulate disjunction as in Theorem~\ref{thm:tcdisj}.
Recall the statement:

\begin{quote}
{\bf Theorem~\ref{thm:undectrans}.} \undectrans
\end{quote}

\begin{proof}
  We reuse the notations for tiling problems from the previous proof.
  We first prove the result with two distinguished relations $S^\trans$ and
  $C^\trans$ and with a UCQ, and then explain how the proof is modified to use only a single
  transitive relation $S^\trans$, and finally explain how to adapt the proof to
  use a CQ rather than a UCQ.

  \myparagraph{Definition of the reduction}
  We define a binary relation $S$ (for ``successor'') of which $S^\trans$ is
  interpreted as the transitive closure, one binary relation $S'$,
  one 3-ary relation $G$ (for ``grid''),
  one binary relation $G'$ (standing for the diagonal cells of the grid),
  one binary relation $T$ (a terminal for gadgets that we will define to
  indicate colors)
  and one binary relation $C$ of which
  $C^\trans$ is interpreted as the transitive closure. The distinction between
  $S$ and $S'$ is not important for now but will be important when we adapt the
  proof later to use a single distinguished relation.

  We write the following inclusion dependencies $\Sigma$ (with universal
  quantification dropped for brevity):
  \begin{align*}
    S'(x, y) & \rightarrow \exists z ~ S'(y, z)\\
    S'(x, y) & \rightarrow S(x, y)\\
    S^\trans(x, y) & \rightarrow \exists z ~ G(x, y, z)\\
    S^\trans(y, x) & \rightarrow \exists z ~ G(x, y, z)\\
    S^\trans(x, y) & \rightarrow \exists z ~ G'(x, z)\\
    G(x, y, z) & \rightarrow \exists w ~ T(z, w)\\
    G'(x, z) & \rightarrow \exists w ~ T(z, w)\\
    T(z, w) & \rightarrow C^\trans(z, w)
  \end{align*}

  In preparation for defining the query $Q$, we define $Q_i(z)$
  for all $i > 0$ to match the left endpoint of $T$-facts covered by a $C$-path
  of length $i$ (intuitively coding color~$i$):
  \[
    \exists z_1 \ldots z_i \, w ~ C(z, z_1) \wedge C(z_1, z_2) \wedge \ldots,
    C(z_{i-1}, z_i) \wedge
    T(z, z_i),
  \] 

The query $Q$ is a disjunction of the following disjuncts (existentially closed):
  \begin{itemize}
    \item \emph{$C$-path sanity disjuncts:} One disjunct written as follows,
      where $k$ is the number of colors
  \begin{align*}
    G(x, y, z) \wedge S'(x, w) \wedge\, T(z, z) \wedge C(z, z_1) \\
    \wedge\, C(z_1, z_2) \wedge \cdots \wedge C(z_{k-1}, z_k)
    \wedge
    C^\trans(z_k, z')
  \end{align*}
      and one disjunct defined similarly but with $G(x, y, z)$ replaced by
      $G'(x, z)$. Intuitively, these disjuncts impose that $C$-paths that cover
      $T$-facts must code colors between $1$ and $k$, and the distinction
      between $G$ and $G'$ is for reasons similar to the distinction between the
      $K_i$ and $K'_i$ in the proof of Theorem~\ref{thm:undectransb}.
    \item \emph{Horizontal adjacency disjuncts:}
      For each forbidden horizontal pair $(C_i, C_j) \in \mathbb{H}$, with
      $1 \leq i, j \leq k$, the disjuncts:
        \begin{align*}
G(x, y, z) \wedge G(x, y', z')
\wedge Q_i(z) \wedge Q_j(z') \wedge S'(y, y')\\
G'(y, z) \wedge G(y, y', z') \wedge Q_i(z) \wedge Q_j(z') \wedge S'(y, y')\\
G(y', y, z) \wedge G'(y', z') \wedge Q_i(z) \wedge Q_j(z') \wedge S'(y, y')
\end{align*}
    \item \emph{Vertical adjacency disjuncts:}
      For each $(C_i, C_j) \in \mathbb{V}$, the same queries but replacing
      atoms $S'(y, y')$ by $S'(x, x')$ and the two first atoms of the last two
      subqueries by:
      \begin{itemize}
        \item $G'(x, z) \wedge G(x, x', z')$
        \item $G(x, x', z) \wedge G'(x', z')$
      \end{itemize}
  \end{itemize}

  Given an initial instance of the tiling problem $c_0, \ldots, c_n$, we encode it in the initial
  set of facts $\instance_0$:
  \begin{itemize}
    \item $S'(a_i, a_{i+1})$ for $0 \leq i < n$;
    \item $G(a_0, a_i, b_{0,i})$ for $0 < i \leq n$;
    \item $G'(a_0, b_{0,0})$
    \item for all $0 \leq i \leq n$, letting $j$ be such that $c_i$ is the
      $j$-th color $C_j$, 
      we create the \emph{length-$j$ gadget on $b_{0,i}$}:
      we create a path $C(b_{0,i}, d_{0,i}^1), 
      C(d_{0,i}^1, d_{0,i}^2),
      \ldots
      C(d_{0,i}^{j-1}, d_{0,i}^j)$,
      and the fact $T(b_{0,i}, d_{0,i}^j)$;
  \end{itemize}

  \myparagraph{Correctness proof for the reduction}
  We claim that the tiling problem has a solution iff
  there is a (generally infinite) superset of $\instance_0$ that satisfies $\Sigma$ and
  violates $Q$, where the $S^\trans$ and $C^\trans$ predicates are interpreted as the
  transitive closure of $S$ and $C$, from which we conclude the reduction and
  deduce the undecidability of $\owqatc$ as stated.

  \medskip

  For the forward direction, from a solution $f$ to the tiling problem for input
  $\vec{c}$, we construct  $\instance \supseteq \instance_0$ as follows.
  We first create 
  an infinite chain $S'(a_0, a_1), \ldots, S'(a_m, a_{m+1}), \ldots$ to complete
  the initial chain of $S'$-facts in~$\instance_0$, we create the implied
  $S$-facts, and make $S^\trans$ the transitive closure.
  We then create one fact $G(a_i, a_j, b_{i,j})$ for all $i \neq j$
  in~$\mathbb{N}$ and one fact $G'(a_i, b_{i,i})$ for all $i \in
  \mathbb{N}$.
  Last, for all $i, j \in \mathbb{N}$, letting $l \defeq f(i, j)$, 
  we create the length-$l$ gadget on $b_{i,j}$ with fresh elements.

  It is clear that $\instance$ contains the facts of~$\instance_0$. It is easy to verify
  that it satisfies $\Sigma$. To see that we do not satisfy the query, observe
  that:
  \begin{itemize}
    \item The $C$-path sanity disjuncts have no match 
      because all $C$-paths created have length $\leq k$ and are on disjoint
      sets of elements;
    \item For the horizontal adjacency disjuncts, it is clear that, in any
      match, $z$ must be of the form $b_{i,j}$ and $z'$ of the form $b_{i,j+1}$;
      the reason for the three different forms is that the case where $i = j$
      and $i \neq j$ are managed differently. Then, as $f$ respects
      $\mathbb{H}$, we know that the $Q_i$ and $Q_j$ subqueries cannot be
      satisfied, because for any $l \in \mathbb{N}$ and $i', j' \in \mathbb{N}$,
      we have $Q_l(b_{i',j'})$ iff $f(i', j') = l$ by construction;
    \item The reasoning for the vertical adjacency disjuncts is analogous.
  \end{itemize}

  Hence, $\instance \supseteq \instance_0$, satisfies $\Sigma$, and violates
  $Q$, which concludes the proof of the forward direction of the implication.

  \medskip

  For the backward direction, consider a  $\instance \supseteq \instance_0$ that
  satisfies $\Sigma$ and violates $Q$. Starting at the chain of $S'$-facts
  of~$\instance_0$, we can see that there is  an infinite chain $a_0, \ldots, a_n,
  \ldots$ of $S'$-facts
  (whose elements may be distinct or not, this does not matter), and hence
  we infer the existence of the corresponding $S$-facts.  We can also infer the
  existence of elements $b_{i,j}$ for all $i, j \in \mathbb{N}$ (again, these
  elements may be distinct or not) such that $G'(a_i, b_{i,i})$ holds and $G(a_i, a_j,
  b_{i,j})$ holds if $i \neq j$. From this we conclude that there is a fact $T(b_{i,j}, c_{i,
  j})$ for all $i, j \in \mathbb{N}$, with a $C$-path from $b_{i,j}$ to
  $c_{i,j}$. As the $C$-path sanity disjuncts are violated, there cannot be such a $C$-path
  of length $\geq k$, so we can define a function $f$ from $\mathbb{N} \times
  \mathbb{N}$ to $\mathbb{C}$ by setting $f(i, j)$ to be $c_l$ where $l$ is the
  length of one such path, for all $i, j\in \mathbb{N}$; this can be performed in a
  way that matches~$\instance_0$ (by choosing the path that appears
  in~$\instance_0$ if there is one).

  Now, assume by contradiction that $f$ is not a valid tiling. If there are $i,
  j \in \mathbb{N}$ such that $(f(i, j), f(i, j+1)) \in \mathbb{H}$, then
  consider the match $x \defeq a_i$, $y \defeq a_j$, $y' \defeq a_{j+1}$, $z \defeq
  b_{i,j}$, and $z' \defeq
  b_{i,j+1}$. If $i
  \neq j$ and $i \neq j+1$, we know that $G(a_i, a_j, b_{i,j})$ and $G(a_i,
  a_{j+1}, b_{i,j+1})$ hold, and
  taking the witnessing paths used to define $f(i,j)$ and $f(i,j+1)$, we obtain
  matches of $Q_{f(i,j)}(b_{i,j})$ and $Q_{f(i,j+1)}(b_{i,j+1})$, so that we
  obtain a match of one of the disjuncts of~$Q$ (one of the first horizontal
  adjacency disjuncts), a contradiction. The cases where $i = j$ and where $i
  = j+1$ are similar and correspond to the second and third kinds of horizontal
  adjacency disjuncts. The case
  of $\mathbb{V}$ is handled similarly with the vertical adjacency disjuncts.
  Hence, $f$ is a valid tiling, which
  concludes the proof of the backward direction of the implication, shows the
  equivalence, and concludes the reduction and the undecidability proof.

  \myparagraph{Adapting to a single distinguished relation}
  To prove the result with a single distinguished relation $S^\trans$, simply
  replace all occurrences of $C$ and $C^\trans$ in the query and constraints by
  $S$ and $S^\trans$. The rest of the construction is unchanged.
  The proof of the backwards direction is unchanged, using $S$ in place of $C$;
  what must be changed is the proof of the forward direction.
  
  Let $f$ be the
  solution to the tiling problem. We start by
  constructing a set of facts $\instance_1$ as before from~$f$ to complete
  $\instance_0$, replacing the $C$-facts
  in the gadgets by
  $S$-facts. Now, we complete $S^\trans$ to add the transitive closure of these
  paths (note that they are disjoint from any other $S$-fact), and 
  complete this to a set of facts to satisfy $\Sigma$: create $G$- and $G'$-facts, and create gadgets,
  this time taking all of them to have length $k+1$: this yields $\instance_2$.
  We repeat this last process indefinitely
  on the path of $S$-facts created in the gadgets of the previous iteration, and
  let $\instance$ be the result of this infinite process, which
  satisfies~$\Sigma$.
 
  We
  justify as before that $Q$ has no matches: as we create no
  $S'$-facts in $\instance_i$ for all $i > 1$, it suffices to observe that no
  new matches of $Q$ can include any of the new facts, because each disjunct
  includes an $S'$-fact. Hence, we can conclude as before.

  \myparagraph{From UCQ to CQ}
  To prove the result with a CQ rather than a UCQ, we proceed as for the proof
  of Theorem~\ref{thm:tcdisj}: we extend $S'$ to be a ternary relation with
  a propagated value, add the relations of $\sigma_\Or$ (see the definition in
  the proof of Theorem~\ref{thm:tcdisj}), 
  modify the $S'$-atoms in all disjuncts of the UCQ $Q$ to add the variable,
  connect them as before yielding the CQ $Q'$, and modify the initial instance
  to add dummy matches, to add the element $\true$ to the $S'$-facts that we
  create, and to add the facts of $\instance_{\mathrm{Or}}$. As before, the
  proof of the forward direction
  is unchanged except that we add the value $\true$ to all $S'$-facts, and chase
  on the vacuous matches to satisfy the constraints (recall
  Definition~\ref{def:chase}). The query is violated because,
  thanks to the $S'$ contained in each disjunct of $Q$, 
  any match of the query ensures that we have a
  match of a disjunct of $Q$ on the part that corresponds to $\instance_0$ (not
  on the vacuous matches). For the backwards direction, we extract the tiling as
  before, and argue thanks to the vacuous matches that any violation of the
  tiling property would violate a UCQ of $Q$, and hence violate $Q'$. This
  concludes the proof.
\end{proof}

\clearpage

\section{Undecidability results related to linear orders (from Section~\ref{sec:undecidlin})} 

We first prove Theorem  \ref{thm:undeccq}.
Recall the statement:

\begin{quote}
{\bf Theorem~\ref{thm:undeccq}.} \undeccq
\end{quote}

\begin{proof}
  We first show the claim for a UCQ rather than a CQ.
  As in the proof of Theorem~\ref{thm:undectrans}, we fix an undecidable
  infinite tiling problem $\mathbb{C}$, $\mathbb{V}$, $\mathbb{H}$,
  and will reduce that problem to the $\owqalin$ problem.

  \myparagraph{Definition of the reduction}
  We consider the signature consisting of two binary relations $R$ and $D$ (for
  ``right'' and ``down''), $k-1$ unary relations $K_1, \ldots, K_{k-1}$
  (representing the colors), and one unary relation $S$ (representing the fact
  of being a vertex of the grid -- this is just to simplify things).
  
  We put the following inclusion dependencies in $\Sigma$:

  \begin{itemize}
    \item $\forall x \, S(x) \rightarrow \exists y \, R(x, y)$
    \item $\forall x \, S(x) \rightarrow \exists y \, D(x, y)$
    \item $\forall x y \, R(x, y) \rightarrow S(y)$
    \item $\forall x y \, D(x, y) \rightarrow S(y)$
  \end{itemize}
  We will use the following abbreviations:
  \begin{itemize}
    \item $K_1'(x)$ stands for $\exists y ~ x < y \wedge K_1(y)$
    \item $K_k'(x)$ stands for $\exists y ~ x > y \wedge K_{k-1}(y)$
    \item for all $1 < i < k$, $K_i'(x)$ stands for $\exists y y' ~
      K_{i-1}(y) \wedge y < x \wedge x < y' \wedge K_i(y')$.
  \end{itemize}
  Intuitively, the $K_i'$ describe the color of elements, which is encoded in
  their order relation to elements labeled with the $K_i$.

  We consider a UCQ formed of the following disjuncts (existentially closed):

  \begin{itemize}
    \item $R(x, y) \wedge D(x, z) \wedge R(z, w) \wedge D(y, w') \wedge w < w'$
    \item $R(x, y) \wedge D(x, z) \wedge R(z, w) \wedge D(y, w') \wedge w' < w$
    \item for each $(c, c') \in \mathbb{H}$, $R(x, y) \wedge K'_c(x) \wedge
      K'_{c'}(y)$
    \item for each $(c, c') \in \mathbb{V}$, $D(x, y) \wedge K'_c(x) \wedge
      K'_{c'}(y)$
  \end{itemize}
  Intuitively, the first two disjuncts enforce a grid structure, by saying that
  going right and then down must be the same as going down and then right. The
  two other disjuncts enforce that there are no bad horizontal or vertical
  patterns.

  Let us now present the reduction. Consider an instance $c_0, \ldots, c_n$ of
  the tiling problem. We construct a set of facts $\instance_0$ as follows:
  \begin{itemize}
    \item $S(a_0), \ldots, S(a_n)$
    \item $R(a_{i-1}, a_i)$ for $1 \leq i \leq n$
    \item $K_i(b_i)$ for $1 \leq i \leq k$
    \item for each $i$ such that $c_i$ is the color $C_1$, $a_i < b_1$
    \item for each $i$ such that $c_i$ is the color $C_k$, $a_i > b_{k-1}$
    \item for each $1 < j < k$ and $i$ such that $c_i$ is $C_j$, $b_{j-1} < a_i$
      and $a_i < b_j$
  \end{itemize}

  \myparagraph{Correctness proof for the reduction}
  Let us show that the reduction is sound. Let us first assume that the tiling
  problem has a solution $f$. We construct a counterexample $\instance \supseteq \instance_0$ as a grid
  of the $R$ and $D$ relations, with the first elements of the first row being
  the $a_0, \ldots, a_n$, and with the color of elements being coded as their
  order relations to the $b_j$ like when constructing $I$ above. Complete the
  interpretation of $<$ to a total order by choosing one arbitrary total order
  among the elements labeled with the same color, for each color. The resulting
  interpretation is indeed a total order relation, formed of the following: some total order on
  the elements of color $1$, the element $b_1$, some total order on the elements
  of color $2$, the element $b_2$, \ldots, the element $b_{k-2}$, some total order
  on the elements of color $k-1$, the element $b_{k-1}$, some total order on the
  elements of color $k$.

  It is immediate that the result satisfies $\Sigma$. To see why it does not
  satisfy the first two disjuncts of the UCQ, observe that any match of
  $R(x,y) \wedge D(x,z) \wedge R(z,w) \wedge D(y,w')$
  must have $w = w'$, by
  construction of the grid in $\instance$.
  To see why it does not satisfy the other disjuncts,
  notice that any such match must be a pair of two vertical or two horizontal
  elements; since the elements can match only one $K'_c$ which reflects their
  assigned color, the absence of matches follows by definition of $f$ being
  a tiling.

  \medskip

  Conversely, let us assume that there is a counterexample $\instance \supseteq \instance_0$ which satisfies
  $\Sigma$ and violates $Q$. Clearly, if the first two disjuncts of $Q$ are
  violated, then, for any element where $S$ holds, considering its $R$ and
  $D$ successors that exist by $\Sigma$, and respectively their $D$ and $R$
  successors, we reach the same element. Hence, from $a_0, \ldots, a_n$, we can
  consider the part of $\instance$ defined as a grid of the $R$ and $D$ relations,
  and it is indeed a full grid ($R$ and $D$ edges occur everywhere they
  should). Now, we observe that any element except the $b_j$ must be inserted at
  some position in the total suborder $b_1 < \cdots < b_{k-1}$, so that at least
  one predicate $K'_j$ holds for each element of the grid (several $K'_j$ may
  hold in case $\instance$ has more elements than the $b_i$ that are labeled
  with the $K_i$). Choose one of them, in a way
  that assigns to $a_0, \ldots, a_n$ their correct colors, and use
  this to define a function $f$ that extends $a_0, \ldots, a_n$. We claim
  that this $f$ indeed describes a tiling.

  Assume by contradiction that it does not. If there are   two horizontally
  adjacent values $(i, j)$ and $(i+1, j)$ realizing a configuration $(c, c')$
  from $\mathbb{H}$, by completeness of the grid there is an $R$-edge between
  the corresponding elements $u, v$ in $\instance$. Further, by the fact that $(i,j)$
  and $(i+1, j)$ were given the color that they have in $f$, we must have
  $K'_c(u)$ and $K'_c(v)$ in $\instance$, so that we must have had a match of a
  disjunct of $Q$, a contradiction. The absence of forbidden vertical patterns
  is proven in the same manner.

  \myparagraph{From UCQ to CQ}
  We now adapt the previous proof to use a CQ rather than a UCQ.
  Define the new signature $\sigma_{\mathrm{Or}}'$ as in the proof of
  Theorem~\ref{thm:tcdisj} by adding the relations of $\sigma_\Or$, and
  otherwise increasing the arity of each relation of $\sigmab$ by one. Rewrite
  the IDs $\Sigma$ as in the proof of Theorem~\ref{thm:tcdisj}, yielding:

  \begin{itemize}
    \item $\forall x ~ S(x, b) \rightarrow \exists y ~ R(x, y, b)$
    \item $\forall x ~ S(x, b) \rightarrow \exists y ~ D(x, y, b)$
    \item $\forall x y ~ R(x, y, b) \rightarrow S(y, b)$
    \item $\forall x y ~ D(x, y, b) \rightarrow S(y, b)$
  \end{itemize}
  We add to our initial set of facts $\instance_0$ the facts of
  $\instance_{\mathrm{Or}}$ as in the proof of
  Theorem~\ref{thm:tcdisj}.

  We construct the CQ from the original UCQ by the same process as in
  the proof of Theorem~\ref{thm:tcdisj}, and construct the initial sets of facts
  as in that proof as well.

  We then argue that this
  $\owqalin$ problem with the UCQ is equivalent to the one with the CQ. For the
  forward direction, from a solution to the initial instance $a_0, \ldots, a_n$
  of the tiling problem, we build a suitable
  $\instance \supseteq \instance_0$ from the previously defined $\instance$ by
  putting $\true$ as the last element of $R$- and $D$-facts.
  We complete the vacuous matches by chasing on them with the dependencies
  of~$\Sigma$ (as in the proof of Theorem~\ref{thm:tcdisj}; recall
  Definition~\ref{def:chase}),
we define $<$
  arbitrarily on each vacuous match, arbitrarily between them, arbitrarily with
  $\false$ and $\true$, and then as before on the true grid.
  To show that the query has no
  match, we first claim that any match of the query must be a match of one of
  the disjuncts where the free variable is bound to $\true$. Indeed, this is clear
  by definition of the $\mathrm{Or}$ and $\mathrm{True}$ relations.
  Now, we claim that none of the
  disjuncts have such a match. Indeed, if one disjunct has such a match, it
  implies that all facts of the match (except the order facts) have $\true$ in the
  last position, and, as these facts are the same as in the original
  proof (up to the last element), the absence of match is for the same reason as
  in the original proof.
  
  Conversely, let us consider a $\instance \supseteq \instance_0$ satisfying $\Sigma$
  and violating the query. We first observe that, for any disjunct of the query,
  it has a match where the free variable is bound to $\false$, as witnessed by
  the vacuous matches. Hence, if the query has no match, it must mean
  that none of the disjuncts has a match with the free variable bound to $\true$.
  Indeed, if there were one, then, from this match, from the vacuous matches,
  and using the facts which we know are present in the table of $\mathrm{Or}$
  and $\mathrm{True}$,
  we would obtain a match of the entire query. Hence, restricting our attention
  to the facts of $\instance'$ with $\true$ in their last position, using the fact that none
  of the disjuncts has a match there, we conclude as in the original proof.
\end{proof}

Now recall the statement of
Corollary~\ref{cor:undeclin}:
\begin{quote}
{\bf Corollary~\ref{cor:undeclin}.} 
  There is a signature $\sigma = \sigmab \sqcup \sigmad$ where $\sigmad$ is a single strict
  linear order relation, and a set $\Sigma'$ of $\afgtgd$ constraints,
  such that, letting $\top$ be the tautological query, the following problem is
  undecidable:
  given a finite set of facts $\instance_0$, decide $\owqalin(\instance_0,
  \Sigma', \top)$.
\end{quote}

To prove Corollary~\ref{cor:undeclin} from Theorem~\ref{thm:undeccq},
we take constraints $\Sigma'$
that are equivalent to $\Sigma \wedge \neg Q$,
where $\Sigma$ and $Q$ are as in the previous theorem.
Recall that $\Sigma$ is a set of inclusion dependencies on $\sigmab$,
and therefore are $\afgtgd$s.
Hence, it only remains to argue that $\neg Q$ can be written as a $\afgtgd$.
Indeed, write $Q$ as $\exists \vec{x} ~ \phi(\vec{x})$ and
consider the constraint 
\[
\forall \vec{x} ( \phi(\vec{x}) \rightarrow \exists y ( y < y ) )
\]
where $<$ is the distinguished relation.
Since $<$ must be a strict linear order in $\owqalin$,
$\exists y ( y < y)$ is equivalent to $\bot$ and
this new constraint is logically equivalent to $\neg Q$.
Moreover, this constraint is trivially in $\afgtgd$
since there are no frontier variables.
Hence, $\neg Q$ can be written as an $\afgtgd$ as claimed.

\end{document}